\algnewcommand\algorithmicparfor{\textbf{parfor}}
\algnewcommand\algorithmicpardo{\textbf{do}}
\algnewcommand\algorithmicendparfor{\textbf{end\ parfor}}
\newcommand{\stir}{\genfrac{\lbrace}{\rbrace}{0pt}{}}
\renewcommand{\d}{\mathrm{d}}
\newcommand{\leb}[1]{ \lvert #1 \rvert }
\DeclarePairedDelimiter\floor{\lfloor}{\rfloor}
\newcommand{\giventhat}{\; \Big \vert \;}
\xdef\csname bf\x
\xdef\csname bf\x
\xdef\csname cal\x
\xdef\csname bb\x
\xdef\csname sf\x
\xdef\csname frak\x
\def\toolboxmcrppy{\href{https://github.com/dhawat/MCRPPy}{MCRPPy}}
\newcommand{\Var}{\bbV \text{ar}}
\theoremstyle{plain}
\newtheorem{theorem}{Theorem}[section]
\newtheorem{proposition}[theorem]{Proposition}
\newtheorem{corollary}[theorem]{Corollary}
\newtheorem{lemma}[theorem]{Lemma}
\theoremstyle{remark}
\newtheorem{remark}[theorem]{Remark}
\newcommand{\bmsection}[1]{\par\medskip\noindent\textbf{#1.}\ }
\newcommand{\bmsubsection}[1]{%
  \par\smallskip
  \noindent\textbf{#1.}\hspace{0.5em}%
}
\title{Repelled point processes with application to numerical integration}
\author[1,2]{Diala Hawat}
\author[3]{Gabriel Mastrilli}
\author[1]{Rémi Bardenet}
\author[2,4]{Raphaël Lachièze-Rey}
\affil[1]{Université de Lille, CNRS, Centrale Lille, UMR 9189 -- CRIStAL, Lille, France}
\affil[2]{Université Paris Cité, MAP5, Paris, France}
\affil[3]{Univ Rennes, ENSAI, CNRS, CREST-UMR 9194, Rennes, France}
\affil[4]{Inria Paris, DYOGENE, Paris, France}
\affil[*]{\texttt{dialahawat7@gmail.com}}
\begin{document}
\maketitle
\begin{abstract}
    We look at Monte Carlo numerical integration from a stochastic geometry point of view.
    While crude Monte Carlo estimators relate to linear statistics of a homogeneous Poisson
    point process (PPP), linear statistics of more regularly spread point processes can yield
    unbiased estimators with faster-decaying variance, and thus lower integration error.
    Following this intuition, we introduce a Coulomb repulsion operator, which reduces
    clustering by slightly pushing the points of a configuration away from each other.
    Our empirical findings show that applying the repulsion operator to a PPP as well as,
    intriguingly, to more regular point processes, preserves unbiasedness while reducing
    the variance of the corresponding Monte Carlo estimator, thus enhancing the method.
    We prove this variance reduction when the initial point process is a PPP.
    On the computational side, the complexity of the operator is quadratic and the
    corresponding algorithm can be parallelized without communication across tasks.
    \end{abstract}
	
\section{Introduction} 
\label{sec:Introduction}

Numerical integration --~the task of approximating integrals using pointwise evaluations of the integrand~-- has a rich history,  with (subjective) milestones such as the quadrature of \cite{Gauss1815} or the Metropolis algorithm \citep{MRRTT53}.
Among methods that scale best to larger dimensions, Monte Carlo methods may be the most popular in applications; see e.g. \citep{Owen2013}.
In its simplest form, crude Monte Carlo amounts to writing the target integral as an expectation and estimating it by a sample average formed with $N$ i.i.d.\ samples.
The performance of the estimator is typically measured by the mean squared error, which classical probability results show decreases as $O(1/N)$.
This rate is slow, and variance reduction for Monte Carlo integration has been a rich topic of research.
We summarize here a few leading approaches.

One popular variance reduction technique is to leverage auxiliary integrands with known integrals, a method known as \emph{control variates}, see \citep[Chapters 8 and 10]{Owen2013} for classical results, and \citep{SOMD22} and references therein for more recent work.
A second family of methods directly improve the rate of convergence of the mean squared error by introducing sophisticated dependence through weights in the sample average, while the quadratures nodes remain independent  \citep{DelPor2016,AzaDelPort2018,Lelal2023}.
The cost of the seminal approach of \cite{DelPor2016}, for instance, is quadratic in $N$.
A third recent line of research focuses on the direct minimization of a measure of similarity between the target measure and the empirical measure of the quadrature nodes, with algorithms such as Stein variational gradient descent \citep{liu2016stein} or kernel Stein discrepancy minimization \citep{korba2021kernel}.  
In particular, the kernel Stein discrepancy is the worst-case integration error for integrands belonging to a space of smooth functions, namely a specific reproducing kernel Hilbert space (RKHS) chosen for numerical tractability.
Because of the relative recency of these approaches, theoretical results that back up the experimentally observed improvement over the Monte Carlo rate are still rather scarce. 
One key progress is the two-stage thinning algorithm of \cite{dwivedi2021generalized, dwivedi2024kernel}: the authors show how to reduce $N^2$ classical Monte Carlo samples to a sample of cardinality $N$ with a worst-case integration error over an RKHS essentially of order $1/N$ in probability.
Moreover, the thinning can be done in quadratic time with respect to $N$ \cite{shetty2022distributioncompressionnearlineartime}.

Finally, a fourth approach considers expectations under more regularly spread probability distributions than i.i.d.\ or Markov chain draws, as in randomized quasi-Monte Carlo (RQMC) \citep{Owen2008} or with determinantal point processes (DPPs) \citep{BarHar2020,CoeMazAmb2021,BeBaCh20}.
Our contribution is of the latter kind, taking an expectation under a well-spread random set of points.
But while the cost of sampling the DPP in \citep{BarHar2020} is (at least) cubic in $N$ \citep{GaBaVa19}, we aim at keeping the sampling cost quadratic, to match the costs of the aforementioned works of \cite{DelPor2016} and \cite{shetty2022distributioncompressionnearlineartime},
while still introducing enough dependence to force variance reduction.

Our inspiration comes from a set of results in stochastic geometry, on gravitational allocation \citep{ChaPelal2010,NazSodVol2007}, hyperuniform point processes \citep{Tor2018, Cos2021, Kla2019}, and systems of points with Coulomb interactions \citep{Ser2019}.
Loosely speaking and for our purpose, a point process is a random locally finite set of points in $\mathbb{R}^d$.
A natural reference is the homogeneous Poisson point process (PPP), whose points in a compact set are i.i.d.\ uniform random variables, once we condition on the number of points in that compact set.
At the other end of the spectrum, the counting statistics of \emph{hyperuniform} point processes yield Monte Carlo estimators with mean squared error decaying at a faster rate than the PPP.
For instance, \cite{Kla2019} study the candidate hyperuniform point process that results from applying Lloyd's algorithm iteratively to a PPP.
In a similar vein but leaving aside any iterative scheme for mathematical tractability, we propose to use gravitational allocation to yield a provable variance reduction compared to the PPP.

We introduce the \emph{repulsion operator}, a tool designed to mitigate clustering by slightly pushing the points within a stationary point process away from each other.
Roughly speaking, the operator applies to a configuration of points as follows: Imagining that the points repel each other through a Coulomb interaction, we perform a \emph{single} step of a numerical scheme to integrate the corresponding differential equation.
When the repulsion operator is applied to a point process, we term the resulting object the \emph{repelled point process}.
Our main theoretical result is that applying the repulsion operator to a PPP of $\bbR^d$ with $d \geq 3$ yields an unbiased Monte Carlo method with \emph{lower} variance than under the initial PPP when the integrand is a $C^2$ function of compact support.
Furthermore, our numerical experiments suggest that the variance reduction also holds when the repulsion operator is applied to point processes that exhibit more regularity than a PPP. 
Examples include the Ginibre Ensemble, known for its hyperuniformity, and the Scrambled Sobol sequence, commonly used in randomized quasi-Monte Carlo.
This finding raises the intriguing possibility that the repulsion operator may be universal, in the sense that it consistently results in a variance reduction, regardless of the (stationary) point process to which it is applied. 
If true, applying the repulsion operator could become a simple and general postprocessing step in any Monte Carlo integration task.

The paper is structured as follows.
Section \ref{sec:Background and Notations} provides background information about point processes.
Section \ref{sec:The repelled Poisson point process} introduces the repulsion operator. 
We analyze the properties of this operator and present our main theoretical result, showing variance reduction when the repulsion operator is applied to a PPP, compared to crude Monte Carlo.
Additionally, we describe a sampling procedure and present an experimental illustration of the variance reduction.
In Section \ref{sec:Application to numerical integration}, we put our method in context, by conducting a comparison with standard Monte Carlo methods on synthetic integration tasks.
In Section \ref{sec:ther models and properties}, we explore additional aspects of the repulsion operator, such as iterating it several times, estimating the pair correlation function and structure factor of the RPPP, as well as applying the repulsion operator to already repulsive point processes.
Section \ref{sec:Conclusion} concludes the paper with a few research directions.
Finally, all proofs are gathered in Appendix \ref{sec:Proofs}, while Appendix \ref{app:experiments} gives extra simulation results.
\section{Point processes and their intensity functions} 
\label{sec:Background and Notations}
In this section, we provide some background on point processes, with key results like the Slyvniak-Mecke theorem.
We refer to \citep{ChiuStoal2013} for details.

\begin{remark}
    Throughout this manuscript, bold lowercase letters, like $\bfx $, indicate vectors in $\mathbb{R}^d$.
    The corresponding non-bold characters, like $x$, are scalars.
    In particular, we write $\bfx  = (x_1, \dots, x_d)$.
    Whenever not confusing, we use the same letter in different fonts for a vector and its Euclidean norm, i.e., $r=\Vert \mathbf{r} \Vert_2$ and $k=\Vert \mathbf{k} \Vert_2$.
    Calligraphic letters like $\calX$ are used for point processes, i.e., random configurations of points.
    Configurations themselves are denoted with sans-serif letters like $\sfX$.
    When the cardinality of a particular point process is almost surely constant, we sometimes write that point process as $\calX_N$, with the value $N$ of the cardinality as the index.
\end{remark}


\subsection{Spatial point processes}
\label{sec:Spatial point processes}

Let a configuration of $\mathbb{R}^d$ be a locally finite set $\sfX$ of $\bbR^d$, that is, for any bounded Borel set $B$ of $\mathbb{R}^d$ the cardinality $\sfX(B)$ of $\sfX\cap B$ is finite.
Endow the family $\frakN$ of such configurations with the $\sigma$-algebra generated by the mappings $\sfX \mapsto \sfX (B)$, for any bounded Borel set $B$.
Formally, a \emph{simple\footnote{The term \emph{simple} indicates that the point process almost surely consists of distinct points.
For us, this is a direct consequence of defining configurations as \textbf{sets}.
However, some authors avoid assuming simplicity; see e.g. \citep[Chapter 6]{LasPen2017}.}} \emph{spatial point process}, hereafter \emph{a point process}, is a random element $\calX$ of $\frakN$.
The distribution of $\calX$ is determined by the system of \emph{void probabilities}
\begin{equation}
    \label{eq:def_void_proba}
    \bbT_{\calX}(K) \triangleq \bbP \left(\calX(K) = 0\right) ,
\end{equation}
as $K$ ranges through the compact sets of $\bbR^d$.

In this paper, we work with \emph{stationary} and \emph{isotropic} point processes, also called
\emph{motion-invariant} point processes.
By stationary, we mean that the law of the point process is invariant by translation: the law of $\calX$ is identical to that of
$\calX + \bfy
\triangleq \{\bfx  + \bfy ; \; \bfx  \in \calX\}$,
for all $\bfy \in \mathbb{R}^{d}$.
Similarly, a point process is isotropic if its law is invariant by rotation.

Point processes are often described in terms of their intensity measures.
The \emph{first intensity measure} $\mu^{(1)}$, for instance, is defined by
\begin{equation*}
    \label{eq:}
    \mu^{(1)}(B) \triangleq \bbE \left[\calX(B)\right] ,
\end{equation*}
for any Borel set $B\subset\mathbb{R}^d$.
When $\mu^{(1)}$ has a density with respect to the Lebesgue measure, $\mu^{(1)}(\d  \bfx ) = \rho^{(1)}
( \bfx )\, \d  \bfx $, we call $\rho^{(1)}$ the \emph{intensity} of $\calX$.
If $\calX$ is stationary, then $\mu^{(1)}$ is proportional to the Lebesgue measure and the intensity $\rho^{(1)}$ is a positive constant $\rho>0$, equal to the mean number of points of $\calX$ per unit volume.

More generally,
the \emph{$n$-th order intensity measure}
$\mu^{(n)}$ of $\calX$
is defined by
\begin{equation}
    \label{eq:campbell_theorem}
    \bbE \left[\sum_{\bfx _1, \dots,\bfx _n \in \calX}^{\neq} f(\bfx _1, \dots,\bfx _n)\right] = \int_{\mathbb{R}^d \times \dots \times \mathbb{R}^d} f(\bfx _1, \dots,\bfx _n) \mu^{(n)}(\d \bfx _1  \dots \d \bfx _n) ,
\end{equation}
where $f$ is any non-negative bounded measurable function, and the summation is over all $n$-tuples of distinct points in $\calX$; see \cite[Chapter 4]{ChiuStoal2013,LasPen2017}.
Again, when $\mu^{(n)} = \rho^{(n)} ( \bfx _1,  \dots, \bfx _n) \d \bfx _1  \dots \d \bfx _n$, $\rho^{(n)}$ is called the \emph{$n$-th order intensity function}.
Intuitively, for any pairwise distinct points $\bfx _1,  \dots, \bfx _n$, $\rho^{(n)} ( \bfx _1,  \dots, \bfx _n)\d \bfx _1  \dots \d \bfx _n$ is the probability that $\calX$ has a point in each of the $n$ infinitesimally small sets around $\bfx _1,  \dots, \bfx _n$, with respective volumes $\d \bfx _1,  \dots, \d \bfx _n$.

\subsection{The homogeneous Poisson point process (PPP)} 
\label{sub:Poisson Point process}
Consider a compact set $K$ of $\mathbb{R}^d$, and pick $N$ i.i.d. points uniformly distributed in $K$.
The point process $\calX_N$ formed by these $N$ points is called the Binomial point process (BPP) of $N$ points.
Loosely speaking, when $K$ is enlarged to fill out $\mathbb{R}^d$ while maintaining $N = \rho \vert K\vert$, where $|K|$ is the Lebesgue measure of $K$, we obtain a limiting point process $\calP$ that is called the homogeneous Poisson point process (PPP) of intensity $\rho>0$.
We now list a few properties of $\calP$.

First, $\calP$ is motion-invariant.
Second, the random number $\mathcal{P}(B)$ of points of $\calP$ in a bounded Borel set $B$ has a Poisson distribution of mean $\rho |B|$.
In particular, the void probability \eqref{eq:def_void_proba} is $\bbT_{\calP}(B) = \exp(-\rho |B|)$.
Third, $\calP(B_1)$ and $\calP(B_2)$ are independent for any disjoint Borel sets $B_1$ and $B_2$.
This second fundamental property is known as \emph{complete randomness}, and translates the intuition that the PPP has as little structure as possible; for more details see \citep[Chapter 2]{ChiuStoal2013}.
Fourth, all the moments of $\calP$ are determined by $\rho$, i.e., for any non-negative measurable function $f$,
\begin{equation}
    \label{eq:def_moment_poisson}
    \bbE \left[\sum_{\bfx _1, \dots,\bfx _n \in \calP}^{\neq} f(\bfx _1, \dots,\bfx _n)\right] = \rho^n \int_{\mathbb{R}^d \times \dots \times \mathbb{R}^d} f(\bfx _1, \dots,\bfx _n) \d \bfx _1 \dots \d \bfx _n .
\end{equation}
With the notation of Section~\ref{sec:Spatial point processes}, this reads $\rho^{(n)}=\rho^n$.
Moreover, by the so-called extended Slivnyak-Mecke theorem, we have
\begin{equation}
    \label{eq:slivnyak-mecke}
    \bbE \left[\sum_{\bfx _1, \dots,\bfx _n \in \calP}^{\neq} h \left(\bfx _1, \dots,\bfx _n,\calP \setminus \{\bfx _1, \dots,\bfx _n\} \right)\right] = \rho^n \int_{\mathbb{R}^d \times \dots \times \mathbb{R}^d} \bbE [h(\bfx _1, \dots,\bfx _n, \calP)] \d \bfx _1 \dots \d \bfx _n ,
\end{equation}
for any non-negative measurable function $h$ on $(\mathbb{R}^{d})^n \times \frakN$; see \citep[Section 5.1]{CoeMolWaa2015}.
Equation \eqref{eq:slivnyak-mecke} provides further evidence of the PPP's lack of dependency structure: informally, once conditioned on a finite number of points belonging to $\calP$, the rest of $\calP$ has the same distribution as $\calP$.
\section{Repelled point processes} 

\label{sec:The repelled Poisson point process}
In this section, given a configuration $\sfX\in   \mathfrak N$ and a parameter $\varepsilon$, we explain how to construct another configuration $\Pi_\varepsilon(\sfX)$, called the \emph{repelled configuration}.
Keeping in mind our motivation for numerical integration, we want $\Pi_\varepsilon$ to be $(i)$ computationally cheap to apply.
Moreover, when applied to a random configuration, $\Pi_\varepsilon$ should $(ii)$ preserve stationarity, isotropy, and intensity, and $(iii)$ reduce the variance of linear statistics.
While the first condition $(i)$ ensures that sampling from $\Pi_\varepsilon(\sfX)$ remains relatively tractable, the latter two $(ii)$ and $(iii)$ guarantee that $\Pi_\varepsilon(\sfX)$ offers a Monte Carlo method with reduced mean squared error compared to the Monte Carlo method based on $\sfX$.

In Section~\ref{s:repulsion_operator}, we define the repulsion operator $\Pi_\varepsilon$.
In Section~\ref{sub:Properties of the repelled Poisson point process}, we detail the properties of $\Pi_\varepsilon(\calP)$, where $\calP$ is a homogeneous Poisson point process (PPP).
Our main theoretical result is in Section~\ref{sub:Main result}, where we show that, for small enough $\varepsilon>0$, the variance of linear statistics under $\Pi_\varepsilon \calP$ is smaller than under $\calP$.
In particular, we give theoretical support for the choice of a particular value of $\varepsilon$, which is both explicit and independent of the considered linear statistic.
Our repulsion operator is based on a stochastic process known as Coulomb force. We discuss basic properties of the latter in Section~\ref{sub:Properties_of_F}.
In Section \ref{sub:Sampling from the repelled point process}, we explain how to \emph{approximately} sample from $\Pi_\varepsilon \calX$, in time quadratic in the number of points of the point process $\calX$ in the observation window.
Finally, Section~\ref{sec:experimental_illustration} gives a first experimental illustration of our variance reduction result.


\subsection{The repulsion operator} 
\label{s:repulsion_operator}
\begin{figure}
    \begin{subfigure}{0.48\textwidth}
        \centering
        \includegraphics[width=0.9\linewidth]{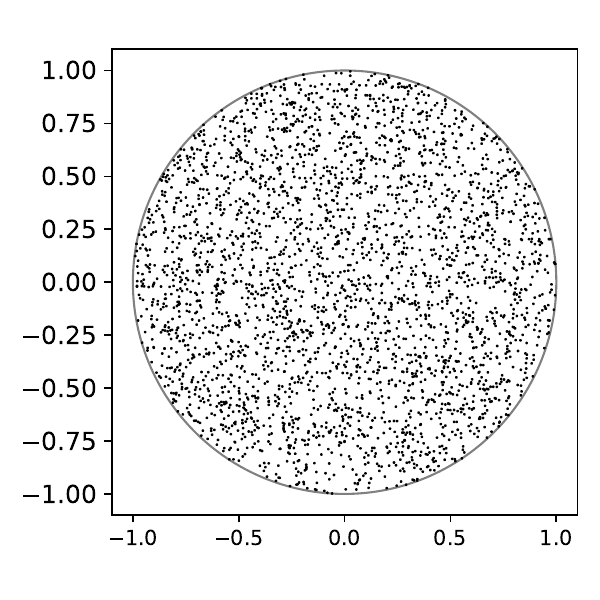}
        \caption{PPP}
    \end{subfigure}
    \begin{subfigure}{0.48\textwidth}
        \centering
        \includegraphics[width=0.9\linewidth]{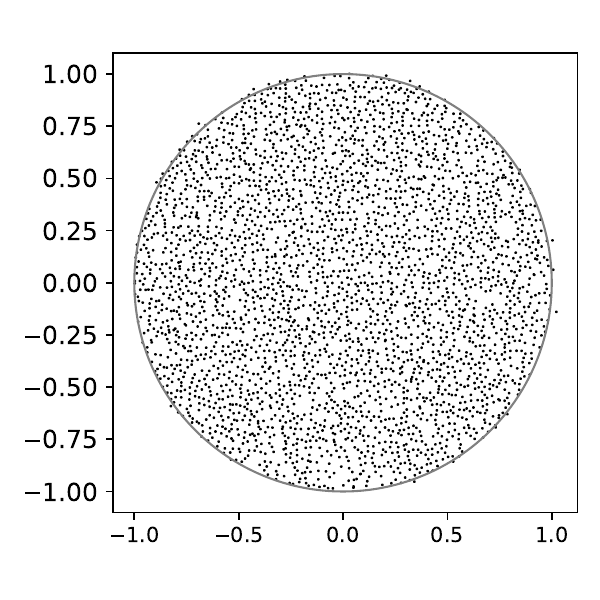}
        \caption{RPPP}
    \end{subfigure}
    \caption{A sample from a homogeneous Poisson point process of intensity $\rho=1000$ and the corresponding repelled sample.}
    \label{fig:poisson_push_force}
\end{figure}

For $\bfx  \in \mathbb{R}^d$ and a configuration $\sfX \in \frakN$, consider the series (when it converges),
\begin{equation}
    \label{eq:def_gravitational_force}
    F_{\sfX}(\bfx ) \triangleq
    \sum_{\substack{\mathbf{z} \in \sfX \setminus \{\bfx \}\\ \lVert \bfx  - \mathbf{z} \lVert_2 \uparrow }}
    \frac{\bfx  - \mathbf{z}}{ \lVert \bfx  - \mathbf{z} \lVert_2^d}=
    \lim_{R\to \infty }
    \sum_{  {\bf z}\in \sfX \setminus \{\bfx \} \cap B({\bf x},R)}
    \frac{\bfx  - \mathbf{z}}{ \lVert \bfx  - \mathbf{z} \lVert_2^d} .
    \tag{$F_1$}
\end{equation}
Several observations are in order.
First, each term in the sum in \eqref{eq:def_gravitational_force} intuitively represents the Coulomb force felt by a charged particle at $\bfx $ and due to a particle of the same charge placed at $\mathbf{z}$.
In a dynamic setting, this force would repel $\bfx $ away from $\mathbf{z}$.
Second, as the series defining $F_{\sfX}(\bfx)$ is not absolutely convergent, the order of the summation is important.
Following \citealp{ChaPelal2010}, we consider the limit in an increasing ball centered at $\bfx$, i.e., the summands in \eqref{eq:def_gravitational_force} are arranged in order of increasing distance from $\bfx $.
We will discuss in Section \ref{sub:Properties_of_F} rearranging the summation by increasing distance from the origin.
Third, a fundamental insight, originally mentioned by \cite{Chan1943}, states that if $d \geq 3$ and $\calP$ is a PPP, then, for every $\bfx $, the series defining $F_{\mathcal{P}}(\bfx )$ converges almost surely. Further information regarding the characteristics of $F_{\calP}$ can be found in Section \ref{sub:Properties_of_F}.

Call $\sfX  \in\frakN$ a \emph{valid} configuration if, for all $\bfx$, the limit defining $F_{\sfX }(\bfx)$ in \eqref{eq:def_gravitational_force} exists.
For $\varepsilon\in\mathbb{R}$, we define the (Coulomb) repulsion\footnote{
    While we generally speak of ``repulsion'', note that when $\varepsilon < 0$ the dynamics become attractive instead of repulsive.
}
operator $\Pi_\varepsilon$, acting on valid configurations, through
\begin{align}
    \label{eq:def_repulsion_operator}
    \Pi_\varepsilon:  \sfX & \mapsto \{\bfx  + \varepsilon F_{\sfX}  (\bfx ): \bfx  \in {\sfX}\} .
\end{align}
There are two formal caveats to our definition \eqref{eq:def_repulsion_operator}.
First, it only applies to valid configurations.
Second, since by definition, $\Pi_\varepsilon \sfX$ is a set, it does not keep track of multiplicities, arising when several points in $\sfX$ are mapped to the same location by $\Pi_\varepsilon$.
Anticipating a bit, Proposition \ref{prop:motion_invariance_for_poisson} below shows that these two caveats are irrelevant when $\Pi_\varepsilon$ is applied to a PPP called $\calP$.
In particular, $\calP$ is almost surely a valid configuration, and for any two distinct points $\bfx, \bfy\in \calP$, almost surely
$$
\bfx  + \varepsilon F_{\calP}(\bfx) \neq \bfy + \varepsilon F_{\calP } (\bfy).
$$
This guarantees that $\Pi_\varepsilon \mathcal{P}$ is a \emph{simple} point process, which we term the \emph{repelled Poisson point process (RPPP)}.
We will occasionally consider the repelled point process $\Pi_\varepsilon \calX$ of a more general point process $\calX$, although its existence needs to be discussed.

The first panel of Figure \ref{fig:poisson_push_force} displays a sample from a PPP of intensity $\rho=1000$ in $d=2$, intersected with a disk-shaped observation window.
Note that we plot the construction in dimension $2$ for graphical convenience, but we are not making any convergence claim for \eqref{eq:def_gravitational_force} in $d=2$.
We illustrate the RPPP construction in the second panel of the figure.
A detailed explanation of the simulation procedure will be provided in Section \ref{sub:Sampling from the repelled point process}.
At this stage, we simply observe that the RPPP sample exhibits a reduced tendency for points to cluster together, compared to the PPP sample.

\subsection{Properties of the repelled point processes} 
\label{sub:Properties of the repelled Poisson point process}
In this section, we state some properties of the repelled point processes.
\begin{proposition}[Motion-invariance]
    \label{prop:motion_invariance}
    Let $\calX$ be a point process that is almost surely valid, and $\varepsilon \in \bbR$.
    If $\calX$ is motion-invariant, then $\Pi_\varepsilon \calX$ is also motion-invariant.
\end{proposition}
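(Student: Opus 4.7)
The plan is to show that the repulsion operator $\Pi_\varepsilon$ is \emph{equivariant} under every rigid motion, i.e., that for any rigid motion $T\bfu = R\bfu + \bfy$ (with $R \in O(d)$ and $\bfy \in \bbR^d$) and any valid configuration $\sfX$, the equality $\Pi_\varepsilon(T\sfX) = T(\Pi_\varepsilon \sfX)$ holds pointwise. Once this deterministic identity is established, motion-invariance of the law of $\calX$ transfers to $\Pi_\varepsilon \calX$ by the usual change-of-variables: since $\calX$ is almost surely valid and $T\calX \stackrel{d}{=} \calX$, applying the measurable map $\Pi_\varepsilon$ gives $T(\Pi_\varepsilon \calX) = \Pi_\varepsilon(T\calX) \stackrel{d}{=} \Pi_\varepsilon \calX$.

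The heart of the argument is therefore to check equivariance of the Coulomb force field itself, namely
\begin{equation*}
    F_{T\sfX}(T\bfx) = R\, F_{\sfX}(\bfx), \qquad \bfx \in \sfX.
\end{equation*}
This is where the choice in \eqref{eq:def_gravitational_force} to sum over balls centered at $\bfx$ (rather than at the origin) becomes important. First, $T$ is an isometry, so $B(T\bfx, R_0) = T(B(\bfx, R_0))$ for every $R_0 > 0$. Hence, as $\bfz$ ranges over $(\sfX \setminus \{\bfx\}) \cap B(\bfx, R_0)$, the point $T\bfz$ ranges over $(T\sfX \setminus \{T\bfx\}) \cap B(T\bfx, R_0)$, preserving the summation indexing. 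Second, each summand transforms covariantly: since $T\bfu - T\bfv = R(\bfu - \bfv)$ and $\lVert R\cdot\rVert_2 = \lVert \cdot\rVert_2$,
\begin{equation*}
    \frac{T\bfx - T\bfz}{\lVert T\bfx - T\bfz \rVert_2^d} = \frac{R(\bfx - \bfz)}{\lVert \bfx - \bfz\rVert_2^d} = R \cdot \frac{\bfx - \bfz}{\lVert \bfx - \bfz \rVert_2^d}.
\end{equation*}
Summing over the ball $B(\bfx, R_0)$, using linearity of $R$, and letting $R_0 \to \infty$ yields the claimed equivariance (and shows en passant that $T\sfX$ is itself valid).

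Given this, the computation of $\Pi_\varepsilon(T\sfX)$ is immediate: for every $\bfx \in \sfX$,
\begin{equation*}
    T\bfx + \varepsilon F_{T\sfX}(T\bfx) = T\bfx + \varepsilon R F_{\sfX}(\bfx) = R\bigl(\bfx + \varepsilon F_{\sfX}(\bfx)\bigr) + \bfy = T\bigl(\bfx + \varepsilon F_{\sfX}(\bfx)\bigr),
\end{equation*}
so $\Pi_\varepsilon(T\sfX) = T(\Pi_\varepsilon \sfX)$. The stationarity case corresponds to $R = I$ and the isotropy case to $\bfy = \bfzero$. I expect the only minor obstacle to be bookkeeping: confirming that the equivariance of the partial sums really forces the limit to exist and equal $R F_{\sfX}(\bfx)$, and silently invoking the measurability of $\sfX \mapsto \Pi_\varepsilon \sfX$ on the full-measure set of valid configurations so that the distributional equality can be transported through $\Pi_\varepsilon$. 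Neither point is substantive, and the result follows.
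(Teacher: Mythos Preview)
Your proof is correct and follows essentially the same approach as the paper: establish the deterministic equivariance $\Pi_\varepsilon(T\sfX) = T(\Pi_\varepsilon \sfX)$ for rigid motions $T$ (via the same computation showing the summation ordering and each summand transform covariantly), then transfer the distributional invariance of $\calX$ through $\Pi_\varepsilon$. The only cosmetic differences are that the paper treats translations and rotations separately and phrases the final step via void probabilities, whereas you handle a general rigid motion at once and invoke $\stackrel{d}{=}$ directly.
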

The proof of this proposition is deferred to Appendix \ref{sub:Proof_of_Proposition_1}.
Of particular importance to us is the following corollary.
\begin{proposition}
    \label{prop:motion_invariance_for_poisson}
    Let $d\geq 3$, and $\mathcal{P} \subset \mathbb{R}^d$ be a Poisson point process of intensity $\rho>0$.
    Then, for any $\varepsilon \in \mathbb{R}$,  and any two distinct points $\bfx, \, \bfy \in \calP$, we have, almost surely,
    $$
    \bfx  + \varepsilon F_{\calP}(\bfx) \neq \bfy + \varepsilon F_{\calP } (\bfy) .
    $$
    Moreover, $\Pi_\varepsilon \mathcal{P}$ is a stationary and isotropic point process of intensity $\rho$.
\end{proposition}
The proof of this Proposition is also deferred to Appendix \ref{sub:Proof_of_Proposition_1}.
According to Proposition \ref{prop:motion_invariance_for_poisson}, $\Pi_\varepsilon \mathcal{P}$ is of intensity $\rho$, the same intensity as $\mathcal{P}$.
Consequently, for any integrable function $f$ of compact support $K$, Equation~\eqref{eq:def_moment_poisson} yields
\begin{equation*}
    \bbE \left[
    \sum_{\bfx  \in \Pi_\varepsilon \mathcal{P}} f(\bfx )\right]
    =
    \rho \int_{K} f(\bfx ) \d \bfx .
\end{equation*}
In particular,
\begin{equation}
    \label{eq:mc_push_estimator}
    \widehat{I}_{ \Pi_\varepsilon \mathcal{P}}(f) \triangleq
    \frac{1}{\rho} \sum_{ \bfx  \in \Pi_\varepsilon \mathcal{P} } f(\bfx )
\end{equation}
is an unbiased estimator of
\begin{equation}
    \label{eq:target_integral}
    I_K(f) \triangleq \int_{K} f(\bfx ) \d \bfx .
\end{equation}
We shall also consider the so-called \emph{self-normalized} estimator
\begin{equation}
    \label{eq:mc_push_estimator_self_normalized}
    \widehat{I}_{s, \, \Pi_\varepsilon \mathcal{P}\cap K  }(f) \triangleq
    \frac{|K|}{\Pi_\varepsilon \mathcal{P} (K)} \mathds{1}_{\{\Pi_\varepsilon \mathcal{P} (K)>0\}}  \sum_{ \bfx  \in \Pi_\varepsilon \mathcal{P} \cap K} f(\bfx ) ,
\end{equation}
where $\Pi_\varepsilon \mathcal{P} (K)$ is the number of points of $\Pi_\varepsilon \mathcal{P}$ in $K$.
Compared to \eqref{eq:mc_push_estimator}, \eqref{eq:mc_push_estimator_self_normalized} replaces $\rho$ by an unbiased estimator.
Self-normalized estimators are frequent in spatial statistics, and one can expect a (small) variance reduction in \eqref{eq:mc_push_estimator_self_normalized} at the cost  of a small bias.
\begin{remark}
    \label{rmk:self_normalized_estimator}
    For a stationary point process $\calX$ that is almost surely valid, the self-normalized estimator $\widehat{I}_{s, \, \Pi_\varepsilon \calX \cap K  }(f)$ of $I_K(f)$ is biased.
    Indeed
    \begin{align*}
        \bbE \left[\widehat{I}_{s, \, \Pi_\varepsilon \calX}(f)\right]
        &=
        \bbE \left[\frac{|K|}{\Pi_\varepsilon \calX (K)} \mathds{1}_{\{\Pi_\varepsilon \calX (K)>0\}} \bbE \left[ \sum_{ \bfx  \in \Pi_\varepsilon \calX \cap K} f(\bfx ) \giventhat \Pi_\varepsilon \calX (K)\right] \right] .
    \end{align*}
    As $\Pi_\varepsilon \calX$ is a stationary point process by Proposition \ref{prop:motion_invariance}, once conditioning on $\Pi_\varepsilon \calX(K)$ each point of $\Pi_\varepsilon \calX \cap K$ is uniformly distributed over $K$.
    Let $(Y_i)_{i \geq 1}$ be random variables that follow the uniform distribution over $K$, we have
    \begin{align*}
        \bbE \left[\widehat{I}_{s, \, \Pi_\varepsilon \calX}(f)\right]
        &=
        \bbE \left[\frac{|K|}{\Pi_\varepsilon \calX (K)} \mathds{1}_{\{\Pi_\varepsilon \calX (K)>0\}} \sum_{ i=1}^{\Pi_\varepsilon \calX (K)} \bbE \left[  f(Y_i ) \right] \right]
        \\
        &=
        |K|\bbE \left[\mathds{1}_{\{\Pi_\varepsilon \calX (K)>0\}} \bbE \left[ f(Y_1) \right] \right]\\
        & = (1- \bbT_{\Pi_\varepsilon \calX}(K))\int_{K} f(\bfx ) \d \bfx,
    \end{align*}
    where the void probability $\bbT_{\Pi_\varepsilon \calX}$ is defined in \eqref{eq:def_void_proba}.
    As $K$ grows, the bias thus decreases.
    It is actually reasonable to expect that it vanishes exponentially fast with the size of $K$.
\end{remark}

Before we investigate the variance of linear statistics under $\Pi_\varepsilon \calP$, we need to ensure that the variance exists.
Actually, the following result ensures that the RPPP has moments of any order.
\begin{proposition}[Existence of the moments]
    \label{prop:extistance_of_the_moments}
    Let $d \geq 3$ and $\mathcal{P}$ be a homogeneous Poisson point process of intensity $\rho>0$ in $\mathbb{R}^d$.
    Let $\varepsilon\in (-1,1)$ and $R >0$.
    For any positive integer $m$
    \begin{equation*}
        \bbE
        \left[\left(\sum_{\bfx  \in \Pi_\varepsilon \mathcal{P}}
        \mathds{1}_{B(\mathbf{0},R)}(\bfx )\right)^m\right]
        < \infty.
    \end{equation*}
\end{proposition}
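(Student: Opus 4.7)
I would bound $\bbE[N_R^m]$ with $N_R := \Pi_\varepsilon\calP(B(0,R))$ by reducing to factorial moments. Using the identity $N_R^m = \sum_{k=1}^m S(m,k)(N_R)_k$, where $S(m,k)$ denotes the Stirling numbers of the second kind and $(N_R)_k$ the falling factorial, it suffices to show $\bbE[(N_R)_k] < \infty$ for each $1 \le k \le m$. The case $k=1$ is already a consequence of Corollary~\ref{corol:motion_invariance_for_poisson}, which asserts that $\Pi_\varepsilon\calP$ has intensity $\rho$, so that $\bbE[N_R] = \rho|B(0,R)| < \infty$.

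For $k \ge 2$, the simplicity of $\Pi_\varepsilon\calP$ (again from Corollary~\ref{corol:motion_invariance_for_poisson}) legitimates labelling points of $\Pi_\varepsilon\calP$ by their preimages, and one writes $(N_R)_k = \sum^{\neq}_{\bfx_1,\ldots,\bfx_k \in \calP} \prod_i \mathds{1}_{B(0,R)}(\bfx_i + \varepsilon F_\calP(\bfx_i))$. Combining the almost-sure decomposition
\[
F_\calP(\bfx_i) = F_{\calP \setminus \{\bfx_1, \ldots, \bfx_k\}}(\bfx_i) + \sum_{j \neq i} \frac{\bfx_i - \bfx_j}{\|\bfx_i - \bfx_j\|^d}
\]
with the extended Slivnyak--Mecke formula \eqref{eq:slivnyak-mecke}, one rewrites
\[
\bbE[(N_R)_k] = \rho^k \int_{(\bbR^d)^k} \bbP\!\left(\bigcap_{i=1}^k \{\bfx_i + \varepsilon F_{\calP \cup \{\bfx_1,\ldots,\bfx_k\}}(\bfx_i) \in B(0,R)\}\right)\,\mathrm d\bfx_1 \cdots \mathrm d\bfx_k.
\]
I would then invoke Fubini to swap the $\calP$-expectation with the $\bfx$-integrals and, conditionally on $\calP$, perform the joint change of variables $\Psi_\calP : (\bfx_i)_i \mapsto (\bfy_i)_i$ with $\bfy_i := \bfx_i + \varepsilon F_{\calP \cup \{\bfx_1,\ldots,\bfx_k\}}(\bfx_i)$. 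If $\Psi_\calP$ is almost surely finite-to-one with Jacobian determinant bounded below by some deterministic constant $c(\varepsilon) > 0$, then the integrand is at most $c(\varepsilon)^{-1}|B(0,R)|^k$, and hence $\bbE[(N_R)_k] \le \rho^k c(\varepsilon)^{-1}|B(0,R)|^k < \infty$.

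The main obstacle is establishing this Jacobian bound. The diagonal blocks of $D\Psi_\calP$ are $I + \varepsilon \nabla F_\calP(\bfx_i) + \varepsilon \sum_{j \neq i} \nabla_{\bfx_i}\bigl((\bfx_i - \bfx_j)/\|\bfx_i - \bfx_j\|^d\bigr)$ and the off-diagonal blocks are $-\varepsilon \nabla_{\bfx_i}\bigl((\bfx_i - \bfx_j)/\|\bfx_i - \bfx_j\|^d\bigr)$; both carry $\|\bfx_i - \bfx_j\|^{-d}$ singularities when two test points approach, and $\nabla F_\calP$ itself has singularities near points of $\calP$. The hypothesis $|\varepsilon| < 1$ is precisely what enables a perturbative/contraction argument via $\Psi_\calP = \mathrm{id} + \varepsilon(\cdots)$, keeping $\Psi_\calP$ injective with Jacobian close to $1$ on the smooth region, while $d \ge 3$ is what makes $F_\calP$ almost surely finite and gives access to the heavy-tail estimates of \citet{ChaPelal2010}. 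The remaining singular regions --~where some pair of $\bfx_i$'s or some pair $(\bfx_i, \bfz)$ with $\bfz \in \calP$ is at small distance~-- would have to be treated separately: the cleanest route is to split the integration domain according to the relevant minimum-distance events, bound the Lebesgue measure of the preimage on each piece directly, and use the force tail bounds to control the residual contributions.
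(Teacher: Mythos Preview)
Your reduction to factorial moments and the application of Slivnyak--Mecke are fine, but the heart of the argument --- the Jacobian lower bound for $\Psi_\calP$ --- does not hold, and the hypothesis $|\varepsilon|<1$ does not rescue it. The diagonal blocks of $D\Psi_\calP$ involve $\varepsilon\,\nabla F_\calP(\bfx_i)$, which is a random field with no deterministic upper bound: it blows up like $\|\bfx_i-\bfz\|^{-d}$ near every Poisson point $\bfz$, and even away from those singularities its size is governed by the configuration, not by $\varepsilon$. So $\Psi_\calP = \mathrm{id} + \varepsilon(\cdots)$ is not a small perturbation of the identity in any uniform sense, the determinant can vanish or change sign, and the map need not be finite-to-one with bounded multiplicity. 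Your closing paragraph concedes this and defers to ``force tail bounds'' on the singular regions, but that is precisely where all the work lies, and no argument is given.

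The paper proceeds quite differently and never attempts a change of variables. It separates points of $\calP$ according to whether they start inside a large ball $B(\mathbf 0,R')$ or outside. The inside contribution is controlled trivially by the Poisson moment $\bbE[\calP(B(\mathbf 0,R'))^m]$. For a point $\bfx$ starting outside $B(\mathbf 0,R')$ to land in $B(\mathbf 0,R)$, the force is split into a short-range piece $F^{(0,1)}_\calP(\bfx)$ and a long-range piece $F^{(1,\infty)}_\calP(\bfx)$, and one shows that at least one of these must be anomalously large. The long-range piece has a sub-exponential tail (a Chernoff-type bound exploiting the Poisson structure), while the short-range piece is handled by combining the $\alpha$-stable moment bound $\bbE[\|F_\calP(\mathbf 0)\|_2^{1+\gamma}]<\infty$ with an isotropy/packing argument on annuli. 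The resulting probability that a single far point enters $B(\mathbf 0,R)$ decays like a negative power of $\|\bfx\|_2$, fast enough to be integrable; an induction then upgrades this to control of all factorial moments. This tail-estimate route is what you should replace the Jacobian step with.
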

The proof is deferred to Appendix ~\ref{proof:exitance_moment}, and we note that a quantitative upper bound of the expectation can be deduced from the proof.
For now, a direct consequence of Proposition \ref{prop:extistance_of_the_moments} is that for any continuous function $f$ of compact support $K$, we have
\begin{align*}
    \Var
    \Big[\widehat{I}_{ \Pi_\varepsilon \mathcal{P}}(f)
    \Big]
    \leq
    \frac{\|f\|_{\infty}^2}{ \rho^2 }
    \bbE  \left[ \left(\sum_{\bfx  \in \Pi_\varepsilon \mathcal{P}} \mathds{1}_{K}(\bfx )\right)^2 \right] - I_K(f)^2
    < \infty .
\end{align*}
In the next section, we provide a more explicit expansion of the variance for small $\varepsilon$.

\subsection{Main result} 
\label{sub:Main result}

The following variance reduction result is the main theoretical finding of the present paper. Its proof is deferred to Appendix ~\ref{subs:proof_variance_reduction}.
\begin{theorem}[Variance reduction]
    \label{thm:Variance_reduction}
    Let $d \geq 3$, $\mathcal{P} \subset \bbR^d$ be a homogeneous Poisson point process of intensity $\rho > 1$, and let $f \in C^2(\mathbb{R}^d)$ have compact support $K$.
    For any $\delta \in (0, 1+\frac{1}{d-1})$, there exists a constant $C_{\delta} < \infty$ such that for all $\varepsilon\in (-1,1)$, we have
    \begin{equation}
        \label{eq:variance_linear_stat_push}
        \Var \left[ \widehat{I}_{ \Pi_\varepsilon \mathcal{P}  }(f)\right]
        \leq \Var\left[\widehat{I}_{ \calP}(f)\right] \left(1 - 2d \kappa_d \rho \varepsilon \right) + C_{\delta} \lvert\varepsilon\rvert^{\delta},
    \end{equation}
    where $C_{\delta}$ does not depend on the intensity $\rho$, $\widehat{I}_{ \Pi_\varepsilon \mathcal{P}  }(f)$ is defined in \eqref{eq:mc_push_estimator},
    \begin{equation}
        \label{eq:mc_poisson_estimator}
        \widehat{I}_{\mathcal{P}}(f) \triangleq
        \frac{1}{\rho} \sum_{ \bfx  \in \mathcal{P} } f(\bfx ),
    \end{equation}
    and $\kappa_d$ is the volume of the unit ball of $\mathbb{R}^d$. 
\end{theorem}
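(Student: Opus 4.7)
The plan is to Taylor-expand the displaced integrand to first order in $\varepsilon$, compute the resulting covariance via the Slivnyak-Mecke formula~\eqref{eq:slivnyak-mecke}, and collapse the leading correction by a short polar-coordinate computation. Writing $\Pi_\varepsilon \mathcal{P} = \{\bfx  + \varepsilon F_\mathcal{P}(\bfx ): \bfx  \in \mathcal{P}\}$, the estimator reads
\[
\widehat{I}_{\Pi_\varepsilon \mathcal{P}}(f) = \frac{1}{\rho}\sum_{\bfx  \in \mathcal{P}} f\bigl(\bfx  + \varepsilon F_\mathcal{P}(\bfx )\bigr),
\]
and a second-order Taylor expansion yields
\[
f\bigl(\bfx  + \varepsilon F_\mathcal{P}(\bfx )\bigr) = f(\bfx ) + \varepsilon\,\nabla f(\bfx )\cdot F_\mathcal{P}(\bfx ) + R_\varepsilon(\bfx ,\mathcal{P}),
\]
with $|R_\varepsilon(\bfx ,\mathcal{P})| \leq \tfrac{1}{2}\varepsilon^2 \|\nabla^2 f\|_\infty \|F_\mathcal{P}(\bfx )\|_2^2$. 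Setting $S_0 = \sum_{\bfx \in \mathcal{P}} f(\bfx )$ and $T = \sum_{\bfx \in \mathcal{P}} \nabla f(\bfx )\cdot F_\mathcal{P}(\bfx )$, we obtain
\[
\Var\bigl[\widehat{I}_{\Pi_\varepsilon \mathcal{P}}(f)\bigr] = \frac{1}{\rho^2}\Bigl(\Var[S_0] + 2\varepsilon\,\operatorname{Cov}[S_0,T]\Bigr) + O(\varepsilon^2),
\]
subject to a remainder control deferred to the last paragraph.

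The classical Campbell formula gives $\Var[S_0] = \rho \int f^2$. To evaluate $\operatorname{Cov}[S_0,T]$, we split $S_0 T$ into the diagonal piece $\sum_{\bfx \in \mathcal{P}} f(\bfx )\nabla f(\bfx )\cdot F_\mathcal{P}(\bfx )$ and the off-diagonal one $\sum^{\neq}_{\bfx ,\bfy \in \mathcal{P}} f(\bfx )\nabla f(\bfy )\cdot F_\mathcal{P}(\bfy )$. Applying \eqref{eq:slivnyak-mecke} at $n=1$ to the diagonal and at $n=2$ to the off-diagonal, and using $\bbE[F_\mathcal{P}(\bfx )]=0$ by isotropy (Corollary~\ref{corol:motion_invariance_for_poisson}), both the diagonal term and $\bbE[T]$ vanish. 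For the off-diagonal term, the Slivnyak-Mecke conditioning effectively adds $\bfx $ to $\mathcal{P}$, so that $F_\mathcal{P}(\bfy )$ becomes $F_\mathcal{P}(\bfy ) + (\bfy -\bfx )/\|\bfy -\bfx \|_2^d$ under the expectation; the $F_\mathcal{P}(\bfy )$ part again integrates to zero by isotropy, leaving
\[
\operatorname{Cov}[S_0,T] = \rho^2 \iint f(\bfx )\,\nabla f(\bfy )\cdot \frac{\bfy -\bfx }{\|\bfy -\bfx \|_2^d}\, \d\bfx \,\d\bfy .
\]

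After the substitution $\bfu  = \bfy  - \bfx $, the inner integral over $\bfu $ is evaluated by passing to polar coordinates $\bfu  = r\omega$ with $r>0$, $\omega\in S^{d-1}$: the Jacobian $r^{d-1}$ cancels $\|\bfu \|_2^{d-1}$, and the angular integrand becomes the radial derivative $\partial_r f(\bfx +r\omega)$, so since $f$ has compact support,
\[
\int_{\mathbb{R}^d}\nabla f(\bfx +\bfu )\cdot \frac{\bfu }{\|\bfu \|_2^d}\, \d\bfu  = \int_{S^{d-1}}\!\int_0^\infty \partial_r f(\bfx +r\omega)\, \d r\, \d\omega = -d\kappa_d\, f(\bfx ),
\]
using $|S^{d-1}|=d\kappa_d$. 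Equivalently, this is the distributional identity $\nabla\cdot(\bfu /\|\bfu \|_2^d) = d\kappa_d\, \delta_0$, valid in $d\geq 3$. Plugging back gives $\operatorname{Cov}[S_0,T] = -d\kappa_d \rho^2 \int f^2$, and substituting into the variance expansion yields \eqref{eq:variance_linear_stat_push}.

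The main obstacle is the rigorous $O(\varepsilon^2)$ control of the remainder $E_\varepsilon = \sum_{\bfx \in \mathcal{P}} R_\varepsilon(\bfx ,\mathcal{P})$. Since $f \in C^2$ has compact support $K$, the pointwise Taylor bound is only tight when $\bfx  + t\varepsilon F_\mathcal{P}(\bfx )$ stays in a mild enlargement of $K$ for $t\in[0,1]$, which effectively confines the contributing $\bfx $ to a neighborhood of $K$ and excludes configurations with excessively large local force. We plan to combine the bound $|R_\varepsilon|\lesssim \varepsilon^2 \|F_\mathcal{P}(\bfx )\|_2^2$ with Proposition~\ref{prop:extistance_of_the_moments} on the moments of $\Pi_\varepsilon \mathcal{P}$ and with moment estimates on $F_\mathcal{P}$ sketched in Section~\ref{sub:Properties_of_F}, then bound $\Var[E_\varepsilon]$, $\operatorname{Cov}[E_\varepsilon,S_0]$, and $\operatorname{Cov}[E_\varepsilon,T]$ via Cauchy-Schwarz. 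The assumption $\varepsilon\in(-1,1)$ of Proposition~\ref{prop:extistance_of_the_moments} enters precisely at this step, guaranteeing that the relevant integrability bounds hold uniformly in $\varepsilon$.
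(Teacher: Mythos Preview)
Your first-order computation is correct and coincides with the paper's core calculation (Lemma~\ref{lem:variance_from_internal_term} together with Lemma~\ref{lem:integral_with_the_harmonic_fct}): the Slivnyak--Mecke identity kills the terms involving $\bbE[F_\calP]$, leaving the Riesz-kernel integral that your polar-coordinate argument evaluates to $-d\kappa_d f(\bfx)$. So the identification of the derivative $-2d\kappa_d\rho \Var[\widehat I_\calP(f)]$ is right.

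The genuine gap is the remainder control. Your plan hinges on $|R_\varepsilon(\bfx,\calP)|\le \tfrac12\varepsilon^2\|\nabla^2 f\|_\infty \|F_\calP(\bfx)\|_2^2$ and ``moment estimates on $F_\calP$ from Section~\ref{sub:Properties_of_F}''. But those very estimates (Proposition~\ref{prop:distribution_of_F}) say that $F_\calP(\bfx)$ is symmetric $\alpha$-stable with $\alpha=d/(d-1)\le 3/2$, so $\bbE[\|F_\calP(\bfx)\|_2^\nu]<\infty$ iff $\nu<d/(d-1)$; in particular $\bbE[\|F_\calP(\bfx)\|_2^2]=\infty$ for all $d\ge 3$. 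Hence your pointwise Taylor bound yields nothing in expectation, $\Var[T]=\infty$ (the diagonal term alone contributes $\rho\int|\nabla f|^2\,\bbE[F_1^2]\,\d\bfx=\infty$), and the Cauchy--Schwarz bounds on $\operatorname{Cov}[E_\varepsilon,T]$ and $\Var[E_\varepsilon]$ all blow up. Proposition~\ref{prop:extistance_of_the_moments} does not help here: it controls the number of points of $\Pi_\varepsilon\calP$ in a ball, not moments of $\|F_\calP\|_2$. The paper flags exactly this obstruction in Remark~\ref{rk:variance_reduction_rmk}.

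The paper circumvents the missing second moment by \emph{not} isolating a second-order Taylor remainder. Instead it computes the derivative of the variance at $\varepsilon=0$ directly: split $\sum_{\bfx\in\calP} f(\bfx+\varepsilon F_\calP(\bfx))$ according to whether $\bfx\in B(\mathbf 0,R')$ or not; for the inner part (Lemma~\ref{lem:variance_from_internal_term}) dominate the difference quotient by a random variable involving only $\|F_\calP(\bfx)\|_2$ to the first power (integrable via H\"older with exponent $<d/(d-1)$) and pass to the limit by dominated convergence; for the outer and cross parts (Lemmas~\ref{lem:zero_variance_of_external_term} and~\ref{lem:zero_variance_of_product_term}) use the sub-exponential tail bounds on the truncated force (Lemma~\ref{lem:bound_external_force}, Corollaries~\ref{coro:existence_external_moments} and~\ref{coro:existence_internal_moments}) to show these contributions are $o(\varepsilon)$. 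Your outline ignores this outer contribution entirely: points $\bfx$ far from $K$ with $f(\bfx)=\nabla f(\bfx)=0$ still contribute to $E_\varepsilon$ whenever $\bfx+\varepsilon F_\calP(\bfx)\in K$, and summing the naive bound $\tfrac12\varepsilon^2\|\nabla^2 f\|_\infty\|F_\calP(\bfx)\|_2^2$ over infinitely many such $\bfx$ gives no control. To repair your argument you would need to replace the second-order Taylor step by a first-order difference-quotient argument with an $L^1$ majorant using only sub-$\alpha$ moments of $F_\calP$, and add a separate treatment of far points based on force tail bounds---which is precisely the route the paper takes.
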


Several remarks are in order.

\begin{remark}
    \label{rk:variance_reduction_rmk}
    Upon noting that $\Pi_0 \calP= \calP$, Equation \eqref{eq:variance_linear_stat_push} implies a negative derivative of the variance of $ \widehat{I}_{ \Pi_\varepsilon \calP }(f) $ at $ \varepsilon =0$.
    Actually,
    \begin{equation}
        \Var \left[ \widehat{I}_{ \Pi_\varepsilon \calP }(f)\right]
        <
        \Var \left[\widehat{I}_{ \calP }(f)\right] = \rho^{-1} I_K(f^2),
    \end{equation}
    for a small enough stepsize $ \varepsilon >0 $.
    Computing the second-order derivative of the variance is more challenging because the moments of $F_{\calP}$ of order greater than $1+1/(d-1)$ are not well-defined.
\end{remark}

\begin{remark}
    \label{remk:epsilon_0}
    Taking $\varepsilon$ equal to
    \begin{equation}
        \label{eq:epsilon_0}
        \varepsilon_{0} \triangleq  \frac{1}{2d \kappa_d \rho}
    \end{equation}
    makes the term of order $\varepsilon$ in \eqref{eq:variance_linear_stat_push} vanish.
    Note that $\varepsilon_0$ does not depend on the integrand $f$.
    We shall later confirm in numerical experiments that this choice of $\varepsilon$ is a robust default. 
\end{remark}
\begin{remark}
    To gain insight into the rate of convergence behind Theorem~\ref{thm:Variance_reduction}, we set $\varepsilon = \varepsilon_0$ as in \eqref{eq:epsilon_0} and let $\delta = 1+1/(d-1) - \mu$ for some arbitrarily small $\mu \in (0, 1)$.
        This reduces \eqref{eq:variance_linear_stat_push} to  
        \begin{equation}\label{eq:var_with_epsilon_0}
            \Var \left[ \widehat{I}_{ \Pi_{\varepsilon_0} \mathcal{P}  }(f)\right] \leq \frac{\widetilde{C}_{\mu}}{\rho^{1+1/(d-1) - \mu}},
        \end{equation}
        where $\widetilde{C}_{\mu} < \infty$ is independent of $\rho$. Accordingly, repelling a Poisson point process with $\varepsilon = \varepsilon_0$ reduces the variance by a factor close to $\rho^{1/(d-1)}$. 
        Unlike recent approaches establishing optimal convergence rates for regular functions on compact domains, such as \citet{chopin2024higher} for $C^r$ regularity (with $r \in \mathbb{N}$) or \citet{leluc2025speeding} for Lipschitz functions, we do not recover here the optimal convergence rate, first derived by \citet{Bakh1959}; see also \citealp{Nov06,Nov2015}: for our $C^2$ regularity setting for $n$ evaluations of such an integrand, the worst-case squared error of the best algorithm scales as $n^{-1-4/d}$.
        Our rate is closer to the suboptimal rates $n^{-1-1/d}$ obtained recently for Monte Carlo methods using determinantal point processes \citep{BarHar2020,CoeMazAmb2021}.
        Actually, in the experiments of Section \ref{sec:experimental_illustration}, we even observe a slower rate of convergence than $1+1/(d-1)$. 
        This can be explained by two factors. First, border effects may lead to numerical degradation. Second, the constant $\widetilde{C}_{\mu}$ in \eqref{eq:var_with_epsilon_0} involves moments of order $\gamma$, with $d/(d-1) - \mu \leq \gamma < d/(d-1)$, of the $d/(d-1)$-stable force $F_{\calP} (\mathbf{0})$ (see, for instance, equation \eqref{eq:in_proofs_div_Cmu}). 
        Accordingly, $\widetilde{C}_{\mu}$ may diverge as $\mu \to 0$ at a rate $\mu^{-1}$, suggesting that polylogarithmic terms appear in the rate of convergence.
\end{remark}

\begin{remark}
    When $\varepsilon<0$, we obtain a positive first-order derivative of the variance at $\varepsilon=0$, so that, for $\vert\varepsilon\vert$ small enough,
    \begin{equation*}
        \Var \left[ \widehat{I}_{ \Pi_\varepsilon \mathcal{P} }(f)\right]
        >
        \Var \left[ \widehat{I}_{ \mathcal{P} }(f)\right].
    \end{equation*}
    This result is expected as the behavior of $\Pi_\varepsilon$ shifts from repulsive to attractive.
\end{remark}


\begin{remark}
    \label{rmk:harmonicity of the potential}
    A key element of the proof of Theorem \ref{thm:Variance_reduction} is the super-harmonicity
    of the Coulomb potential $U_\calP$, which defines the force function $F_\calP$.
    In other words, defining $U_\calP$ such that $\nabla U_\calP(\bfx) = F_{\calP} (\bfx)$,
    we have
    $$ \Delta  U_{\calP} (\bfx ) = \text{div}\left(F_{\calP \cap K} (\bfx )\right) = d \kappa_d \sum_{\bfz \in \calP\setminus \{\bfx \}} \delta_{\{\bfz\}}(\bfx ) - \kappa_d \rho,$$
    which is negative on $\bbR^d \setminus \calP$.
    This property, combined with a tailored integration by parts, forms the main ingredient of the proof; see Section \ref{sub:iterative Repulsion} and Remark \ref{rmk:remark_proof_singularity_importance}.
\end{remark}

\begin{remark}
    Without further assumptions on the integrand, other types of interaction than Coulomb do not necessarily yield such a variance reduction if plugged into our repulsion operator.
    Relatedly, there are many links between Coulomb interaction and numerical integration beside our result.
    For instance, the so-called Fekete points, defined as maximizers of the Coulomb energy
    \begin{align*}
        \bfx_{1}, \dots, \bfx_{N} \mapsto \sum_{1\leqslant i, j\leqslant N}^{\neq}\frac{1 }{\|\bfx_{i}-\bfx_{j}\|_2^{d-2}}
    \end{align*}
    on a compact, have been studied as a quadrature scheme, see e.g. \cite{Ser2019} and references therein.
\end{remark}
\subsection{Properties of the force} 
\label{sub:Properties_of_F}
In this section, we discuss key characteristics of the random function $F_{\mathcal{P}}$, when $\calP$ is a PPP. Additional properties can be found in Appendix \ref{sec:Proofs}.

First, \citet[Proposition 1]{ChaPelal2010} proved that when $d\geq 3$, almost surely, the series defining $F_{\mathcal{P}}(\bfx)$ converges simultaneously for all $\bfx$ and defines a translation-invariant (in distribution) vector-valued random function, which is also almost surely continuously differentiable.
The subsequent proposition provides further insights into the distribution of $F_\calP$.

\begin{proposition}
    \label{prop:distribution_of_F}
    Let $\calP$ be a homogeneous Poisson point process of intensity $\rho$ of $\bbR^d$, with $d \geq 3$.
    Then, for any $\bfx \in \mathbb{R}^d$, $F_{\mathcal{P}}(\bfx )$ has a symmetric $\alpha-$stable distribution of index $\alpha= \frac{d}{d-1}$.
\end{proposition}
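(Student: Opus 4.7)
My plan is to verify directly the strict stability relation for $F_{\calP}(0)$ by exploiting the superposition and scaling properties of the homogeneous Poisson point process. By stationarity of $\calP$ and translation equivariance of $\sfX \mapsto F_\sfX$, it suffices to identify the distribution of $F_\calP(0)$.

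Fix an integer $n \geq 2$. By the superposition property, $\calP$ has the same distribution as $\bigcup_{i=1}^{n} \calP_i$, where $\calP_1,\ldots,\calP_n$ are independent PPPs of intensity $\rho/n$. Since the sum in \eqref{eq:def_gravitational_force} is organized by increasing distance from the origin, the truncated sums trivially satisfy $F^{R}_\calP(0) = \sum_{i=1}^n F^{R}_{\calP_i}(0)$ for every $R>0$, and the almost-sure convergence provided by \citet[Proposition 1]{ChaPelal2010} then yields $F_\calP(0) = \sum_{i=1}^n F_{\calP_i}(0)$ a.s., with independent summands. By the scaling property of PPP, $\calP$ at intensity $\rho/n$ has the same law as $n^{1/d}\calP^{(\rho)}$, and from the definition of $F$ one checks directly that $F_{\lambda \sfX}(0) = \lambda^{-(d-1)} F_\sfX(0)$ for every $\lambda>0$ (the rescaling preserves the ordering by distance). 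Therefore each $F_{\calP_i}(0)$ has the same law as $n^{-(d-1)/d} F_\calP(0)$.

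Combining the two identities, $F_\calP(0) \stackrel{d}{=} n^{-(d-1)/d}(X_1+\cdots+X_n)$ where $X_1,\ldots,X_n$ are i.i.d.\ copies of $F_\calP(0)$; equivalently,
\begin{equation*}
X_1+\cdots+X_n \stackrel{d}{=} n^{1/\alpha} F_\calP(0), \qquad \alpha = \tfrac{d}{d-1} \in (1,\tfrac{3}{2}].
\end{equation*}
Symmetry follows from isotropy: the reflection $\bfz \mapsto -\bfz$ preserves the distribution of $\calP$ and sends $F_\calP(0)$ to $-F_\calP(0)$. Together, these two properties are precisely the defining relation of a symmetric strictly $\alpha$-stable distribution, so $F_\calP(0)$ has such a distribution with $\alpha = d/(d-1)$, provided it is non-degenerate, which is clear from the construction.

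The main delicate point is justifying the splitting $F_\calP(0) = \sum_i F_{\calP_i}(0)$ at the level of a conditionally convergent series; it is exact for every truncation thanks to $\calP \cap B(0,R) = \bigsqcup_i (\calP_i \cap B(0,R))$, and passes to the a.s.\ limit because the $n+1$ limits exist a.s.\ simultaneously. As a cross-check, one could instead compute the characteristic function of $F^{R}_\calP(0)$ via Campbell's formula, perform the change of variables $\bfw=\bfz/\|\bfz\|_2^d$, and let $R\to\infty$ to obtain $\exp(-C_d \rho\|\bft\|_2^{d/(d-1)})$ for an explicit positive constant $C_d$, concluding via L\'evy's continuity theorem.
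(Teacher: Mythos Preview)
Your argument is correct and follows essentially the same route as the paper: both use the superposition and scaling properties of the homogeneous Poisson point process together with the homogeneity $F_{\lambda\sfX}(0)=\lambda^{-(d-1)}F_\sfX(0)$ to obtain the strict stability relation $X_1+\cdots+X_n\stackrel{d}{=}n^{(d-1)/d}F_\calP(0)$, and symmetry from $-\calP\stackrel{d}{=}\calP$. Your write-up is in fact slightly more careful than the paper's sketch, since you justify the splitting of the conditionally convergent series via the truncated sums and their a.s.\ limits.
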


This observation was mentioned by \cite{ChaPelal2010} and can be easily checked by observing that the union of $n$ i.i.d.\ copies of $\mathcal{P} $, which is a PPP of intensity $n \rho$, is also a PPP of intensity $\rho$ scaled by $n^{-1/d}$.
The individual terms in $F_{\mathcal{P}}(\bfx)$ scale as a $(d-1)$-th power of the distance, so the sum of $n$ i.i.d.\ copies of $F_{\mathcal{P} }(\bfx)$ has the same distribution as $n^{(d-1)/d} F_{\mathcal{P} }(\bfx)$.
Symmetry is obvious, as $-\calP=\calP$ in distribution.
Proposition~\ref{prop:distribution_of_F} implies that $\bbE [F_{\mathcal{P} }(\bfx )]=0$ and $\bbE [ \lVert F_{\mathcal{P} }(\bfx ) \lVert_2^\nu ] < \infty$ iff $\nu<\alpha$.
For more details about stable distributions, we refer to
\citep[Section 1.5]{Nol2020} and \citep{AbhNol1998}.

Second, we have the following result regarding the distribution of the difference of forces.
\begin{proposition}
    \label{prop:joint_density_of_force}
    Let $\calP$ be a homogeneous Poisson point process of $\bbR^d$.
    Then, for any two distinct points $\bfx, \bfy$ of $\bbR^{d}$, the random vector $F_{\calP} (\bfx)- F_{\calP} (\bfy)$ is continuous, i.e.,
    for any $\bfc \in \bbR^d$,
    $$\bbP
    \left(
    F_{\calP}(\bfx) - F_{\calP}(\bfy) = \bfc
    \right) =0 .$$
\end{proposition}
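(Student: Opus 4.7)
The strategy is to exploit the complete randomness of $\calP$ to localise the problem to a bounded region, where the Poisson void probability provides a tuneable small parameter. Fix a bounded Borel set $A \subset \bbR^d$ with $\bfx, \bfy \notin \overline{A}$, and set
\[g_0(\bfz) := \frac{\bfx-\bfz}{\|\bfx-\bfz\|_2^d}-\frac{\bfy-\bfz}{\|\bfy-\bfz\|_2^d}.\]
I would first write $F_{\calP}(\bfx)-F_{\calP}(\bfy) = \bfG_A + \bfH_A$, where $\bfG_A := \sum_{\bfz \in \calP \cap A}g_0(\bfz)$ is a finite sum (as $\calP \cap A$ is a.s.\ finite) and $\bfH_A$ denotes the contribution of $\calP \setminus A$. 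Since $F_{\calP}(\bfx)$ and $F_{\calP}(\bfy)$ exist a.s.\ as limits of partial sums over growing balls \citep{ChaPelal2010}, subtracting the finite $A$-sums shows that $\bfH_A$ is well defined and $\sigma(\calP \setminus A)$-measurable; by complete randomness of $\calP$, $\bfG_A$ and $\bfH_A$ are independent.

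Next, I would prove that the law $\mu$ of $g_0(\bfZ)$, for $\bfZ$ uniform on $A$, is absolutely continuous with respect to Lebesgue measure. Setting $u(\bfv) := \bfv/\|\bfv\|_2^d$, a direct computation gives $Dg_0(\bfz) = -Du(\bfx-\bfz)+Du(\bfy-\bfz)$, with $\det Du(\bfv) = (1-d)\|\bfv\|_2^{-d^2}$; hence $|\det Dg_0(\bfz)| \to \infty$ as $\bfz \to \bfx$. In particular, $\det Dg_0$ is a non-trivial real-analytic function on $\bbR^d \setminus \{\bfx,\bfy\}$, so its zero set has Lebesgue measure zero, and the change-of-variables formula yields $\mu \ll \mathrm{Leb}$. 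A standard convolution argument then gives $\mu^{*n} \ll \mathrm{Leb}$, so $\mu^{*n}(\{\bfa\}) = 0$ for every $\bfa \in \bbR^d$ and every $n \geq 1$.

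Conditioning on $N = |\calP \cap A| \sim \mathrm{Poisson}(\rho|A|)$ then yields $\bbP(\bfG_A = \bfa) = e^{-\rho|A|}\mathds{1}_{\{\bfa = \mathbf{0}\}}$ for every $\bfa \in \bbR^d$. Combining this with the independence of $\bfG_A$ and $\bfH_A$,
\[\bbP\bigl(F_{\calP}(\bfx)-F_{\calP}(\bfy) = \bfc\bigr) = \bbE\bigl[\bbP(\bfG_A = \bfc - \bfH_A \mid \bfH_A)\bigr] = e^{-\rho|A|}\,\bbP(\bfH_A = \bfc) \leq e^{-\rho|A|}.\]
Taking $A = A_n$ with $|A_n| \to \infty$ (for instance $A_n = B(\mathbf{0},n) \setminus (B(\bfx,1) \cup B(\bfy,1))$) forces the right-hand side to zero.

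The main obstacle I anticipate is Step 2, namely proving that $\mu$ is genuinely absolutely continuous (rather than merely atomless) by controlling the generic rank of $Dg_0$, together with the (standard but worth spelling out) fact that a non-trivial real-analytic function on a connected open set has a zero set of Lebesgue measure zero. The decomposition in Step 1 is essentially bookkeeping once $\bfH_A$ is defined as $F_{\calP}(\bfx) - F_{\calP}(\bfy)$ minus the finite $A$-sums.
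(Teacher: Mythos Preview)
Your proof is correct and shares the same analytical core as the paper's: both isolate the contribution from a bounded region, use the complete randomness of $\calP$ to obtain independence between that contribution and the remainder, and invoke real analyticity to show that the relevant level sets have Lebesgue measure zero. The probabilistic packaging differs, however. The paper chooses a sequence of disjoint balls $B_n$ of volume $1/\rho$, applies the second Borel--Cantelli lemma to guarantee that infinitely many of the events $\Omega_n = \{\calP(B_n)=1\}$ occur, and then uses a union bound over $n$; conditioning on $\Omega_n$ avoids convolutions entirely, since the local contribution is then simply $g_0(\bfZ)$ for a single uniform point $\bfZ$ in $B_n$. Your route instead fixes a single region $A$, conditions on the Poisson count $N$, and bounds the only possible atom (coming from $N=0$) by the void probability $e^{-\rho|A|}$, which you then send to zero by enlarging $A$. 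This is more direct, dispenses with Borel--Cantelli, and yields an explicit quantitative bound, at the price of having to argue that $\mu^{*n}$ is atomless for all $n\geq 1$. A minor simplification: you establish $\mu\ll\mathrm{Leb}$ via the Jacobian of $g_0$, which is stronger than needed; the paper's observation that each coordinate $g_{0,i}$ is non-constant real analytic (being harmonic on $\bbR^d\setminus\{\bfx,\bfy\}$), hence has null level sets, already gives $\mu$ atomless, and atomlessness is preserved under convolution.
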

The proof of this proposition is deferred to Appendix \ref{sub:Existence of the joint distribution of vector of force}.
We note that an additional result regarding the joint density of the vector $(F_{\calP}(\bfx), F_{\calP}(\bfy))$ can be found in \citep[Theorem 10]{ChaPelal2010}, where the authors demonstrate the existence of the joint density of $(F_{\calP}(\bfx), F_{\calP}(\bfy))$ for $\bfx$ and $\bfy$ sufficiently far apart, further conditioning on having at least one point of $\calP$ within balls centered at $\bfx$ and $\bfy$.
They also derive an upper bound for the density.

Third, it is possible to derive an alternative expression for $F_{\mathcal{P}}(\bfx)$ that avoids the requirement of a different order of summation at each point $\bfx$. More precisely, ordering terms by their distance to the origin yields
\begin{equation}
    \label{eq:alternative_def_F}
    F_{\mathcal{P}}(\bfx )
    =
    \sum_{\substack{
            \mathbf{z} \in \mathcal{P} \setminus \{\bfx \}
            \\ \lVert \mathbf{z} \lVert_2 \uparrow }}
    \frac{\bfx  - \mathbf{z}}{ \lVert \bfx  - \mathbf{z} \lVert_2^d}
    - \kappa_d \rho \bfx ,
    \tag{$F_2$}
\end{equation}
where $\kappa_d$ is the volume of the unit ball of $\mathbb{R}^d$.
Note the additional term in \eqref{eq:alternative_def_F}, which compensates for fixing the order of summation.
\citet[Proposition 5]{ChaPelal2010} proved that the expressions \eqref{eq:def_gravitational_force} and \eqref{eq:alternative_def_F} are equivalent when $\mathcal{P}$ is a PPP of unit intensity.
A similar proof with slight modifications holds in the general case when $\rho \neq 1$.

Finally, for a stationary point process $\calX \subset \bbR^d$, and $0 \leq q < p $, we define the truncated force
\begin{equation}
    \label{eq:def_truncated_gravitational_force}
    F^{(q, p)}_{\calX }(\bfx )
    \triangleq
    \sum_{\substack{
            \mathbf{z} \in \calX \setminus \{\bfx \} \cap A^{(q,p)}(\bfx )
            \\ \lVert \bfx  - \mathbf{z} \lVert_2 \uparrow }}
    \frac{\bfx  - \mathbf{z}}{ \lVert \bfx  - \mathbf{z} \lVert_2^d} ,
\end{equation}
where $A^{(q,p)}(\bfx )= B(\bfx , p) \setminus B(\bfx , q)$ is the annulus centered at $\bfx $ with small radius $q$ and big radius $p$. We will denote $A^{(q,p)}(\mathbf{0})$ simply by $A^{(q,p)}$.
Intuitively, $F^{(q, p)}_{\mathcal{X}}(\bfx )$ represents the total Coulomb force experienced by a charged particle at $\bfx $ due to the influence of other particles of the same charge located in $\mathcal{X}\cap A^{(q,p)}(\bfx )$.
Note that the law of $F^{(q, p)}_{\mathcal{P}}(\bfx )$ is invariant under translation of $\bfx$, as was the case for its non-truncated counterpart.
The truncated force is a useful tool for practical implementation, just like the truncated repelled point process
\begin{equation}
    \label{eq:truncated_operator}
    \Pi_{\varepsilon}^{(q,p)}\calX \triangleq \left \{\bfx  + \varepsilon F_{\calX}^{(q,p)} (\bfx ): \bfx  \in {\calX} \right \}.
\end{equation}

\begin{remark}
    \label{rmk:truncated_var_reduc}
    The proof of Theorem \ref{thm:Variance_reduction} holds even when we replace $F_{\mathcal{P}}$ with its truncated version $F^{(0,p)}_{\mathcal{P}}$, as long as $p$ is larger than the diameter of the support $K$ of the integrand.
\end{remark}

\subsection{Sampling from the repelled Poisson point process} 
\label{sub:Sampling from the repelled point process}
\begin{figure}[!h]
    \centering
    \begin{subfigure}{0.24\linewidth}
        \centering
        \includegraphics[width=0.9\linewidth]{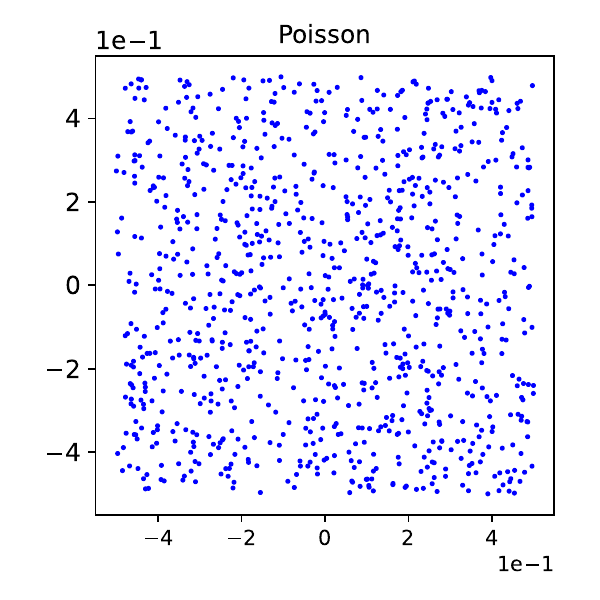}
    \end{subfigure}
    \begin{subfigure}{0.24\linewidth}
        \centering
        \includegraphics[width=0.9\linewidth]{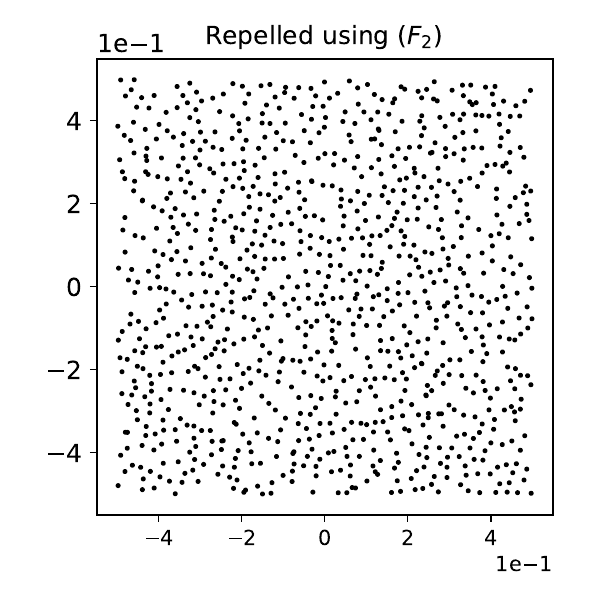}
    \end{subfigure}
    \begin{subfigure}{0.24\linewidth}
        \centering
        \includegraphics[width=0.9\linewidth]{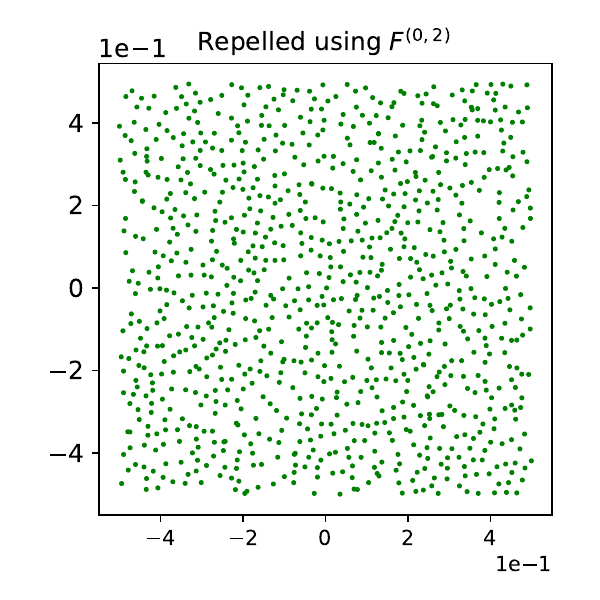}
    \end{subfigure}
    \begin{subfigure}{0.24\linewidth}
        \centering
        \includegraphics[width=0.9\linewidth]{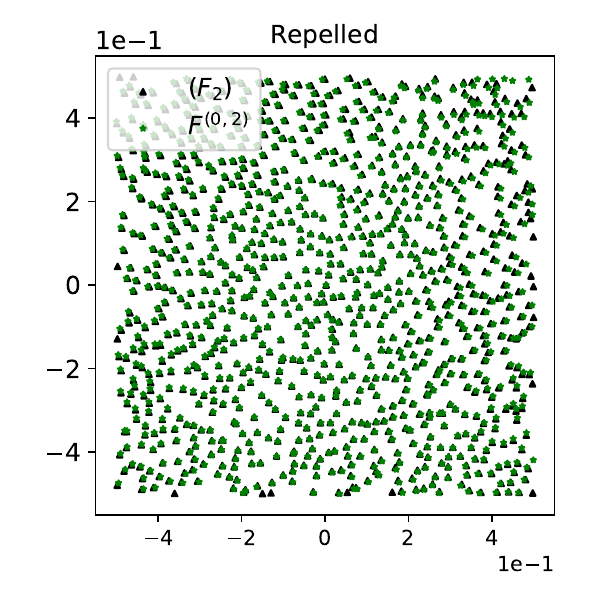}
    \end{subfigure}
    ~
    \begin{subfigure}{0.24\linewidth}
        \centering
        \includegraphics[width=0.9\linewidth]{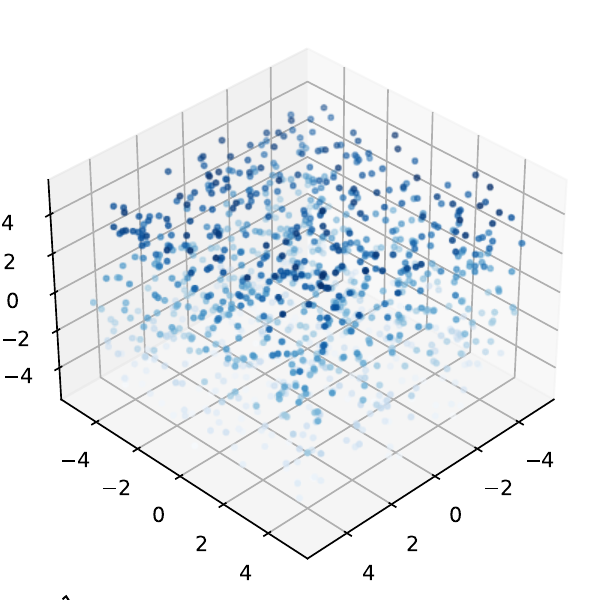}
    \end{subfigure}
    \begin{subfigure}{0.24\linewidth}
        \centering
        \includegraphics[width=0.9\linewidth]{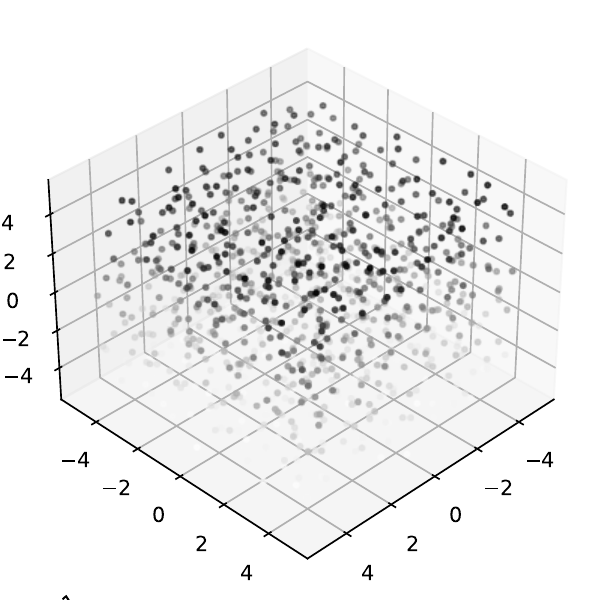}
    \end{subfigure}
    \begin{subfigure}{0.24\linewidth}
        \centering
        \includegraphics[width=0.9\linewidth]{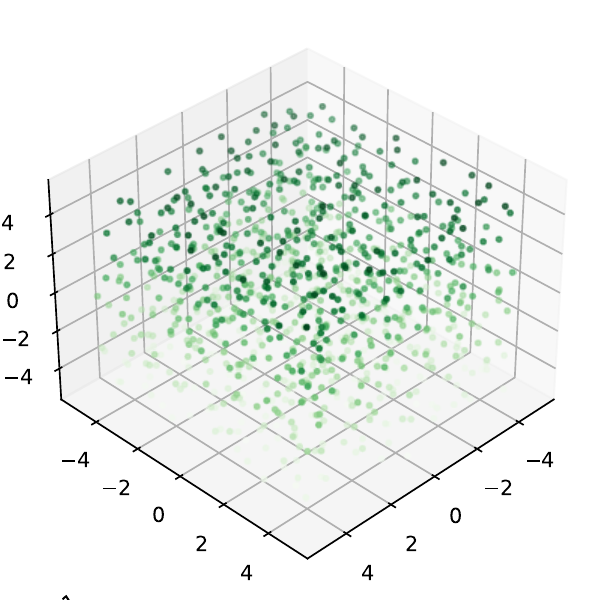}
    \end{subfigure}
    \begin{subfigure}{0.24\linewidth}
        \centering
        \includegraphics[width=0.9\linewidth]{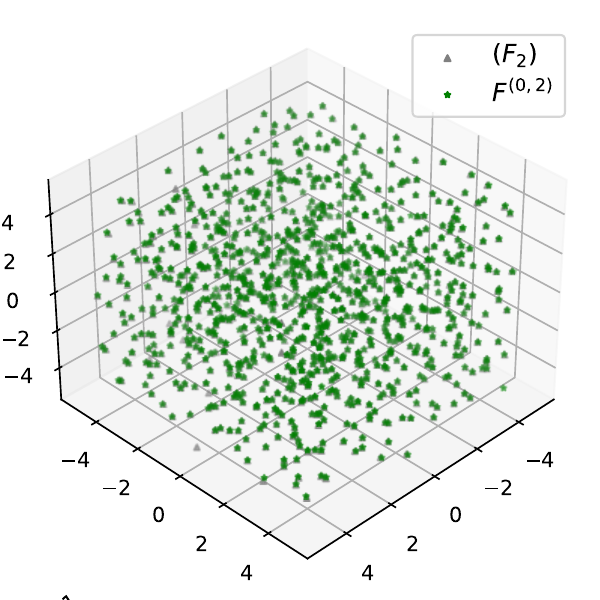}
    \end{subfigure}
    \caption{PPP sample (first column) and the corresponding RPPP samples, obtained using \eqref{eq:alternative_def_F} (second column) and using $F^{(0,2)}$ (third column).
        The first row corresponds to $d=2$ and the second row to $d=3$, with $\varepsilon$ set in each row to the value $\varepsilon_0 = \varepsilon_0(d)$ in \eqref{eq:epsilon_0}.
        The last column shows the two RPPP samples superimposed.}
    \label{fig:poisson_push_2d_3d}
\end{figure}
Let $\calP$ be a PPP of intensity $\rho>0$ and $ \Pi_\varepsilon \calP$ be the associated RPPP.
Let $K\subset\mathbb{R}^d$ be compact, with diameter $\text{diam}(K)$.
In this section, we propose two approaches to approximately sampling from $\Pi_\varepsilon \calP\cap K$.
By Proposition \ref{prop:motion_invariance_for_poisson}, $\Pi_\varepsilon \calP$ is stationary, and we henceforth assume that $K\subseteq B(\mathbf{0},p)$, where $p=\text{diam}(K)/2$.

\begin{algorithm}[h]
    \caption{: Sampling from the repelled point process corresponding to $\calP$ in a centered observation window $K$ using \eqref{eq:def_gravitational_force}}
    \label{algo:sampling_rppp_A1}
    \begin{algorithmic}[1]
        \State \textbf{fix} $\varepsilon$
        \State \textbf{set} $p = \text{diam}(K)/2$, and $k=1$
        \State \textbf{sample} a point pattern $\{\bfx_i\}_{i=1}^N$ from the point process $\calP$ in the centered ball window $B(\mathbf{0}, 2p)$
        \State \textbf{sort} $\{\bfx_i\}_{i=1}^N$ in a KD-tree
        \ParFor{$i: 1 \rightarrow N$}
        \If{$\bfx_i \in K$ }
        \State {\textbf{search} in the KD-Tree the points $\{\bfx_j\}_{j}$ located in $B(\mathbf{x_i}, p) \setminus \{\bfx_i\}$}
        \State{\textbf{use} $\{\bfx_j\}_{j}$ to compute the truncated force $F_{\calP}^{(0, p)} (\bfx_i) $ at $\bfx_i$ via Equation \eqref{eq:def_truncated_gravitational_force}}
        \State{\textbf{set} $\bfy_k = \bfx_i + \varepsilon F_{\calP}^{(0, p)} (\bfx_i)$, and $k=k+1$ }
        \EndIf
        \EndParFor
        \State \textbf{return}  $\{\bfy_k\}_k \cap K$
    \end{algorithmic}
\end{algorithm}

Our first approach is simply to sample $\Pi_{\varepsilon}^{(0,p)} \calP \cap K$.
This seems reasonable in the context of application to numerical integration since the variance reduction result also holds for $\Pi_{\varepsilon}^{(0,p)} \calP$; see Remark \ref{rmk:truncated_var_reduc}.
The corresponding pseudo-code is provided by Algorithm \ref{algo:sampling_rppp_A1}.
In words, we use the points of $\calP$ that fall in the larger ball $B(\mathbf{0},  2p)$ to displace the points of $\calP \cap K$.
Informally, for large $K$, we expect the resulting distribution to be close to that of $\Pi_\varepsilon \calP\cap K$ because, for each $\bfx\in\calP\cap K$, we only neglected contributions to the force \eqref{eq:def_gravitational_force} from points at distance further than $p$ from $\bfx$, and the magnitude of these contributions decreases fast.
One downside of this approach is that it requires, for each $\bfx$, to find the points of $\calP$ located in $B(\mathbf{x}, p)$.
While storing the initial sample of $\calP\cap B(\mathbf{0}, 2p)$ in an ad-hoc data structure like a KD-tree may help \citep{Ben1975}, we empirically found it more computationally tractable to rely on the alternative expression \eqref{eq:alternative_def_F} of the force.

Indeed, our second approach stems from the fact that the partial sums of \eqref{eq:alternative_def_F} use the same points of $\calP$, independently of $\bfx$.
The correction term in \eqref{eq:alternative_def_F} seems partially taking into account the effects of using a fixed order in the sum, so we propose to sample $ \calP\cap B(\mathbf{0}, p) $ and use the points in the latter sample to displace the points of $\calP \cap K$ using \eqref{eq:alternative_def_F} as described by Algorithm \ref{algo:sampling_rppp_A2}.
We still need to compute for each $\bfx \in \calP \cap K $ the distances between $\bfx$ and the points of $ \calP\cap B(\mathbf{0}, p) $, which in total cost $O(NM)$ operations, where $M = \calP(K)$ and $N= \calP(B(\mathbf{0}, p))$.

Some comments are in order.
First, the expected number of points of $\Pi_\varepsilon \calP \cap K$ is equal to $\rho |K|$.
Second, in both sampling approaches, the points of $\Pi_\varepsilon \calP \cap K$ can be sampled concurrently, resulting in a reduction in computational time roughly proportional to the number of available processors. This parallelization appears in Algorithm \ref{algo:sampling_rppp_A1}, and \ref{algo:sampling_rppp_A2} through a \emph{parfor} loop.
Third, currently, we do not have a strategy in place to mitigate border effects without increasing the computational cost. We recommend, if possible, sampling $\Pi_\varepsilon \calP \cap W$, where $W$ is a window slightly larger than $K$, and then restricting the obtained sample to the target window $K$.
In our upcoming experiments, we use $W=B(\mathbf{0}, \text{diam}(K)/2)$.
Finally, we provide a Python package, called \toolboxmcrppy{}, available on GitHub\footnote{\url{https://github.com/dhawat/MCRPPy}}, which implements Algorithm \ref{algo:sampling_rppp_A1} and \ref{algo:sampling_rppp_A2}.

Figure \ref{fig:poisson_push_2d_3d} shows samples of $\Pi_{\varepsilon_0} \calP$ in $[-1/2, 1/2]^d$ obtained with the two aforementioned approaches, for $d=2$ (first row) and $d=3$ (second row).
The corresponding PPP is of intensity $1000$, and the initial samples are given in the first column.
In the second column, \eqref{eq:alternative_def_F} was used (Algorithm \ref{algo:sampling_rppp_A2}), while $F^{(0,2)}_{\calP}$ was used in the third column (Algorithm \ref{algo:sampling_rppp_A1}).
The last column is a superposition of the samples obtained in columns 2 and 3, displaying very close agreement.
Finally, note that Figures \ref{fig:poisson_push_force} and \ref{fig:variance_reduction} were obtained using Algorithm \ref{algo:sampling_rppp_A2}, and we will keep using this simulation method in the next sections for sampling from the repelled point process of a stationary point process which may not necessarily be the PPP.
\begin{algorithm}[h]
    \caption{: Sampling from the repelled point process corresponding to $\calP$ in a centered observation window $K$ using \eqref{eq:alternative_def_F}}
    \label{algo:sampling_rppp_A2}
    \begin{algorithmic}[1]
        \State \textbf{fix} $\varepsilon$
        \State \textbf{set} $p = \text{diam}(K)/2$, and $k=1$
        \State \textbf{sample} a point pattern $\{\bfx_i\}_{i=1}^N$ from the point process $\calP$ in the centered ball window $B(\mathbf{0}, p)$
        \ParFor{$i: 1 \rightarrow N$}
        \If{$\bfx_i \in K$ }
        \State{\textbf{use} $\{\bfx_i\}_{i=1}^N$ to compute the force $F_{\calP} (\bfx_i) $ at $\bfx_i$ via Equation \eqref{eq:alternative_def_F}}
        \State{\textbf{set} $\bfy_k = \bfx_i + \varepsilon F_{\calP} (\bfx_i)$, and $k=k+1$ }
        \EndIf
        \EndParFor
        \State \textbf{return}  $\{\bfy_k\}_k \cap K$
    \end{algorithmic}
\end{algorithm}

\subsection{An experimental illustration of the variance reduction}
\label{sec:experimental_illustration}

\begin{figure}[!h]
    \centering
    \begin{subfigure}{\linewidth}
        \centering
        \includegraphics[width=0.99\linewidth]{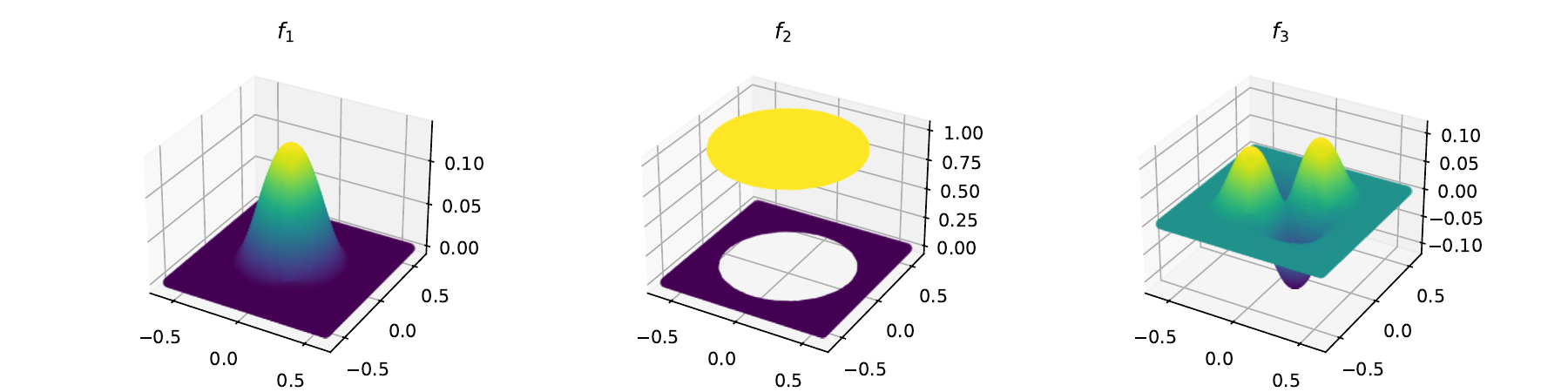}
    \end{subfigure}
    \begin{subfigure}{\linewidth}
        \centering
        \includegraphics[width=0.99\linewidth]{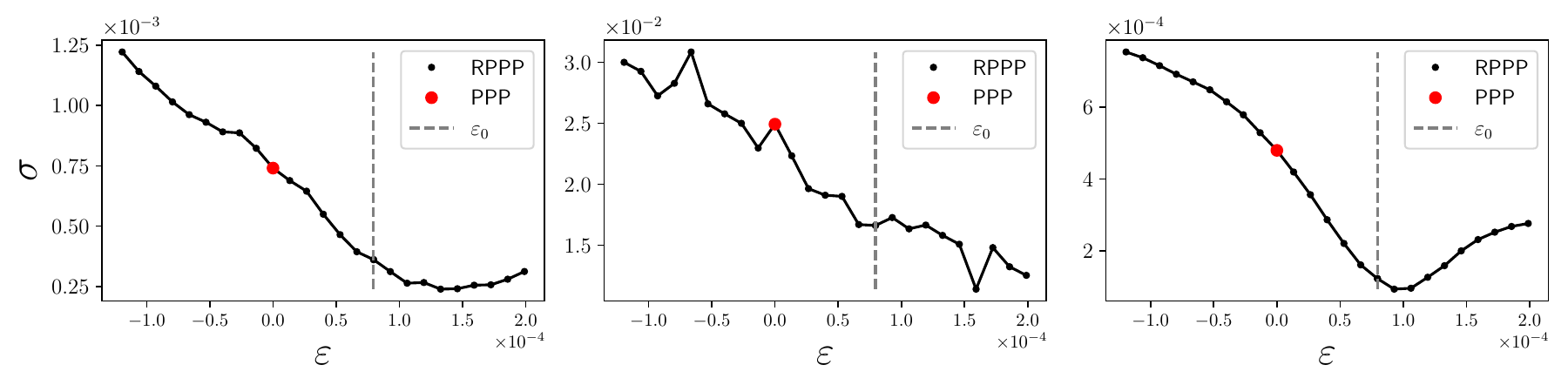}
    \end{subfigure}
    \caption{Estimated standard deviations of $\widehat{I}_{s, \Pi_\varepsilon \mathcal{P} \cap K }$ with respect to $\varepsilon$, for $f_1, \, f_2$, and $f_3$, in $d=3$.}
    \label{fig:variance_reduction}
\end{figure}

In this section, we present a numerical experiment to confirm the variance reduction found in Theorem \ref{thm:Variance_reduction}.
Additional experiments can be found in Section \ref{sec:Application to numerical integration}.

Let $K=[-1/2, 1/2]^d$.
Consider the three following integrands, all supported in $K$,
\begin{align}
    \label{eq:f_1_f_2_f_3}
    \nonumber
    f_1(\bfx ) &\triangleq  \left (1 - 4\|\bfx \|_2^2\right )^2 \exp\left(\frac{-2}{1 - 4\|\bfx \|_2^2}\right) \mathds{1}_{B(\mathbf{0}, 1/2)}(\bfx ),\\
    f_2(\bfx ) &\triangleq  \mathds{1}_{B(\mathbf{0}, 1/2)}(\bfx ),
    \quad
    \text{and } \quad
    f_3(\bfx ) \triangleq  \prod_{i=1}^d \cos^3(\pi x_i)\sin(\pi x_i) \mathds{1}_{K}(\bfx ).
\end{align}
Both $f_1$ and $f_3$ satisfy the requirements of Theorem \ref{thm:Variance_reduction}, while the indicator $f_2$ is discontinuous on $\partial B(\mathbf{0}, 1/2)$.
For each of these functions, Figure \ref{fig:variance_reduction} shows the estimated standard deviations $\widehat{\sigma}(\widehat{I}_{ s, \, \Pi_\varepsilon \mathcal{P} \cap K }(.))$ of the self-normalized estimator $\widehat{I}_{s, \, \Pi_\varepsilon \mathcal{P} \cap K }$ \eqref{eq:mc_push_estimator_self_normalized} for varying values of $\varepsilon$  in $d=3$ and for $\calP$ a PPP of intensity $\rho=500$ .
We conducted the analysis using $50$ independent samples of $\calP$.

The estimated standard deviations of $\widehat{I}_{s, \mathcal{P} \cap K }$, corresponding to $\varepsilon=0$, are represented by the large red dots. The black dots indicates the values of $\widehat{\sigma}(\widehat{I}_{s, \Pi_\varepsilon \mathcal{P} \cap K }(.))$.
The dashed lines indicate $\varepsilon_0$ \eqref{eq:epsilon_0}.
Note that within the range of $\varepsilon$ employed, the number of points of $(\Pi_\varepsilon \mathcal{P})\cap K$ remains relatively stable, with an average ranging between 493 and 501.

First, we observe that for negative values of $\varepsilon$, $\widehat{\sigma}(\widehat{I}_{s,\, \Pi_\varepsilon \mathcal{P} \cap K }(.))$ are greater than $\widehat{\sigma}(\widehat{I}_{s, \, \mathcal{P} \cap K }(.))$, for the three functions.
This behavior is expected because the operator $\Pi_\varepsilon$ is attractive for negative values of $\varepsilon$.
Second, for positive values of $\varepsilon$, up to $\varepsilon_0$, we observe that $\widehat{\sigma}(\widehat{I}_{s, \, \Pi_\varepsilon \mathcal{P} \cap K }(.))$ are lower than $\widehat{\sigma}(\widehat{I}_{s,\, \mathcal{P}\cap K }(.))$.
This result aligns with our theoretical expectations and provides evidence for the variance reduction in Theorem~\ref{thm:Variance_reduction}.
Third, for $f_1$ and $f_3$ we observe an interesting trend when $\varepsilon$ exceeds $\varepsilon_0$. The standard deviations decrease until reaching a minimum value. This minimum value, particularly for $f_3$, is relatively close to $\varepsilon_0$.
However, after this minimum point, the standard deviations start to increase again.
The behavior of $f_2$ in this scenario appears to be more intricate and less predictable.
Overall, it appears that $\varepsilon_0$ is a reasonable choice for $\varepsilon$, regardless of the integrand, although not necessarily the optimal threshold for a specific integrand.

\section{Application to numerical integration} 
\label{sec:Application to numerical integration}
In this section, we benchmark the RPPP among a few key Monte Carlo methods, to provide context.
The Python code for replicating this study can be found in \toolboxmcrppy{}\footnote{\url{https://github.com/dhawat/MCRPPy}}.
\subsection{A few Monte Carlo methods} 
\label{sub:Monte Carlo methods}

Let $f$ be a continuous function supported in $K=[-1/2, 1/2]^d$.
Our goal is to estimate the integral $I_K(f)$ in \eqref{eq:target_integral}.
The simple (or crude) Monte Carlo method employs a Binomial point process (BPP) $\mathcal{B}_N$ supported on $K$, see Section \ref{sub:Poisson Point process}, to estimate $I_K(f)$ as follows
\begin{equation}
    \label{eq:MC}
    \widehat{I}_{\mathrm{MC}}(f) = \frac{1}{N} \sum_{\bfx  \in \mathcal{B}_N} f(\bfx ).
\end{equation}
The number of points used is fixed to $N$.
$\widehat{I}_{\mathrm{MC}}(f)$ is an unbiased estimator of $I_K(f)$ with a variance equal to $N^{-1} \Var (f(\bfu))$ where $\bfu$ is a uniformly drawn point of $K$.
\begin{remark}
    \label{rmk:binomial_vs_poisson_mc}
    As mentioned in Section \ref{sub:Poisson Point process}, when the number of points of a BPP increases appropriately with the size of the observation window, the BPP converges to a homogeneous Poisson point process (PPP).
    Consequently, one can think of the estimator $\widehat{I}_{\mathrm{MC}}(f)$ as a self-normalized version of the estimator $\widehat{I}_{\calP}(f)$ in \eqref{eq:mc_poisson_estimator}, where $\calP$ represents a PPP with suitable intensity.
    However, by setting the intensity of $\calP$ equal to $N$, we obtain the following variance for $\widehat{I}_{\calP}(f)$
    \begin{equation}
        \Var \left[\widehat{I}_{\calP}(f)\right] = N^{-1} \int_{K} f^2(\bfx) \d \bfx,
        \label{e:variance_poisson}
    \end{equation}
    which is larger than the variance of $\widehat{I}_{\mathrm{MC}}(f)$
    $$\Var \left[\widehat{I}_{\mathrm{MC}}(f)\right] =  N^{-1} \left(\int_{K} f^2(\bfx) \d \bfx - \left(\int_{K} f(\bfx) \d \bfx\right)^2\right).$$
    These variances are only equal if the integral of $f$ is zero.
    Otherwise, fixing the number of points is preferable to using a random number of points.
\end{remark}

Much research has gone into reducing the variance of $\widehat{I}_{\mathrm{MC}}$; see e.g. \citep{Owen2013} and references therein.
Control variate methods, for instance, rely on subtracting from the integrand $f$ a function that is computationally cheap to evaluate and possesses a known integral.
In our setting where the integrand is supported on the hypercube $K$ and the integral is w.r.t. the Lebesgue measure, of which we know all moments, we can easily derive the integral $m_\alpha$ of any multivariate monomial $\mathbf{x}^{\mathbf \alpha} = x_1^{\alpha_1}\dots x_d^{\alpha_d}$, and thus of any multivariate polynomial.
Then, for any $c\in\mathbb{R}$ and $b_{\alpha}\in\mathbb{R}, \alpha\in\mathbb{N}^d$, consider the estimator
\begin{equation}
    \label{eq:estimator_MCCV}
    \widehat{I}_{\mathrm{MCCV}}(f) = \widehat{I}_{\mathrm{MC}}\left(f - c \sum_{\vert\alpha\vert\leq 2} b_{\alpha} \mathbf{x}^\alpha\right) + c \sum_{\vert\alpha\vert\leq 2} b_{\alpha} m_\alpha,
\end{equation}  
where we have arbitrarily limited ourselves to a multivariate polynomial of degree $2$ as control variate.   	
There are several ways to determine $c$ and $(b_\alpha)$ in \eqref{eq:estimator_MCCV}, we chose one that yields an unbiased estimator and relies on two additional independent copies of $\mathcal{B}_N$; see Appendix~\ref{a:comparison_of_CV_estimators} for details.
While using three independent copies of $\mathcal{B}_N$ intuitively gives an advantage to the MCCV estimator over estimators that use only $N$ samples from the uniform distribution, our point is to illustrate how our own (repelled, unbiased) estimators compare to an ideal control variates approach.
That being said, we provide a numerical comparison of the estimated MSEs of three natural variants of the CV-based estimator on our integration benchmark in Appendix~\ref{a:comparison_of_CV_estimators}.
The resulting mean-squared errors are essentially statistically indistinguishable in our experiments.

While a fixed number of control variates can help gain a constant in the mean-square error of a Monte Carlo estimator, some Monte Carlo methods focus on improving the \emph{rate} of convergence of the variance have been proposed, starting with grid-based stratification \citep[Chapter 10]{Owen2013}.
For a recent example, replacing the BPP of crude Monte Carlo with one of a particular family of determinantal point processes (DPPs) has been shown to enhance the convergence rate of the variance beyond $O(N^{-1})$; see \citep{BarHar2020, CoeMazAmb2021}.
For our comparison, we use the unbiased estimator given in \citep[Section 2.1]{BarHar2020} and implemented in \href{https://github.com/guilgautier/DPPy}{DPPy} \citep{DPPY},
\begin{equation}
    \label{eq:estimator_MCDPP}
    \widehat{I}_{\mathrm{MCDPP}}(f) = \sum_{\bfx  \in \mathcal{D}_N} \frac{f (\bfx /2 )}{2^d \mathtt{K}_N(\bfx ,\bfx )},
\end{equation}
where $\mathcal{D}_N$ is the multivariate Legendre ensemble and $\mathtt{K}_N$ its kernel.
According to \cite{BarHar2020}, if $f$ is $C^1$ and supported on an open set that is bounded away from the boundary of the hypercube $K$, $\widehat{I}_{\mathrm{MCDPP}}(f)$ is an unbiased estimator of $I_K(f)$ and its variance is $O(N^{-1-1/d})$, which is faster than the usual $O(N^{-1})$.
One of the limitations of DPP-based methods lies in the computational complexity associated with sampling from DPPs, which is at least cubic in the number $N$ of integrand evaluations.

We conclude with an instance of the Randomized Quasi-Monte Carlo method, which is an attempt at getting both the convenience of variance statements and the error reduction of stratified deterministic quadrature.
The estimator used is
\begin{equation}
    \label{eq:estimator_RQMC}
    \widehat{I}_{\mathrm{RQMC}}(f) = \frac{1}{N} \sum_{\bfx  \in \mathcal{S}_N} f(\bfx ),
\end{equation}
where $\calS_N$ is obtained by applying a suitable random perturbation to a low-discrepancy (deterministic) sequence \citep[Chapter 17]{Owen2013}.
In particular, each point of $\mathcal{S}_N$ is uniformly distributed in $K$, so that \eqref{eq:estimator_RQMC} is an unbiased estimator of $I_K(f)$.
Under strong  regularity assumptions on $f$ (at least all mixed partial derivatives of $f$ of order less than $d$ should be continuous on $K$), the variance of \eqref{eq:estimator_RQMC} is $O(\log(N)^{d-1}N^{-3})$ \cite[Theorem 17.5]{Owen2013}.
Despite the $d$-dependence of the rate and the strong smoothness assumptions, $\widehat{I}_{\mathrm{RQMC}}$ is remarkably efficient in small-to-moderate dimensions, and sampling is computationally cheap compared to, e.g., DPPs.
In this paper, we take $\mathcal{S}_N$ to be a scrambled Sobol sequence\footnote{We used the method \href{https://docs.scipy.org/doc/scipy/reference/generated/scipy.stats.qmc.Sobol.html}{Sobol} from the Python package \href{https://docs.scipy.org/}{scipy} to sample from $\mathcal{S}_N$ \citep{SciPy}.} \citep{Sob1967}.

Finally, note that we have classically ranked estimators here by the decay rate of their variance.
Yet the dependence of the variance to the dimension also includes the constants hidden in our asymptotic comparisons. 
For instance, if for each dimension, one considers an integrand that is lower bounded by a positive, dimension-independent threshold, then the variance \eqref{e:variance_poisson} scales as the volume of $K$ as the dimension grows and $N$ is kept fixed.
\subsection{Experimental comparison} 
\label{sub:Experminents}
This section focuses on examining and comparing the performance of $\widehat{I}_{\Pi_{\varepsilon} \calP}(f)$, for $\varepsilon = \varepsilon_0$ \eqref{eq:epsilon_0}, with the Monte Carlo estimators outlined in Section \ref{sub:Monte Carlo methods}.

While $\widehat{I}_{\mathrm{MC} }$, $\widehat{I}_{\mathrm{MCCV} }$, $\widehat{I}_{\mathrm{MCDPP} }$, and $\widehat{I}_{\mathrm{RQMC} }$ use a constant number of points across trials, $\widehat{I}_{\Pi_{\varepsilon_0} \calP}$ is the only method where the number of points is not fixed, neither in the computational budget nor in the (smaller) number of integrand evaluations.
In an effort to conduct a fair comparison, we replace the PPP in the estimator $\widehat{I}_{\Pi_{\varepsilon_0} \calP}$ with a BPP, which has a fixed number of points. The resulting estimator is referred to as $\widehat{I}_{\mathrm{MCRB} }$.
Next, we sample $M$ realizations from the repelled point processes using Algorithm \ref{algo:sampling_rppp_A2} as described in Section~\ref{sub:Sampling from the repelled point process}, and find the average number of points obtained within the $M$ trials.
We set the number $N$ of points used in the other methods to this average.

We use $M=100$ samples from each of the point processes and we make sure $N$ ranges roughly from 50 to 1000.
We examine the functions $f_1$, $f_2$, and $f_3$ defined in Section \ref{sec:The repelled Poisson point process} by Equation~\eqref{eq:f_1_f_2_f_3}.
Their integrals are
$$
I_{K}(f_2) = 2^{-d} \kappa_d ,
\quad\quad
I_{K}(f_3) =  0,
$$
while the precise value of $I_{K}(f_1)$ is unknown.
\begin{figure}[h!]
    \centering
    \begin{subfigure}{\linewidth}
        \centering
        \includegraphics[width=0.8\linewidth]{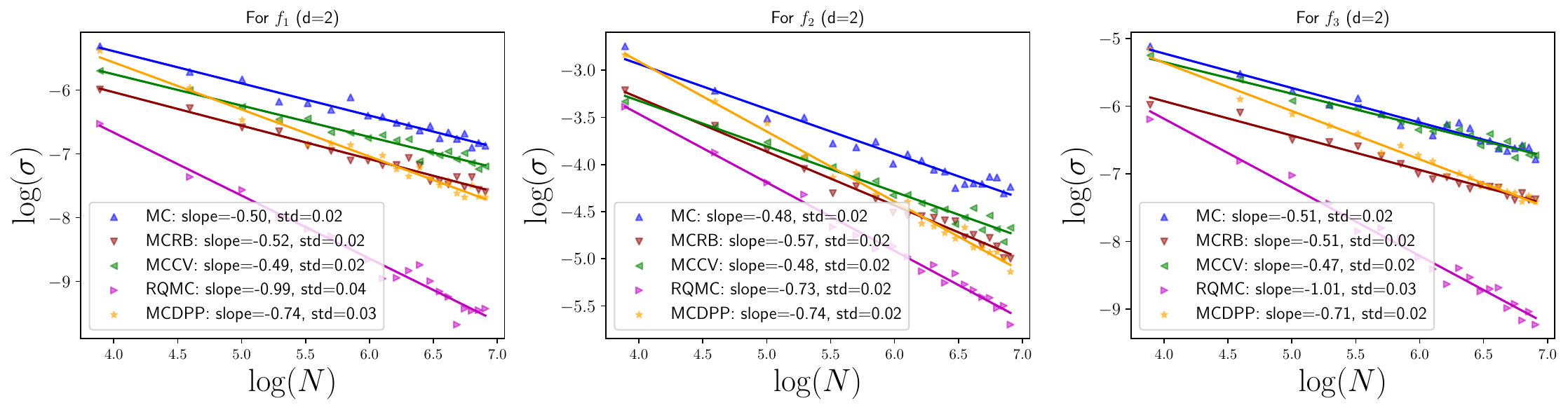}
    \end{subfigure}
    \begin{subfigure}{\linewidth}
        \centering
        \includegraphics[width=0.8\linewidth]{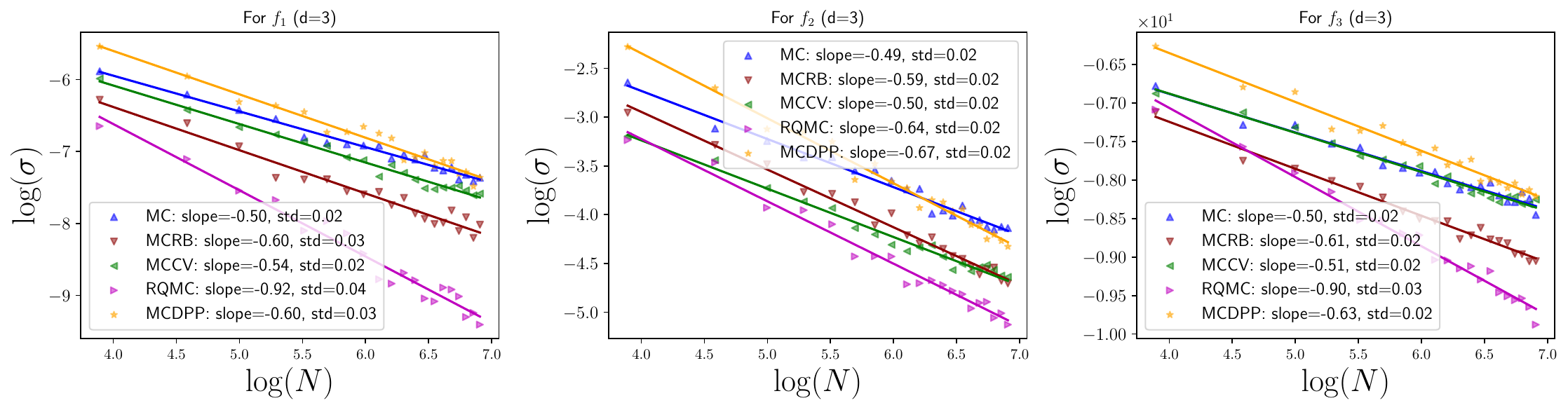}
    \end{subfigure}
    \begin{subfigure}{\linewidth}
        \centering
        \includegraphics[width=0.8\linewidth]{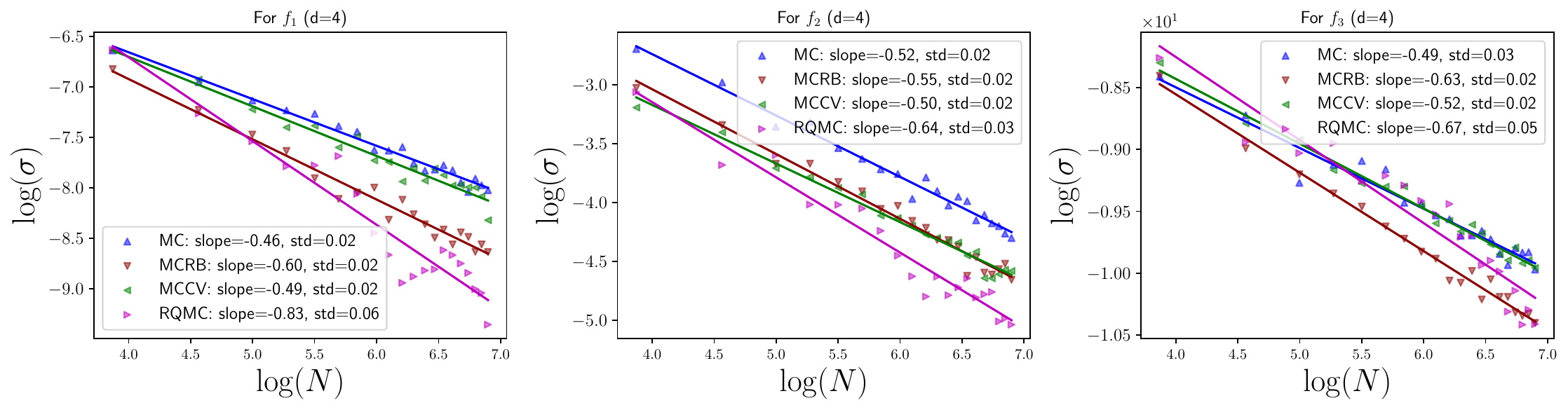}
    \end{subfigure}
    \begin{subfigure}{\linewidth}
        \centering
        \includegraphics[width=0.8\linewidth]{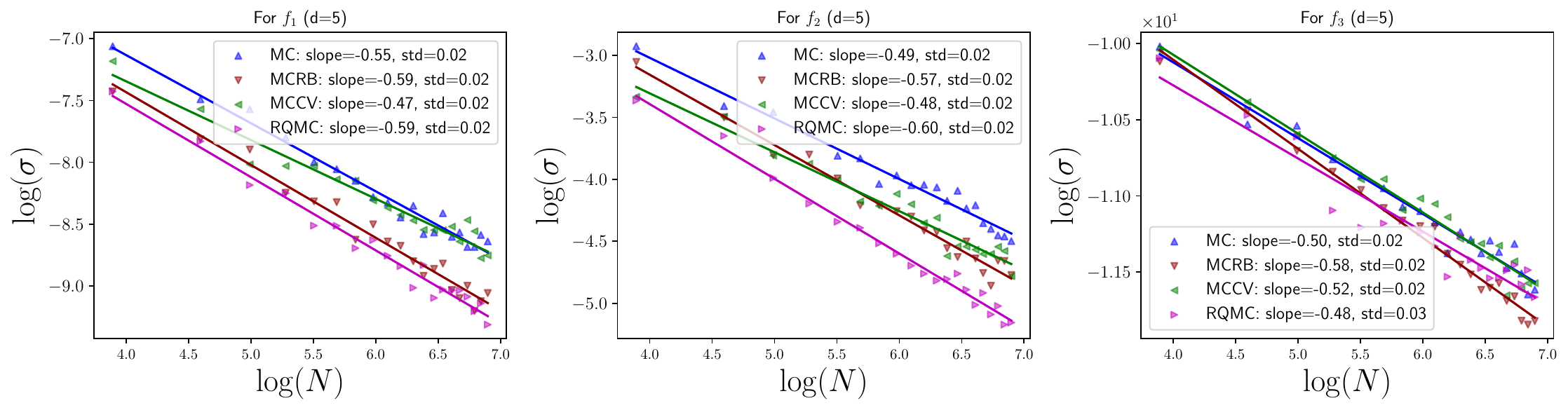}
    \end{subfigure}
    \begin{subfigure}{\linewidth}
        \centering
        \includegraphics[width=0.8\linewidth]{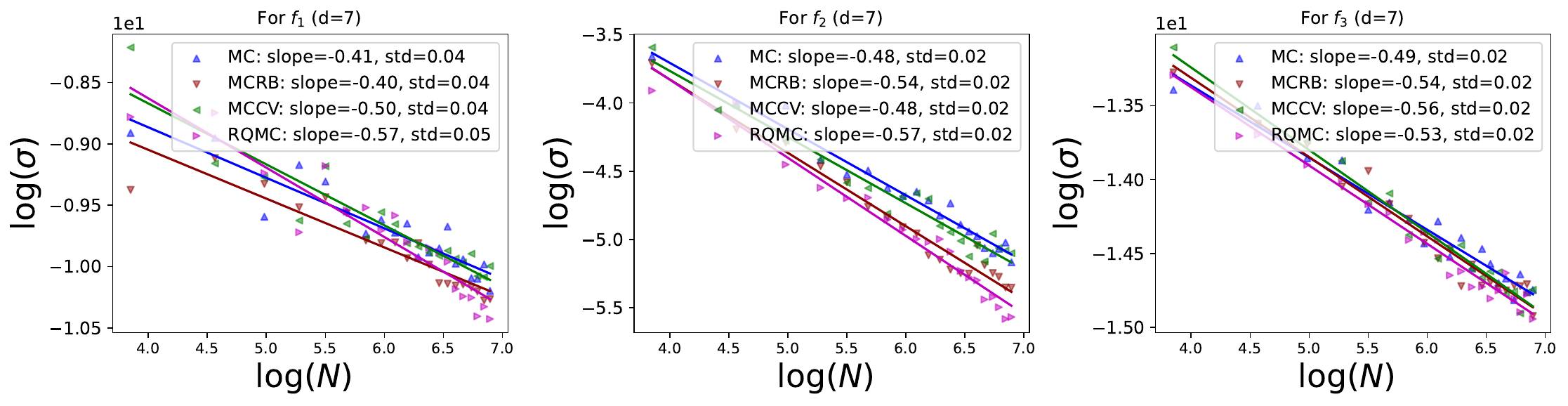}
    \end{subfigure}
    \caption{Estimated standard deviations of various Monte Carlo methods for $f_1$, $f_2$, and $f_3$ across different dimensions $d \in \{2,3, 4, 5,7\}$.}
    \label{fig:std_mc}
\end{figure}

Figure \ref{fig:std_mc} displays the log of the estimated standard deviation of the estimators $\widehat{I}_{\mathrm{MC} }$, $\widehat{I}_{\mathrm{MCRB} }$, $\widehat{I}_{\mathrm{MCCV} }$, and $\widehat{I}_{\mathrm{RQMC} }$, plotted against the log number of points $\log(N)$, for the functions $f_1$, $f_2$, and $f_3$ and $d$ in $\{2, 3, 4, 5, 7\}.$
$\widehat{I}_{\mathrm{MCDPP} }$ is only examined for $d\in \{2, 3\}$ due to its high cost compared to other methods.
Lines correspond to ordinary least square linear regressions (OLS).
The slopes and standard deviations of the slopes are indicated in the legend.

First, as expected, the estimated variances of $\widehat{I}_{\text{MCRB}}$ are generally lower than those of $\widehat{I}_{\text{MC}}$.
$\widehat{I}_{\mathrm{MCRB} }$ outperforms $\widehat{I}_{\mathrm{MC} }$ in most scenarios, except for $f_3$ with $d=7$ where the estimated variances of $\widehat{I}_{\text{MC}}$ and $\widehat{I}_{\text{MCRB}}$ are comparable. Interestingly, in this case, the estimated variances of all methods are in the same ballpark.
Second, $\widehat{I}_{\mathrm{MCRB} }$ outperforms $\widehat{I}_{\mathrm{MCDPP} }$ in dimension $3$.
Although the convergence rate for the variance of $\widehat{I}_{\mathrm{MCDPP} }$ is faster than $N^{-1}$, it seems that the hidden factor in the variance convergence rate increases with the dimension.
Moreover, $\widehat{I}_{\mathrm{MCDPP} }$ is computationally demanding.
Fourth, it appears that $\widehat{I}_{\mathrm{MCRB}}$ outperforms $\widehat{I}_{\mathrm{MCCV}}$ in most cases.
$\widehat{I}_{\mathrm{MCRB}}$ and $\widehat{I}_{\mathrm{RQMC}}$ seem the main competitors.

$\widehat{I}_{\mathrm{RQMC}}$ consistently performs well.
For $d\leq 3$, the estimated variances of $\widehat{I}_{\text{RQMC}}$ are lower than those of $\widehat{I}_{\text{MCRB}}$ for $N$ large enough, and the slope of the variance of $\widehat{I}_{\text{RQMC}}$ is steeper than that of $\widehat{I}_{\text{MCRP}}$.
Interestingly, for $f_3$ and $d =4$, the estimated variances of $\widehat{I}_{\text{RQMC}}$ are larger than those of $\widehat{I}_{\text{MCRB}}$, but the slope for $\widehat{I}_{\text{RQMC}}$ remains steeper than for $\widehat{I}_{\text{MCRB}}$, letting $\widehat{I}_{\text{RQMC}}$ catch up as $N$ grows large.
The same trend is observed for $f_1$ in $d=7$ up to a large value of $N$, while the opposite trend is observed for $f_3$ in $d=5$.
When comparing the estimated variances of $\widehat{I}_{\text{RQMC}}$ and $\widehat{I}_{\text{MCRB}}$ for $f_1$ and $f_2$ across different dimensions, it appears that $\widehat{I}_{\mathrm{RQMC}}$'s performance declines more rapidly than $\widehat{I}_{\mathrm{MCRB}}$'s as the dimension $d$ increases.

Finally, we conducted additional analyses in Appendix \ref{app:analysis_of_the_experiment_4.2}
and computed confidence intervals for the slopes corresponding to $\widehat{I}_{\text{MCRB}}$ illustrated in Figure \ref{fig:std_mc}.
We also compared the error of the Monte Carlo estimators under investigation.
In short, data suggests that the slope is the usual Monte Carlo rate or slightly faster. Additionally, there is no substantial evidence of bias in $\widehat{I}_{\text{MCRB}}$.

\section{Other models and properties} 
\label{sec:ther models and properties}

This section contains numerical investigations of several intriguing questions that arise from the repulsion operator.
In Section~\ref{sub:iterative Repulsion}, we analyze the behavior of a PPP when the repulsion operator is applied several times.
Then, in Section~\ref{sub:Second-order properties of the repelled Poisson point process}, we estimate the second-order properties of the RPPP.
Finally, in Section~\ref{sub:Other models}, we explore the behavior of the repulsion operator when applied to two point processes that are more regular than the PPP.

\subsection{Iterating the repulsion} 
\label{sub:iterative Repulsion}
\begin{figure}[h!]
    \centering
    \includegraphics[width=0.4\linewidth]{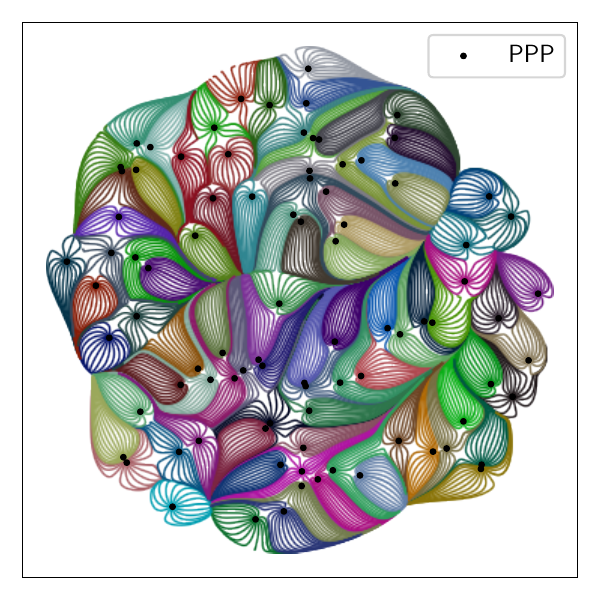}
    \caption{Illustration of the gravitational allocation from Lebesgue to a realization (black points) of a PPP in a disk.
        Each set of curves sharing the same color illustrates a gravitational cell, which indicates the points of the space allocated to the point of the PPP that belongs to that particular colored region.
        The code to generate this picture can be found in \toolboxmcrppy{}.
        Note that the image is purely for illustrative purposes, and no claims are being made regarding the existence of a gravitational allocation from the Lebesgue measure to a PPP in dimension $d=2$.
    }
    \label{fig:gravitational_alloc_to_poisson}
\end{figure}
Our repulsion operator $\Pi_{\varepsilon}$ in \eqref{eq:def_repulsion_operator} can be seen as performing one step of a discretization scheme for a system of differential equations describing gravitational allocation.
In this section, we carry the analogy with gravitational allocation further by iterating the application of $\Pi_{\varepsilon}$.

To provide more context, consider a PPP $\calP$ of unit intensity, let $\bfx \in \bbR^d \setminus \calP$, and consider the differential equation
\begin{equation}
    \label{eq:diff_eq_grav_alloc}
    \frac{\partial}{\partial t} Y_{\bfx }(t) = - F_{\calP}(Y_{\bfx }(t)) \quad \text{with} \quad Y_{\bfx }(0) = \bfx .
\end{equation}
The solution $t\mapsto Y_{\bfx}(t)$ of the differential equation \eqref{eq:diff_eq_grav_alloc}, defined up to some positive time $\tau_{\bfx} \in (0, \infty]$, is called a flow curve of the gravitational allocation of $\calP$ to the Lebesgue measure \citep{ChaPelal2010}; see Figure~\ref{fig:gravitational_alloc_to_poisson} for an illustration of the gravitational allocation of a PPP of unit intensity.
Remark that for $\varepsilon <0$, $\bfx + \varepsilon F_{\calP}(\bfx)$ is the first step in a naive numerical scheme discretizing the differential equation \eqref{eq:diff_eq_grav_alloc}, with a stepsize equal to $-\varepsilon$.
Similarly, each point of $\Pi_{\varepsilon} \calP$ can be viewed as the initial discretization step of a differential equation akin to \eqref{eq:diff_eq_grav_alloc}, with stepsize $-\varepsilon$.
The catch is that $\Pi_{\varepsilon} \calP$ is applied to the points of $\calP$ itself, not to points in $\bbR^d \setminus \calP$ as in \eqref{eq:diff_eq_grav_alloc}.
Loosely speaking, the image of $\bfx \in \calP$ in $\Pi_{\varepsilon} \calP$ can be seen as a first step of the numerical discretization of a flow curve, in a gravitational allocation from the reduced Palm measure of $\calP$ to the Lebesgue measure. However, to fully formalize and understand this allocation, further technical details are required and are out of the scope of this paper.

Consider now performing $M$ steps of the same numerical scheme, i.e., for each $\bfx_0\in \calP$, define
\begin{equation}
    \label{e:iterates}
    \bfx_{t} = \bfx_{t-1} + \varepsilon F_{\calP}(\bfx_{t-1}), \quad t=1, \dots, M.
\end{equation}
Call $\Pi_{\varepsilon,t} \calP$ the set of $t-$th iterates \eqref{e:iterates} of the points of $\calP$.
\begin{figure}[h!]
    \centering
    \begin{subfigure}{0.45\linewidth}
        \centering
        \includegraphics[width=1.0\linewidth]{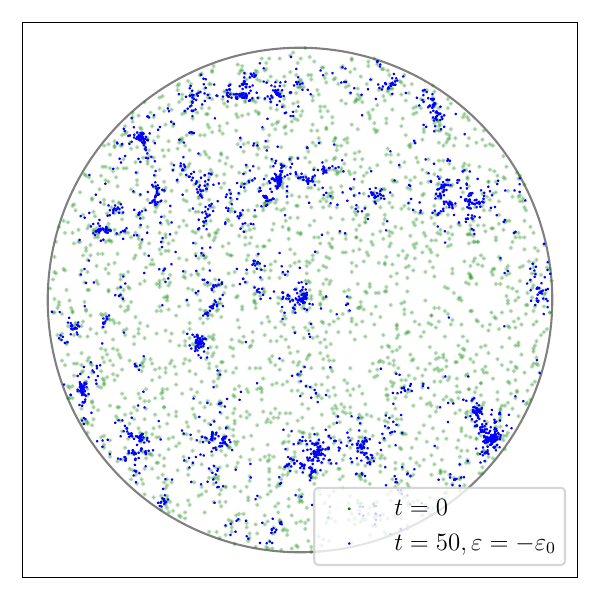}
    \end{subfigure}
    \begin{subfigure}{0.45\linewidth}
        \centering
        \includegraphics[width=1.0\linewidth]{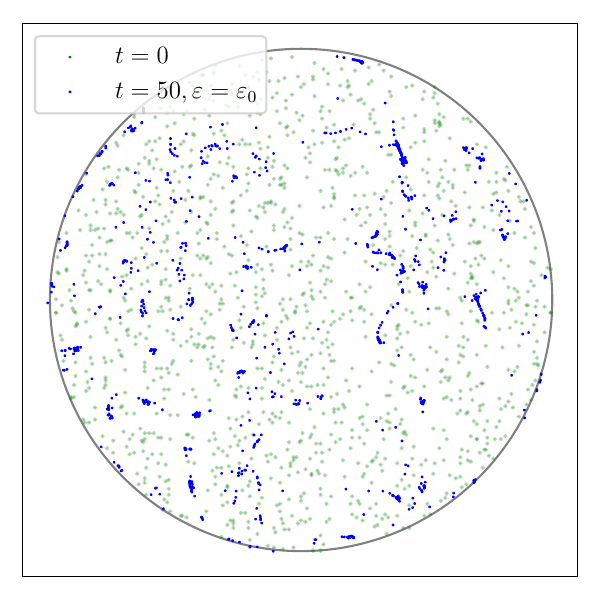}
    \end{subfigure}
    \caption{The green points represent a sample from a PPP $\calP$ of unit intensity, the blue points correspond to $\Pi_{-\varepsilon_0, 50}\calP$ (left), and $\Pi_{+\varepsilon_0, 50} \calP$ (right).}
    \label{fig:exteme_repelled}
\end{figure}
We expect that for $\varepsilon <0$ and large $t$, the distribution of the points of $\Pi_{\varepsilon,t}\calP$ will not be more regular than $\calP$.
This assertion is supported by the clustered arrangement of the points of $\Pi_{-\varepsilon_0, 50}\calP$ observed in the left panel of Figure~\ref{fig:exteme_repelled}; see also Figure ~\ref{fig:extreme_repelled_additional}.
However, it is important to note that some points of $\Pi_{-\varepsilon_0, 50}$ are situated outside the observation window and are thus not visible in Figure~\ref{fig:exteme_repelled}.

Following the arguments of \cite{ChaPelal2010}, one can prove that the differential equation \eqref{eq:diff_eq_grav_alloc} defines an allocation rule.
So, in particular, for almost any $\bfx \in \bbR^d\setminus \calP$, each curve $Y_{\bfx}(t)$ will eventually end at a point of $\calP$, almost surely, as $t $ goes to $\tau_{\bfx}$.
Similarly, in an ``ideal'' discretization scenario, we would expect that $\bfx_{t}$ ends at a point of $\calP$ as $t \rightarrow \tau_{\bfx_0}$.
However, during our experiments, we observed that certain points have moved away from the observation window and the remaining points clustered together within it.
The movement of certain points away from the observation window can be attributed to the naive discretization scheme \eqref{e:iterates}, where we employed a fixed stepsize $\varepsilon$, while the points of $\calP$ are singular points of $F_{\calP}$.
As a result, when an $\bfx_{t}$ is in close proximity to a point in $\calP$, the force acting on it becomes considerable, which might project  $\bfx_{t+1}$ outside the observation window, even if the initial gradient direction is towards a pont of $ \calP$.
Applying a truncated version of the force might help prevent this phenomenon.

The scenario where $\varepsilon > 0$ and $ M$ is large is less straightforward.
This case can be associated with the reverse dynamics of Equation \eqref{eq:diff_eq_grav_alloc}.
Indeed, for $\bfx \in \bbR^d \setminus \calP$, the trajectory of $Y_{\bfx}(t)$ halts at some point of the boundary of an allocation cell of the gravitational allocation from Lebesgue to $\calP$; see Figure \ref{fig:gravitational_alloc_to_poisson}.
By the same analogy as before, we can expect that for $\bfx_0 \in \calP$ and $\varepsilon$ small enough, as $t \rightarrow \infty $, each point $\bfx_{t}$ will eventually approach the boundary of an allocation cell of the gravitational allocation from Lebesgue to the reduced Palm measure of $\calP$.
It is hard to dig deeper without further investigation.
Interestingly, in the right panel of Figure~\ref{fig:exteme_repelled}, we observe a peculiar clustering behavior of the points in $\Pi_{ \varepsilon_0, 50}\calP$ characterized by points appearing to overlap or superimpose with each other; see also Figure ~\ref{fig:extreme_repelled_additional}.
\subsection{Second-order properties of the repelled Poisson point process} 
\label{sub:Second-order properties of the repelled Poisson point process}
The second-order characteristics of a point process, such as the pair correlation function and the structure factor, offer valuable insights into the regularity of the point process \citep{BadRubTur2013}, variance reduction \citep{Pilleboue+al:2015}, and hyperuniformity \citep{Cos2021,Tor2018}.
In this section, we estimate the pair correlation function and the structure factor of the RPPP.

\begin{figure}[h!]
    \centering
    \includegraphics[width=1.0\linewidth]{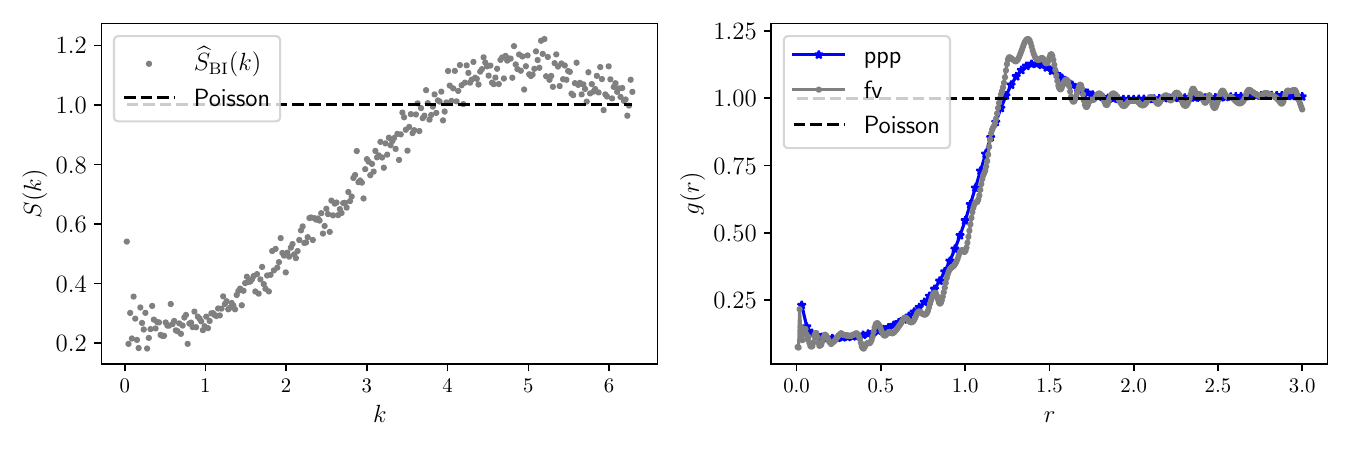}
    \caption{ Estimated structure factor (left) and pair correlation function (right) of an RPPP of $\bbR^2$ of intensity $1/\pi$.
        The labels `ppp' and `fv' correspond to two different estimators of the pair correlation function; see e.g. \citep{HawAl2023}.}
    \label{fig:pcf_and_sf}
\end{figure}

The pair correlation function $g$ of a stationary point process $\calX$ of intensity $\rho>0$ is a function characterizing the probability of finding a pair of points of the point process at a certain distance. It is important to note that not all point processes have a pair correlation function in the strict sense, as it can sometimes be a measure instead. This is the case for a perturbed lattice, for example. However, if $g$ exists, it is given by
\begin{equation*}
    \bbE
    \left[
    \sum_{\bfx  , \bfy \in \calX }^{\neq}
    h(\bfx , \bfy)
    \right] \triangleq
    \int_{\mathbb{R}^d \times \mathbb{R}^d} h(\bfx  + \bfy, \bfy)\rho^{2} g(\bfx ) \d\bfx  \d\bfy,
\end{equation*}
for any nonnegative measurable bounded function $h$ with compact support.
When $g-1$ is integrable we can further define the structure factor $S$ of $\calX$ by
\begin{equation}
    \label{eq:stucture_factor_function}
    S(\mathbf{k}) = 1 + \rho \mathcal{F} (g-1)(\mathbf{k}),
\end{equation}
where $\mathcal{F}$ denotes the Fourier transform.
Assuming that $\calX$ is also isotropic\footnote{
    The assumptions of stationarity and isotropy, in this case, can be straightforwardly weakened to assuming that the intensity measure is invariant to translations or that the pair correlation function only depends on the inter-point distance.
},
both $g$ and $S$ become radial functions, meaning that $g(\mathbf{r})$ and $S(\mathbf{k})$ depend only on the norm of $\mathbf{r}$ and $\mathbf{k}$ respectively. In that case, we abusively write $g(\mathbf{r}) = g(r)$, and $S(\mathbf{k}) = S(k)$ where $x\triangleq \| \bfx \|_2$.
For a PPP, the structure factor and the pair correlation function are both constant and equal to 1.
A pair correlation function $g(r)$ below 1 indicates repulsion between the points of the corresponding point process at scale $r$, whereas a pair correlation function greater than 1 is a sign of attraction. The structure factor, on the other hand, is used as an indicator of hyperuniformity: $S(\mathbf{0})=0$ implies that the point process is hyperuniform, that is, the variance of the number of points that fall in a ball grows slower than the volume of that ball.
For the RPPP, with $\varepsilon= \varepsilon_0$ we expect to observe $g(r)$ less than 1 for a certain range of $r$, as the repulsion operator introduces repulsion between the points.
However, having intuition on hyperuniformity is more difficult, and a statistical diagnostic using an estimator of $S$ is required.

Figure \ref{fig:pcf_and_sf} shows the estimated pair correlation function and structure factor of an RPPP of intensity $1/\pi$ in dimension $d=2$.
The estimations were obtained using a sample observed within a centered ball of radius $r=150$.
Bartlett's isotropic estimator $\widehat{S}_{\mathrm{BI}}$ was used to estimate the structure factor on the corresponding set of allowed wavenumbers $k$; see \citep[Sections 3.2.1 and 5.3]{HawAl2023}.
The estimators \href{https://www.rdocumentation.org/packages/spatstat.core/versions/2.1-2/topics/pcf.ppp}{pcf.ppp}, and \href{https://www.rdocumentation.org/packages/spatstat.core/versions/2.1-2/topics/pcf.fv}{pcf.fv} \citep[Sections 7.4.4 and 7.4.5]{BadRubTur2013} of the R library \href{https://spatstat.org/}{spatstat} were used to estimate the pair correlation function through the Python package \href{https://github.com/For-a-few-DPPs-more/structure-factor}{structure-factor}; see \citep[Section 5.3]{HawAl2023}.
Dashed black lines represent the structure factor and pair correlation function of the PPP.
As expected, the estimated values of $g$ are smaller than $1$ up to a certain value of $r$, indicating repulsion between the points at small scales.
However, $\widehat{S}_{\mathrm{BI}}(k)$ is greater than $0.2$ at small $k$, suggesting that the RPPP may not be hyperuniform.

\subsection{Other repelled point processes} 
\label{sub:Other models}
\begin{figure}[h!]
    \centering
    \begin{subfigure}{0.45\linewidth}
        \centering
        \includegraphics[width=1.0\linewidth]{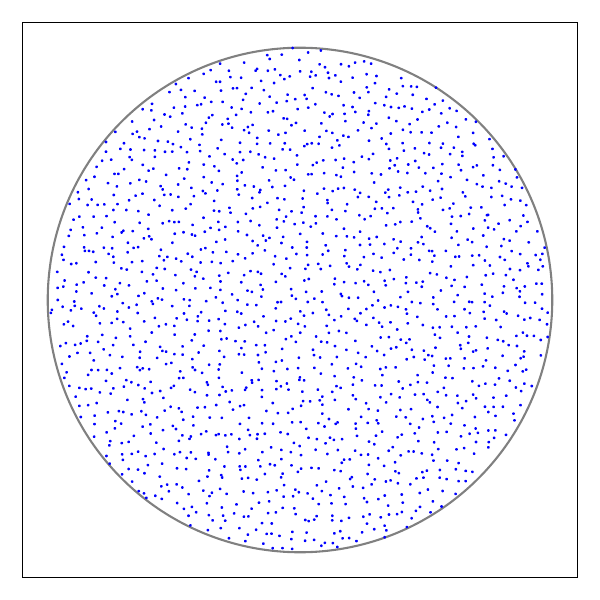}
        \caption{GPP}
    \end{subfigure}
    \begin{subfigure}{0.45\linewidth}
        \centering
        \includegraphics[width=1.0\linewidth]{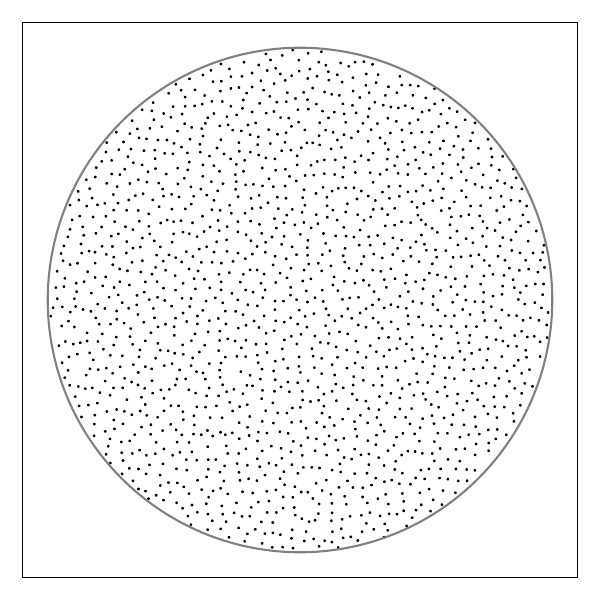}
        \caption{RGPP}
    \end{subfigure}
    \caption{A sample from the Ginibre ensemble (left) and the obtained repelled sample (right) with $\varepsilon=\varepsilon_0$.}
    \label{fig:repelled_ginibre}
\end{figure}
In this section, we perform numerical investigations to determine whether the variance reduction identified in Theorem \ref{thm:Variance_reduction} remains valid when the initial points are more evenly distributed than a PPP, such as in the case of the Ginibre ensemble (GPP) or a scrambled Sobol sequence (SSS).

The GPP is a motion-invariant point process of $\mathbb{R}^2$ of intensity $1/\pi$.
It can be defined (and approximately sampled) as the limit of the set of eigenvalues of matrices filled with i.i.d.\ standard complex Gaussian entries, as the size of the matrix goes to infinity \citep[Theorem 4.3.10]{Houal2013}. Interestingly, the GPP is a hyperuniform point process, and the variance of the number of points in $B(\mathbf{0}, R)$ scales like $O(R^{d-1})$ as $R$ goes to infinity, instead of scaling like the volume of the ball, as for the PPP.
By \cite{Beck1987}, this is the smallest possible growth rate.
Figure \ref{fig:repelled_ginibre} displays a sample of a GPP and the corresponding repelled sample called RGPP observed within $B(\mathbf{0}, 50)$, with $\varepsilon = \varepsilon_0$.
The sampling methodology follows Algorithm \ref{algo:sampling_rppp_A2} provided in Section \ref{sub:Sampling from the repelled point process}.
\begin{figure}[h!]
    \centering
    \begin{subfigure}{0.45\linewidth}
        \centering
        \includegraphics[width=1.0\linewidth]{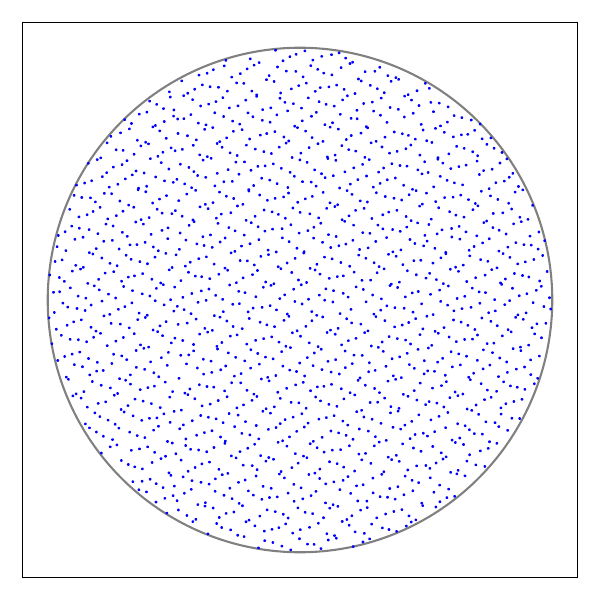}
        \caption{SSS}
    \end{subfigure}
    \begin{subfigure}{0.45\linewidth}
        \centering
        \includegraphics[width=1.0\linewidth]{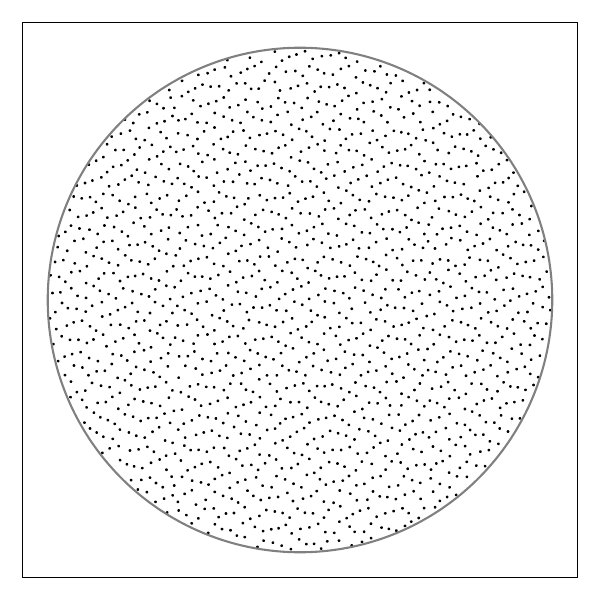}
        \caption{RSSS}
    \end{subfigure}
    \caption{A sample from the scrambled Sobol sequence (left) and the obtained repelled sample (right) with $\varepsilon=\varepsilon_0$.}
    \label{fig:reppeled_sobol}
\end{figure}
Another model of interest is the SSS, which is a typical example of a randomized low-discrepancy sequence, as mentioned in Section \ref{sub:Monte Carlo methods}.
Figure \ref{fig:reppeled_sobol} displays a sample of the SSS and the corresponding repelled sample called RSSS observed within $B(\mathbf{0}, 50)$, with $\varepsilon = \varepsilon_0$.
Like the RGPP, the RSSS displays a high level of homogeneity.

We now examine the behavior of the variance of $\widehat{I}_{s, \Pi_\varepsilon \mathcal{G} \cap K }$ and $\widehat{I}_{s, \Pi_\varepsilon \mathcal{S} \cap K }$ w.r.t.\ $\varepsilon$, where $\mathcal{G}$ represents a GPP and $\mathcal{S}$ denotes a SSS.
We use the functions $f_1$, $f_2$ and $f_3$ defined in Equation \eqref{eq:f_1_f_2_f_3}.
Figure \ref{fig:estimated_variance_repelled_ginibre_sobol} illustrates the estimated standard deviations of $\widehat{I}_{s, \, \Pi_\varepsilon \mathcal{G} \cap K }$ (first row) for $d=2$ and $\widehat{I}_{s, \, \Pi_\varepsilon \mathcal{S} \cap K }$ for $d=2$ (second row), as well as for $d=3$ (last row), for various values of $\varepsilon$.
We conducted the experiments using $50$ independent samples of $\mathcal{G}$ and $\mathcal{S}$ of intensity $\rho=500$.
GPP's samples were rescaled to achieve $\rho=500$.
The estimated standard deviations of $\widehat{I}_{s, \,\mathcal{G} \cap K }$ and  $\widehat{I}_{s, \,\mathcal{S} \cap K }$ are indicated by the large red dots in Figure \ref{fig:estimated_variance_repelled_ginibre_sobol}, while the black dots correspond to the estimated standard deviations of $\widehat{I}_{s, \, \Pi_\varepsilon \mathcal{G} \cap K }$ and $\widehat{I}_{s, \, \Pi_\varepsilon \mathcal{S} \cap K }$.
The dashed lines indicate the value of $\varepsilon_0$ defined in Equation \eqref{eq:epsilon_0}.
For the $C^2$ functions $f_1$ and $f_3$, we observe a behavior similar to the RPPP results depicted in Figure \ref{fig:variance_reduction}, indicating a variance reduction within a certain range of positive values of $\varepsilon$. However, for $f_2$, the variance decreases as $\varepsilon$ increases for $\Pi_\varepsilon \mathcal{G}$ in a manner similar to Figure \ref{fig:variance_reduction}, while a more intricate behavior is observed for $\Pi_\varepsilon \mathcal{S}$.
These observations allow one to conjecture that the repulsion operator may produce variance reduction for smooth functions for a wide range of point processes.

\begin{figure}[h!]
    \centering
    \begin{subfigure}{\textwidth}
        \centering
        \includegraphics[width=1.0\linewidth]{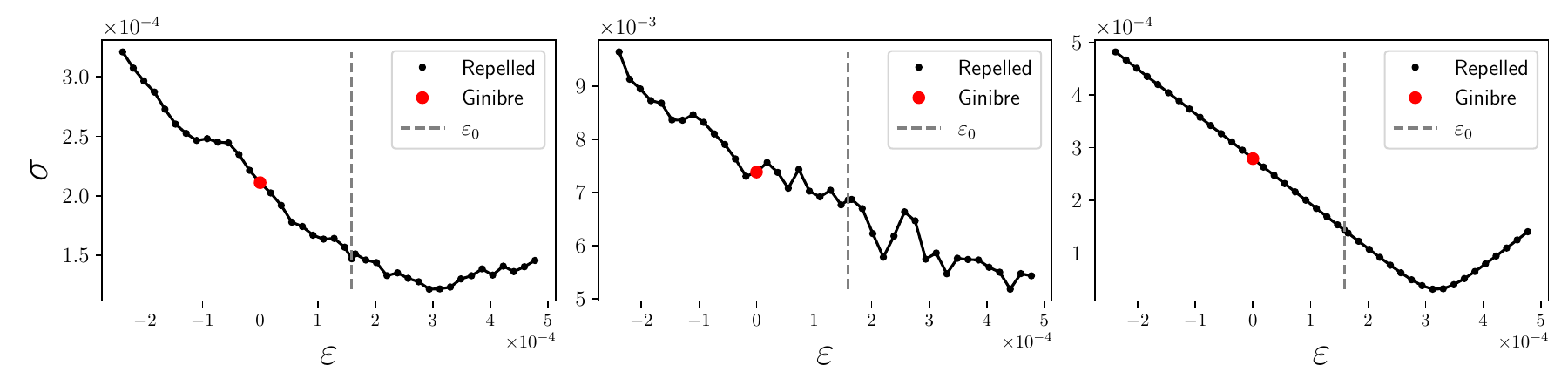}
    \end{subfigure}
    \begin{subfigure}{\textwidth}
        \centering
        \includegraphics[width=1.0\linewidth]{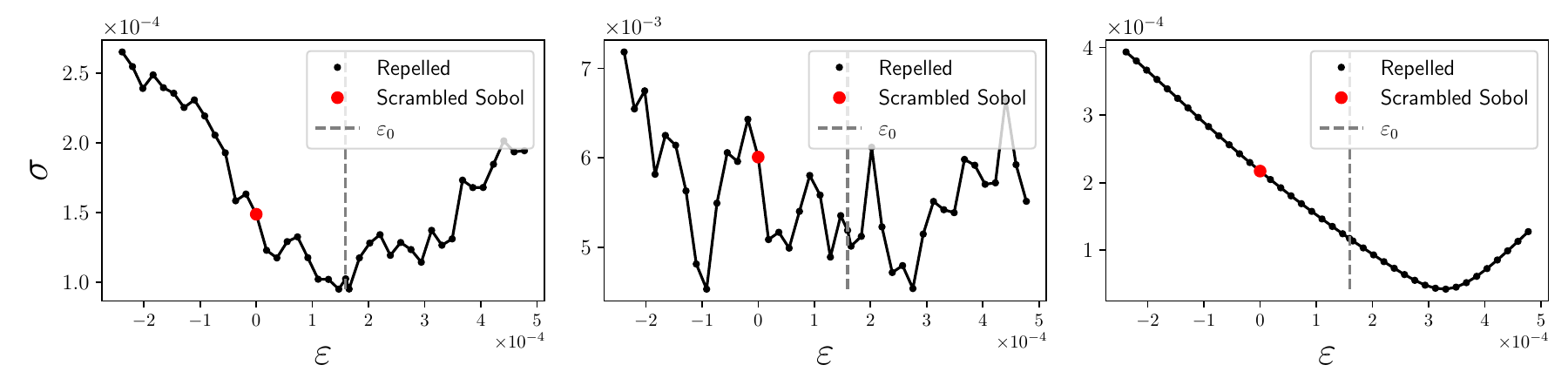}
    \end{subfigure}
    \begin{subfigure}{\textwidth}
        \centering
        \includegraphics[width=1.0\linewidth]{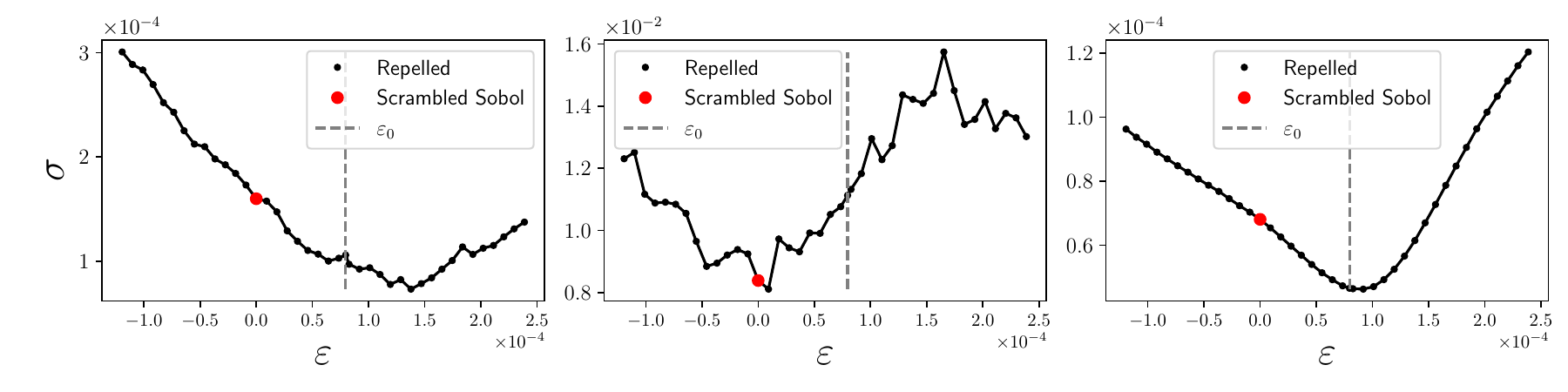}
    \end{subfigure}
    \caption{Estimated standard deviations of $\widehat{I}_{s, \Pi_\varepsilon \mathcal{G} \cap K }$ (Ginibre) and $\widehat{I}_{s, \Pi_\varepsilon \mathcal{S} \cap K }$ (scrambled Sobol) with respect to $\varepsilon$ for $f_1$ (first column), $f_2$, (second column) and $f_3$ (last column).
        The first row shows the obtained results for the GPP in $d=2$, the second row for the SSS in $d=2$ and the last one for the SSS in $d=3$.}
    \label{fig:estimated_variance_repelled_ginibre_sobol}
\end{figure}
\section{Conclusion} 
\label{sec:Conclusion}
Motivated by variance reduction for Monte Carlo methods and inspired by gravitational allocation, we have introduced the repulsion operator $\Pi_{\varepsilon}$.
For small $\varepsilon>0$, this operator intuitively makes point processes more regular by slightly pushing their points apart from each other by an amount controlled by $\varepsilon$.
We have provided a detailed theoretical study of $\Pi_\varepsilon\calP$, where the repulsion operator is applied to a homogeneous Poisson point process $\calP$.
In particular, we have proved variance reduction for smooth linear statistics and suggested a practical value for the parameter $\varepsilon$.
Numerical experiments support the variance reduction and make the repelled Poisson point process a promising alternative to crude Monte Carlo if one can afford the quadratic cost of computing pairwise distances.
As a side observation, we have found no numerical evidence of hyperuniformity for the repelled Poisson point process through the estimation of its structure factor, while the variance decay in experiments on smooth functions is compatible with a slightly faster rate than crude Monte Carlo.

Exploratory experiments suggest that variance reduction is also achieved when applying the repulsion operator to other point processes, such as the (hyperuniform) Ginibre point process and the (low-discrepancy) scrambled Sobol sequence.
Proving this is a natural next step for future work, as well as proving variance reduction when applying the repulsion operator to non-homogeneous point processes.
Once such a result is obtained under sufficiently weak assumptions, applying the repulsion operator could become a default post-processing step in many Monte Carlo integration tasks.
In another direction, it would be interesting to explore the attractive case of the operator $\Pi_\varepsilon$, achieved by setting the parameter $\varepsilon$ to a negative value.
This intuitively leads to natural clustered point process models, where points tend to aggregate near the points of the original point process.

\bmsection*{Acknowledgments}
We thank David Dereudre for valuable discussions along this project and Manjunath Krishnapur for helpful insights about gravitational allocations.

\bmsection*{Financial disclosure}
This work is supported by ERC-2019-STG-851866 and ANR-20-CHIA-0002.

\bmsection*{Data availability}
All the present illustrations and experiments were obtained using the Python toolbox \href{https://github.com/dhawat/MCRPPy}{MCRPPy}.
\href{https://github.com/dhawat/MCRPPy}{MCRPPy} is available on \href{https://github.com/dhawat/MCRPPy}{GitHub} and is accompanied by tutorial \href{https://github.com/dhawat/MCRPPy/tree/main/notebooks}{Jupyter Notebooks}.

\bibliographystyle{apalike}
\bibliography{ref}
\appendix

\bmsection{Proofs}
\label{sec:Proofs}

In this section, we present the proofs of the results mentioned in Sections \ref{sub:Properties of the repelled Poisson point process}, \ref{sub:Main result} and \ref{sub:Properties_of_F}.
The proof of Proposition \ref{prop:joint_density_of_force} can be found in Section \ref{sub:Existence of the joint distribution of vector of force}. The motion invariance results stated by Proposition \ref{prop:motion_invariance} and Proposition \ref{prop:motion_invariance_for_poisson} are proven in Section \ref{sub:Proof_of_Proposition_1}.
The most intricate proof, that of Proposition \ref{prop:extistance_of_the_moments}, is presented in Section \ref{proof:exitance_moment}. Finally, Section \ref{subs:proof_variance_reduction} contains the proof of the variance reduction result formulated in Theorem \ref{thm:Variance_reduction}, which is the main outcome of the paper.

\subsection{Proofs of Proposition \ref{prop:joint_density_of_force} } 
\label{sub:Existence of the joint distribution of vector of force}
In this section, our objective is to prove Proposition \ref{prop:joint_density_of_force}.
This proposition will play a crucial role in establishing the first part of Proposition \ref{prop:motion_invariance_for_poisson}.
\begin{proof}[Proof of Proposition \ref{prop:joint_density_of_force}]
    \label{proof:corollary_joint_density_of_force}
    Let $\{B_n\}_{n \geq 1}$ be a sequence of disjoint balls, each with the same volume $|B_1|= 1/ \rho$, and consider the collection of events $\{\Omega_n\}_{n \geq 1}$ defined by
    \begin{equation*}
        \Omega_n  = \left \{\calP(B_n) = 1 \right \}.
    \end{equation*}
    As $\calP$ is a PPP, the events $\Omega_1, \Omega_2, \dots$ are independent, and we have
    \begin{align*}
        \sum_{n \geq 1} \bbP(\Omega_n) =  \sum_{n \geq 1}  \exp(-1) = \infty.
    \end{align*}
    By Borel-Cantelli, $\bbP(\lim\sup \Omega_n) = 1$.
    Thus almost surely, infinitely many $\Omega_n$ occur, and for $\mathbf{c} \in \mathbb{R}^d$,
    \begin{align*}
        \left\{F_{\calP}(\bfx) - F_{\calP}(\bfy) = \bfc \right\} &\subseteq \bigcup_{n \geq 1}\limits \left\{F_{\calP}(\bfx) - F_{\calP}(\bfy) = \bfc, \Omega_n \right\} \\
        &\subseteq \bigcup_{n \geq 1}\limits \left\{\bfX_n = \bfc - \bfY_{n}, \Omega_n \right\} ,
    \end{align*}
    where
    $\bfX_n = F_{\calP\cap B_{n}}(\bfx) - F_{\calP\cap B_{n}}(\bfy)$,
    and
    $
    \bfY_{n} \triangleq F_{\calP \cap B_{n}^c}(\bfx) - F_{\calP \cap B_{n}^c}(\bfy).
    $
    In particular,
    \begin{align}
        \bbP
        \left(
        F_{\calP}(\bfx) - F_{\calP}(\bfy) = \bfc
        \right)
        & \leq
        \sum_{n \geq 1} \bbP
        \left(
        \bfX_n = \bfc - \bfY_{n}, \Omega_n
        \right) .
        \label{e:union_argument}
    \end{align}
    Now, for all $n$, conditionally on $\Omega_n$, $\bfX_{n}$ and $\bfY_n$ are independent random vectors and we further claim that $\bfX_{n}$ is continuous, we thus get
    $
    \bbP
    \left(
    \bfX_n = \bfc - \bfY_{n}, \Omega_n
    \right) = 0.
    $
    By \eqref{e:union_argument}, we conclude that
    $$\bbP
    \left(
    F_{\calP}(\bfx) - F_{\calP}(\bfy) = \bfc
    \right)=0.
    $$
    
    Finally, the claim that $\bfX_{n}$ is continuous conditionally on the event $\Omega_n$ can be easily verified using harmonic function theory.
    For instance, for $i\in {1, \dots, d}$, let
    \begin{align*}
        g_i \colon B_n \setminus \{\bfx, \bfy\} &\to \mathbb{R} \\
        \bfz &\mapsto \frac{x_i - z_i}{\|\bfx - \bfz\|_2^d} - \frac{y_i - z_i}{\|\bfy - \bfz\|_2^d} .
    \end{align*}
    Actually, for $\bfz \triangleq \calP \cap B_n$, $g_i(\bfz)$ is the $i$-th component of the random vector $\bfX_n$.
    As $g_i$ is a non-constant real harmonic function on $B_n \setminus \{\bfx, \bfy\}$, by Theorem 1.28 of \cite{AxlBourRam2001}, $g_i$ is a non-constant real analytic function on $B_n \setminus \{\bfx, \bfy\}$ (which is connected).
    By Proposition 1 of \cite{Mit2020}, the zero-set $g_i^{-1}(0)$ of $g_i$ has Lebesgue measure zero.
    By translation, we can deduce that any level set of $g_i$ is negligible.
    Finally, conditioning on $\Omega_n$, $\bfz$ is uniformly distributed on $B_n$, so
    $$\bbP\left(\bfX_n = \bfc \giventhat \Omega_n\right) \leq |B_n \cap \{g_i^{-1}(c_i)\}| = 0.$$
\end{proof}
\subsection{Proof of Propositions \ref{prop:motion_invariance} and \ref{prop:motion_invariance_for_poisson} (motion-invariance)} 
\label{sub:Proof_of_Proposition_1}
Throughout this section, we fix $\varepsilon \in \mathbb{R}$ and we prove Propositions \ref{prop:motion_invariance} and \ref{prop:motion_invariance_for_poisson}.

First, to prove Proposition \ref{prop:motion_invariance}, we show that for a point process $\calX$, satisfying the conditions of Proposition \ref{prop:motion_invariance}, any void probability \eqref{eq:def_void_proba} of a translation, respectively rotation, of $\Pi_\varepsilon \calX$ is equal to the void probability for $\Pi_\varepsilon \calX$ of the same Borel set.
The key argument is that, for any $\bfx  \in \mathbb{R}^d$, the force $F_{\calX} (\bfx )$ defined in \eqref{eq:def_gravitational_force} is invariant under both translations and rotations.

\begin{proof}[Proof of Proposition \ref{prop:motion_invariance}]
    \label{proof:propo_motion_invariant}
    Assume that $\calX$ is motion-invariant.
    As the law of $\Pi_\varepsilon \calX$ is defined by the void probabilities \eqref{eq:def_void_proba}, to show that $\Pi_\varepsilon \calX$ is motion-invariant, it is enough to prove that for any Borel set $B$, any $\mathbf{a} \in \mathbb{R}^d$, and any rotation $\mathfrak{r}$,
    \begin{equation*}
        \bbT_{\mathbf{a}+\Pi_\varepsilon \calX}(B)
        =
        \bbT_{\Pi_\varepsilon \calX}(B)
        \quad
        \text{and}
        \quad
        \bbT_{\mathfrak{r}(\Pi_\varepsilon \calX)}(B)
        =
        \bbT_{\Pi_\varepsilon \calX}(B) .
    \end{equation*}
    
    First, observe that by construction $\Pi_{\varepsilon}(\mathbf{a}+ \calX) = \mathbf{a} + \Pi_{\varepsilon} \calX$.
    This yields
    \begin{align*}
        \bbT_{\bfa+ \Pi_{\varepsilon} \calX} (B)
        & = 1 - \bbP\left((\bfa +\Pi_{\varepsilon} \calX ) \cap B = \emptyset     \right)\\
        &= 1 - \bbP\left(\Pi_{\varepsilon}(\bfa+ \calX ) \cap B = \emptyset \right) ,
    \end{align*}
    which, by stationarity of $\calX$ is equal to $\bbT_{\Pi_{\varepsilon} \calX} (B)$. Thus, $\Pi_{\varepsilon} \calX$ is stationary.
    
    Second, since $\lVert \mathfrak{r}(\bfx ) \lVert_2 = \lVert \bfx  \lVert_2$ and $\mathfrak{r}$ is linear, we can write
    \begin{align*}
        \Pi_{\varepsilon}(\mathfrak{r}(\calX))
        &=
        \left \{
        \mathfrak{r}(\bfx )
        +
        \varepsilon
        \sum_{\substack{\bfz \in \calX \setminus \{\bfx \} \\ \lVert \mathfrak{r}(\bfx ) - \mathfrak{r}(\bfz)\lVert_2 \uparrow  }}
        \frac{\mathfrak{r}(\bfx )- \mathfrak{r}(\bfz)}{\lVert \mathfrak{r}(\bfx )- \mathfrak{r}(\bfz)\lVert_2^{d}}
        \right \}_{\bfx  \in \calX}
        =
        \left \{ \mathfrak{r}(\bfx )
        +
        \varepsilon
        \sum_{\substack{\bfz \in \calX \setminus \{\bfx \} \\ \lVert \mathfrak{r}(\bfx - \bfz)\lVert_2 \uparrow  }}
        \frac{\mathfrak{r}(\bfx - \bfz)}
        {\lVert \mathfrak{r}(\bfx - \bfz)\lVert _2^{d}} \right \}_{\bfx  \in \calX} \\
        &=
        \left \{
        \mathfrak{r}(\bfx )
        +
        \varepsilon
        \sum_{\substack{\bfz \in \calX \setminus \{\bfx \} \\ \lVert \bfx - \bfz \lVert_2 \uparrow  }}
        \frac{\mathfrak{r}(\bfx - \bfz)}
        {\lVert \bfx - \bfz \lVert_2^{d}} \right \}_{\bfx  \in \calX}
        =
        \left \{
        \mathfrak{r} ( \bfx  + \varepsilon F_{ \calX} (\bfx ) ) \right \}_{\bfx  \in \calX} .
    \end{align*}
    Thus, we have $\Pi_{\varepsilon}(\mathfrak{r}(\calX)) = \mathfrak{r}(\Pi_{\varepsilon}(\calX))$.
    This implies that
    \begin{align*}
        \bbT_{\mathfrak{r}(\Pi_{\varepsilon}(\calX))} (B)
        & = 1 - \bbP\left(\mathfrak{r}(\Pi_{\varepsilon}(\calX) ) \cap B = \emptyset \right)                                             \\
        & = 1 - \bbP\left(\Pi_{\varepsilon}(\mathfrak{r}(\calX) ) \cap B = \emptyset \right) ,
    \end{align*}
    which is equal to $\bbT_{\Pi_{\varepsilon}(\calX)} (B)$ by the isotropy of $\calX$.
    Thus, $\Pi_{\varepsilon}(\calX)$ is isotropic.
\end{proof}
We highlight that the proof remains valid when substituting the repulsion operator $\Pi_{\varepsilon}$ with its truncated version $\Pi_{\varepsilon}^{(q,p)}$ \eqref{eq:truncated_operator}, with $0 \leq q < p$.

Now, to prove Proposition~\ref{prop:motion_invariance_for_poisson}, we first note that a homogeneous Poisson point process $\calP$ is almost surely a valid configuration, as mentioned in Section~\ref{sub:Properties_of_F}.
In view of Proposition~\ref{prop:motion_invariance}, it is enough to prove that, almost surely, the images of two distinct points from $\calP$ under $\Pi_{\varepsilon}$ remain distinct and that $\Pi_{\varepsilon} \calP $ possesses the same intensity as $\calP$.
\begin{proof}[Proof of Proposition \ref{prop:motion_invariance_for_poisson}]
    \label{proof:corol_motion_invariant_poisson}
    To show that almost surely the images of two distinct points from $\calP$ under $\Pi_{\varepsilon}$ remain distinct,\footnote{
        In the broader framework of point processes, our objective is to establish that $\Pi_{\varepsilon}(\calP)$ is a \emph{simple} point process, assuming that we define $\Pi_{\varepsilon}(\calP)$ as a multiset rather than a set.
        To accomplish this, we employ \citep[Proposition 6.7]{LasPen2017}.
    } we need to show that
    \begin{equation}
        \label{e:objective}
        \bbE  \left[\sum_{\bfx , \bfy \in \mathcal{P}}^{\neq} \mathds{1}_{\{\bfx + \varepsilon F_{\calP}(\bfx) = \bfy + \varepsilon F_{\calP}(\bfy)\}}(\bfx, \bfy)\right] = 0 .
    \end{equation}
    Using the extended Slivnyak-Mecke theorem \eqref{eq:slivnyak-mecke} we get
    \begin{align}
        \bbE
        \left[\sum_{\bfx , \bfy \in \mathcal{P}}^{\neq}  \mathds{1}_{\{\bfx + \varepsilon F_{\calP}(\bfx) = \bfy + \varepsilon F_{\calP}(\bfy)\}}(\bfx, \bfy)\right]
        &= \bbE
        \left[
        \sum_{\bfx , \bfy \in \mathcal{P}}^{\neq}
        \mathds{1}_{\{\bfx + \varepsilon F_{\calP \setminus \{\bfx, \bfy\}}(\bfx) + \varepsilon \frac{\bfx - \bfy}{\|\bfx - \bfy\|_2^d} = \bfy + \varepsilon F_{\calP \setminus \{\bfx, \bfy\}}(\bfy) + \varepsilon \frac{\bfy - \bfx}{\|\bfy - \bfx\|_2^d}\}}(\bfx, \bfy)\right]
        \nonumber \\
        &= \int_{\bbR^d \times \bbR^d}
        \bbE
        \left[
        \mathds{1}_{\{\bfx + \varepsilon F_{\calP}(\bfx) + \varepsilon \frac{\bfx - \bfy}{\|\bfx - \bfy\|_2^d} = \bfy + \varepsilon F_{\calP}(\bfy) + \varepsilon \frac{\bfy - \bfx}{\|\bfy - \bfx\|_2^d}\}}(\bfx, \bfy) \right]
        \rho^2
        \d \bfx \d \bfy
        \nonumber \\
        &= \int_{\bbR^d \times \bbR^d}
        \bbP
        \left(
        F_{\calP}(\bfx) - F_{\calP}(\bfy) = (\bfy - \bfx)\left(\varepsilon^{-1} + 2 \|\bfy - \bfx\|_2^{-d}\right)
        \right)
        \rho^2
        \d \bfx \d \bfy .
        \label{e:tool}
    \end{align}
    By Proposition \ref{prop:joint_density_of_force}, for $\bfx\neq \bfy$, the random vector $F_{\calP}(\bfx)- F_{\calP}(\bfy)$ is continuous. Thus
    \begin{equation*}
        \bbP
        \left(
        F_{\calP}(\bfx) - F_{\calP}(\bfy) = (\bfy - \bfx)\left(\varepsilon^{-1} + 2 \|\bfy - \bfx\|_2^{-d}\right)
        \right) = 0 .
    \end{equation*}
    Plugging back in \eqref{e:tool} yields \eqref{e:objective}.
    
    It remains to show that the intensity of $\Pi_{\varepsilon} \calP $ is equal to $\rho$.
    Consider a compact $K$ of $\bbR^d$.
    Based on the previous reasoning, almost surely, when the repulsion operator $\Pi_\varepsilon$ is applied to the points of $\mathcal{P}$, no two points will end up at the same location. Thus we have
    \begin{align*}
        \Pi_{\varepsilon} \calP (K)= \sum_{\bfx \in \calP} \mathds{1}_K (\bfx + \varepsilon F_{\calP}(\bfx)) .
    \end{align*}
    Applying the extended Slivnyak-Mecke theorem \eqref{eq:slivnyak-mecke} we get
    \begin{align*}
        \bbE \left[\Pi_{\varepsilon} \calP (K)\right] &= \bbE \left[\sum_{\bfx \in \calP} \mathds{1}_K \left(\bfx + \varepsilon F_{\calP \setminus \{\bfx\}}(\bfx)\right) \right]
        = \int_{\bbR^d} \bbE \left[ \mathds{1}_K (\bfx + \varepsilon F_{\calP}(\bfx)) \right] \rho \d \bfx .
    \end{align*}
    As the distribution of $F_{\calP}(\bfx)$ is translation-invariant (in $\bfx$) we get
    \begin{align*}
        \bbE[\Pi_{\varepsilon} \calP (K)] &=  \int_{\bbR^d} \bbE \left[ \mathds{1}_K (\bfx + \varepsilon F_{\calP}(\mathbf{0})) \right] \rho \d \bfx .
    \end{align*}
    Exchanging the integral and the expectation using Tonelli's theorem gives
    \begin{align*}
        \bbE[\Pi_{\varepsilon} \calP (K)] &=  \rho \bbE \left[ \int_{\bbR^d} \mathds{1}_K (\bfx + \varepsilon F_{\calP}(\mathbf{0})) \d \bfx \right]
        \\
        &=\rho \bbE \left[ \int_{\bbR^d} \mathds{1}_{K- \varepsilon F_{\calP}(\mathbf{0})} (\bfx )  \d \bfx\right]
        \\
        &= \rho \bbE \left[|K- \varepsilon F_{\calP}(\mathbf{0}) |\right]= \rho  | K | .
    \end{align*}
    Thus the intensity of $\Pi_{\varepsilon} \calP $ is equal to $\rho$, which completes the proof.
\end{proof}

\subsection{Proof of Proposition \ref{prop:extistance_of_the_moments} (existence of the moments)} 
\label{proof:exitance_moment}
In this section, we consider a homogeneous Poisson point process $\mathcal{P} \subset \mathbb{R}^d$ of intensity $\rho$, with $d \geq 3$, and let $\varepsilon \in (-1,1)$.
Our objective is to demonstrate the existence of the moments of the repelled Poisson point process $\Pi_{\varepsilon}\mathcal{P}$.
To wit, let $R>0$ and $m$ be a positive integer, we need to show that
\begin{equation*}
    \bbE \left[
    \left(\sum_{\bfx  \in \Pi_\varepsilon \mathcal{P}}
    \mathds{1}_{B(\mathbf{0},R)}(\bfx )\right)^m\right]
    < \infty.
    \label{e:repetition_of_moment_condition}
\end{equation*}
To accomplish this, the key idea is to decompose the Coulomb force $F_\calP(\bfx)$ acting on $\bfx\in\mathbb{R}^d$ and defined in \eqref{eq:alternative_def_F} into two truncated sums, one that collects the influence of points close to $\bfx$ and the other one of those points in $\calP$ far from $\bfx$; the two terms shall be controlled by different means.
Formally, denote $B(\mathbf{0}, R)^c \triangleq \bbR^d \setminus B(\mathbf{0}, R)$.
For $\bfx  \in \mathbb{R}^d$, we write
\begin{equation}
    \label{eq:decomposition_internal_external}
    F_{\mathcal{P}  }(\bfx )= F^{(0,1)}_{\mathcal{P}  }(\bfx ) + F^{(1,\infty)}_{\mathcal{P}  }(\bfx ) ,
\end{equation}
where the truncated forces are defined in \eqref{eq:def_truncated_gravitational_force}.
We refer to the first term in the right-hand side of \eqref{eq:decomposition_internal_external} as the ``internal'' force, and to the second term as the ``external'' force.
In words, our proof works by showing that, for $\bfx \in \calP\cap B(\mathbf{0}, R)^c$ to be pushed inside $B(\mathbf{0}, R)$, i.e. for $\bfx  + \varepsilon F_{\mathcal{P}  }(\bfx ) \in B(\mathbf{0}, R)$, one of two low-probability events must occur.
One of these events involves the internal force, and the other one the external force.

Let $0<\gamma<1/(d-1)$, $0 < \beta < \gamma/(d-1)$, and $r:\bfx \mapsto \| \bfx \|_2^{\beta}$.
Let also $R^\prime = (R+m-1)^{1/\beta}$.
Now, for $\bfx  \in \mathcal{P} \cap B(\mathbf{0}, R^\prime)^c$, it holds
\begin{equation}
    \label{eq:inclusion_event_a_point_enter}
    \left\{\bfx  + \varepsilon F_{\mathcal{P} }(\bfx ) \in B(\mathbf{0}, R) \right\}
    \subset
    \left\{\bfx  + \varepsilon F^{(0,1)}_{\mathcal{P} } (\bfx ) \in B(\mathbf{0}, r(\bfx )) \right\}
    \cup
    \left\{\|F^{(1,\infty)}_{\mathcal{P}}(\bfx )\|_2 \geq \frac{r(\bfx )- R}{|\varepsilon|} \right\} .
\end{equation}
To see the validity of this inclusion, note that if $\bfx$ is not in the right-hand side of \eqref{eq:inclusion_event_a_point_enter}, then
\begin{align*}
    \|\bfx  + \varepsilon F_{\mathcal{P} }(\bfx )\|_2
    \geq
    \|\bfx  + \varepsilon F^{(0,1)}_{\mathcal{P} }(\bfx )\|_2
    -
    \|\varepsilon F^{(1,\infty)}_{\mathcal{P} }(\bfx )\|_2
    > r(\bfx ) - (r(\bfx ) - R) = R .
\end{align*}
Now, using \eqref{eq:inclusion_event_a_point_enter}, we get
\begin{align*}
    \bbE \left[\left(\sum_{\bfx  \in \Pi_{\varepsilon}\mathcal{P}} \mathds{1}_{B(\mathbf{0},R)}(\bfx )\right)^m \right]
    &\leq
    \bbE \Bigg [\Bigg(\sum_{\bfx  \in \mathcal{P}} \mathds{1}_{B(\mathbf{0},R^\prime)}(\bfx )
    +
    \sum_{\bfx  \in \mathcal{P} \cap B(\mathbf{0},R^\prime)^c} \mathds{1}_{B(\mathbf{0},r(\bfx ))}(\bfx  + \varepsilon F^{(0,1)}_{\mathcal{P}}(\bfx ))
    \\
    &\qquad
    +
    \sum_{\bfx  \in \mathcal{P} \cap B(\mathbf{0},R^\prime)^c} \mathds{1}_{\{\|F^{(1,\infty)}_{\mathcal{P}}(\bfx )\|_2 \geq (r(\bfx )- R)/|\varepsilon|\}}(\bfx )\Bigg)^m \Bigg] .
\end{align*}
By convexity of $h: x \mapsto x^m$ on $\mathbb{R}^{+}\setminus \{0\}$, it further comes
\begin{align}
    \bbE \left[\left(\sum_{\bfx  \in \Pi_{\varepsilon}\mathcal{P}} \mathds{1}_{B(\mathbf{0},R)}(\bfx )\right)^m \right]
    &\leq
    3^{m-1}\Biggl(
    \bbE \left[\left(\sum_{\bfx  \in \mathcal{P}} \mathds{1}_{B(\mathbf{0},R^\prime)}(\bfx )\right)^m \right] \nonumber
    \\
    &
    \hspace{1cm}
    +
    \bbE \left[\left(\sum_{\bfx  \in \mathcal{P} \cap B(\mathbf{0},R^\prime)^c} \mathds{1}_{B(\mathbf{0},r(\bfx ))}(\bfx  + \varepsilon F^{(0,1)}_{\mathcal{P} }(\bfx ))\right)^m \right] \nonumber
    \\
    &
    \hspace{1cm}
    +
    \bbE \left[\left(\sum_{\bfx  \in \mathcal{P}\cap B(\mathbf{0},R^\prime)^c} \mathds{1}_{\{\|F^{(1,\infty)}_{\mathcal{P}}(\bfx )\|_2 \geq (r(\bfx )- R)/|\varepsilon|\}}(\bfx )\right)^m \right] \label{e:three_term_decomposition}
    \Bigg).
\end{align}
The first term on the right-hand side of \eqref{e:three_term_decomposition} is finite since $\mathcal{P}$ is a PPP.
The rest of the proof consists in proving that the remaining two terms are finite, which will be a consequence of Corollaries \ref{coro:existence_external_moments} and \ref{coro:existence_internal_moments}.

We first focus on the term in \eqref{e:three_term_decomposition} involving the external force.
\begin{lemma}
    \label{lem:bound_external_force}
    Consider a homogeneous Poisson point process $\mathcal{P} \subset \mathbb{R}^d$ of intensity $\rho$, with $d \geq 3$.
    There exist $c_1, \, c_2, \, c_3 >0$ such that for all $p>q>0$ and $t>0$ we have
    \begin{equation}
        \label{eq:tail_bound_external_force}
        \bbP \left(\lVert F^{(q,p)}_{\mathcal{P}}( \mathbf{0} ) \lVert_2 >t\right)
        \leq
        c_1 \exp \left( -c_2 q^{d-1} t \log\left(\frac{c_3t}{q} \right) \right) .
    \end{equation}
\end{lemma}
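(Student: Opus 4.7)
The plan is to reduce the vector tail to a one-sided scalar tail by coordinates, and then to apply a Bennett-type concentration inequality for Poisson functionals. First, $\lVert\bfx\rVert_2 \leq \sqrt{d}\max_i|x_i|$ and a union bound give
\begin{equation*}
\bbP\bigl(\lVert F^{(q,p)}_{\calP}(\mathbf{0})\rVert_2 > t\bigr) \leq \sum_{i=1}^d \bbP\bigl(|F^{(q,p)}_{\calP,i}(\mathbf{0})| > t/\sqrt{d}\bigr),
\end{equation*}
where $F^{(q,p)}_{\calP,i}$ denotes the $i$-th coordinate; since $\calP \stackrel{d}{=} -\calP$, each coordinate is symmetric, so it suffices to bound the one-sided tail $\bbP(F^{(q,p)}_{\calP,i}(\mathbf{0}) > s)$ with $s = t/\sqrt{d}$.

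Next, I would write $F^{(q,p)}_{\calP,i}(\mathbf{0}) = \sum_{\bfz \in \calP\cap A^{(q,p)}} g_i(\bfz)$ with $g_i(\bfz) = -z_i/\lVert\bfz\rVert_2^d$. Two properties are crucial: $|g_i|\leq q^{-(d-1)}$ on the annulus, and $\int_{A^{(q,p)}} g_i\,\d\bfz = 0$ by the $\bfz\mapsto -\bfz$ symmetry of $A^{(q,p)}$. Campbell's formula for the Laplace functional of $\calP$, together with the vanishing of the linear term, then yields
\begin{equation*}
\bbE\bigl[\exp(\lambda F^{(q,p)}_{\calP,i}(\mathbf{0}))\bigr] = \exp\Bigl(\rho\int_{A^{(q,p)}}(e^{\lambda g_i(\bfz)} - 1 - \lambda g_i(\bfz))\,\d\bfz\Bigr).
\end{equation*}
The function $\psi(y) \triangleq (e^y - 1 - y)/y^2 = \int_0^1 (1-u)e^{uy}\,\d u$ is strictly increasing on $\bbR$, so $e^y - 1 - y \leq y^2\psi(M)$ for every $y\leq M \triangleq \lambda q^{-(d-1)}$, and the integrand is pointwise controlled by $\lambda^2 g_i(\bfz)^2\psi(M)$. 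A polar-coordinate computation provides the variance estimate $\sigma^2 \triangleq \rho\int_{A^{(q,p)}} g_i^2\,\d\bfz \leq C_d\,\rho\,q^{-(d-2)}$, finite uniformly in $p$ precisely because $d\geq 3$ makes the radial integral $\int_q^\infty r^{1-d}\,\d r$ convergent.

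Chernoff's inequality then gives $\bbP(F^{(q,p)}_{\calP,i}(\mathbf{0}) > s) \leq \exp(-\lambda s + \sigma^2\lambda^2\psi(\lambda q^{-(d-1)}))$. Writing $\mu = \lambda q^{-(d-1)}$ recasts the exponent as $-\mu q^{d-1}s + \sigma^2 q^{2(d-1)}(e^\mu - 1 - \mu)$, whose minimum in $\mu > 0$ is the Bennett form
\begin{equation*}
\bbP\bigl(F^{(q,p)}_{\calP,i}(\mathbf{0}) > s\bigr) \leq \exp\bigl(-\sigma^2 q^{2(d-1)}\,h(sq^{d-1}/\sigma^2)\bigr), \qquad h(u) = (1+u)\log(1+u) - u.
\end{equation*}
Using $\sigma^2 q^{d-1} \leq C_d\rho q$ together with the standard inequality $h(u) \geq \tfrac{1}{2} u\log u$ valid for $u\geq 1$, the exponent is at most $-\tfrac{1}{2} q^{d-1}s\log(s/(C_d\rho q))$ whenever $s \geq C_d\rho q$; for smaller $s$, the exponent in the target bound is non-negative and the inequality is trivial as soon as $c_1 \geq 1$. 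Substituting $s = t/\sqrt d$ and summing over the $d$ coordinates yields the stated inequality with constants $c_1, c_2, c_3$ depending only on $d$ and $\rho$.

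I expect the main technical obstacle to be the uniform control of $e^y - 1 - y$ across both signs of $y$ on the annulus, since $\lambda g_i$ changes sign there; the monotonicity of $\psi$ on all of $\bbR$ is the essential ingredient that makes the one-sided argument $y\leq M$ suffice. The remainder is bookkeeping: translating Bennett's $h(u)$ into the advertised $u\log u$ tail in the regime $s \geq C_d\rho q$, and choosing $c_1$ large enough to absorb the trivial small-$s$ regime.
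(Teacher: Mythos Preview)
Your proof is correct and follows essentially the same route as the paper's: reduce to a one-sided coordinate tail, apply a Chernoff bound through the Poisson Laplace functional (the paper conditions on the point count and works with i.i.d.\ uniforms on the annulus, which is an equivalent computation), exploit the centering $\int_{A^{(q,p)}} g_i = 0$ together with the uniform bound $|g_i|\leq q^{-(d-1)}$, and then optimize the free parameter. One typo to fix: in your Bennett display the argument of $h$ should be $s/(\sigma^2 q^{d-1})$, not $sq^{d-1}/\sigma^2$; your subsequent use of $\sigma^2 q^{d-1}\leq C_d\rho q$ and the final exponent $-\tfrac{1}{2}q^{d-1}s\log\bigl(s/(C_d\rho q)\bigr)$ are already consistent with the corrected value.
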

Note that by translation-invariance, the choice of $\mathbf{0}$ in \eqref{eq:tail_bound_external_force} is arbitrary.
In addition, when $\rho=1$, Equation \eqref{eq:tail_bound_external_force} corresponds to Equation (32) in Theorem 16 of \cite{ChaPelal2010}.
While the paper does not offer an exhaustive proof of this equation, it does provide a similar and detailed proof for another equation within the same theorem.
For completeness, we provide here a simple proof of Lemma~\ref{lem:bound_external_force}, valid for any value of $\rho>0$.
The proof involves bounding the exponential moment of each component in the random vector $F^{(q,p)}_{\calP}(\mathbf{0})$ and using Markov's inequality to obtain a tail bound.
The Poisson assumption then helps simplify the bound.

\begin{proof}[Proof of Lemma \ref{lem:bound_external_force}]
    In order to streamline the notations used in this proof, we set $G^{(q,p)}=-F^{(q,p)}_{\mathcal{P}}$, with $G^{(q,p)}_{i}$ its $i$-th component.
    
    If we prove that for any $i \in \{1, \dots, d\}$, there exist $C_1, \, C_2, \, C_3 >0$ such that
    \begin{equation}
        \label{eq:eq_1_to_proof}
        \bbP \left(G^{(q,p)}_{i}(\mathbf{0}) >t\right) \leq C_1 \exp \left(-C_2 q^{d-1}t \log\left(\frac{C_3 t}{q}\right) \right) .
    \end{equation}
    Using that $-\mathcal{P}$ is also a PPP of intensity $\rho$, it follows that
    \begin{equation*}
        \label{eq:eq_2_to_proof}
        \bbP \left(G^{(q,p)}_{i}(\mathbf{0}) <-t\right) \leq C_1 \exp \left(-C_2 q^{d-1}t \log\left(\frac{C_3 t}{q}\right) \right) .
    \end{equation*}
    Combining this with \eqref{eq:eq_1_to_proof} we obtain \eqref{eq:tail_bound_external_force}, with $c_1 = 2dC_1, \, c_2 = C_2/d$, and $c_3 = C_3/d$.
    Thus, we only need to verify Equation \eqref{eq:eq_1_to_proof}.
    
    Let $\theta \geq 0$, using Markov's inequality we have
    \begin{align}
        \label{eq:intermediate_step}
        \bbP \left(G^{(q,p)}_{i}(\mathbf{0}) >t\right) & =
        \bbP \left(\exp \left(\theta \sum_{\substack{\mathbf{z} \in A^{(q,p)}\cap \mathcal{P} \\ \|\mathbf{z}\|_2 \uparrow}} \frac{z_i}{\|\mathbf{z}\|_2^d}\right) > \exp(\theta t)\right)
        \nonumber
        \\
        & \leq \bbE  \left[\exp  \left(\theta \sum_{\substack{\mathbf{z} \in A^{(q,p)}\cap \mathcal{P} \\ \|\mathbf{z}\|_2 \uparrow}} \frac{z_i}{\|\mathbf{z}\|_2^d}\right) \right] \exp(-\theta t) .
    \end{align}
    Conditioning on $\mathcal{P}(A^{(q,p)})$, the points of $\mathcal{P} \cap A^{(q,p)}$ are independent and uniformly distributed in $A^{(q,p)}$.
    Let $\mathbf{u}$ be a uniform r.v. in $A^{(q,p)}$, and $N>0$. By symmetry, $\bbE [\frac{\mathbf{u}}{\|\mathbf{u}\|_2^d}]=0$ and we have
    \begin{align*}
        \bbE  \left[
        \exp \left(\theta \sum_{\substack{\mathbf{z} \in A^{(q,p)}\cap \mathcal{P} \\ \|\mathbf{z}\|_2 \uparrow}} \frac{z_i}{\|\mathbf{z}\|_2}\right) \giventhat \mathcal{P}(A^{(q,p)}) = N  \right]
        &=
        \bbE  \left[
        \prod_{\substack{\mathbf{z} \in A^{(q,p)}\cap \mathcal{P} \\ \|\mathbf{z}\|_2 \uparrow}}
        \exp \left(\theta \frac{z_i}{\|\mathbf{z}\|_2^d}\right) \giventhat \mathcal{P}(A^{(q,p)}) = N  \right]
        \\
        &= \bbE  \left[\exp \left(\theta \frac{u_i}{\|\mathbf{u}\|_2^d}\right) \right]^N
        \\
        &=  \left( 1 +  \bbE  \left[\sum_{k \geq 2} \frac{1}{k!} \theta^k \frac{u_i^k}{\|\mathbf{u}\|_2^{kd}} \right] \right)^N.
    \end{align*}
    In particular,
    \begin{align}
        \bbE  \left[
        \exp \left(\theta \sum_{\substack{\mathbf{z} \in A^{(q,p)}\cap \mathcal{P} \\ \|\mathbf{z}\|_2 \uparrow}} \frac{z_i}{\|\mathbf{z}\|_2}\right) \giventhat \mathcal{P}(A^{(q,p)}) = N  \right]
        &\leq
        \left( 1 +  \sum_{k \geq 2} \frac{1}{k!} \theta^k \bbE  \left[\|\mathbf{u}\|_2^{-k(d-1)} \right] \right)^N.
        \label{e:intermediate_step}
    \end{align}
    Recall that the surface area of the unit ball of $\bbR^d$ is equal to $d\kappa_d$. As $\bfu$ has uniform distribution on $A^{(q,p)}$ we get
    \begin{align*}
        \bbE  \left[\frac{1}{\|\mathbf{u}\|_2^{k(d-1)}}  \right]
        &=
        \frac{1}{\leb{A^{(q,p)}}}\int_{A^{(q,p)}} \frac{1}{\|\mathbf{u}\|_2^{k(d-1)}} \d \mathbf{u}
        \\
        &=
        \frac{d\kappa_d}{\leb{A^{(q,p)}}}\int_{q}^{p} r^{d-1 - k(d-1)} \d r .
    \end{align*}
    As $k \geq 2$ and $d \geq 3$, we have $k > d/(d-1)$. Thus
    \begin{align*}
        \bbE  \left[\frac{1}{\|\mathbf{u}\|_2^{k(d-1)}}  \right]
        &=
        \frac{d\kappa_d}{\leb{A^{(q,p)}} (k(d-1) -d)}\left(\frac{1}{q^{k(d-1) -d}} -  \frac{1}{p^{k(d-1) -d}}\right)
        \\
        & \leq
        \frac{d\kappa_d}{\leb{A^{(q,p)}} q^{k(d-1) -d}} .
    \end{align*}
    Plugging back into \eqref{eq:intermediate_step}, we obtain
    \begin{align*}
        \bbP \left(G^{(q,p)}_{i}( \mathbf{0} ) >t\right) &
        \leq
        \bbE  \left[
        \left(
        1 +
        \frac{ d\kappa_d q^d}{ \leb{A^{(q,p)}}} \sum_{k \geq 2} \frac{1}{k!} \left(\frac{\theta}{q^{d-1}}\right)^k \right)^{\mathcal{P}(A^{(q,p)})} \right] \exp(-\theta t)
        \\
        & \leq
        \bbE  \left[
        \left(
        1 +
        \frac{ d \kappa_d q^d}{ \leb{A^{(q,p)}}} \exp \left(\frac{\theta}{q^{d-1}}\right)\right)^{\mathcal{P}(A^{(q,p)})} \right] \exp(-\theta t) .
    \end{align*}
    Now, remembering that $\mathcal{P}(A^{(q,p)})$ is a Poisson random variable of parameter $\rho \leb{A^{(q,p)}}$, we obtain\footnote{If $X$ is a Poisson random variable of parameter $\lambda$, then for any $\gamma$ the mean of the random variable $(1+\gamma)^X$ is equal to $\exp(\lambda \gamma)$.}
    \begin{align*}
        \bbP (G^{(q,p)}_{i}( \mathbf{0} ) >t) &
        \leq
        \exp \left(\rho \leb{A^{(q,p)}} \frac{ d \kappa_d q^d}{ \leb{A^{(q,p)}}} \exp \left(\frac{\theta}{q^{d-1}}\right)\right) \exp(-\theta t)
        \\
        &=
        \exp \left( d \kappa_d \rho  q^d \exp \left(\frac{\theta}{q^{d-1}}\right) -\theta t\right) .
    \end{align*}
    Taking $\theta = q^{d-1}\log(\frac{t}{d \kappa_d \rho q})$, we get
    \begin{align*}
        \bbP (G^{(q,p)}_{i}( \mathbf{0} ) >t) &
        \leq
        \exp \left( - tq^{d-1}\log \left(\frac{t}{d \kappa_d \rho q}\right) +  t  q^{d-1} \right)
        \\
        &=
        \exp \left( - tq^{d-1}\log\left(\frac{t}{ e  d \kappa_d \rho q}\right) \right),
    \end{align*}
    which ends the proof.
\end{proof}

Lemma \ref{lem:bound_external_force} has the following corollary.

\begin{corollary}
    \label{coro:existence_external_moments}
    Consider a homogeneous Poisson point process $\mathcal{P} \subset \mathbb{R}^d$ of intensity $\rho$, with $d \geq 3$.
    Let $ R > 0$, $ \varepsilon \in (-1,1)$, $\beta \in (0, 1)$ and $r(\bfx ) = \|\bfx \|_2^{\beta}$.
    Then, for any positive integer $m$, there exist positive constants $(a_k)_{k=1}^m$, $(b_k)_{k=1}^m$, and $(c_k)_{k=1}^m$  such that
    \begin{align}
        \label{eq:bound_moment_external_force}
        &\bbE  \left[
        \left(\sum_{\bfx  \in \mathcal{P} \cap B(\mathbf{0}, R^\prime)^c} \mathds{1}_{\left\{\lVert F_{\mathcal{P} }^{(1,\infty )}(\bfx )\lVert_2 >\frac{r(\bfx ) - R}{|\varepsilon|}\right\}}(\bfx )\right)^m
        \right] \leq
        \sum_{k=1}^{m}a_k
        \left(\int_{B(\mathbf{0}, R^\prime)^c}
        \exp \left( -b_k g_k(\bfx ) \log( c_k g_k(\bfx )) \right) \rho \d \bfx \right)^k,
    \end{align}
    where $R^\prime =(R+ m-1)^{1/\beta}$ and $g_k(\bfx ) = \frac{r(\bfx ) - (R + k-1)}{ |\varepsilon| k} $.
\end{corollary}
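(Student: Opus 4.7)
The plan is to expand the $m$-th power into sums over $k$-tuples of distinct points, apply the extended Slivnyak--Mecke formula \eqref{eq:slivnyak-mecke}, and then factorize the joint tail probability at the $k$ selected points by Hölder's inequality so that Lemma \ref{lem:bound_external_force} applies marginally at each point.

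Writing $N$ for the indicator sum inside the expectation, I would first use $N^m = \sum_{k=1}^m S(m,k) N^{\underline{k}}$, with $S(m,k)$ the Stirling numbers of the second kind and $N^{\underline{k}} = \sum_{(\bfx_1, \dots, \bfx_k)}^{\neq}\prod_i \mathds{1}_{E_{\bfx_i}}$ where $E_\bfx$ is the force-excursion event in the corollary. This reduces the task to bounding each falling factorial moment $\bbE[N^{\underline{k}}]$. To apply \eqref{eq:slivnyak-mecke} to the sum over ordered $k$-tuples of distinct points of $\calP$, I need to re-express the integrand as a function of $(\bfx_1, \dots, \bfx_k)$ and of $\calP \setminus \{\bfx_1, \dots, \bfx_k\}$. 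This is where the shift by $(k-1)$ in the definition of $g_k$ originates: decomposing $F^{(1,\infty)}_{\calP \setminus \{\bfx_i\}}(\bfx_i)$ as a reduced force on $\calP \setminus \{\bfx_1, \dots, \bfx_k\}$ plus the contributions of the other distinguished points (each of norm at most $1$ thanks to the annulus constraint $\|\bfx_i - \bfx_j\|_2 > 1$), the event of interest at $\bfx_i$ gets absorbed into the event that the reduced force exceeds $t_i - (k-1)$, with $t_i = (r(\bfx_i) - R)/|\varepsilon|$.

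Once Slivnyak--Mecke has turned $\bbE[N^{\underline{k}}]$ into a $k$-fold integral of a joint tail probability on the original Poisson process, I would control the dependence by Hölder's inequality with all $k$ exponents equal to $k$, giving $\bbP(\bigcap_i A_i) \leq \prod_i \bbP(A_i)^{1/k}$. Translation invariance of the distribution of $F^{(1,\infty)}_{\calP}$ and Lemma \ref{lem:bound_external_force} then bound each $\bbP(A_i)$ by the subgaussian-type tail $c_1 \exp(-c_2 (t_i - (k-1))\log(c_3(t_i - (k-1))))$. A routine calculation using $|\varepsilon| < 1$ shows $t_i - (k-1) \geq k\, g_k(\bfx_i)$, so monotonicity of $t \mapsto t \log(c_3 t)$ on the relevant range yields the desired exponent $-c_2 g_k(\bfx_i) \log(c_3 k\, g_k(\bfx_i))$ inside each of the $k$ identical factors; their product factorizes as the $k$-th power of a single integral over $B(\mathbf{0}, R^\prime)^c$.

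The main obstacle is the Hölder step: the forces at distinct $\bfx_i$ are strongly coupled because they are driven by the same Poisson background, and Hölder is the cleanest tool for exchanging that dependence for a $k$-th-root loss on each marginal tail, which is exactly absorbed by the factor $k$ in the denominator of $g_k$. The choice $R^\prime = (R+m-1)^{1/\beta}$ ensures $g_k(\bfx) > 0$ throughout the integration domain for every $k \leq m$; any residual issue near the boundary, where $c_3 k\, g_k(\bfx)$ may drop below $1$ and the logarithmic factor turn unfavorable, can be absorbed into the constant $a_k$. Combining everything with $a_k = c_1 S(m,k)$, $b_k = c_2$, and $c_k = c_3 k$ then delivers \eqref{eq:bound_moment_external_force}.
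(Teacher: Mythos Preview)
Your argument is correct, and the overall architecture (Stirling expansion, removal of the $k-1$ distinguished points at a cost of at most $k-1$ in the threshold, Slivnyak--Mecke, then a tail bound via Lemma~\ref{lem:bound_external_force}) matches the paper. The point of divergence is how you decouple the $k$ dependent events $A_i=\{\|F^{(1,\infty)}_{\calP}(\bfx_i)\|_2>kg_k(\bfx_i)\}$. You use H\"older with equal exponents, $\bbP(\bigcap_i A_i)\le\prod_i\bbP(A_i)^{1/k}$, and then apply Lemma~\ref{lem:bound_external_force} at each point; the $1/k$ from H\"older cancels the factor $k$ in the threshold, yielding the exponent $-c_2 g_k(\bfx)\log(c_3kg_k(\bfx))$ and hence $c_k=c_3k$. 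The paper instead uses the cruder bound $\bbP(\bigcap_i A_i)\le\min_i\bbP(A_i)$, rewrites the minimum as the tail at $\max_i kg_k(\bfx_i)$, weakens $\max_i kg_k(\bfx_i)\ge\sum_i g_k(\bfx_i)$, applies Lemma~\ref{lem:bound_external_force} \emph{once} at the aggregated threshold $\sum_i g_k(\bfx_i)$, and only then factorizes the resulting exponential via $\log(C_3\sum_i g_k(\bfx_i))\ge\log(C_3 g_k(\bfx_j))$. Both routes land on the same product of one-dimensional integrals; the paper's version gives $c_k=C_3$ independent of $k$, while yours carries the harmless extra factor $k$ inside the logarithm. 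Your H\"older step is arguably the more transparent way to trade dependence for a controlled loss on each marginal, whereas the paper's min--max--sum trick is slightly more ad hoc but avoids invoking H\"older altogether. One small wording issue: the bound ``each of norm at most $1$'' is not because the distinguished points satisfy $\|\bfx_i-\bfx_j\|_2>1$ (they need not), but because any point contributing to $F^{(1,\infty)}_{\calP}(\bfx_i)$ is at distance at least $1$, hence contributes at most $1$, while closer points simply do not appear in the external sum.
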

This corollary helps us control the external term in \eqref{eq:decomposition_internal_external}.
Indeed, as $\|\bfx\|_2 \rightarrow \infty$ we have
$$
\exp \left( -b_k g_k(\bfx ) \log( c_k g_k(\bfx )) \right) =  o\left(\exp \left(- \|\bfx\|_2^{\beta}\right)\right).
$$
Thus, for any positive integer $m$
\begin{equation}
    \label{eq:existence_moment_external_force}
    \bbE  \left[
    \left(\sum_{\bfx  \in \mathcal{P} \cap B(\mathbf{0}, R^\prime)^c} \mathds{1}_{\{\lVert F_{\mathcal{P} }^{(1,\infty )}(\bfx )\lVert_2 >\frac{r(\bfx ) - R}{|\varepsilon|}\}}(\bfx )\right)^m
    \right] < \infty.
\end{equation}
\begin{proof}[Proof of Corollary \ref{coro:existence_external_moments}]
    Fix a positive integer $m$.
    There exists $m$ constants $(d_i)_{i=1}^m$ such that
    \begin{align}
        &\bbE \left[\left(\sum_{\bfx  \in \mathcal{P}\cap B(\mathbf{0}, R^\prime)^c}
        \mathds{1}_{\{\| F^{(1, \infty)}_{\mathcal{P} }(\bfx )\|_2 > \frac{r(\bfx ) - R}{|\varepsilon|}\}}(\bfx )\right)^{m} \right]
        =
        \\
        &
        \sum_{k=1}^{m}d_k
        \underbrace{\bbE  \left[\sum_{\bfx _1, \dots, \bfx _k \in \mathcal{P}\cap B(\mathbf{0},R^\prime)^c}^{\neq}
            \mathds{1}_{\{\| F^{(1, \infty)}_{\mathcal{P} }(\bfx _1)\|_2 > \frac{r(\bfx _1) - R}{|\varepsilon|}\}}(\bfx _1)
            \dots
            \mathds{1}_{\{\| F^{(1, \infty)}_{\mathcal{P} }(\bfx _k)\|_2 > \frac{r(\bfx _k) - R}{|\varepsilon|}\}}(\bfx _k)\right]}_{\triangleq E_{k}}.
        \label{eq:eq_0_proof_coro_3}
    \end{align}
    Using Lemma \ref{lem:bound_external_force} we will show that for any $k \geq 1$ there exists positive constants $a_k, \, b_k$ and $c_k$ such that
    $$E_{k} \leq \left(\int_{B(\mathbf{0}, R^\prime)^c}
    a_k \exp \left[ -b_k g_{k}(\bfx ) \log( c_k g_{k}(\bfx )) \right] \rho \d \bfx \right)^{k},$$
    with $g_k(\bfx ) = \frac{r(\bfx ) - (R + k-1)}{ |\varepsilon| k} $ for $k \in \{1, \dots, m\}$.
    
    To simplify the notations, we denote $\mathcal{P} \setminus \{\bfx _1, \dots, \bfx _{k}\}$ by $\hat{\mathcal{P}}^{(k)}$ for any $k \in \{1, \dots, m\}$ and sometimes omit to remind that $\bfx _1, \dots, \bfx _{m} \in \mathcal{P}\cap B(\mathbf{0},  R^\prime )^c$ when it is clear from the context.
    
    First, remark that for two distinct points $\bfx $ and $\bfy$ of $\mathbb{R}^d$, we have
    \[
    F^{(1,\infty)}_{\mathcal{P}}(\bfx )=
    \begin{cases*}
        F^{(1,\infty)}_{\mathcal{P}\setminus\{\bfx , \bfy\}}(\bfx )& if  $\|\bfx  - \bfy\|_2<1$
        \\
        F^{(1,\infty)}_{\mathcal{P}\setminus\{\bfx , \bfy\}}(\bfx ) + \frac{\bfx -\bfy}{\|\bfx  - \bfy\|_2^d} & if $\|\bfx  - \bfy\|_2 \geq 1$
    \end{cases*}.
    \]%
    Thus
    $$\|F^{(1,\infty)}_{\mathcal{P}}(\bfx )\|_2 \leq \|F^{(1,\infty)}_{\mathcal{P}\setminus\{\bfx , \bfy\}}(\bfx )\|_2 + 1.$$
    In particular
    \begin{equation}
        \label{eq:bound_event_F_1_infty}
        \mathds{1}_{\left\{\|F^{(1,\infty)}_{\mathcal{P}}(\bfx )\|_2 \geq (r(\bfx )- R)/|\varepsilon|\right\}}(\bfx ) \leq \mathds{1}_{\left\{\|F^{(1,\infty)}_{\mathcal{P}\setminus\{\bfx , \bfy\}}(\bfx )\|_2 \geq 2 g_2(\bfx) \right\}}(\bfx ).
    \end{equation}
    Generalizing Equation \eqref{eq:bound_event_F_1_infty} to $k$ points gives
    \begin{equation}
        \label{eq:eq_1_proof_coro_3}
        \mathds{1}_{\left\{\|F^{(1,\infty)}_{\mathcal{P}}(\bfx )\|_2 \geq (r(\bfx )- R)/|\varepsilon|\right\}}(\bfx ) \leq \mathds{1}_{\left\{\|F^{(1,\infty)}_{\hat{\mathcal{P}}^{(k)}}(\bfx )\|_2 \geq k g_k(\bfx) \right\}}(\bfx ).
    \end{equation}
    Using Equation \eqref{eq:eq_1_proof_coro_3} and the extended Slivnyak-Mecke theorem \eqref{eq:slivnyak-mecke} we get
    \begin{align*}
        E_{k}
        &\leq
        \bbE \left[\sum_{\bfx _1, \dots, \bfx _{k}}^{\neq}
        \mathds{1}_{\{\| F^{(1, \infty)}_{\hat{\mathcal{P}}^{(k)}}(\bfx _1)\|_2 > kg_k(\bfx_1), \dots ,
            \| F^{(1, \infty)}_{\hat{\mathcal{P}}^{(k)}}(\bfx _{k})\|_2
            > kg_k(\bfx_k) \}}(\bfx _1, \dots, \bfx _{k})\right]
        \\
        &=
        \int_{(B(\mathbf{0}, R^\prime)^c)^{k}}
        \bbP\left(
        \| F^{(1, \infty)}_{\mathcal{P} }(\bfx _1)\|_2 > kg_k(\bfx_1) , \dots ,
        \| F^{(1, \infty)}_{\mathcal{P}}(\bfx _{k})\|_2 > k g_k(\bfx_k)
        \right)
        \rho^{k} \d \bfx _1 \dots \d \bfx _{k}
        \\
        &\leq
        \int_{(B(\mathbf{0},R^\prime)^c)^{k}}
        \min_{j \in \{1, \dots, k\}}
        \bbP \left(
        \|F^{(1,\infty)}_{\mathcal{P}}(\mathbf{x}_j)\|_2 \geq k g_k(\bfx_j) \right)
        \rho^{k} \d \bfx _1 \dots \d \bfx _{k}.
    \end{align*}
    As the distribution of $F^{(1,\infty)}_{\mathcal{P}}(\mathbf{x})$ is translation invariant (w.r.t. $\bfx$), we get
    \begin{align}
        E_{k}
        &\leq
        \int_{(B(\mathbf{0},R^\prime)^c)^{k}}
        \min_{j \in \{1, \dots, k\}}
        \bbP \left(
        \|F^{(1,\infty)}_{\mathcal{P}}(\mathbf{0})\|_2 \geq k g_k(\bfx_j) \right)
        \rho^{k} \d \bfx _1 \dots \d \bfx _{k}
        \nonumber
        \\
        &=
        \int_{(B(\mathbf{0},R^\prime)^c)^{k}}
        \bbP \left(
        \|F^{(1,\infty)}_{\mathcal{P}}(\mathbf{0})\|_2 \geq \max_{j \in \{1, \dots, k\}}k g_k(\bfx_j) \right)
        \rho^{k} \d \bfx _1 \dots \d \bfx _{k}
        \nonumber
        \\
        &\leq
        \int_{(B(\mathbf{0}, R^\prime)^c)^{k}}
        \bbP \left(
        \|F^{(1,\infty)}_{\mathcal{P}}(\mathbf{0})\|_2
        \geq \sum_{j=1}^{k} g_k(\bfx_j) \right)
        \rho^{k} \d \bfx _1 \dots \d \bfx _{k}.
        \label{eq:eq_2_proof_coro_3}
    \end{align}
    For $\mathbf{x_1}, \dots , \bfx _k \in B(\mathbf{0},R^\prime)^c$, Lemma~\ref{lem:bound_external_force} with $q=1$ and $p=\infty$ guarantees the existence of $C_1, \, C_2, \, C_3 >0$ such that
    \begin{align*}
        \bbP \left(
        \|F^{(1,\infty)}_{\mathcal{P}}( \mathbf{0} )\|_2
        \geq
        \sum_{j=1}^{k} g_k(\bfx_j)
        \right)
        &\leq
        C_1
        \exp \left(-C_2 \sum_{j=1}^{k} g_k(\bfx _j) \log \left(C_3 \sum_{j=1}^{k} g_k(\bfx _j)\right)\right)
        \\
        & =
        C_1
        \prod_{j=1}^{k}
        \exp \left(
        - C_2 g_k(\bfx _j)\log \left(
        C_3 \sum_{j=1}^{k} g_k(\bfx _j)
        \right)
        \right)
        \\
        & \leq
        C_1
        \prod_{j=1}^{k}
        \exp \left(
        - C_2 g_k(\bfx _j)\log \left(
        C_3 g_k(\bfx _j)
        \right)
        \right).
    \end{align*}
    Plugging back into \eqref{eq:eq_2_proof_coro_3}, and then in \eqref{eq:eq_0_proof_coro_3}, we obtain the existence of positive constants $(a_k)_{k=1}^m$, $(b_k)_{k=1}^m$, and $(c_k)_{k=1}^m$ such that
    \begin{align*}
        \bbE \left[ \left(\sum_{\bfx  \in \mathcal{P}\cap B(\mathbf{0},R^\prime)^c}
        \mathds{1}_{\{\| F^{(1, \infty)}_{\mathcal{P} }(\bfx )\|_2 > \frac{r(\bfx ) - R}{|\varepsilon|}\}}(\bfx )\right)^{m} \right]
        &\leq
        \sum_{k=1}^m d_k a_k
        \int_{(B(\mathbf{0},R^\prime )^c)^{k}}
        \prod_{j=1}^{k}
        \exp \left(
        - b_k g_k(\bfx _j)\log \left(
        c_k g_k(\bfx _j)
        \right)
        \right)
        \rho^{k} \d \bfx _1 \dots \d \bfx _{k}
        \\
        &=
        \sum_{k=1}^m d_k a_k
        \left(
        \int_{B(\mathbf{0}, R^\prime)^c}
        \exp \left(
        - b_k g_k(\bfx)\log \left(
        c_k g_k(\bfx)
        \right)
        \right)
        \rho ~\d \bfx  \right)^k,
    \end{align*}
    which concludes the proof.
\end{proof}

Now we switch focus to bounding the contribution from the internal force to \eqref{eq:decomposition_internal_external}.
Again, we work with a lemma and a corollary.
\begin{lemma}
    \label{lem:bound_internal_force}
    Consider a homogeneous Poisson point process $\mathcal{P} \subset \mathbb{R}^d$ of intensity $\rho$, with $d \geq 3$.
    Let $R > 0$ and $\varepsilon \in (-1,1)$.
    Consider a function $r: \, \mathbb{R}^d \rightarrow \mathbb{R}^{+}\setminus \{0\}$ such that $r(\bfx ) < \|\bfx \|_2$ for any $\bfx  \in B(\mathbf{0}, R)^c$.
    Then, for any $\bfx  \in B(\mathbf{0}, R)^c$, and $0<\gamma< 1/(d-1)$, there exists $c_1,\dots, c_5>0$ such that
    \begin{align}
        \label{eq:bound_proba_F_0_1}
        \bbP
        \left(
        \bfx  + \varepsilon F^{(0,1)}_{\mathcal{P} }(\mathbf{0}) \in B(\mathbf{0}, r(\bfx ))
        \right)
        &\leq
        \frac{c_1}
        {h(\bfx) g(\bfx )^{1 + \gamma}}
        +
        c_3\exp\left(-c_4g(\bfx )  \log( c_5 g(\bfx ))\right),
    \end{align}
    where $h(\bfx) = \max \left(1,c_2\left(\|\bfx \|_2/r(\bfx ) -1 \right)^{d-1} -1 \right)$, and $g(\bfx ) = (\|\bfx \|_2 -  r(\bfx ))/|\varepsilon|$.
\end{lemma}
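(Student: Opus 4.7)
The overall plan is to combine the spherical symmetry of $F^{(0,1)}_\calP(\mathbf{0})$, inherited from the isotropy of $\calP$ (Proposition~\ref{prop:motion_invariance}), with a nearest-point decomposition of $\calP \cap B(\mathbf{0}, 1)$. The event $\{\bfx + \varepsilon F^{(0,1)}_\calP(\mathbf{0}) \in B(\mathbf{0}, r(\bfx))\}$ is equivalent to $\{F^{(0,1)}_\calP(\mathbf{0}) \in A\}$, where $A$ is a ball of radius $r(\bfx)/|\varepsilon|$ centered at distance $\|\bfx\|_2/|\varepsilon|$ from the origin. By the reverse triangle inequality, any vector in $A$ has norm at least $g(\bfx)$ and, seen from the origin, lies in a cone of solid-angle proportion at most $C_d(r(\bfx)/\|\bfx\|_2)^{d-1}$ around one specific direction. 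A short case analysis distinguishing whether $\|\bfx\|_2/r(\bfx)$ is close to $1$ or large shows that $(r(\bfx)/\|\bfx\|_2)^{d-1} \leq C/h(\bfx)$, which is where the factor $1/h(\bfx)$ in the lemma originates.

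Writing $F^{(0,1)}_\calP(\mathbf{0}) = T \hat{U}$ with $T = \|F^{(0,1)}_\calP(\mathbf{0})\|_2$ and $\hat{U}$ uniform on $S^{d-1}$, independent of $T$ on $\{T > 0\}$, a direct computation of the spherical cross-section of $A$ as a function of the radius $t$ yields
$$\bbP\bigl(F^{(0,1)}_\calP(\mathbf{0}) \in A\bigr) \leq \frac{C}{h(\bfx)}\,\bbP\bigl(T \geq g(\bfx)\bigr).$$
For the magnitude tail, introduce $D_1 = \min\{\|\bfz\|_2 : \bfz \in \calP \cap B(\mathbf{0}, 1)\}$; then for any threshold $s \in (0, 1)$,
$$\bbP(T \geq g) \leq \bbP(D_1 \leq s) + \bbP\bigl(\|F^{(s,1)}_\calP(\mathbf{0})\|_2 \geq g\bigr) \leq \rho \kappa_d s^d + c_1 \exp\bigl(-c_2 s^{d-1} g \log(c_3 g/s)\bigr),$$
where the first inequality uses the Poisson void probability and the second invokes Lemma~\ref{lem:bound_external_force}. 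The choice $s = g^{-1/(d-1)}$ balances the two contributions into a polynomial bound $\bbP(T \geq g) = O(g^{-(1+\gamma)})$ for any $\gamma < 1/(d-1)$, and this is exactly where the hypothesis $\gamma < 1/(d-1)$ enters: it matches the heavy-tail index $d/(d-1)$ of the $\alpha$-stable law of $F_\calP$ (Proposition~\ref{prop:distribution_of_F}). Combined with the solid-angle bound, this produces the first term $c_1/(h(\bfx) g(\bfx)^{1+\gamma})$.

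The exponential second term is obtained from a parallel application of the magnitude decomposition in which the solid angle is discarded: use $\{F^{(0,1)}_\calP \in A\} \subset \{\|F^{(0,1)}_\calP\|_2 \geq g\}$, split on $\{D_1 > 1/2\}$ versus $\{D_1 \leq 1/2\}$, and apply Lemma~\ref{lem:bound_external_force} with the constant threshold $s = 1/2$ on the former event (where $F^{(0,1)}_\calP = F^{(1/2, 1)}_\calP$) to recover the factor $\exp(-c_4 g \log(c_5 g))$; the event $\{D_1 \leq 1/2\}$ is absorbed by the polynomial term derived above. The main obstacle is the geometric solid-angle estimate of the first bound, which must be uniform in the radius $t$ of the sphere intersected with $A$ (so as to capture the full range $t \in [g, g + 2r/|\varepsilon|]$), together with the bookkeeping of the constants across the two applications of Lemma~\ref{lem:bound_external_force} so that the final bound takes exactly the stated form.
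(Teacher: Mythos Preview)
Your approach is largely correct and differs meaningfully from the paper's. Both proofs extract the angular factor $1/h(\bfx)$ from isotropy, but you use a direct spherical-cap estimate on the direction $F^{(0,1)}_\calP(\mathbf{0})/\|F^{(0,1)}_\calP(\mathbf{0})\|_2$, whereas the paper packs $m$ disjoint rotated copies of the target ball $B(-\bfx,r(\bfx))$ around the sphere $S(\mathbf{0},\|\bfx\|_2)$ and bounds $\bbP(\varepsilon F^{(0,1)}_\calP(\mathbf{0})\in B(-\bfx,r(\bfx)))\leq m^{-1}\bbP(\varepsilon F^{(0,1)}_\calP(\mathbf{0})\in A^{(q,p)})$. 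For the tail of $T=\|F^{(0,1)}_\calP(\mathbf{0})\|_2$, the paper writes $F^{(0,1)}_\calP = F_\calP - F^{(1,\infty)}_\calP$, applies Markov's inequality with the finite $(1+\gamma)$-th moment of the $\alpha$-stable law of $F_\calP$ (Proposition~\ref{prop:distribution_of_F}) to obtain the polynomial term, and Lemma~\ref{lem:bound_external_force} with $q=1$ to obtain the exponential one; this is why the lemma's bound naturally splits into two pieces. Your nearest-point decomposition is a legitimate alternative that avoids invoking the full force $F_\calP$ and its stable moments.

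Two points deserve care. First, the claim that $s=g^{-1/(d-1)}$ yields $\bbP(T\geq g)=O(g^{-(1+\gamma)})$ is not correct as written: with that choice $s^{d-1}g=1$, so the contribution from Lemma~\ref{lem:bound_external_force} decays only like $g^{-c_2 d/(d-1)}$, and tracing that lemma's proof one finds $c_2=1/d$, giving merely $g^{-1/(d-1)}$, which is slower than $g^{-(1+\gamma)}$. The fix is easy: take $s=\eta\, g^{-1/(d-1)}$ with $\eta^{d-1}$ large enough (e.g.\ $\eta^{d-1}>d$), so that $s^{d-1}g=\eta^{d-1}$ forces the exponential from Lemma~\ref{lem:bound_external_force} to decay polynomially as fast as desired, while the term $\rho\kappa_d s^d$ keeps the rate $g^{-d/(d-1)}$. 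Second, once you have established $\bbP(F^{(0,1)}_\calP(\mathbf{0})\in A)\leq C/(h(\bfx)g^{1+\gamma})$, the lemma follows immediately, since its stated bound is this quantity plus a nonnegative exponential term. Your separate paragraph deriving the exponential via $s=1/2$ is therefore redundant, and the claim that the event $\{D_1\leq 1/2\}$ is ``absorbed by the polynomial term'' is incorrect as stated ($\bbP(D_1\leq 1/2)$ is a positive constant, not $o(1)$ in $g$), though harmless since your first bound already suffices.
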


First, by translation-invariance of the distribution of $F^{(q,p)}_{\mathcal{P} }( \bfx  )$, the choice of $\mathbf{0}$ in \eqref{eq:bound_proba_F_0_1} is arbitrary.
Second, by choosing $r(\bfx )= \|\bfx \|_2^{\beta}$, with $0<\beta<1/(2(d-1))$, the upper bound in Equation \eqref{eq:bound_proba_F_0_1} is $o(\|\bfx \|_2^{-d})$ as $\|\bfx\|_2$ goes to infinity. Thus, $\bbP
\left(
\bfx  + \varepsilon F^{(0,1)}_{\mathcal{P} }(\mathbf{0}) \in B\left(\mathbf{0}, r(\bfx )\right)
\right)$ is integrable over $B(\mathbf{0}, R)^c$.
\begin{proof}[Proof of Lemma \ref{lem:bound_internal_force}]
    \begin{figure}[!h]
        \centering
        \includegraphics[width=0.4\linewidth]{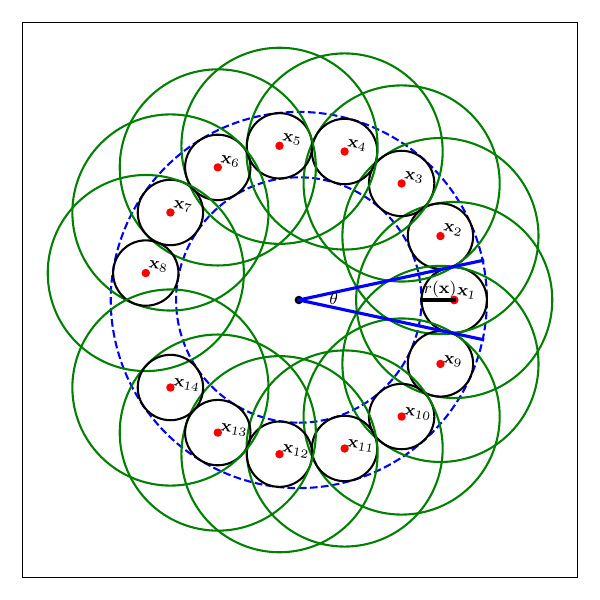}
        \caption{Illustration of the proof idea.}
        \label{fig:proof_idea_illustration}
    \end{figure}
    Fix $\bfx  \in B(\mathbf{0}, R)^c$, and let $q=\|\bfx \|_2 - r(\bfx )$ and $p=\|\bfx \|_2 + r(\bfx )$.
    Pick points $\{\bfx_i\}_{i=1}^m$ iteratively on the sphere $S(\mathbf{0}, \|\bfx\|_2)$ with $\bfx_1 = -\bfx$, such that the balls $\{B(\bfx_i , r(\bfx ))\}_{i=1}^m$ are disjoint, and it is not possible to add an additional similar ball not overlapping the previous ones.
    Next, let $B_i = B(\bfx_i , 3r(\bfx ))$ and $C_1\triangleq \kappa_d (3r(\bfx))^d$ be the volume of $B_1$. Then, $\bigcup_{i=1}^m B_i$ is a covering  of the annulus $A^{(q, p)}$.
    To see why this holds, suppose that there exists a point $\bfy \in A^{(q, p)} \setminus \bigcup_{i=1}^m B_i$. In particular, $\min_{i} \|\bfy - \bfx_i\|_2> 3r(\bfx)$. Take $\bfz = \bfy \frac{\|\bfx\|_2}{\|\bfy\|_2}$, $\bfz$ is in fact the orthogonal projection of $\bfy$ onto $S(\mathbf{0}, \|\bfx\|_2)$.
    For any $i \in \{1, \dots, m\}$
    \begin{align*}
        \|\bfz - \bfx_i\|_2 &\geq \|\bfy - \bfx_i\|_2 - \| \bfz - \bfy\|\\
        & > 3 r(\bfx) - \left|\|\bfy\|_2 - \|\bfx\|_2\right|\\
        & > 2 r(\bfx).
    \end{align*}
    Thus, we have $\bfz \in S(\mathbf{0}, \|\bfx\|_2)$ and $B(\bfz, r(\bfx)) \subset A^{(q, p)}\setminus \bigcup_{i=1}^m B(\bfx_i , r(\bfx ))$ giving a contradiction.
    While our work pertains to dimensions greater than two, employing a visualization that depicts the two-dimensional case can aid in comprehending the concept.
    Figure \ref{fig:proof_idea_illustration} illustrates an example of valid points
    $\{\bfx_i\}_{i=1}^m$ in red and the covering in green of $A^{(q, p)}$ for $d=2$.
    Note that, for $d=2$ and any $i\in \{1, ..., m\}$, $\bfx _i$ can be chosen inductively as the rotation of $\bfx _{i-1}$ of angle $\theta= \mathrm{Arcsin}(r(\bfx )/\|\bfx \|_2)$ around the origin.
    
    As the balls $\{B(\bfx_i , r(\bfx ))\}_{i=1}^m$ are disjoint and contained in $A^{(q, p)}$, we have
    \begin{align*}
        \bbP
        \left( \varepsilon F^{(0,1)}_{\mathcal{P}}(\mathbf{0})  \in A^{(q, p)} \right)
        &\geq
        \bbP \left( \varepsilon F^{(0,1)}_{\mathcal{P}}(\mathbf{0})  \in \bigcup_{i=1}^m B(\bfx _i, r(\bfx ))\right)
        \\
        &=\sum_{i=1}^m
        \bbP \left(
        \varepsilon
        F^{(0,1)}_{\mathcal{P}}( \mathbf{0})  \in B(\bfx _i, r(\bfx ) )
        \right)
    \end{align*}
    By isotropy of the law of $F^{(0,1)}_{\mathcal{P} } (\mathbf{0})$, we obtain
    \begin{equation}
        \label{eq:F_with_m}
        \bbP\left(\varepsilon F^{(0,1)}_{\mathcal{P} }(\mathbf{0}) \in A^{(q, p)} \right)
        \geq m \bbP \left(\varepsilon F^{(0,1)}_{\mathcal{P} }(\mathbf{0}) \in B(\bfx_1 , r(\bfx ))\right).
    \end{equation}
    We will now proceed to find a suitable lower bound for $m$.
    To accomplish this, we apply the mean value theorem to
    \begin{align*}
        h \colon [\|\bfx \|_2 - r(\bfx ), \|\bfx \|_2 + r(\bfx )] &\to \mathbb{R} \\
        x &\mapsto x^d,
    \end{align*}
    and we get that there exists $c  \in (-1, 1)$
    such that
    $$\left(\|\bfx \|_2 + r(\bfx )\right)^d - (\|\bfx \|_2 - r(\bfx ))^d = 2d r(\bfx ) (\|\bfx \|_2 + c r(\bfx ))^{d-1} \geq 2d r(\bfx ) (\|\bfx \|_2 - r(\bfx ))^{d-1}.$$
    Hence
    \begin{align*}
        m &
        \geq \floor*{\frac{\leb{A^{(q, p)}}}{C_1}}\\
        &= \floor*{\frac{\kappa_d \left((\|\bfx \|_2 + r(\bfx ))^d - (\|\bfx \|_2 - r(\bfx ))^d \right)}{\kappa_d (3r(\bfx ))^d}}
        \\
        &\geq
        \floor*{\frac{\kappa_d\left(2d r(\bfx ) (\|\bfx \|_2 - r(\bfx ))^{d-1}\right)}{\kappa_d (3r(\bfx ))^d}} \geq \max \left(1,C_2\left(\frac{\|\bfx \|_2}{r(\bfx )} -1 \right)^{d-1} -1 \right),
    \end{align*}
    where $C_2= (2d)/3^{d}$.
    Thus Equation \eqref{eq:F_with_m} leads to
    \begin{align*}
        \bbP\left(\varepsilon F^{(0,1)}_{\mathcal{P} }( \mathbf{0}) \in B(\bfx_1 , r(\bfx ))\right)
        &\leq
        \frac{\bbP \left(\varepsilon F^{(0,1)}_{\mathcal{P} }(\mathbf{0}) \in A^{(q, p)} \right)}{\max \left(1,C_2\left(\frac{\|\bfx \|_2}{r(\bfx )} -1 \right)^{d-1} -1 \right)}
        \\
        &\leq
        \frac{\bbP\left(  \|\varepsilon F^{(0,1)}_{\mathcal{P} }(\mathbf{0})\|_2 > \|\bfx \|_2 -  r(\bfx ) \right)}{\max \left(1,C_2\left(\frac{\|\bfx \|_2}{r(\bfx )} -1 \right)^{d-1} -1 \right)}.
    \end{align*}
    Since $F^{(0,1)}_{\mathcal{P} } = F_{\mathcal{P} } - F^{(1,\infty)}_{\mathcal{P} }$, we get
    \begin{align}
        \bbP\left(\varepsilon F^{(0,1)}_{\mathcal{P} }(\mathbf{0}) \in B(\bfx_1 , r(\bfx ))\right)& \leq
        \frac{\bbP\left(|\varepsilon| \|F_{\mathcal{P} }(\mathbf{0}) - F^{(1, \infty)}_{\mathcal{P} }(\mathbf{0})\|_2 > \|\bfx \|_2 -  r(\bfx ) \right)}{\max \left(1,C_2\left(\frac{\|\bfx \|_2}{r(\bfx )} -1 \right)^{d-1} -1 \right)}
        \nonumber
        \\
        &\leq
        \frac{\bbP\left(\|F_{\mathcal{P} }(\mathbf{0}) \|_2 >
            \frac{\|\bfx \|_2 -  r(\bfx )}{2 |\varepsilon|}\right) + \bbP\left(\|F^{(1, \infty)}_{\mathcal{P} }(\mathbf{0})\|_2 > \frac{\|\bfx \|_2 -  r(\bfx )}{2 |\varepsilon|} \right)}{\max \left(1,C_2\left(\frac{\|\bfx \|_2}{r(\bfx )} -1 \right)^{d-1} -1 \right)}
        \label{eq:eq_1}
    \end{align}
    By Lemma \ref{prop:distribution_of_F}, for any $0<\gamma< 1/(d-1) $ we have $\bbE  [\|F_{\mathcal{P} }(\mathbf{0})\|_2^{1 + \gamma}] < \infty$.
    Applying Markov's inequality to the first part of Equation  \eqref{eq:eq_1}, and using Lemma \ref{lem:bound_external_force} with $t= (\|\bfx \|_2 -  r(\bfx ))/(2 |\varepsilon|)$ for the last part, we obtain the existence of positive constants $c_1$, $c_2$, and $c_3$ such that
    \begin{align*}
        \bbP & \left(\bfx  + \varepsilon F^{(0,1)}_{\mathcal{P} }(\mathbf{0}) \in B(\mathbf{0}, r(\bfx ))\right)
        \\
        &\leq
        \frac{(2 |\varepsilon|)^{1 + \gamma}\bbE \left[\|F_{\mathcal{P} }(\mathbf{0})\|_2^{ 1 + \gamma}\right]}
        {\max \left(1,C_2\left(\frac{\|\bfx \|_2}{r(\bfx )} -1 \right)^{d-1} -1 \right)\left(\|\bfx \|_2 -  r(\bfx )\right)^{1 + \gamma}}
        +
        \frac{c_1 \exp\left(-c_2\frac{\|\bfx \|_2 -  r(\bfx )}{2 |\varepsilon|} \log\left(c_3 \frac{\|\bfx \|_2 -  r(\bfx )}{2 |\varepsilon|} \right)\right) }
        {\max \left(1,C_2\left(\frac{\|\bfx \|_2}{r(\bfx )} -1 \right)^{d-1} -1 \right)}
        \\
        &\leq
        \frac{C_3 |\varepsilon|^{1 + \gamma}}
        {\max \left(1,C_2\left(\frac{\|\bfx \|_2}{r(\bfx )} -1 \right)^{d-1} -1 \right)\left(\|\bfx \|_2 -  r(\bfx )\right)^{1 + \gamma}}
        +
        C_4\exp\left(-C_5\frac{\|\bfx \|_2 -  r(\bfx )}{|\varepsilon|} \log\left( C_6 \frac{\|\bfx \|_2 -  r(\bfx )}{|\varepsilon|}\right)\right)
    \end{align*}
    with $C_3 = 2^{1 + \gamma} \bbE \left[\|F_{\mathcal{P} }(\mathbf{0})\|_2^{1 + \gamma}\right]$, $C_4=c_1$, $C_5 = c_2/2$ and $C_6 = c_3/2$.
\end{proof}
Note that under the assumptions and definitions of Lemmas \ref{lem:bound_external_force}, \ref{lem:bound_internal_force}, for $\bfx  \in B(\mathbf{0}, R^{1/\beta})^c$, there exists positive constants $c_1, \dots, c_5$ such that
\begin{equation}
    \label{eq:Proba_far_point_enter}
    \bbP(\bfx  + \varepsilon F_{\mathcal{P} }(\bfx) \in B(\mathbf{0}, R))
    \leq
    \frac{c_1}
    {\max \left(1,c_2\left(\frac{\|\bfx \|_2}{r(\bfx )} -1 \right)^{d-1} -1 \right) g(\bfx )^{1 + \gamma}}
    +
    c_3 \exp\left(-c_4g(\bfx )  \log( c_5 g(\bfx ))\right),
\end{equation}
for any $0< \gamma < 1/(d-1)$.
As expected, we observe that
$$\bbP(\bfx  + \varepsilon F_{\mathcal{P} } \in B(\mathbf{0}, R)) \xrightarrow[\varepsilon \rightarrow 0]{} 0.$$
We will later see that by selecting an appropriate function $r$, typically for $r(\bfx )=\|\bfx \|_2^\beta $ with $0<\beta<\frac{\gamma}{d-1} $, the bound in Equation \eqref{eq:Proba_far_point_enter} converges fast enough to zero allowing to bound the moments of $\sum_{\bfx  \in \Pi_{\varepsilon}\mathcal{P}} \mathds{1}_{B(\mathbf{0},R)}(\bfx )$.

The next result is a corollary of Lemma \ref{lem:bound_internal_force}.
\begin{corollary}
    \label{coro:existence_internal_moments}
    Consider a homogeneous Poisson point process $\mathcal{P} \subset \mathbb{R}^d$ of intensity $\rho$, with $d \geq 3$.
    Let $ R > 0$, $\varepsilon \in (-1,1)$ and $\beta \in (0,1)$.
    Set $r(\bfx ) = \lVert \bfx  \lVert_2^{\beta}$, $g(\bfx ) = (\|\bfx \|_2 -  r(\bfx ))/|\varepsilon|$ and denote
    \begin{equation}
        \label{eq:def_E_k}
        E_k \triangleq
        \bbE \left[\sum_{\bfx _1, \dots, \bfx _k \in \mathcal{P}\cap B(\mathbf{0}, R)^c}^{\neq}
        \mathds{1}_{B(\mathbf{0},r(\bfx _1))}(\bfx _1 + \varepsilon F^{(0,1)}_{\mathcal{P} }(\bfx _1))
        \dots
        \mathds{1}_{B(\mathbf{0},r(\bfx _k))}(\bfx _k + \varepsilon F^{(0,1)}_{\mathcal{P}}(\bfx _k))\right].
    \end{equation}
    For any positive integer $m$ and $0<\gamma<1/(d-1)$ there exists positive constants $a_1, \dots, a_{m-1}$ and $c_1, \dots, c_5$ such that
    \begin{align}
        \label{eq:E_1_coro_4}
        E_1
        \leq
        \int_{B(\mathbf{0}, R)^c}
        \frac{c_1 }
        {\max \left(1, c_2\left(\|\bfx \|_2^{1-\beta} -1\right)^{d-1} - 1\right)g(\bfx )^{1 + \gamma}}
        +
        c_3\exp\left(-c_4g(\bfx ) \log\left( c_5 g(\bfx )\right)\right) \rho \d \bfx,
    \end{align}
    and
    \begin{equation}
        \label{eq:existence_of_internal_moments}
        \bbE \left[\left(\sum_{\bfx  \in \mathcal{P}\cap B(\mathbf{0}, R)^c} \mathds{1}_{B(\mathbf{0},f(\bfx ))}( \bfx +\varepsilon F^{(0,1)}_{\mathcal{P} }(\bfx ))\right)^{m} \right] \leq
        \sum_{k=1}^{m -1}\left(a_k E_k + b^{k-1} E_1E_{m-k}\right) +  b^{m-1} E_1 .
    \end{equation}
    where $b=2^d \kappa_d \rho$.
\end{corollary}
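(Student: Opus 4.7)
The bound \eqref{eq:E_1_coro_4} is obtained by applying the Slivnyak--Mecke formula \eqref{eq:slivnyak-mecke} to the single sum defining $E_1$. Since $\bfx$ does not contribute to its own force, we find
$$E_1 = \rho \int_{B(\mathbf{0}, R)^c} \bbP\!\left(\bfx + \varepsilon F^{(0,1)}_{\mathcal{P}}(\bfx) \in B(\mathbf{0}, r(\bfx))\right) \d \bfx,$$
and Lemma \ref{lem:bound_internal_force} applied with $r(\bfx) = \|\bfx\|_2^\beta$ (so that $\|\bfx\|_2/r(\bfx) - 1 = \|\bfx\|_2^{1-\beta} - 1$), combined with the translation-invariance of the law of $F^{(0,1)}_{\mathcal{P}}$, directly yields \eqref{eq:E_1_coro_4}.

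For \eqref{eq:existence_of_internal_moments}, my plan is to first reduce the $m$-th moment to a linear combination of the $E_k$'s, and then iteratively bound the leading term $E_m$. The classical Stirling-of-the-second-kind identity
$$\left(\sum_\bfx \mathds{1}(\bfx)\right)^m = \sum_{k=1}^{m} S(m,k) \sum^{\neq}_{\bfx_1, \ldots, \bfx_k} \mathds{1}(\bfx_1) \cdots \mathds{1}(\bfx_k)$$
rewrites the $m$-th moment of our sum as $\sum_{k=1}^{m-1} S(m,k) E_k + E_m$; the lower-order terms are already in the form $a_k E_k$, and it only remains to control the extremal term $E_m$.

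The key ingredient is the recursion $E_k \leq E_1 E_{k-1} + (k-1)\, b\, E_{k-1}$ for every $k \geq 2$. To prove it, I split the sum over distinct $k$-tuples defining $E_k$ according to whether $\bfx_1$ is at distance $>2$ from all of $\bfx_2, \ldots, \bfx_k$ (the ``far'' region) or within distance $2$ of at least one of them (the ``close'' region). On the far region, the short-range force $F^{(0,1)}_{\mathcal{P}}(\bfx_1)$ depends only on $\mathcal{P} \cap B(\bfx_1, 1)$, a region disjoint from every $B(\bfx_j, 1)$ that governs the other indicators; by the complete randomness of the PPP, the $\bfx_1$-indicator is independent of the remaining ones, and Slivnyak--Mecke (noting that on this region the corrective contribution from $\bfx_1$ to the other forces vanishes since $\|\bfx_1 - \bfx_j\|_2 > 2 > 1$) yields a bound by $E_1 \cdot E_{k-1}$. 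On the close region, a union bound over the neighbor $\bfx_{j_0}$ with $\|\bfx_1 - \bfx_{j_0}\|_2 \leq 2$, Slivnyak--Mecke applied to $\bfx_1$, and the bound $\mathds{1}_1 \leq 1$ reduce the $\bfx_1$-integration to $B(\bfx_{j_0}, 2)$ of volume $2^d \kappa_d$, producing a factor $\rho \cdot 2^d \kappa_d = b$; after careful control of the Slivnyak--Mecke corrections in the remaining indicators, the resulting expression is bounded by $(k-1)\, b\, E_{k-1}$. Iterating the recursion down to $E_1$, substituting only into the ``close'' factor at each step, gives a bound of the form $\sum_{k=1}^{m-1} \alpha_{m,k}\, b^{k-1} E_1 E_{m-k} + \alpha_{m,m}\, b^{m-1} E_1$ for combinatorial constants $\alpha_{m,k}$ depending only on $m$; absorbing $\alpha_{m,k}$ and the Stirling numbers $S(m,k)$ into the $a_k$'s gives exactly \eqref{eq:existence_of_internal_moments}.

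The main technical hurdle lies in the close-region analysis: after Slivnyak--Mecke has extracted $\bfx_1$ from $\mathcal{P}$, the forces at the other points pick up corrective terms whenever $\|\bfx_1 - \bfx_i\|_2 \leq 1$, which shift the arguments of the remaining indicators and must be treated carefully to recover a bound of $E_{k-1}$ type. On the far region, the distance constraint $>2$ makes all such corrections vanish, and the independence argument closes at once. Bookkeeping the combinatorial coefficients produced by the iteration is the only other non-trivial ingredient.
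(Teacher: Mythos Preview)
Your treatment of $E_1$ (Slivnyak--Mecke plus Lemma~\ref{lem:bound_internal_force}) and the Stirling reduction of the $m$-th moment to $\sum_k S(m,k)E_k$ are correct and match the paper. Your recursive bound $E_k \le E_1E_{k-1}+(k-1)b\,E_{k-1}$ is a legitimate alternative to the paper's one-shot decomposition of $E_m$ according to exactly how many of $\bfx_2,\dots,\bfx_m$ lie in $B(\bfx_1,2)$; either route leads to a bound of the same structure. One minor point: iterating your recursion leaves combinatorial prefactors on $b^{k-1}E_1E_{m-k}$ and on $b^{m-1}E_1$ that cannot be ``absorbed into the $a_k$'s'' as you write, since the $a_k$ multiply $E_k$, not those terms. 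You therefore recover \eqref{eq:existence_of_internal_moments} only up to constants, which is harmless for the intended application.

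The genuine gap is your close-region step. After bounding $\mathds{1}_1\le 1$ and applying Slivnyak--Mecke to $\bfx_1$, \emph{each} of the $k-1$ surviving indicators picks up its own correction $\varepsilon(\bfx_j-\bfx_1)\|\bfx_j-\bfx_1\|_2^{-d}\mathds{1}_{\{\|\bfx_j-\bfx_1\|_2\le 1\}}$, and these shifts differ across $j$; no single translation removes them all, so the resulting expression is not $E_{k-1}$ and you provide no mechanism to make it one. The paper's maneuver is precisely the opposite of yours and is what closes the argument: in each stratum it discards the indicators of the \emph{close} points $\bfx_2,\dots,\bfx_k$ (not $\mathds{1}_1$), applies Slivnyak--Mecke to $\bfx_1,\dots,\bfx_k$ jointly, and changes variables $\mathbf{t}_j=\bfx_j-\bfx_1$. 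Then only the single retained indicator $\mathds{1}_1$ carries a shift, by an amount depending solely on the integration variables $\mathbf{t}_j$; the paper invokes the stationarity of $\Pi_\varepsilon^{(0,1)}\mathcal{P}$ to argue that shifting the ball center leaves the expected count unchanged, recovering $E_1$, while the far block factors off by independence and is identified with $E_{m-k}$. Without this ``single shift handled by stationarity'' device (or some substitute), your ``careful control'' remains an unfulfilled promise.
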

Taking $0<\beta<\frac{\gamma}{d-1} $, the integrand in Equation \eqref{eq:E_1_coro_4} is $ o(
\|\bfx \|_2^{-d})$ as $\|\bfx \|_2 \rightarrow \infty$ implying that
\begin{equation}
    \label{eq:existence_moment_internal_force}
    \bbE \left[\left(\sum_{\bfx  \in \mathcal{P}\cap B(\mathbf{0}, R)^c} \mathds{1}_{B(\mathbf{0},r(\bfx ))}( \bfx +\varepsilon F^{(0,1)}_{\mathcal{P} }(\bfx ))\right)^{m} \right] < \infty ,
\end{equation}
for any positive integer $m$. This proves that all terms in  \eqref{e:three_term_decomposition} are indeed finite, thereby concluding the proof of Proposition  \ref{prop:extistance_of_the_moments}.

\begin{proof}[Proof of Corollary \ref{coro:existence_internal_moments}]
    We will show the validity of Equation \eqref{eq:existence_of_internal_moments} by induction on $m \geq 1 $.
    
    To begin with, for $m=1$ using the extended Slivnyak-Mecke theorem \eqref{eq:slivnyak-mecke} and that the law of $F^{(0,1)}_{\mathcal{P} }(\bfx )$ is translation-invariant (w.r.t. $\bfx $) we have
    $$ E_1 = \bbE
    \left[
    \sum_{\bfx  \in \mathcal{P}\cap B(\mathbf{0},R)^c}
    \mathds{1}_{B(\mathbf{0},r(\bfx ))}(\bfx
    +
    \varepsilon F^{(0,1)}_{\mathcal{P} \setminus \{\bfx\} }(\bfx )) \right]
    =
    \int_{B(\mathbf{0}, R)^c}
    \bbP
    \left( \bfx
    +
    \varepsilon F^{(0,1)}_{\mathcal{P} }(\mathbf{0}) \in B(\mathbf{0} , r(\bfx))
    \right)
    \rho \d \bfx .$$
    By Lemma \ref{lem:bound_internal_force}, for any $0<\gamma<1/(d-1)$ there exists positive constants $c_1, \dots , c_5$ such that
    \begin{align*}
        E_1
        \leq
        \int_{B(\mathbf{0}, R)^c}
        \frac{c_1}
        {\max \left(1, c_2\left(\|\bfx \|_2^{1-\beta} -1\right)^{d-1} - 1\right)g(\bfx )^{1 + \gamma}}
        +
        c_3 \exp\left(-c_4g(\bfx) \log( c_5 g(\bfx )) \right) \rho \d \bfx .
    \end{align*}
    
    Next, suppose that Equation \eqref{eq:existence_of_internal_moments} is valid until $m$, and let's verify that it holds for $m+1$.
    There exists a sequence of constants $(d_k)_{k=1}^m$ such that
    \begin{align}
        &\bbE \left[\left(\sum_{\bfx  \in \mathcal{P}\cap B(\mathbf{0}, R)^c} \mathds{1}_{B(\mathbf{0},r(\bfx ))}(\bfx  + \varepsilon F^{(0,1)}_{\mathcal{P} }(\bfx ))\right)^{m+1} \right]
        =
        \sum_{k=1}^m d_k E_k + E_{m+1}.
        \label{eq:decomposition_moment}
    \end{align}
    We only need to focus on finding an upper bound of $E_{m+1}$.
    
    To simplify the notations, we denote $\mathcal{P} \setminus \{\bfx _1, \dots, \bfx _{k}\}$ by $\hat{\mathcal{P}}^{(k)}$ for any $k \in \{1, \dots, m\}$ and sometimes omit to remind that $\bfx _1, \dots, \bfx _{m} \in \mathcal{P}\cap B(\mathbf{0},  R )^c$ when it is clear from the context.
    
    To begin, we break down the sum that defines $E_{m+1}$ according to the count $k$ of the points of $\mathcal{P}\cap B(\mathbf{0}, R)^c$ that fall within a ball of radius $2$ centered at ${\bf x_{1}}$ as follows
    \begin{align*}
        E_{m+1}&=
        \underbrace{
            \bbE \left[\sum_{\substack{\bfx _1, \dots, \bfx _{m+1} \\
                    \max_{1 < i } \limits \|\bfx _i - \bfx _1\|_2 \leq 2} }^{\neq}
            \mathds{1}_{B(\bfx _1,r(\bfx _1))}(- \varepsilon F^{(0,1)}_{\mathcal{P} }(\bfx _1))
            \dots
            \mathds{1}_{B(\bfx _{m+1},r(\bfx _{m+1}))}(- \varepsilon F^{(0,1)}_{\mathcal{P} }(\bfx _{m+1}))\right]}_{A_{m+1}}
        \\
        & \hspace*{6cm} +
        \\
        &
        \sum_{k=1}^{m}
        \underbrace{
            \bbE \left[\sum_{\substack{\bfx _1, \dots, \bfx _{m+1} \\
                    \max_{1<i \leq k} \limits \|\bfx _i - \bfx _1\|_2 \leq 2 \\
                    \min_{k<i \leq m+1 } \limits \|\bfx _i - \bfx _1\|_2 > 2
            }}^{\neq}
            \mathds{1}_{B(\bfx _1,r(\bfx _1))}( - \varepsilon F^{(0,1)}_{\mathcal{P}}(\bfx _1))
            \dots
            \mathds{1}_{B(\bfx _{m+1},r(\bfx _{m+1}))}(- \varepsilon F^{(0,1)}_{\mathcal{P}}(\bfx _{m+1}))\right]
        }_{A_{k}   }.
    \end{align*}
    First, for $A_{m+1}$ the extended Slivnyak-Mecke theorem \eqref{eq:slivnyak-mecke} followed by a change of variables gives
    \begin{align*}
        A_{m+1} &
        \leq \bbE \left[\sum_{\substack{\bfx _1, \dots, \bfx _{m+1} \\
                \max_{1<i} \limits \|\bfx _i + \bfx _1\| \leq 2} }^{\neq}
        \mathds{1}_{B(\mathbf{0},r(\bfx _1))}\left(\bfx _1 + \varepsilon F^{(0,1)}_{\hat{\mathcal{P}}^{(m+1)} }(\bfx _1) + \varepsilon \sum_{j=2}^{m+1} \frac{ \bfx _1 - \bfx _j}{\|\bfx _1 - \bfx _j\|_2^d} \right)
        \right]
        \\
        &=
        \int_{B(\mathbf{0}, R)^c}\int_{ B(\bfx _1, 2)^{m}}
        \bbP \left(
        \bfx _1 + \varepsilon F^{(0,1)}_{\mathcal{P}}(\bfx _1) \in B \left(\varepsilon \sum_{j=2}^{m+1} \frac{\bfx _j - \bfx _1}{\|\bfx _j - \bfx _1\|_2^d} ,r(\bfx _1)\right)
        \right)
        \rho^{m+1} \d \bfx _{m+1} \dots \d \bfx _{1}
        \\
        &=
        \int_{ B(\mathbf{0}, 2)^{m}\times B(\mathbf{0}, R)^c}
        \bbP \left(
        \bfx _1 + \varepsilon F^{(0,1)}_{\mathcal{P}}(\bfx _1) \in B \left(\varepsilon \sum_{j=1}^m \frac{\mathbf{t}_j}{\|\mathbf{t}_j\|_2^d} ,r(\bfx _1)\right)
        \right)
        \rho^{m+1} \d \bfx _{1} \d \mathbf{t}_{1} \dots \d \mathbf{t}_m.
    \end{align*}
    By the definition of the first intensity measure \eqref{eq:campbell_theorem} the last equation is equal to
    $$\int_{ B(\mathbf{0}, 2)^{m}}
    \bbE \left[
    \sum_{\bfx  \in \mathcal{P} \cap B(\mathbf{0}, R)^c}
    \mathds{1}_{B \left(\varepsilon \sum_{j=1}^m \frac{\mathbf{t}_j}{\|\mathbf{t}_j\|_2^d} ,r(\bfx )\right)}
    (\bfx  + \varepsilon F^{(0,1)}_{\mathcal{P} \setminus \{\bfx \}}(\bfx ))
    \right]
    \rho^{m} \d \mathbf{t}_{1} \dots \d \mathbf{t}_m. $$
    Employing further the stationarity of $\Pi_{\varepsilon}^{(0,1)} \mathcal{P}$, we get
    \begin{align}
        A_{m+1} &
        \leq
        \int_{ B(\mathbf{0}, 2)^{m}}
        \bbE \left[
        \sum_{\bfx  \in \mathcal{P} \cap B(\mathbf{0}, R)^c}
        \mathds{1}_{B (\mathbf{0} ,r(\bfx ))}
        (\bfx  + \varepsilon F^{(0,1)}_{\mathcal{P}}(\bfx ))
        \right]
        \rho^{m} \d \mathbf{t}_{1} \dots \d \mathbf{t}_m
        \nonumber
        \\
        &= (2^d \kappa_d \rho)^m E_1 .
        \label{eq:bound_A_lastone}
    \end{align}
    
    Second, for $k \in \{ 2, \dots, m \}$, we have
    \begin{align*}
        A_k
        \leq
        \bbE \Bigg [&
        \sum_{\substack{\bfx _1 \in B(\mathbf{0}, R)^c\\
                \bfx _2, \dots, \bfx _k \in B(\bfx _1, 2)
        }}^{\neq}
        \Bigg(
        \mathds{1}_{B (\mathbf{0} ,r(\bfx _1))}
        \big(\bfx _1 + \varepsilon F^{(0,1)}_{\hat{\mathcal{P}}^{(k)}}(\bfx _1) + \varepsilon \sum_{j=2}^k \limits \frac{ \bfx _1 - \bfx _j}{\|\bfx _1 - \bfx _j\|_2^d}\big)
        \\
        &
        \sum_{\substack{\bfx _{k+1}, \dots, \bfx _{m+1} \in  B(\bfx _1, 2)^c
        }}^{\neq}
        \mathds{1}_{B(\mathbf{0},r(\bfx _{k+1}))}(\bfx _{k+1} + \varepsilon F^{(0,1)}_{\hat{\mathcal{P}}^{(k+1)}}(\bfx _{k+1}))
        \\
        &\hspace{5cm}
        \dots
        \mathds{1}_{B(\mathbf{0},r(\bfx _{m+1}))}(\bfx _{m+1} + \varepsilon F^{(0,1)}_{\hat{\mathcal{P}}^{(k)} \setminus \{\bfx _{m+1}\}}(\bfx _{m+1}))
        \Bigg)
        \Bigg].
    \end{align*}
    Applying the extended Slivnyak-Mecke theorem \eqref{eq:slivnyak-mecke} then, using the independence of $F^{(0,1)}_{\mathcal{P} }(\bfx )$ and $F^{(0,1)}_{\mathcal{P}}(\bfy)$ whenever $\|\bfx  - \bfy\|_2>2$ yield
    \begin{align*}
        A_k
        &\leq
        \int_{B(\mathbf{0}, R)^c\times B(\bfx _1, 2)^{k-1}}
        \bbE \Bigg [
        \mathds{1}_{B (\mathbf{0} ,r(\bfx _1))}
        \left(\bfx _1 + \varepsilon F^{(0,1)}_{\mathcal{P}}(\bfx _1) + \varepsilon \sum_{j=2}^k \limits \frac{\bfx _1 - \bfx _j}{\|\bfx _1 - \bfx _j\|_2^d}\right)
        \\
        &\hspace{3cm}
        \sum_{\substack{\bfx _{k+1}, \dots, \bfx _{m+1} \in B(\bfx _1, 2)^c
        }}^{\neq}
        \mathds{1}_{B(\mathbf{0},r(\bfx _{k+1}))}\left(\bfx _{k+1} + \varepsilon F^{(0,1)}_{ \mathcal{P} \setminus \{\bfx _{k+1}\}}(\bfx _{k+1})\right)
        \\
        &\hspace{5cm}
        \dots
        \mathds{1}_{B(\mathbf{0},r(\bfx _{m+1}))}\left(\bfx _{m+1} + \varepsilon F^{(0,1)}_{\mathcal{P} \setminus \{\bfx _{m+1}\}}(\bfx _{m+1})\right) \Bigg]
        \rho^k
        \d \bfx _{k}\dots \d \bfx _{1}
        \\
        &=
        \int_{B(\mathbf{0}, R)^c\times B(\bfx _1, 2)^{k-1}}
        \bbE \left[
        \mathds{1}_{B (\mathbf{0} ,r(\bfx _1))}
        \left(\bfx _1 + \varepsilon F^{(0,1)}_{\mathcal{P}}(\bfx _1) + \varepsilon \sum_{j=2}^k \limits \frac{ \bfx _1 - \bfx _j}{\|\bfx _1 - \bfx _j\|_2^d}\right) \right]
        \\
        &\hspace{3cm}
        \bbE \Big[\sum_{\substack{\bfx _{k+1}, \dots, \bfx _{m+1} \in
                B(\bfx _1, R)^c
        }}^{\neq}
        \mathds{1}_{B(\mathbf{0},r(\bfx _{k+1}))}\left(\bfx _{k+1} + \varepsilon F^{(0,1)}_{\mathcal{P}}(\bfx _{k+1})\right)
        \\
        &\hspace{5cm}
        \dots
        \mathds{1}_{B(\mathbf{0},r(\bfx _{m+1}))}\left(\bfx _{m+1} + \varepsilon F^{(0,1)}_{\mathcal{P}}(\bfx _{m+1})\right) \Big]
        \rho^k
        \d \bfx _{k}\dots \d \bfx _{1}
        \\
        & =
        E_{m + 1 -k}
        \times
        \int_{B(\mathbf{0}, R)^c\times B(\bfx _1, 2)^{k-1}}
        \bbE \left[
        \mathds{1}_{B (\mathbf{0} ,r(\bfx _1))}
        \left(\bfx _1 + \varepsilon F^{(0,1)}_{\mathcal{P}}(\bfx _1) + \varepsilon \sum_{j=2}^k \limits \frac{\bfx _1 - \bfx _j}{\|\bfx _1 - \bfx _j\|_2^d}\right) \right]
        \rho^k
        \d \bfx _{k} \dots \d \bfx _{1} .
    \end{align*}
    Following the same technique used to bound $A_{m+1}$ we get
    \begin{align}
        A_k &\leq
        E_{m + 1 -k} \int_{B(\mathbf{0}, 2)^{k-1} }
        \bbE \left[ \sum_{\bfx  \in \mathcal{P} \cap B(\mathbf{0}, R)^c}
        \mathds{1}_{B (\varepsilon \sum_{j=1}^{k-1} \limits \frac{\mathbf{t}_j}{\| \mathbf{t}_j\|_2^d},r(\bfx ))}
        \left(\bfx  + \varepsilon F^{(0,1)}_{\mathcal{P}}(\bfx )\right) \right]
        \rho^{k-1}
        \d\mathbf{t}_1 \dots \d\mathbf{t}_{k-1}
        \nonumber
        \\
        &=
        E_{m + 1 -k} \int_{B(\mathbf{0}, 2)^{k-1} }
        \bbE \left[ \sum_{\bfx  \in \mathcal{P} \cap B(\mathbf{0}, R)^c}
        \mathds{1}_{B (\mathbf{0},r(\bfx ))}
        \left(\bfx  + \varepsilon F^{(0,1)}_{\mathcal{P}}(\bfx )\right)\right]
        \rho^{k-1}
        \d\mathbf{t}_1 \dots \d\mathbf{t}_{k-1}
        \nonumber
        \\
        &=
        (2^d \kappa_d \rho)^{k-1} E_1 E_{m + 1 -k}.
        \label{eq:bound_A_k}
    \end{align}
    Inserting \eqref{eq:bound_A_lastone}, and \eqref{eq:bound_A_k} in \eqref{eq:decomposition_moment} we get
    \begin{align*}
        \bbE \left[\left(\sum_{\bfx  \in \mathcal{P}\cap B(\mathbf{0}, R)^c} \mathds{1}_{B(\mathbf{0},r(\bfx ))}( \bfx +\varepsilon F^{(0,1)}_{\mathcal{P}}(\bfx ))\right)^{m+1} \right]
        & \leq
        &=
        A_{m+1} + \sum_{k=1}^m d_k E_k + A_{k}\\
        c_1^{m} E_1 +
        \sum_{k=1}^m \left(d_k E_k + c_1^{k-1} E_1E_{m+1 -k}\right) ,
    \end{align*}
    with $c_1=2^d \kappa_d \rho$, which concludes the proof.
\end{proof}
\subsection{Proof of Theorem \ref{thm:Variance_reduction} (variance reduction)} 
\label{subs:proof_variance_reduction}

In this section we prove the main result of the paper, Theorem \ref{thm:Variance_reduction}.
We start with is a lemma from the theory of harmonic functions and a lemma related to bounding moments of the counting measure of a Poisson point processes.
Then we state and prove a few technical lemmas on the behavior of various quantities relevant to the variance of linear statistics under $\Pi_\varepsilon\calP$.
Finally, we put these lemmas in action in the proof of Theorem~\ref{thm:Variance_reduction}.

\begin{lemma}
    \label{lem:integral_with_the_harmonic_fct}
    Let $d>2$ and $g\in C^2(\mathbb{R}^d)$ have compact support.
    For all $\bfx  \in \mathbb{R}^d$, we have
    \begin{equation}
        \label{eq:integral_with_force}
        g(\bfx )  = \frac{1}{(2-d)d \kappa_d} \int_{\mathbb{R}^d} \Delta g(\bfy) \frac{1}{\|\bfx -\bfy\|^{d-2}_2} \d \bfy
        = \frac{1}{d \kappa_d} \int_{\mathbb{R}^d} \nabla g(\bfy) \cdot \frac{\bfx -\bfy}{\|\bfx -\bfy\|^d_2} \d \bfy.
    \end{equation}
\end{lemma}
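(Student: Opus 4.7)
The strategy is to recognize $\Phi_\bfx(\bfy) := \|\bfx-\bfy\|_2^{-(d-2)}$ as (a multiple of) the Newton potential, which is harmonic on $\mathbb{R}^d\setminus\{\bfx\}$ and satisfies $\Delta\Phi_\bfx = -(d-2)d\kappa_d\,\delta_\bfx$ in the distributional sense. Two small preliminary computations will be used throughout: first, for $\bfy\neq\bfx$,
\begin{equation*}
    \nabla_\bfy \|\bfx-\bfy\|_2^{-(d-2)} = (d-2)\,\frac{\bfx-\bfy}{\|\bfx-\bfy\|_2^{d}},
\end{equation*}
and second, the integrands $\|\bfx-\bfy\|_2^{-(d-2)}$ and $\|\bfx-\bfy\|_2^{-(d-1)}$ are both locally integrable on $\mathbb{R}^d$ since $d\geq 3$, so all integrals in \eqref{eq:integral_with_force} converge absolutely.

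To prove the first equality, I would apply Green's second identity to $g$ and $\Phi_\bfx$ on the annular domain $\Omega_{\varepsilon,R} = B(\mathbf{0},R)\setminus\overline{B(\bfx,\varepsilon)}$, choosing $R$ large enough that the support of $g$ lies inside $B(\mathbf{0},R/2)$ and letting $\varepsilon\to 0$. Because $\Delta \Phi_\bfx = 0$ on $\Omega_{\varepsilon,R}$ and $g$ vanishes near $\partial B(\mathbf{0},R)$, only the inner boundary $\partial B(\bfx,\varepsilon)$ contributes, giving
\begin{equation*}
    \int_{\Omega_{\varepsilon,R}} \Delta g(\bfy)\,\Phi_\bfx(\bfy)\,\d\bfy
    \;=\;
    -\int_{\partial B(\bfx,\varepsilon)} \Phi_\bfx\,\nabla g\cdot \bfn\,\d S
    \;+\;
    \int_{\partial B(\bfx,\varepsilon)} g\,\nabla\Phi_\bfx\cdot \bfn\,\d S,
\end{equation*}
where $\bfn$ is the unit outward normal from $\bfx$. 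The first surface integral is $O(\varepsilon)$ because $\Phi_\bfx=\varepsilon^{-(d-2)}$ on the sphere whose area is $d\kappa_d\varepsilon^{d-1}$, while the integrand in the second is $-(d-2)\varepsilon^{-(d-1)}g(\bfy)$ so by continuity of $g$ it converges to $-(d-2)d\kappa_d\,g(\bfx)$. Passing to the limit and solving yields the first identity.

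The second equality will then follow from a single integration by parts applied on $\mathbb{R}^d\setminus B(\bfx,\varepsilon)$, using the gradient formula recalled above. The only boundary term lives on $\partial B(\bfx,\varepsilon)$ and is bounded by $\|\nabla g\|_\infty\cdot \varepsilon^{-(d-2)}\cdot d\kappa_d\varepsilon^{d-1} = O(\varepsilon)$, hence disappears as $\varepsilon\to 0$. The factor $(d-2)$ produced by the gradient cancels the $(2-d)$ in the denominator of the first expression, turning the constant into $1/(d\kappa_d)$ and producing the second identity.

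The only real subtlety is the local singularity at $\bfy=\bfx$, but this is standard potential-theoretic bookkeeping: local integrability makes the integrals well defined, and the $\varepsilon$-annulus regularization with careful tracking of outward normals produces the correct constants. No assumption beyond $g\in C^2$ with compact support is needed.
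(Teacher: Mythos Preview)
Your proposal is correct and follows essentially the same route as the paper: excise a small ball $B(\bfx,\varepsilon)$, apply Green's identity on the resulting annular region, and track the boundary contributions on $\partial B(\bfx,\varepsilon)$ as $\varepsilon\to 0$. The only cosmetic difference is that the paper takes the first equality as known from \cite{AxlBourRam2001} and proves only the second via Green's first identity, whereas you also sketch the first equality via Green's second identity; the underlying mechanism is identical.
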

The first equality in \eqref{eq:integral_with_force} can be found in \cite[Chapter 9]{AxlBourRam2001}.
We thus only prove the second equality, which results from an appropriate integration by parts.

\begin{proof}[Proof of Lemma \ref{lem:integral_with_the_harmonic_fct}]
    First, recall Green's ``integration by parts'' formula \citep[Chapter 1]{AxlBourRam2001}
    \begin{align}
        \label{eq:eq_1_proof_lem_4}
        \int_{A} f(\bfx ) \Delta g(\bfx ) \d \bfx &
        = \int_{\partial A} (f \nabla g . \mathbf{n} + g \nabla f . \mathbf{n})~ \d S - \int_{A} \nabla f(\bfx ) \cdot \nabla g(\bfx ) \d \bfx,
    \end{align}
    where $A$ is a bounded subset of $\mathbb{R}^d$ with smooth boundary, $f$ and $g$ are $C^2$ on a neighborhood of $\bar{A}$, $\mathbf{n}$ is the outward unit normal vector, and $S$ is the surface-area measure on $\partial A$.
    Let the support of $g$ be a compact $K \subset A$, \eqref{eq:eq_1_proof_lem_4} simplifies and we have
    \begin{equation}
        \label{eq:integration_by_part_high_dim}
        \int_{A} f(\bfx ) \Delta g(\bfx ) \d \bfx
        = \int_{\partial A} f \nabla g . \mathbf{n} ~ \d S - \int_{A} \nabla f(\bfx ) \cdot \nabla g(\bfx ) \d \bfx .
    \end{equation}
    Let $\bfx  \in \mathbb{R}^d$, and let $R>0$ be large enough such that $\bfx $ and $K$ are contained in $B(\mathbf{0}, R)$.
    Then using the dominated convergence we get
    \begin{equation}
        \frac{1}{(2-d)d \kappa_d} \int_{\mathbb{R}^d} \frac{\Delta g(\bfy )}{\|\bfx -\bfy \|^{d-2}_2} \d \bfy  = \lim_{\varepsilon \rightarrow 0} \frac{1}{(2-d)d \kappa_d} \int_{B(\mathbf{0}, R) \setminus B(\bfx , \varepsilon)}  \frac{\Delta g(\bfy )}{\|\bfx -\bfy \|^{d-2}_2} \d \bfy.
        \label{eq:tool}
    \end{equation}
    Now, using \eqref{eq:integration_by_part_high_dim} with $A = B(\mathbf{0}, R) \setminus B(\bfx , \varepsilon)$ and $f: \bfy\mapsto \frac{1}{\|\bfx -\bfy \|^{d-2}_2}$, it comes
    \begin{align*}
        \int_{A} \frac{\Delta g(\bfy )}{\|\bfx -\bfy \|^{d-2}_2} \d \bfy
        &= \int_{\partial B(\bfx , \varepsilon)}  \frac{1}{\|\bfx -\bfy \|^{d-2}_2}  \nabla g(\bfy ) . \mathbf{n} \,\d S
        + (2-d)\int_{A} \nabla g(\bfy ) .\frac{\bfx -\bfy }{\|\bfx -\bfy \|^{d}_2}  \d \bfy  \\
        &= \frac{1}{\varepsilon^{d-2}}\int_{\partial B(\bfx , \varepsilon)}    \nabla g(\bfy ) . \mathbf{n} \,\d S
        + (2-d)\int_{A} \nabla g(\bfy ) .\frac{\bfx -\bfy }{\|\bfx -\bfy \|^{d}_2}  \d \bfy  \\
        &= \varepsilon \int_{\partial B(\bfx , 1)}  \nabla g(\varepsilon \bfy ) . \mathbf{n} \,\d S
        + (2-d)\int_{A} \nabla g(\bfy ) .\frac{\bfx -\bfy }{\|\bfx -\bfy \|^{d}_2}  \d \bfy.
    \end{align*}
    Plugging into \eqref{eq:tool} evaluating the limit, we obtain the desired limit.
\end{proof}

    Before working on bounding the variance of linear statistics under the repelled Poisson point process, we need an elementary bound on the fractional moments of the counts of a Poisson point process. 

\begin{lemma}
    \label{lem:moment_poisson}
    Consider a homogeneous Poisson point process $\mathcal{P} \subset \mathbb{R}^d$ of intensity $\rho > 1$ with $d \geq 1$. For any $R > 0$ and $q > 1$, we have
    \begin{equation}\label{eq:moment_poisson}
        \bbE \left[\left(1 + \sum_{x \in \mathcal{P}} \mathds{1}_{B(\mathbf{0}, R)}(\bfx)\right)^q \right]^{1/q} \leq C(q, R) \rho,
    \end{equation}
    where $C(q, R) < \infty$ is a constant depending only on $q$ and $R$.
\end{lemma}
\begin{proof}[Proof of Lemma \ref{lem:moment_poisson}]
    We first prove \eqref{eq:moment_poisson} when $q$ is an integer. 
    To begin with, recall that the moment of order $q$ of a Poisson distribution $X$ of parameter $\lambda > 0$ is given by  
    $$
    \bbE[X^q] = \sum_{n = 0}^q \stir{q}{n} \lambda^n,
    $$ 
    where $\stir{q}{n}$ denote the Stirling numbers of the second kind. 
    To wit, $\stir{q}{n}$ counts the number of ways to partition a set of $q$ elements into $n$ non-empty subsets \citep[Equation 6]{Philipson1963}.
    For $\rho > 1$, using the binomial expansion and the above observation, we get
    \begin{equation}
        \label{eq:bound_poisson_for_int}
        \bbE\left[\left(1 + \sum_{x \in \mathcal{P}} \mathds{1}_{B(\mathbf{0}, R)}(\bfx)\right)^q \right] 
        \leq 
        2^q \left(1 + \sum_{n = 0}^q \stir{q}{n} \left(\left|B(\mathbf{0}, R)\right| \rho\right)^n\right) 
        \leq 
        \underbrace{2^q \left(1 + \sum_{n = 0}^q \stir{q}{n} \left|B(\mathbf{0}, R)\right|^n\right)}_{C(q, R)^q}
        \rho^q,
    \end{equation}
    where we used that any binomial coefficient $(_n^q) \leq 2^q$.
    Thus, Equation \eqref{eq:bound_poisson_for_int} proves Lemma \ref{eq:bound_poisson_for_int} when $q$ is an integer. 
    
    Finally, to deduce the general case for non-integer $q > 1$, we use Jensen's inequality to go back to the integer case and apply~\eqref{eq:bound_poisson_for_int},
    \begin{equation*}
        \bbE \left[\left(1 + \sum_{x \in \mathcal{P}} \mathds{1}_{B(\mathbf{0}, R)}(\bfx)\right)^q \right] 
        \leq 
        \bbE \left[\left(1 + \sum_{x \in \mathcal{P}} \mathds{1}_{B(\mathbf{0}, R)}(\bfx)\right)^{\left \lceil{q}\right \rceil} \right]^{q/\left \lceil{q}\right \rceil} \leq C(\left \lceil{q}\right \rceil, R)^{q/\left \lceil{q}\right \rceil}\rho^q.
    \end{equation*}
    
\end{proof}

Lemmas~\ref{lem:integral_with_the_harmonic_fct} and \ref{lem:moment_poisson} will be instrumental in proving the subsequent lemmas, which bring us closer to establishing control over the variance of linear statistics under the repelled Poisson point process.
As we will see, the proof of Theorem~\ref{thm:Variance_reduction} hinges on performing a Taylor expansion with an integral remainder for \( f(\bfx + \varepsilon F_{\mathcal{P}}(\bfx)) \) with respect to \( \varepsilon \). 
Consequently, the upcoming lemmas analyze each term in this expansion and aim to derive corresponding bounds for them.

\begin{lemma}
    \label{lem:variance_from_internal_term_A}
    Consider a homogeneous Poisson point process $\mathcal{P} \subset \mathbb{R}^d$ of intensity $\rho > 1$, with $d \geq 3$.
    Let $f \in C^2(\mathbb{R}^d)$ of compact support $K \subset B(\mathbf{0}, R)$ with $R>0$.
    For any $R^{\prime}\geq R$, and $0 < \delta < d/(d-1)$, we have
    \begin{equation}\label{eq:first_term_target_integral_pop_out}
        \bbE \left[\left(\sum_{\bfx  \in \mathcal{P}\cap B(\mathbf{0}, R^{\prime})} f\left(\bfx\right) \mathds{1}_{B(\mathbf{0}, R)}(\bfx + \varepsilon F_{\mathcal{P} }(\bfx )) \right)^2\right]
        -
        \bbE \left[\left(\sum_{\bfx  \in \mathcal{P}} f(\bfx)  \right)^2\right]
        =   O \left(\rho^2 |\varepsilon|^{\delta}\right).
    \end{equation}
\end{lemma}

\begin{proof}[Proof of Lemma \ref{lem:variance_from_internal_term_A}]
    Let $R'\geq R$, $\varepsilon \in (-1,1)$ and $0 < \delta < d/(d-1)$. Since, $f$ has support $K\subset B(0, R)$, proving equation \eqref{eq:first_term_target_integral_pop_out} is equivalent to proving 
    $$
    A  \triangleq  \bbE \left[\left(\sum_{\bfx  \in \mathcal{P}\cap B(\mathbf{0}, R)} f\left(\bfx\right) \mathds{1}_{B(\mathbf{0}, R)}(\bfx + \varepsilon F_{\mathcal{P} }(\bfx )) \right)^2 - \left(\sum_{\bfx  \in \mathcal{P} \cap B(\mathbf{0}, R)} f(\bfx)  \right)^2\right] = 
    O\left(\rho^2 \vert\varepsilon\vert^{\delta}\right).
    $$
    To begin with, observe that
    \begin{align}
        \label{eq:interm_variance_from_internal_term_A}
        \left|A\right| 
        & \leq 
        \bbE\left[ \sum_{\bfx  \in \mathcal{P}\cap B(\mathbf{0}, R)} \left|f(\bfx)\right|\mathds{1}_{B(\mathbf{0}, R)^c}(\bfx + \varepsilon F_{\mathcal{P} }(\bfx )) \left(\left|\sum_{\bfx  \in \mathcal{P}\cap B(\mathbf{0}, R)} f\left(\bfx\right) \mathds{1}_{B(\mathbf{0}, R)}(\bfx + \varepsilon F_{\mathcal{P} }(\bfx ))\right| + \left|\sum_{\bfx  \in \mathcal{P}\cap B(\mathbf{0}, R)} f\left(\bfx\right)\right|\right) \right] \nonumber
        \\
        & \leq  
        2 \|f\|_{\infty}^2 \bbE\left[ \sum_{\bfx  \in \mathcal{P}\cap B(\mathbf{0}, R)} \mathds{1}_{B(\mathbf{0}, R)^c}(\bfx + \varepsilon F_{\mathcal{P} }(\bfx )) \sum_{\bfx  \in \mathcal{P}\cap B(\mathbf{0}, R)} 1 \right] 
        \\
        & = 
        2 \|f\|_{\infty}^2 
        \bbE\left[ 
        \sum_{\bfx  \in \mathcal{P} \cap B(\mathbf{0}, R)}\mathds{1}_{B(\mathbf{0}, R)^c}(\bfx + \varepsilon F_{\mathcal{P}\setminus\{\bfx\}}(\bfx )) 
        \left(1 + \sum_{\bfy  \in \mathcal{P}\setminus\{\bfx\} } \mathds{1}_{B(\mathbf{0}, R)}(y) \right)\right]. 
        \nonumber
    \end{align}
    For the second step, let $p$ and $\gamma$ such that $\delta < \delta p < \gamma < d/(d-1)$, and $q = p/(p-1)$ the conjugate exponent of $p$. 
    Note that these two choices are possible since $\delta < d/(d-1)$.
    Using the extended Slivnyak-Mecke theorem \eqref{eq:slivnyak-mecke}, then H\"older's inequality and finally Lemma \ref{lem:moment_poisson}, we obtain
    \begin{align*}
        |A| \leq
        & 2 \|f\|_{\infty}^2\rho \int_{B(\mathbf{0}, R)} \bbE\left[\mathds{1}_{B(\mathbf{0}, R)^c}(\bfx + \varepsilon F_{\mathcal{P}}(\bfx ))\left(1+ \sum_{\bfy  \in \mathcal{P} } \mathds{1}_{B(\mathbf{0}, R)}(y)\right)\right] \d\bfx
        \\
        &\leq 
        2 \|f\|_{\infty}^2\rho \int_{B(\mathbf{0}, R)} \bbP\left[\|\bfx + \varepsilon F_{\mathcal{P}}(\bfx )\|_2 \geq R\right]^{1/p} \bbE\left[\left(1+ \sum_{\bfy  \in \mathcal{P} } \mathds{1}_{B(\mathbf{0}, R)}(y)\right)^q\right]^{1/q} \d\bfx
        \\ 
        &\leq
        2 \|f\|_{\infty}^2 C(q, R) \rho^2 \int_{B(\mathbf{0}, R)} \bbP\left[\|F_{\mathcal{P}}(\bfx )\|_2 \geq \frac{R - \|\bfx\|_2}{\varepsilon}\right]^{1/p} \d\bfx.
    \end{align*}
    Now, using Markov's inequality we get
    \begin{align}\label{eq:in_proofs_div_Cmu}
        |A| \leq 
        2 \|f\|_{\infty}^2 C(q, R)^{1/q}  
        \bbE\left[\|F_{\mathcal{P}}({\bf0} )\|_2^{\gamma}\right]^{1/p}
        \rho^2 |\varepsilon|^{\gamma/p} 
        \int_{B(\mathbf{0}, R)} \left(R - \|\bfx\|_2\right)^{- \gamma/p} \d \bfx . 
    \end{align}
    As $\gamma < d/(d-1)$, we have $\bbE\left[\|F_{\mathcal{P}}({\bf0} )\|_2^{\gamma}\right] < \infty$ (Proposition \ref{prop:distribution_of_F}). Moreover, $\gamma/p < \gamma < d/(d-1) < d$, so the integral is finite. Finally, $\gamma/p > \delta$, hence $|\varepsilon|^{\gamma/p} \leq |\varepsilon|^{\delta}$, which concludes the proof. 
\end{proof}

Note that in the above proof the only property of the force $F_{\mathcal{P}}$ that has been used is that it admits moment of order strictly smaller than $d/(d-1)$. 
Since the truncated version of the force $F_{\mathcal{P}}^{(0, p)}$, with  $p > 0$, also satisfies this property, Lemma \ref{lem:variance_from_internal_term} remains valid when replacing $F_{\mathcal{P}}$ with it truncated counterpart $F_{\mathcal{P}}^{(0, p)}$.	

The next Lemma takes us closer to proving the variance reduction theorem, making the term $-2d \kappa_d \rho \varepsilon \Var\left[\widehat{I}_{ \calP}(f)\right]$ observed in Theorem \ref{thm:Variance_reduction} appear.

\begin{lemma}
    \label{lem:variance_from_internal_term}
    Consider a homogeneous Poisson point process $\mathcal{P} \subset \mathbb{R}^d$ of intensity $\rho > 1$, with $d \geq 3$.
    Let $f \in C^2(\mathbb{R}^d)$ of compact support $K \subset B(\mathbf{0}, R)$ with $R>0$.
    For any $R^{\prime}\geq R$ and $0 < \delta < d/(d-1)$, we have
    \begin{equation}
        \label{eq:the_target_integral_pops_out}
        \varepsilon 
        \bbE 
        \left[
        \sum_{\bfx  \in \mathcal{P}\cap B(\mathbf{0}, R^{\prime})} 
        f\left(\bfx\right) \mathds{1}_{B(\mathbf{0}, R)}(\tilde{\bfx}) 
        \sum_{\bfy  \in \mathcal{P}\cap B(\mathbf{0}, R^{\prime})} 
        \int_0^1 \nabla f\left(\tilde{\bfy}\right) 
        \cdot 
        F_{\mathcal{P}}(\bfy) \mathds{1}_{B(\mathbf{0}, R)}(\tilde{\bfy}) 
        \d \tau 
        \right] 
        = 
        -d \kappa_d \rho^2 \varepsilon I_K(f^2) 
        +
        O(\rho^2 |\varepsilon|^{\delta}),
    \end{equation}
    where $\tilde{\bfx} \triangleq \bfx + \varepsilon F_{\mathcal{P} }(\bfx )$, $\tilde{\bfy} \triangleq \bfy + \tau \varepsilon F_{\mathcal{P} }(\bfy )$, and $I_K$ is defined in \eqref{eq:target_integral}.
\end{lemma}

\begin{proof}[Proof of Lemma \ref{lem:variance_from_internal_term}]

    Let $R'\geq R$, $\varepsilon \in (-1,1)$ and $0 < \delta < d/(d-1)$.
    We first show that
    \begin{equation}
        \label{eq:D}
        D \triangleq \varepsilon \bbE \left[
        \sum_{\bfx  \in \mathcal{P}\cap B(\mathbf{0}, R)} f(\bfx )
        \sum_{\bfy \in \mathcal{P}\cap B(\mathbf{0}, R)} \nabla f(\bfy) \cdot F_{\mathcal{P} }(\bfy)\right] = - 2d \kappa_d \rho^2 I_K(f^2),
    \end{equation}
    so proving \eqref{eq:the_target_integral_pops_out} is equivalent to proving that 
    \begin{equation}
        \label{eq:the_target_integral_pops_out_bis}
        A - D = O(\rho^2 |\varepsilon|^{\delta}),
    \end{equation}  
    where $A$ denotes the left-hand side of equation \eqref{eq:the_target_integral_pops_out}.
To see why \eqref{eq:D} holds, expand the sum in $D$ to obtain
\begin{align}
    D &=
    \bbE \left[
    \sum_{\bfx  \in \mathcal{P}\cap B(\mathbf{0}, R)} f(\bfx )
    \nabla f(\bfx ) \cdot F_{\mathcal{P} }(\bfx )
    +
    \sum_{\bfx , \bfy \in \mathcal{P} \cap B(\mathbf{0}, R)}^{ \neq }f(\bfx ) \nabla f(\bfy) \cdot F_{\mathcal{P} }(\bfy)\right]
    \nonumber
    \\
    &=
    \bbE
    \left[\sum_{\bfx  \in \mathcal{P}\cap B(\mathbf{0}, R) }f(\bfx ) \nabla f(\bfx ) \cdot F_{\mathcal{P} \setminus\{\bfx \}}(\bfx )\right]
    -
    \bbE
    \left[
    \sum_{\bfx , \bfy \in \mathcal{P}\cap B(\mathbf{0}, R)}^{ \neq }f(\bfx ) \nabla f(\bfy) \cdot \frac{\bfx  - \bfy}{\|\bfx  - \bfy\|_2^d}
    \right]
    \\
    &
    +
    \bbE \left[\sum_{\bfx , \bfy \in \mathcal{P}\cap B(\mathbf{0}, R) }^{\neq }f(\bfx ) \nabla f(\bfy) \cdot F_{\mathcal{P} \setminus\{\bfy, \bfx \}}(\bfy)
    \right].
    \label{eq:eq_valid_after_truncation}
\end{align}
Using Slivnyak-Mecke \eqref{eq:slivnyak-mecke} and Equation \eqref{eq:def_moment_poisson}, we obtain
\begin{align}
    \label{eq:last_step}
    D &=
    \int_{ B(\mathbf{0}, R)}
    \bbE
    \left[
    f(\bfx ) \nabla f(\bfx ) \cdot F_{\mathcal{P}}(\bfx )
    \right]
    \rho \d \bfx
    -
    \int_{B(\mathbf{0}, R) \times B(\mathbf{0}, R)}
    f(\bfx ) \nabla f(\bfy) \cdot \frac{\bfx  - \bfy}{\|\bfx
        -
        \bfy\|_2^d}\rho^2\d \bfx  \d \bfy
    +
    \\
    &
    \int_{B(\mathbf{0}, R) \times B(\mathbf{0}, R)}
    \bbE
    \left[
    f(\bfx ) \nabla f(\bfy) \cdot F_{\mathcal{P}}(\bfy)
    \right]
    \rho^2
    \d \bfx  \d \bfy
    .
\end{align}
As the first and last terms in \eqref{eq:last_step} are zero due to $F_\calP$ being a centered process, and the second term is $- 2d \kappa_d \rho^2 I_K(f^2)$ by Lemma~\ref{lem:integral_with_the_harmonic_fct}, we obtain Equation \eqref{eq:D}.
Now we can focus on proving Equation \eqref{eq:the_target_integral_pops_out_bis} to show the lemma. 

    We write $A-D$ as the sum of three terms
    \begin{equation}
        \label{e:three_terms}
        A-D = (A-B) + (B-C) + (C-D),
    \end{equation}
    where, 
    \begin{equation*}
        B  \triangleq  \varepsilon \bbE \left[\sum_{\bfx  \in \mathcal{P}\cap B(\mathbf{0}, R^{\prime})} f\left(\bfx\right) \mathds{1}_{B(\mathbf{0}, R)}(\bfx + \varepsilon F_{\mathcal{P} }(\bfx )) \sum_{\bfy  \in \mathcal{P}\cap B(\mathbf{0}, R^{\prime})} \nabla f\left(\bfy\right) \cdot F_{\mathcal{P} }(\bfy ) \mathds{1}_{B(\mathbf{0}, R)}(\bfy + \varepsilon F_{\mathcal{P} }(\bfy )) \right],
    \end{equation*}
    and 
    \begin{equation*}
        C \triangleq  \varepsilon \bbE \left[\sum_{\bfx  \in \mathcal{P}\cap B(\mathbf{0}, R^{\prime})} f\left(\bfx\right) \sum_{\bfy  \in \mathcal{P}\cap B(\mathbf{0}, R^{\prime})} \nabla f\left(\bfy\right) \cdot F_{\mathcal{P} }(\bfy ) \mathds{1}_{B(\mathbf{0}, R)}(\bfy + \varepsilon F_{\mathcal{P} }(\bfy )) \right],
    \end{equation*}
    and we now show that each of these three terms is $O(\rho^2 |\varepsilon|^\delta)$.
    
    For bounding the first term in \eqref{e:three_terms}, we take $0 < \gamma < 1$ such that $\delta -1 < \gamma < d/(d-1)-1$.
    Note that such a $\gamma$ exists since $\delta - 1 < d/(d-1) -1 = 1/(d-1) < 1$. 
    Using Cauchy-Schwarz and the mean value theorem, we have 
    \begin{align*}
        \left|A - B\right| 
        &
        \leq 
        |\varepsilon| 
        \|f\|_{\infty} 
        \bbE
        \left[
        \sum_{\bfx  \in \mathcal{P}\cap B(\mathbf{0}, R^{\prime})}
        \sum_{\bfy  \in \mathcal{P}\cap B(\mathbf{0}, R^{\prime})} 
        \int_0^1 \|\nabla f\left(\bfy + \tau \varepsilon F_{\mathcal{P} }(\bfy )\right) - \nabla f\left(\bfy\right)\|_2 \|F_{\mathcal{P}}(\bfy)\|_2 \d\tau 
        \right]
        \\
        &\leq
        |\varepsilon| 
        \|f\|_{\infty} 
        c_d 
        \bbE
        \left[
        \sum_{\bfx  \in \mathcal{P}\cap B(\mathbf{0}, R^{\prime})}
        \sum_{\bfy  \in \mathcal{P}\cap B(\mathbf{0}, R^{\prime})} 
        \int_0^1 (2 \| \nabla f\|_{\infty})^{1 - \gamma} \| \nabla^2 f \|_{\infty}^{\gamma} \left(\tau |\varepsilon| \|F_{\mathcal{P} }(\bfy )\|_{2} \right)^{\gamma } \|F_{\mathcal{P}}(\bfy)\|_2 \d\tau
        \right]
        \\
        & 
        \leq 
        2^{1 - \gamma} 
        c_d 
        \|f\|_{\infty}  
        \| \nabla f\|_{\infty}^{1 - \gamma} 
        \| \nabla^2 f \|_{\infty}^{\gamma} 
        |\varepsilon|^{1+\gamma} 
        \bbE
        \left[
        \sum_{\bfx  \in \mathcal{P}\cap B(\mathbf{0}, R^{\prime})}
        \sum_{\bfy  \in \mathcal{P}\cap B(\mathbf{0}, R^{\prime})} \|F_{\mathcal{P} }(\bfy )\|_{2}^{\gamma +1}
        \right],
    \end{align*}
    where $c_d > 0$ is a constant that depends only on the dimension $d$, $\| \nabla f\|_{\infty} \triangleq \| \|\nabla f\|_2 \|_{\infty}$ and  $\| \nabla^2 f \|_{\infty} \triangleq  \| \|\nabla^2 f\|_2 \|_{\infty}$. 
    Note that the second inequality in the above computation can be derived as follows
    \begin{align}
        \label{eq:trik_bounding_force_moment}
        \|\nabla f\left(\bfy + \tau \varepsilon F_{\mathcal{P} }(\bfy )\right) - \nabla f\left(\bfy\right)\|_2 
        & =  
        \|\nabla f\left(\bfy + \tau \varepsilon F_{\mathcal{P} }(\bfy )\right) - \nabla f\left(\bfy\right)\|_2^{ 1 - \gamma}  
        \|\nabla f\left(\bfy + \tau \varepsilon F_{\mathcal{P} }(\bfy )\right) - \nabla f\left(\bfy\right)\|_2^{\gamma}
        \\
        & 
        \leq
        (2 \| \nabla f\|_{\infty})^{1 - \gamma} 
        \| \nabla^2 f \|_{\infty}^{\gamma} 
        \left(\tau |\varepsilon| \|F_{\mathcal{P} }(\bfy )\|_{2} \right)^{\gamma }
        \label{e:intermediate_step_AB}
    \end{align}
    where the last inequality follows from the mean value theorem. 
    The introduction of the parameter $\gamma$ is critical here to avoid terms involving higher moments of $F$, with order larger than $d/(d-1)$, as such moments are not well-defined. 
    This technique will also be used in subsequent proofs.
    
    Now, to bound the expectation appearing in \eqref{e:intermediate_step_AB}, take $p > 1$ small enough so that $p(1+\gamma) < d/(d-1)$ and let $q = p/(p-1)$. 
    Using Slivnyak-Mecke \eqref{eq:slivnyak-mecke} again, H\"older's inequality and Lemma \ref{lem:moment_poisson}, we obtain
    \begin{align}
        \label{eq:sum_sum_F_power_gammaplus1}
        \bbE
        \left[
        \sum_{\bfx  \in \mathcal{P}\cap B(\mathbf{0}, R^{\prime})}
        \sum_{\bfy  \in \mathcal{P}\cap B(\mathbf{0}, R^{\prime})} \|F_{\mathcal{P} }(\bfy )\|_{2}^{\gamma +1}\right] 
        & = 
        \rho 
        \int_{B(\mathbf{0}, R^{\prime})} 
        \bbE
        \left[
        \|F_{\mathcal{P} }(\bfy )\|_{2}^{\gamma +1} 
        \left(1 + \sum_{\bfx  \in \mathcal{P}} \mathds{1}_{B(\mathbf{0}, R')}(\bfx)\right)
        \right] \d\bfy 
        \nonumber
        \\	
        & \leq 	
        \rho  
        \int_{B(\mathbf{0}, R^{\prime})} 
        \bbE 
        \left[
        \|F_{\mathcal{P} }(\bfy )\|_{2}^{p(\gamma +1)}
        \right]^{1/p} 
        \bbE 
        \left[
        \left(1 + \sum_{\bfx  \in \mathcal{P}}  
        \mathds{1}_{B(\mathbf{0}, R')}(\bfx)\right)^q
        \right]^{1/q} 
        \d\bfy 
        \nonumber
        \\
        & \leq 
        \leb{B(\mathbf{0}, R^{\prime})} 
        C(q, R^{\prime}) 
        \bbE 
        \left[
        \|F_{\mathcal{P} }(\mathbf{0})\|_{2}^{p(\gamma +1)}
        \right]^{1/p} 
        \rho^2
        = O(\rho^2).
    \end{align}
    Note that $\bbE \left[\|F_{\mathcal{P} }(\mathbf{0})\|_{2}^{p(\gamma +1)}\right] < \infty$, by Proposition \ref{prop:distribution_of_F} and since $p(1+\gamma) < d/(d-1)$. Recalling that $1 + \gamma > \delta$ and $\varepsilon \in (-1, 1)$, we finally obtain
    \begin{equation}\label{eq:AB}
        |A - B| = O(\rho^2 |\varepsilon|^{\delta}).
    \end{equation}
    
    For the second term $B-C$ in \eqref{e:three_terms}, using the triangular inequality, we obtain a bound similar to \eqref{eq:interm_variance_from_internal_term_A},
    \begin{align*}
        \left|B - C\right| & \leq
        |\varepsilon| \|f\|_{\infty} \|\nabla f\|_{\infty} \bbE \left[\sum_{\bfx  \in \mathcal{P} \cap B(\mathbf{0}, R)} \mathds{1}_{B(\mathbf{0}, R)^c}(\bfx + \varepsilon F_{\mathcal{P} }(\bfx )) \sum_{\bfy  \in \mathcal{P}\cap B(\mathbf{0}, R)} \|F_{\mathcal{P} }(\bfy )\|_2 \mathds{1}_{B(\mathbf{0}, R)}(\bfy + \varepsilon F_{\mathcal{P} }(\bfy )) \right] 
        \\
        &\leq
        |\varepsilon| \|f\|_{\infty} \|\nabla f\|_{\infty} \bbE \left[\sum_{\bfx  \in \mathcal{P}\cap B(\mathbf{0}, R)}\mathds{1}_{B(\mathbf{0}, R)^c}(\bfx + \varepsilon F_{\mathcal{P} }(\bfx )) \sum_{\bfy  \in \mathcal{P}\cap B(\mathbf{0}, R)} \left(\|y\|_2 + R\right) \varepsilon^{-1} \right] 
        \\
        &\leq
        2R \|f\|_{\infty} \|\nabla f\|_{\infty} \bbE \left[\sum_{\bfx  \in \mathcal{P}\cap B(\mathbf{0}, R)} \mathds{1}_{B(\mathbf{0}, R)^c}(\bfx + \varepsilon F_{\mathcal{P} }(\bfx )) \sum_{\bfy  \in \mathcal{P}\cap B(\mathbf{0}, R)} 1 \right].
    \end{align*}
    Following the same methodology used in Lemma \ref{lem:variance_from_internal_term_A} to bound equation \eqref{eq:interm_variance_from_internal_term_A}, we get
    \begin{equation}
        \label{eq:BC}
        |B-C| = O(\rho^2 |\varepsilon|^{\delta}).
    \end{equation}
    
    For the last term $C- D$ in \eqref{e:three_terms}, we first use Cauchy-Schwarz inequality and the Slivnyak-Mecke theorem~\eqref{eq:slivnyak-mecke} to get
    \begin{align*}
        \left|C - D\right| 
        & \leq 
        |\varepsilon| 
        \|f\|_{\infty} \|
        \nabla f \|_{\infty} 
        \bbE 
        \left[
        \sum_{\bfx  \in \mathcal{P}\cap B(\mathbf{0}, R)} 1 
        \sum_{\bfy  \in \mathcal{P}\cap B(\mathbf{0}, R)} \|F_{\mathcal{P} }(\bfy )\|_2 \mathds{1}_{B(\mathbf{0}, R)^c}(\bfy + \varepsilon F_{\mathcal{P} }(\bfy )) 
        \right] 
        \\ 
        & =	
        |\varepsilon| 
        \|f\|_{\infty} \|
        \nabla f \|_{\infty}  
        \rho 
        \int_{B(\mathbf{0}, R)} 
        \bbE 
        \left[
        \|F_{\mathcal{P} }(\bfy )\|_2 \mathds{1}_{B(\mathbf{0}, R)^c}(\bfy + \varepsilon F_{\mathcal{P} }(\bfy )) 
        \left(1 + \sum_{\bfx  \in \mathcal{P}\cap B(\mathbf{0}, R)} 1 \right)
        \right] 
        \d \bfy.
    \end{align*}
    Now, let $\delta^{\prime} > 1$ be such that $\delta < \delta^{\prime} < d/(d-1)$, $p$ such that $1 < p < d/((d-1)\delta')$, and $q = p/(p-1)$. 
    Using H\"older's inequality and Lemma \ref{lem:moment_poisson}, we get
    \begin{equation*}
        \left|C - D\right| 
        \leq 
        \|f\|_{\infty} \|
        \nabla f \|_{\infty} 
        C(q, R) 
        \rho^2 |\varepsilon| 
        \int_{B(\mathbf{0}, R)} 
        \bbE 
        \left[
        \|F_{\mathcal{P} }(\bfy )\|_2^p 
        \mathds{1}_{B(\mathbf{0}, R)^c}(\bfy + \varepsilon F_{\mathcal{P} }(\bfy ))
        \right]^{1/p} 
        \d\bfy.
    \end{equation*}
    Now, to bound the integral in the last inequality, using again H\"older's inequality with $a > 1$ and $b = a/(a-1)$ such that $pa < d/(d-1)$, followed by Markov's inequality with $1 < \gamma < d/(d-1)$, we get 
    \begin{align*}
        \left|C - D\right| 
        & \leq 
        \|f\|_{\infty} \|
        \nabla f \|_{\infty} 
        C(q, R') 
        \bbE 
        \left[
        \|F_{\mathcal{P} }(\mathbf{0} )\|_2^{ap}\right]^{1/(ap)} 
        \rho^2 |\varepsilon| 
        \int_{B(\mathbf{0}, R)} \bbP \left[\|\bfy + \varepsilon F_{\mathcal{P} }(\bfy )\|_2 \leq R\right]^{1/(bp)} 
        \d\bfy
        \\ 
        & \leq 
        \|f\|_{\infty} \|
        \nabla f \|_{\infty} 
        C(q, R') 
        \bbE 
        \left[
        \|F_{\mathcal{P} }(\mathbf{0} )\|_2^{ap}
        \right]^{1/(ap)} 
        \bbE 
        \left[\|F_{\mathcal{P} }(\mathbf{0} )\|_2^{\gamma}
        \right]^{1/(bp)} 
        \rho^2 
        |\varepsilon|^{1+\gamma/(bp)} 
        \int_{B(\mathbf{0}, R)} 
        \frac1{(R - \|x\|_2)^{\gamma/(bp)}} 
        \d\bfy.
    \end{align*}
    Note that the moments of the force $F_{\mathcal{P} }(\mathbf{0} )$ in the last inequality are well defined since $ap$ and $\gamma$ are strictly small than $d/(d-1)$, see Proposition \ref{prop:distribution_of_F}), while the integral is finite because $\gamma/(bp) < \gamma < d/(d-1) < d$.
    Finally, one can choose $a$ and $\gamma$ in order to have $1+\gamma/(bp) > \delta$ and we get
    \begin{equation}
        \label{eq:CD}
        |C - D| = O(\rho^2 |\varepsilon|^{\delta}).
    \end{equation}
    Indeed, take $a = (1 - \mu)d/(p(d-1))$ and $\gamma = (1 - \mu)d/(d-1)$ with $\mu < 1$ small enough in order to ensure $a > 1$ (using $p < d/(d-1)$) and $\gamma > 1$ (using $d/(d-1) > 1$). 
    With this choice, we get
    \begin{align*}
        1+\frac{\gamma}{bp} 
        = 
        1+ \gamma \frac{a-1}{a p} 
        = 
        1 + \frac{d(1-\mu)}{p(d-1)}
        \left(1 - \frac{p(d-1)}{d} \frac1{1- \mu}\right) 
        = 
        \frac{d}{p(d-1)}(1 - \mu) 
        > \delta'(1 - \mu) > \delta,
    \end{align*}
    for $\mu$ small enough since $\delta' > \delta$. 
    
    Gathering equations \eqref{eq:AB}, \eqref{eq:BC} and \eqref{eq:CD}, we obtain \eqref{eq:the_target_integral_pops_out_bis} which ends the proof.
\end{proof}

The key arguments of the above proof are Equation \eqref{eq:last_step} coupled with Lemma \ref{lem:integral_with_the_harmonic_fct}.
Remarkably, the proof of Lemma~\ref{lem:variance_from_internal_term} remains valid even if $F_{\mathcal{P}}$ is replaced by its truncated version $F_{\mathcal{P}}^{(0,p)}$, where $p \geq 2 R$. In particular, the choice of $p \geq 2 R$ is crucial to ensure that Equation \eqref{eq:eq_valid_after_truncation} remains valid. 

The next lemma will be useful in quantifying the integral remainder term of the Taylor expansion that shall appear in the proof of Theorem~\ref{thm:Variance_reduction}.

\begin{lemma}
    \label{lem:variance_integral_remainder}
    Consider a homogeneous Poisson point process $\mathcal{P} \subset \mathbb{R}^d$ of intensity $\rho > 1$, with $d \geq 3$.
    Let $f \in C^2(\mathbb{R}^d)$ of compact support $K \subset B(\mathbf{0}, R)$ with $R>0$.
    For any $R^{\prime}\geq R$ and $0 < \delta < d/(d-1)$,
    \begin{equation}
        \label{eq:variance_integral_remainder}
        \varepsilon^2 
        \bbE\left[\left(\sum_{\bfy  \in \mathcal{P}\cap B(\mathbf{0}, R^{\prime})} \int_0^1 \nabla f\left(\bfy + \tau \varepsilon F_{\mathcal{P} }(\bfy )\right)\cdot F_{\mathcal{P}}(\bfy) \mathds{1}_{B(\mathbf{0}, R)}(\bfy + \varepsilon F_{\mathcal{P} }(\bfy )) \d \tau\right)^2 \right] 
        = O(\rho^2 |\varepsilon|^{\delta}).
    \end{equation}
\end{lemma}

\begin{proof}[Proof of Lemma \ref{lem:variance_integral_remainder}]
    Let $R^{\prime}\geq R$, $\varepsilon\in(-1,1)$, and $0 < \delta < d/(d-1)$.
    We denote by $A$ the left-hand side of \eqref{eq:variance_integral_remainder}. Using three times  Cauchy-Schwarz, we obtain
    \begin{align*}
        A &\leq 
        \varepsilon^2 \bbE\left[\sum_{\bfx  \in \mathcal{P}\cap B(\mathbf{0}, R^{\prime})} 1 \sum_{\bfy  \in \mathcal{P}\cap B(\mathbf{0}, R^{\prime})} \left(\int_0^1 \nabla f\left(\bfy + \tau \varepsilon F_{\mathcal{P} }(\bfy )\right)\cdot F_{\mathcal{P}}(\bfy) \mathds{1}_{B(\mathbf{0}, R)}(\bfy + \varepsilon F_{\mathcal{P} }(\bfy )) 
        \d \tau\right)^2 \right] 
        \\& \leq
        \varepsilon^2 \bbE\left[\sum_{\bfx  \in \mathcal{P}\cap B(\mathbf{0}, R^{\prime})} 1 \sum_{\bfy  \in \mathcal{P}\cap B(\mathbf{0}, R^{\prime})} \int_0^1 \left|\nabla f\left(\bfy + \tau \varepsilon F_{\mathcal{P} }(\bfy )\right)\cdot F_{\mathcal{P}}(\bfy) \mathds{1}_{B(\mathbf{0}, R)}(\bfy + \varepsilon F_{\mathcal{P} }(\bfy ))\right|^2 \d\tau \right] 
        \\& \leq 
        \|\nabla f\|_{\infty}^2 \varepsilon^2 \bbE\left[\sum_{\bfx  \in \mathcal{P}\cap B(\mathbf{0}, R^{\prime})} 1 \sum_{\bfy  \in \mathcal{P}\cap B(\mathbf{0}, R^{\prime})}  \|F_{\mathcal{P}}(\bfy)\|_2^2 \mathds{1}_{B(\mathbf{0}, R)}(\bfy + \varepsilon F_{\mathcal{P} }(\bfy )) \right].
    \end{align*}
    Let $\gamma^{\prime}> 0$ be such that $2 - d/(d-1) < \gamma^{\prime} < 2 - \delta$. 
    Using the same trick as in Equation \eqref{eq:trik_bounding_force_moment}, we obtain
    \begin{equation*}
        A \leq \|\nabla f\|_{\infty}^2 (R+R^{\prime}) |\varepsilon|^{2 - \gamma^{\prime}} \bbE\left[\sum_{\bfx  \in \mathcal{P}\cap B(\mathbf{0}, R^{\prime})} 1 \sum_{\bfy  \in \mathcal{P}\cap B(\mathbf{0}, R^{\prime})} \|F_{\mathcal{P}}(\bfy)\|_2^{2 - \gamma^{\prime}} \right].
    \end{equation*}
    Finally, using \eqref{eq:sum_sum_F_power_gammaplus1} with $\gamma = 1 - \gamma^{\prime} \in (\delta-1,  d/(d-1)-1)$, as required, we get $A = O(\rho^2 |\varepsilon|^{2 - \gamma^{\prime}}) = O(\rho^2 |\varepsilon|^{\delta})$, since $2 - \gamma^{\prime} > \delta$.
\end{proof}

In the next two lemmas, we work on bounding the contribution to the variance from the points of the Poisson process located outside the support of integrand $f$.
\begin{lemma}
\label{lem:zero_variance_of_external_term}
Consider a homogeneous Poisson point process $\mathcal{P} \subset \mathbb{R}^d$ of intensity $\rho > 1$, with $d \geq 3$.
Let $f\in C^2(\mathbb{R}^d)$ of compact support $K \subset B(\mathbf{0}, R )$ with $R>0$.
Let further $0<\beta < \frac{1}{2(d-1)^2}$ and $ 0 < \delta < d/(d-1)$.
For any $R^{\prime} > (2R + 2)^{1/\beta}$
\begin{equation*}
    \bbE \left[\left(\sum_{\bfx  \in \mathcal{P}\cap B(\mathbf{0}, R^\prime)^c} f(\bfx + \varepsilon F_{\mathcal{P}  }(\bfx ))\right)^2 \right] 
    = 
    O \left( \rho^2 |\varepsilon|^{\delta}\right).
\end{equation*}
\end{lemma}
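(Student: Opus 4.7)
The plan is to dominate the sum in absolute value by the count of points of $\mathcal{P} \cap B(\mathbf{0},R')^c$ whose image under $\Pi_\varepsilon$ lies in $B(\mathbf{0},R)$, and then show that this count has second moment of order $o(\varepsilon)$. Since $f$ is supported in $K \subset B(\mathbf{0},R)$, the triangle inequality yields
\[
\left|\sum_{\bfx \in \mathcal{P}\cap B(\mathbf{0},R')^c} f(\bfx + \varepsilon F_{\mathcal{P}}(\bfx))\right| \leq \|f\|_\infty N_\varepsilon, \qquad N_\varepsilon \triangleq \sum_{\bfx \in \mathcal{P}\cap B(\mathbf{0},R')^c} \mathds{1}_{B(\mathbf{0},R)}(\bfx + \varepsilon F_{\mathcal{P}}(\bfx)),
\]
so it suffices to show $\bbE[N_\varepsilon^2] = o(\varepsilon)$.

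Next I would invoke the inclusion \eqref{eq:inclusion_event_a_point_enter}, which is valid here because $R' > (2R+2)^{1/\beta}$ ensures $r(\bfx) = \|\bfx\|_2^\beta > R$ on $B(\mathbf{0},R')^c$. This gives $N_\varepsilon \leq N^{in}_\varepsilon + N^{ex}_\varepsilon$ with
\[
N^{in}_\varepsilon \triangleq \sum_{\bfx \in \mathcal{P}\cap B(\mathbf{0},R')^c}\!\! \mathds{1}_{B(\mathbf{0},r(\bfx))}\!\left(\bfx + \varepsilon F^{(0,1)}_{\mathcal{P}}(\bfx)\right), \quad N^{ex}_\varepsilon \triangleq \sum_{\bfx \in \mathcal{P}\cap B(\mathbf{0},R')^c}\!\! \mathds{1}_{\{\|F^{(1,\infty)}_{\mathcal{P}}(\bfx)\|_2 \geq (r(\bfx)-R)/|\varepsilon|\}},
\]
hence $\bbE[N_\varepsilon^2] \leq 2\bbE[(N^{in}_\varepsilon)^2] + 2\bbE[(N^{ex}_\varepsilon)^2]$, reducing the problem to bounding each term.

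For the internal count, Corollary \ref{coro:existence_internal_moments} with $m=2$ yields $\bbE[(N^{in}_\varepsilon)^2] \leq C_1 E_1 + C_2 E_1^2$, where $E_1$ is defined in \eqref{eq:def_E_k}. Lemma \ref{lem:bound_internal_force} controls the integrand of $E_1$ by two pieces; since $\beta < 1/(2(d-1)^2)$, one can choose $\gamma$ with $\beta(d-1) < \gamma < 1/(d-1)$, and then radial integration of the polynomially decaying piece gives $O(|\varepsilon|^{1+\gamma})$ while the super-polynomially decaying piece is negligible. Hence $E_1 = O(|\varepsilon|^{1+\gamma})$ and $\bbE[(N^{in}_\varepsilon)^2] = O(|\varepsilon|^{1+\gamma}) = o(\varepsilon)$. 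For the external count, Corollary \ref{coro:existence_external_moments} with $m=2$ bounds $\bbE[(N^{ex}_\varepsilon)^2]$ by finitely many $k$-th powers ($k \in \{1,2\}$) of integrals of the form $\int_{B(\mathbf{0},R')^c} \exp(-b_k g_k(\bfx) \log(c_k g_k(\bfx))) \rho \d\bfx$ with $g_k(\bfx) \geq (\|\bfx\|_2^\beta - R - k + 1)/(k|\varepsilon|)$; the assumption $R' > (2R+2)^{1/\beta}$ forces these $g_k$ to exceed a positive multiple of $1/|\varepsilon|$ uniformly on the domain, so the integrands decay faster than any polynomial in $|\varepsilon|$, yielding $\bbE[(N^{ex}_\varepsilon)^2] = o(|\varepsilon|^p)$ for every $p > 0$.

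The main technical hurdle is the careful radial integration of the Lemma \ref{lem:bound_internal_force} bound, since both the feasibility of the exponent $\gamma$ and the extracted rate $|\varepsilon|^{1+\gamma}$ are dictated by this step; the assumption $\beta < 1/(2(d-1)^2)$ is made precisely to leave room for a $\gamma$ satisfying both $\gamma > \beta(d-1)$ (ensuring integrability at infinity) and $\gamma < 1/(d-1)$ (so that Lemma \ref{lem:bound_internal_force} applies).
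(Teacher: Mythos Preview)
Your proposal is correct and mirrors the paper's proof: bound by $\|f\|_\infty N_\varepsilon$, split via the inclusion \eqref{eq:inclusion_event_a_point_enter}, and apply Corollaries \ref{coro:existence_external_moments} and \ref{coro:existence_internal_moments} with $m=2$, extracting the rate $|\varepsilon|^{1+\gamma}$ from the polynomial piece of $E_1$ under the constraint $\beta(d-1)<\gamma<1/(d-1)$. The only point where the paper is more explicit is in handling the exponential-type integrals (your ``negligible'' pieces and the integrals from Corollary \ref{coro:existence_external_moments}): it performs the substitution $\bfy = |\varepsilon|^{-1/\beta}\bfx$ to obtain bounds of order $|\varepsilon|^{d/\beta}$, a step you would need to make fully rigorous since pointwise decay of the integrand does not by itself control an integral over an unbounded domain.
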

The proof is based on Corollary \ref{coro:existence_external_moments} and \ref{coro:existence_internal_moments}.
\begin{proof}[Proof of Lemma \ref{lem:zero_variance_of_external_term}]
Let $R^{\prime} > (2R + 2)^{1/\beta}$, $\varepsilon \in (-1,1)$, and
$$
Y \triangleq \bbE \left[\left(\sum_{\bfx  \in \mathcal{P}\cap B(\mathbf{0}, R^\prime)^c} f(\bfx + \varepsilon F_{\mathcal{P} }(\bfx ))\right)^2 \right].
$$
Setting $r(\bfx ) = \|\bfx \|_2^\beta$ with $0<\beta < \frac{1}{2(d-1)^2}$, we use the same splitting technique as in \eqref{eq:inclusion_event_a_point_enter}, and write
\begin{align*}
    |Y| &\leq
    \|f\|_\infty\bbE \left[ \left(\sum_{\bfx  \in \mathcal{P}\cap B(\mathbf{0}, R^{\prime})^c} \mathds{1}_{B(\mathbf{0}, R)}(\bfx + \varepsilon F_{\mathcal{P} }(\bfx ))\right)^2 \right]
    \\
    & \leq
    \|f\|_\infty
    \bbE \left[ \left(\sum_{\bfx  \in \mathcal{P}\cap B(\mathbf{0}, R^{\prime})^c} \mathds{1}_{B(\mathbf{0}, r(\bfx ))}(\bfx + \varepsilon F^{(0,1)}_{\mathcal{P} }(\bfx ))
    +
    \mathds{1}_{\{ \|\varepsilon F^{(1,\infty)}_{\mathcal{P} }(\bfx )\|_2 > r(\bfx ) - R\}}(\bfx)\right)^2
    \right]
    \\
    & \leq
    2\|f\|_\infty
    \underbrace{\bbE \left[ \left(\sum_{\bfx  \in \mathcal{P}\cap B(\mathbf{0}, R^{\prime})^c} \mathds{1}_{B(\mathbf{0}, r(\bfx ))}(\bfx + \varepsilon F^{(0,1)}_{\mathcal{P} }(\bfx ))\right)^2 \right]}_{\triangleq E_{\text{int}}(\varepsilon)}
    +
    \\
    &
    2\|f\|_\infty \underbrace{\bbE \left[ \left(\sum_{\bfx  \in \mathcal{P}\cap B(\mathbf{0}, R^{\prime})^c}
        \mathds{1}_{\{ \|\varepsilon F^{(1,\infty)}_{\mathcal{P} }(\bfx )\|_2 > r(\bfx ) - R\}}(\bfx)\right)^2
        \right]}_{\triangleq E_{\text{ext}}(\varepsilon)} .
\end{align*}
By Corollary \ref{coro:existence_internal_moments} with $m=2$, there exists $a_1 >0$ such that for any $\gamma > 0$ satisfying $\delta-1 < \gamma < 1/(d-1)$,
\begin{align}
    \label{eq:E_int}
    E_{\text{int}}(\varepsilon) & \leq a_1 E_1 + E_1^2 \leq
    a_1 |\varepsilon|^{1 + \gamma} + 2 |\varepsilon|^{2 + 2\gamma} + h_1(\varepsilon) + 2h_1(\varepsilon)^2,
\end{align}
where $E_1$ is defined by Equation \eqref{eq:def_E_k}, for $k=1$ and
\begin{align*}
    h_1(\varepsilon)&= \rho\int_{B(\mathbf{0}, R^{\prime})^c}
    a_2\exp \left(
    - a_3 \frac{\|\bfx \|_2 - r(\bfx )}{|\varepsilon|}\log\left(
    a_4 \frac{\|\bfx \|_2 - r(\bfx )}{|\varepsilon|}\right)
    \right)
    ~\d \bfx\\
    & = \rho\int_{B(\mathbf{0}, R^{\prime})^c}
    a_2\exp \left(
    - a_3 \frac{\|\bfx \|_2^{\beta}}{|\varepsilon|}\left(\|\bfx \|_2^{1-\beta} - 1\right)\log\left(
    a_4 \frac{\|\bfx \|_2^\beta}{|\varepsilon|} \left(\|\bfx \|_2^{1-\beta} - 1\right)\right)
    \right)
    ~\d \bfx,
\end{align*}
for some positive constants $a_2, \dots, a_4$.
But
\begin{align*}
    h_1(\varepsilon)
    &\leq
    \rho \int_{ B(\mathbf{0}, R^\prime)^c} a_2\exp \left(
    - a_3 \frac{\|\mathbf{x}\|_2^\beta}{ |\varepsilon|} \left({R^\prime}^{1-\beta} - 1\right)
    \log \left(
    a_4 \frac{\|\mathbf{x}\|_2^\beta}{| \varepsilon|}\left({R^\prime}^{1-\beta} - 1\right)\right)
    \right) ~\d \mathbf{x}\\
    &=
    \rho |\varepsilon|^{d/\beta}
    \int_{ B(\mathbf{0}, R^\prime |\varepsilon|^{-1/\beta})^c} a_2\exp \left(
    - a_3 \|\bfy\|_2^\beta \left({R^\prime}^{1-\beta} - 1\right)
    \log \left(
    a_4 \|\bfy\|_2^\beta \left({R^\prime}^{1-\beta} - 1\right)
    \right)
    \right) ~\d \bfy\\
    &\leq \rho |\varepsilon|^{d/\beta}
    \underbrace{\int_{ B(\mathbf{0}, 1)^c} a_2\exp \left(
        - a_3 \|\bfy\|_2^\beta \left({R^\prime}^{1-\beta} - 1\right)
        \log \left(
        a_4 \|\bfy\|_2^\beta \left({R^\prime}^{1-\beta} - 1\right)
        \right)
        \right) ~\d \bfy}_{\triangleq c}.
\end{align*}
where in the second line we used the change of variable $\bfy = | \varepsilon |^{-1/\beta} \bfx$, and in the last line we used that $R^\prime> |\varepsilon|^{1/\beta}$.
Thus
\begin{align*}
    0 \leq h_1(\varepsilon)
    &\leq c \rho |\varepsilon|^{d/\beta}.
\end{align*}
As $d/\beta> 2(d-1)^2 d > \delta$ and $\rho > 1$, we have $h_1(\varepsilon) = O(\rho |\varepsilon|^{\delta}) = O(\rho^2 |\varepsilon|^{\delta})$ and thus $h_1(\varepsilon)^2 = O(\rho^2 |\varepsilon|^{2\delta}) =  O(\rho^2 |\varepsilon|^{\delta})$. 
Subsequently, 
$$E_{\text{int}}(\varepsilon) =  O(|\varepsilon|^{1 + \gamma} + |\varepsilon|^{2(1 + \gamma)}) + O(\rho^2 |\varepsilon|^{\delta})=  O(\rho^2 |\varepsilon|^{\delta}),$$ 
since $2(1 + \gamma) > 1+\gamma >\delta$.

It remains to show that $E_{\text{ext}}(\varepsilon) = O(\rho^2 \varepsilon^{\delta})$.
By Corollary \ref{coro:existence_external_moments} with $m=2$ there exists $a_5, \dots, a_8>0$ such that
\begin{align}
    \label{eq:E_ext}
    E_{\text{ext}}(\varepsilon) & \leq a_5\left(h_2(\varepsilon) + h_2(\varepsilon)^2\right),
\end{align}
with
\begin{align*}
    h_2(\varepsilon)
    &= \rho \int_{B(\mathbf{0}, R^{\prime})^c}
    a_6 \exp \left(
    - a_7 \frac{ r(\bfx )- (R+1)}{|\varepsilon|}
    \log\left(
    a_8 \frac{r(\bfx )- (R+1)}{|\varepsilon|}\right)
    \right)
    ~\d \bfx.
\end{align*}
In particular
\begin{align*}
    h_2(\varepsilon)
    &=
    \rho \int_{B(\mathbf{0}, R^{\prime})^c}
    a_6 \exp \left(
    - a_7 \frac{ \|\bfx \|^\beta- (R+1)}{|\varepsilon|}\log \left(
    a_8 \frac{\|\bfx \|^\beta- (R+1)}{|\varepsilon|}\right)
    \right)
    ~\d \bfx
    \\
    &\leq \rho
    \int_{B(\mathbf{0}, R^{\prime})^c}
    a_6 \exp \left(
    - a_7 \frac{ \|\bfx \|^\beta}{2|\varepsilon|}\log\left(
    a_8 \frac{\|\bfx \|^\beta}{2|\varepsilon|}\right)
    \right)
    ~\d \bfx,
\end{align*}
where in the last line we used that $R^{\prime}> (2R + 2)^{1/\beta}$.
Following the same method used earlier to bound $h_1(\varepsilon)$, we can show that $h_2(\varepsilon) \leq C \rho |\varepsilon|^{d/\beta}$ for some constant $C$.
This implies that $h_2(\varepsilon) = O(\rho |\varepsilon|^{\delta})$ and 
$$E_{\text{ext}}(\varepsilon) = O(\rho^2 |\varepsilon|^{\delta}).$$
We conclude that $Y = O(\rho^2 |\varepsilon|^{\delta})$, which ends the proof.
\end{proof}

Again, the proof's validity is unaffected by replacing $F_{\mathcal{P}}$ with its truncated counterpart $F_{\mathcal{P}}^{(0,p)}$, where $p > 0$.

The upcoming lemma is the final tool required to demonstrate Theorem \ref{thm:Variance_reduction}.
\begin{lemma}
    \label{lem:zero_variance_of_product_term}
    Consider a homogeneous Poisson point process $\mathcal{P} \subset \mathbb{R}^d$ of intensity $\rho > 1$, with $d \geq 3$.
    Let $f$ be a $ C^2(\mathbb{R}^d)$ function of compact support $K\subset B(\mathbf{0}, R)$ with $R>0$ and $0 < \delta < d/(d-1)$.
    For any $R^{\prime} \geq R$ we have
    \begin{equation*}
        \bbE 
        \left[ 
        \sum_{\bfx  \in \mathcal{P}\cap B(\mathbf{0}, R^{\prime})^c} f\left(\bfx + \varepsilon F_{\mathcal{P}  }(\bfx )\right)
        \sum_{\bfy  \in \mathcal{P}\cap B(\mathbf{0}, R^{\prime})} f\left(\bfy + \varepsilon F_{\mathcal{P} }(\bfy )\right) 
        \right] 
        = 
        O\left(\rho^2 |\varepsilon|^{\delta}\right).
    \end{equation*}
\end{lemma}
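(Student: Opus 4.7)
The plan is to reduce the claim to showing that $\bbE[N_1(\varepsilon)\,\mathcal{P}(B(\mathbf{0},R'))]=o(\varepsilon)$, where $N_1(\varepsilon)\triangleq \sum_{\bfx\in\mathcal{P}\cap B(\mathbf{0},R')^c} \mathds{1}_{B(\mathbf{0},R)}(\bfx+\varepsilon F_\mathcal{P}(\bfx))$. Since $f$ is supported in $K\subset B(\mathbf{0},R)$, the bound $|f(\bfz)|\leq\|f\|_\infty\mathds{1}_{B(\mathbf{0},R)}(\bfz)$ together with $\bigl|\sum_{\bfy\in\mathcal{P}\cap B(\mathbf{0},R')} f(\bfy+\varepsilon F_\mathcal{P}(\bfy))\bigr|\leq\|f\|_\infty\mathcal{P}(B(\mathbf{0},R'))$ gives this reduction. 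I note that the naive Cauchy-Schwarz combining Lemma~\ref{lem:zero_variance_of_external_term} with the $O(1)$ second moment of $\mathcal{P}(B(\mathbf{0},R'))$ only produces $O(\sqrt{\varepsilon})$, which is insufficient; one must therefore work on the first moment of the product directly.

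Applying the extended Slivnyak-Mecke theorem~\eqref{eq:slivnyak-mecke} to the double sum (the diagonal drops because the two index sets are disjoint) yields
\[
    \bbE[N_1(\varepsilon)\mathcal{P}(B(\mathbf{0},R'))]
    = \rho^2\!\int_{B(\mathbf{0},R')^c\times B(\mathbf{0},R')}\!\bbP\!\left(\bfx+\varepsilon F_\mathcal{P}(\bfx)+\varepsilon\tfrac{\bfx-\bfy}{\|\bfx-\bfy\|_2^d}\in B(\mathbf{0},R)\right)\d\bfx\,\d\bfy,
\]
where the extra deterministic term encodes the reciprocal contribution of $\bfy$ to the force at $\bfx$. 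I then split $B(\mathbf{0},R')^c$ into a far region $A_{\text{far}}=B(\mathbf{0},R'_0)^c$ and a near annulus $A_{\text{near}}=B(\mathbf{0},R'_0)\setminus B(\mathbf{0},R')$, with $R'_0$ chosen strictly larger than both $R'$ and $(2R+2)^{1/\beta}$, for some $0<\beta<1/(2(d-1)^2)$, as in Lemma~\ref{lem:zero_variance_of_external_term}. On $A_{\text{far}}$, the distance $\|\bfx-\bfy\|_2\geq R'_0-R'$ is bounded below, so the $\bfy$-term is a bounded perturbation and the event is contained in $\{\bfx+\varepsilon F_\mathcal{P}(\bfx)\in B(\mathbf{0},R+1)\}$ for $|\varepsilon|$ small; the internal/external splitting of the force via Corollaries~\ref{coro:existence_internal_moments} and~\ref{coro:existence_external_moments} with $m=1$ then bounds the $\bfx$-integral by $O(\varepsilon^{1+\gamma})$ for some $\gamma>(d-1)\beta$, while the $\bfy$-integral contributes only the finite factor $|B(\mathbf{0},R')|$.

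For the near annulus, where the $\bfy$-term can be arbitrarily large (for $\bfy$ close to $\bfx$), I fix $\tilde R<R$ with $K\subset B(\mathbf{0},\tilde R)$ (possible since $K$ is a compact subset of the open ball $B(\mathbf{0},R)$) and split the $\bfy$-integration according to whether $\|\bfy-\bfx\|_2\geq(2|\varepsilon|/(\|\bfx\|_2-\tilde R))^{1/(d-1)}$. In the first case, the deterministic $\bfy$-contribution has norm at most $(\|\bfx\|_2-\tilde R)/2$, so the event forces $\|\varepsilon F_\mathcal{P}(\bfx)\|_2\geq(\|\bfx\|_2-\tilde R)/2$, and Markov's inequality with a moment $\alpha\in(1,d/(d-1))$ (available by Proposition~\ref{prop:distribution_of_F} since $d\geq3$) bounds the probability by $C\varepsilon^\alpha/(\|\bfx\|_2-\tilde R)^\alpha$. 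In the second case the probability is bounded by $1$ but the volume of admissible $\bfy$ is $O(|\varepsilon|^{d/(d-1)}/(\|\bfx\|_2-\tilde R)^{d/(d-1)})$. Since $\|\bfx\|_2-\tilde R$ is uniformly bounded away from zero on the bounded annulus $A_{\text{near}}$, integrating both terms yields $O(\varepsilon^\alpha)+O(|\varepsilon|^{d/(d-1)})=o(\varepsilon)$, as both $\alpha>1$ and $d/(d-1)>1$. The main obstacle is precisely the bypassing of Cauchy-Schwarz, combined with the control of the near-boundary singularity of the $\bfy$-contribution, handled via this two-case split; summing the far and near estimates yields the announced $o(\varepsilon)$ bound.
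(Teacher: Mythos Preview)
Your approach is correct but takes a genuinely different route from the paper, and there is one small notational inconsistency to clean up.

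\textbf{Comparison.} The paper exploits the Lipschitz property of $f$: since $f(\bfx)=0$ for $\bfx\in B(\mathbf{0},R')^c$, one has $|f(\bfx+\varepsilon F_\calP(\bfx))|\leq L\,|\varepsilon|\,\|F_\calP(\bfx)\|_2\,\mathds{1}_{B(\mathbf{0},R')}(\bfx+\varepsilon F_\calP(\bfx))$. This free factor of $|\varepsilon|$ reduces the task to showing the residual expectation is $o(1)$, not $o(\varepsilon)$. The paper then applies Slivnyak--Mecke on $\bfx$ only, uses H\"older's inequality to decouple the Poisson count $\calP(B(\mathbf{0},R'))$, splits according to $\|F_\calP(\bfx)\|_2\lessgtr 1$, and closes via dominated convergence against the density of $F_\calP(\mathbf{0})$---no far/near decomposition and no appeal to Corollaries~\ref{coro:existence_external_moments}--\ref{coro:existence_internal_moments}. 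Your route works only at the sup-norm level and must therefore extract the full $o(\varepsilon)$ from the probability bounds, which you achieve by a two-point Slivnyak--Mecke and a more elaborate far/near splitting reusing the earlier quantitative machinery. The paper's argument is shorter and self-contained; yours, on the other hand, does not use the regularity of $f$ at this step.

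\textbf{Minor fix.} As written, your $N_1(\varepsilon)$ targets $B(\mathbf{0},R)$, but in the near-annulus step you bound the deterministic term by $(\|\bfx\|_2-\tilde R)/2$ and claim the event forces $\|\varepsilon F_\calP(\bfx)\|_2\geq(\|\bfx\|_2-\tilde R)/2$. With the event landing in $B(\mathbf{0},R)$ the triangle inequality only gives $\|\varepsilon F_\calP(\bfx)\|_2\geq(\|\bfx\|_2-2R+\tilde R)/2$, which need not be positive when $R'$ is close to $R$ (and the lemma allows $R'=R$). Since $K\subset B(\mathbf{0},\tilde R)$, simply define $N_1(\varepsilon)$ with $\mathds{1}_{B(\mathbf{0},\tilde R)}$ from the start; then the bound becomes $(\|\bfx\|_2-\tilde R)/2\geq(R'-\tilde R)/2>0$ uniformly on $A_{\text{near}}$, and both cases of your near-annulus analysis go through as stated.
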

\begin{proof}[Proof of Lemma \ref{lem:zero_variance_of_product_term}]
    Let $R^{\prime} \geq R$, $\varepsilon \in (-1,1)$, and denote
    $$Z \triangleq \bbE \left[\sum_{\bfx  \in \mathcal{P}\cap B(\mathbf{0}, R^{\prime})^c} f(\bfx + \varepsilon F_{\mathcal{P}  }(\bfx ))
    \sum_{\bfy  \in \mathcal{P}\cap B(\mathbf{0}, R^{\prime})} f(\bfy + \varepsilon F_{\mathcal{P}  }(\bfy )) \right]. $$
    We show that $Z = O(\rho^2 |\varepsilon|^{\delta})$.
    Since $f$ is $C^2$ with compact support, it is Lipschitz and we denote its Lipschitz constant by $L\geq 0$.
    For $  {\bf x}\in B^{c}(0,R')$, $f(\bfx) = 0$ so 
    $
    | f(\bfx + \varepsilon F_{\mathcal{P}  }(\bfx )) | = | f(\bfx + \varepsilon F_{\mathcal{P}  }(\bfx ))-f(\bfx) | \leq  L  \|\varepsilon F_{\mathcal{P}  }(\bfx )\|_2\mathds{1}_{B(\mathbf{0}, R^{\prime})}(\bfx + \varepsilon F_{\mathcal{P}  }(\bfx )).
    $
    Hence
    \begin{align*}
        \lvert Z \rvert
        & \leq
        \bbE \left[
        \sum_{\bfx  \in \mathcal{P}\cap B(\mathbf{0}, R^{\prime})^c}
        L \|\varepsilon F_{\mathcal{P}  }(\bfx )\|_2 \mathds{1}_{B(\mathbf{0}, R^{\prime})}(\bfx + \varepsilon F_{\mathcal{P}  }(\bfx ))\sum_{\bfy  \in \mathcal{P}\cap B(\mathbf{0}, R^{\prime}) } \|f\|_{\infty}
        \right]
        \\
        & \leq
        L\|f\|_{\infty}|\varepsilon|
        \bbE \left[
        \sum_{\bfx  \in \mathcal{P}\cap B(\mathbf{0}, R^{\prime})^c}
        \left( \|F_{\mathcal{P} \setminus \{\bfx\} }(\bfx )\|_2 \mathds{1}_{B(\mathbf{0}, R^{\prime})}(\bfx + \varepsilon F_{\mathcal{P} \setminus \{\bfx\} }(\bfx ))
        \sum_{\bfy \in \mathcal{P} \setminus \{\bfx \}} \mathds{1}_{B(\mathbf{0}, R^{\prime})}(\bfy)\right)
        \right]
        \\
        & =
        L\|f\|_{\infty} |\varepsilon|
        \int_{B(\mathbf{0}, R^{\prime})^c}
        \bbE \left[
        \|F_{\mathcal{P} }(\bfx )\|_2 \mathds{1}_{B(\mathbf{0}, R)}(\bfx + \varepsilon F_{\mathcal{P} }(\bfx ))
        \sum_{\bfy \in \mathcal{P}} \mathds{1}_{B(\mathbf{0}, R^{\prime})}(\bfy)
        \right]
        \rho \d \bfx.
    \end{align*}
    The last equality was obtained using the extended Slivnyak-Mecke theorem \eqref{eq:slivnyak-mecke}.
    Next, we employ Hölder's inequality with $ p> 1$ such that $p< \min(1+ d/(d-1) - \delta, d/(d-1))$ and $q=p/(p-1)$, and Lemma \ref{lem:moment_poisson}, to obtain
    \begin{align*}
        \lvert Z \rvert & \leq
        L\|f\|_{\infty} |\varepsilon|
        \int_{B(\mathbf{0}, R^\prime)^c}
        \bbE \left[
        \|F_{\mathcal{P} }(\bfx )\|_2^p \mathds{1}_{B(\mathbf{0}, R^\prime)}(\bfx +\varepsilon F_{\mathcal{P} }(\bfx ))\right]^{1/p}
        \bbE \left[\left(\sum_{\bfy \in \mathcal{P}} \mathds{1}_{B(\mathbf{0}, R^\prime)}(\bfy)\right)^q
        \right]^{1/q}
        \rho \d \bfx
        \\
        &=
        \underbrace{  L\|f\|_{\infty} C(q, R^{\prime}) }_{C_1} \rho^2 |\varepsilon|
        \int_{B(\mathbf{0}, R^\prime)^c} 
        \bbE \left[
        \|F_{\mathcal{P} }(\bfx )\|_2^p \mathds{1}_{B(\mathbf{0}, R^\prime)}(\bfx + \varepsilon F_{\mathcal{P} }(\bfx ))\right]^{1/p}
        \d \bfx.
    \end{align*}
    When attempting to bound the last integral, a challenge arises as $\bbE [\|F_{\mathcal{P}}(\mathbf{x})\|_2^p]^{1/p}$ is not integrable over $B(\mathbf{0}, R^\prime)^c$.
    To address this, we will handle the cases of $\|F_{\mathcal{P}}(\mathbf{x})\|_2 > 1$ and $\|F_{\mathcal{P}}(\mathbf{x})\|_2 < 1$ separately.
    By doing so, we can manage the computation differently for each case, allowing us to remove the exponent $\frac{1}{p}$ in the first case and proceed with the calculations accordingly.
    For $\mathbf{x} \in B(\mathbf{0}, R')^c$, we have
    \begin{align*}
        \bbE &\left[
        \|F_{\mathcal{P} }(\bfx )\|_2^p \mathds{1}_{B(\mathbf{0}, {R^\prime})}(\bfx + \varepsilon F_{\mathcal{P} }(\bfx ))\right]^{1/p}
        \\
        &=
        \bbE \left[
        \|F_{\mathcal{P} }(\bfx )\|_2^p \mathds{1}_{B(\mathbf{0}, {R^\prime})}(\bfx + \varepsilon F_{\mathcal{P} }(\bfx )) \left(\mathds{1}_{\{\|F_{\mathcal{P} }(\bfx )\|_2 <1\}}(\bfx ) + \mathds{1}_{\{\|F_{\mathcal{P} }(\bfx )\|_2 \geq 1\}}(\bfx ) \right)\right]^{1/p}
        \\
        & \leq
        \bbE \left[
        \|F_{\mathcal{P} }(\bfx )\|_2^p \mathds{1}_{B(\mathbf{0}, {R^\prime})}(\bfx + \varepsilon F_{\mathcal{P} }(\bfx ))
        \mathds{1}_{\{\|F_{\mathcal{P} }(\bfx )\|_2 <1\}}(\bfx )\right]^{1/p}
        +
        \bbE \left[
        \|F_{\mathcal{P} }(\bfx )\|_2^p \mathds{1}_{B(\mathbf{0}, {R^\prime})}(\bfx + \varepsilon F_{\mathcal{P} }(\bfx ))
        \mathds{1}_{\{\|F_{\mathcal{P} }(\bfx )\|_2 \geq1\}}(\bfx )\right]^{1/p}
        \\
        &\leq
        \bbE \left[ \mathds{1}_{B(\mathbf{0}, {R^\prime} + \varepsilon)}(\bfx )\right]^{1/p}
        +
        \bbE \left[
        \|F_{\mathcal{P} }(\bfx )\|_2^p \mathds{1}_{B(\mathbf{0}, {R^\prime})}(\bfx + \varepsilon F_{\mathcal{P} }(\bfx ))
        \right]
        \\
        &=
        \mathds{1}_{B(\mathbf{0}, {R^\prime} + \varepsilon)}(\bfx )
        +
        \bbE \left[
        \|F_{\mathcal{P} }(\bfx )\|_2^p \mathds{1}_{B(\mathbf{0}, {R^\prime})}(\bfx + \varepsilon F_{\mathcal{P} }(\bfx ))
        \right].
    \end{align*}
    Thus
    \begin{align}
        \label{eq:bound_Z}
        \lvert Z \rvert & \leq
        C_1 \rho^2 |\varepsilon| \int_{B(\mathbf{0}, {R^\prime})^c}
        \mathds{1}_{B(\mathbf{0}, {R^\prime} + \varepsilon)}(\bfx )
        +
        \bbE \left[
        \|F_{\mathcal{P} }(\bfx )\|_2^p \mathds{1}_{B(\mathbf{0}, {R^\prime})}(\bfx + \varepsilon F_{\mathcal{P} }(\bfx ))
        \right]
        \d \bfx
        \nonumber
        \\
        &=
        C_1  \rho^2 |\varepsilon|\left( \leb{B(\mathbf{0}, {R^\prime})^c \cap B(\mathbf{0}, {R^\prime} + \varepsilon)}
        +
        \int_{B(\mathbf{0}, {R^\prime})^c}
        \bbE \left[
        \|F_{\mathcal{P} }(\mathbf{0})\|_2^p \mathds{1}_{B(\mathbf{0}, {R^\prime})}(\bfx + \varepsilon F_{\mathcal{P} }(\mathbf{0}))
        \right]
        \d \bfx\right).
    \end{align}
    Now, to bound Equation \eqref{eq:bound_Z} first observe that 
    $$\leb{B(\mathbf{0}, {R^\prime})^c \cap B(\mathbf{0}, {R^\prime} + \varepsilon)} = O(|\varepsilon|).$$
    Indeed for $\varepsilon > 0$,
    \begin{align}
        \label{eq:intersection_ball}
        \leb{B(\mathbf{0}, {R^\prime})^c \cap B(\mathbf{0}, {R^\prime} + \varepsilon)} & = \leb{B(\mathbf{0}, {R^\prime} + \varepsilon)} - \leb{B(\mathbf{0}, {R^\prime})} = \leb{B(\mathbf{0}, 1)}\left((R^{\prime} + \varepsilon)^d - (R^{\prime})^d\right) \nonumber \\& = \leb{B(\mathbf{0}, 1)} \varepsilon \sum_{n = 0}^{d-1} (R^{\prime})^n \varepsilon^{d-1 -n} \leq  \leb{B(\mathbf{0}, 1)}\sum_{n = 0}^{d-1} (R^{\prime})^n \varepsilon = O(\varepsilon).
    \end{align}
    Arguing similarly for $\varepsilon < 0$, we get the same result $\leb{B(\mathbf{0}, {R^\prime})^c \cap B(\mathbf{0}, {R^\prime} + \varepsilon)} = O(|\varepsilon|)$. 
    It remains to bound the second term of Equation \eqref{eq:bound_Z}.
    
    Denote by $f_{F_{\mathcal{P} }(\mathbf{0})}$ the density function of $F_{\mathcal{P} }(\mathbf{0})$ and observe that
    \begin{align*}
        \int_{B(\mathbf{0}, {R^\prime})^c}
        \bbE \left[
        \|F_{\mathcal{P} }(\mathbf{0})\|_2^p \mathds{1}_{B(\mathbf{0}, {R^\prime})}(\bfx + \varepsilon F_{\mathcal{P} }(\mathbf{0}))
        \right]
        ~ \d \bfx &
        =
        \int_{B(\mathbf{0}, {R^\prime})^c}
        \int_{\mathbb{R}^d}
        \|\mathbf{u}\|_2^p \mathds{1}_{B(\mathbf{0}, {R^\prime})}(\bfx + \varepsilon \mathbf{u})
        f_{F_{\mathcal{P} }(\mathbf{0})}(\mathbf{u})
        ~ \d \mathbf{u} ~
        \d \bfx
        \\
        &=
        \int_{\mathbb{R}^d}
        \|\mathbf{u}\|_2^p
        f_{F_{\mathcal{P} }(\mathbf{0})}(\mathbf{u}) \leb{B(\mathbf{0}, {R^\prime})^c \cap B(\varepsilon \mathbf{u}, {R^\prime})}
        ~ \d \mathbf{u}.
    \end{align*}
    Let $\max(0, \delta -1) < \gamma < \min(d/(d-1) - p, 1)$, which is possible since $p < d/(d-1) - \delta +1$ and $\delta < 1+1/(d-1) < 2$. 
    We now reason similarly to Equation \eqref{eq:intersection_ball} while using $\gamma$ 
    with the same trick as in Equation \eqref{eq:trik_bounding_force_moment}
    \begin{align*}
        \leb{B(\mathbf{0}, {R^\prime})^c \cap B(\varepsilon \mathbf{u}, {R^\prime})} & 
        \leq \leb{B(\varepsilon \mathbf{u}, {R^\prime})}^{1 - \gamma} \leb{B(\mathbf{0}, {R^\prime})^c \cap B(\mathbf{0}, {R^\prime} + |\varepsilon| \|\mathbf{u}\|_2)}^{\gamma} 
        \\ 
        & \leq \leb{B(\mathbf{0}, {R^\prime})}^{1 - \gamma} \left(\leb{B(\mathbf{0}, 1)}\sum_{n = 1}^{d-1} (R^{\prime})^n\right)^{\gamma} |\varepsilon|^{\gamma} \|\mathbf{u}\|_2^{\gamma} = O(|\varepsilon|^{\gamma} \| \bfu \|_2 ^{\gamma} ).
    \end{align*}
    Plugging this back into Equation \eqref{eq:bound_Z}, we get
    \begin{equation*}
        \lvert Z\rvert 
        \leq 
        C_1\rho^2 
        \left(
        O\left(|\varepsilon|^2\right) 
        + 
        O\left(|\varepsilon|^{1+\gamma}\right)
        \bbE \left[\|F_{\mathcal{P} }(\mathbf{0})\|_2^{p+\gamma}\right]
        \right).
    \end{equation*}
    As $p+\gamma < d/(d-1)$, $\bbE\left[\|F_{\mathcal{P} }(\mathbf{0})\|_2^{p+\gamma}\right] < \infty$ by Proposition \ref{prop:distribution_of_F}. 
    Moreover, $2 > \delta$ and $1 + \gamma > \delta$. 
    This yields $Z = O(\rho^2 |\varepsilon|^{\delta})$, as announced. 
\end{proof}
Note that the proof's validity is unaffected by replacing $F_{\mathcal{P}}$ with its truncated counterpart $F_{\mathcal{P}}^{(0,p)}$, where $p > 0$.

We are finally ready to proceed with the proof of Theorem \ref{thm:Variance_reduction}. 
We will use  Lemmas~\ref{lem:variance_from_internal_term_A}, \ref{lem:variance_from_internal_term}, \ref{lem:variance_integral_remainder}, \ref{lem:zero_variance_of_external_term} and \ref{lem:zero_variance_of_product_term}.

\begin{proof}[Proof of Theorem \ref{thm:Variance_reduction}]
Consider a function $f \in C^2(\mathbb{R}^d)$ of compact support $K\subset B(0, R)$ for some $R>0$.
First, Proposition \ref{prop:extistance_of_the_moments} implies the existence of $\Var\Big[\widehat{I}_{ \Pi_{\varepsilon}\mathcal{P} \cap K }(f) \Big]$ for any $\varepsilon \in (-1,1)$.
Now, fix
$$
0<\beta < \frac{1}{2(d-1)^2},
$$
and let $R^{\prime} \geq \max((2R + 2)^{1/\beta}, R)$ and  $0 < \delta < d/(d-1)$.
As $\bbE [\widehat{I}_{ \Pi_{\varepsilon}\mathcal{P} \cap K }(f)] = \bbE [\widehat{I}_{ \mathcal{P} \cap K }(f)]$ we have
\begin{align*}
    &\Var \left[\widehat{I}_{ \Pi_{\varepsilon}\mathcal{P} \cap K }(f) \right]
    -
    \Var\left[\widehat{I}_{ \mathcal{P} \cap K }(f)\right]
    \\
    &=
    \bbE \left[\left(\widehat{I}_{ \Pi_{\varepsilon}\mathcal{P} \cap K }(f)\right)^2\right] - \bbE \left[\left(\widehat{I}_{ \mathcal{P} \cap K }(f)\right)^2\right]\\
    &=
    \rho^{-2}\bbE \left[\left(\sum_{\bfx  \in \mathcal{P}} f(\bfx + \varepsilon F_{\mathcal{P}  }(\bfx ))\right)^2
    -
    \left(\sum_{\bfx  \in \mathcal{P}} f(\bfx )\right)^2\right]
    \\
    &=\rho^{-2} \bbE \left[\left(\sum_{\bfx  \in \mathcal{P}\cap B(\mathbf{0}, R^\prime)} f(\bfx + \varepsilon F_{\mathcal{P} }(\bfx ))\right)^2 -
    \left(\sum_{\bfx  \in \mathcal{P}} f(\bfx )\right)^2\right]
    +
    \rho^{-2}\bbE \left[\left(\sum_{\bfx  \in \mathcal{P}\cap B(\mathbf{0}, R^\prime)^c} f(\bfx + \varepsilon F_{\mathcal{P}}(\bfx ))\right)^2\right]
    \\
    &\quad +
    2 \rho^{-2} \bbE \left[\sum_{\bfx  \in \mathcal{P}\cap B(\mathbf{0}, R^\prime)} f(\bfx + \varepsilon F_{\mathcal{P}  }(\bfx ))\sum_{\bfx  \in \mathcal{P}\cap B(\mathbf{0}, R^\prime)^c} f(\bfx + \varepsilon F_{\mathcal{P}  }(\bfx )) \right].
\end{align*}
Using Lemmas \ref{lem:zero_variance_of_external_term}, and \ref{lem:zero_variance_of_product_term} we get

    \begin{align}
        \label{eq:exp_var_1}
        \Var \left[\widehat{I}_{ \Pi_{\varepsilon}\mathcal{P} \cap K }(f) \right]
        -
        \Var\left[\widehat{I}_{ \mathcal{P} \cap K }(f)\right]
        \leq \rho^{-2} \bbE \left[\left(\sum_{\bfx  \in \mathcal{P}\cap B(\mathbf{0}, R^\prime)} f(\bfx + \varepsilon F_{\mathcal{P} }(\bfx ))\right)^2 -
        \left(\sum_{\bfx  \in \mathcal{P}} f(\bfx )\right)^2\right]
        + C^1_{\delta} |\varepsilon|^{\delta},
    \end{align}
    where $C^1_{\delta}$ does not depend on $\rho$.
To simplify the first term inside the expectation in the last equality, first observe that as $f$ has support included in $B(\mathbf{0}, R)$, we can rewrite
\begin{align*}
    T 
    \triangleq 
    \bbE 
    \left[
    \left(
    \sum_{\bfx  \in \mathcal{P}\cap B(\mathbf{0}, R^\prime)} 
    f(\bfx + \varepsilon F_{\mathcal{P} }(\bfx ))
    \right)^2
    \right] 
    = 
    \bbE 
    \left[
    \left(
    \sum_{\bfx  \in \mathcal{P}\cap B(\mathbf{0}, R^\prime)} 
    f(\bfx + \varepsilon F_{\mathcal{P} }(\bfx )) 
    \mathds{1}_{B(\mathbf{0}, R)}(\bfx + \varepsilon F_{\mathcal{P} }(\bfx ))\right)^2\right].
\end{align*}
Now, using a Taylor expansion with integral remainder, we get
\begin{align*}
    T 
    &= 
    \bbE 
    \left[\left(
    \sum_{\bfx  \in \mathcal{P}\cap B(\mathbf{0}, R^\prime)} 
    \left(f(\bfx) + \varepsilon \int_0^1 \nabla f(\bfx + \tau \varepsilon F_{\mathcal{P} }(\bfx )) \cdot F_{\mathcal{P}}(\bfx) 
    \d\tau\right)
    \mathds{1}_{B(\mathbf{0}, R)}(\bfx + \varepsilon F_{\mathcal{P} }(\bfx )) \right)^2\right] 
    \\
    & 
    = 
    \bbE \left[\left(\sum_{\bfx  \in \mathcal{P}\cap B(\mathbf{0}, R^\prime)} f(\bfx)\mathds{1}_{B(\mathbf{0}, R)}(\bfx + \varepsilon F_{\mathcal{P} }(\bfx ))\right)^2\right]
    \\
    & 
    \qquad 
    + 
    2\varepsilon 
    \bbE\left[\sum_{\bfx  \in \mathcal{P}\cap B(\mathbf{0}, R^{\prime})} 
    f\left(\bfx\right) \mathds{1}_{B(\mathbf{0}, R)}(\bfx + \varepsilon F_{\mathcal{P} }(\bfx )) 
    \sum_{\bfy  \in \mathcal{P}\cap B(\mathbf{0}, R^{\prime})} \int_0^1 \nabla f\left(\bfy + \tau \varepsilon F_{\mathcal{P} }(\bfy )\right) 
    \cdot 
    F_{\mathcal{P}}(\bfy) \mathds{1}_{B(\mathbf{0}, R)}(\bfy + \varepsilon F_{\mathcal{P} }(\bfy )) 
    \d\tau \right] 
    \\
    &
    \qquad 
    + 
    \varepsilon^2 
    \bbE
    \left[\left(
    \sum_{\bfy  \in \mathcal{P}\cap B(\mathbf{0}, R^{\prime})} 
    \int_0^1 
    \nabla f\left(\bfy + \tau \varepsilon F_{\mathcal{P} }(\bfy )\right)
    \cdot 
    F_{\mathcal{P}}(\bfy) \mathds{1}_{B(\mathbf{0}, R)}(\bfy + \varepsilon F_{\mathcal{P} }(\bfy )) 
    \d\tau
    \right)^2 \right].
\end{align*}
Using Lemmas \ref{lem:variance_from_internal_term_A}, \ref{lem:variance_from_internal_term} and \ref{lem:variance_integral_remainder}, we get

    \begin{align*}
        \label{eq:exp_var_2}
        T \leq
        \bbE \left[ \left(\sum_{\bfx  \in \mathcal{P}} f(\bfx )\right)^2\right] 
        - 2d\kappa_d \rho^2 \varepsilon I_K\left(f^2\right)
        + C^2_{\delta}\rho^2|\varepsilon|^{\delta},
\end{align*}
where $I_K(f^2)$ is the integral of $f^2$ as defined is Equation \eqref{eq:target_integral} and $C^2_{\delta}$ does not depend on $\rho$.
Finally, substituting the last result in Equation \eqref{eq:exp_var_1}, we get,for all $\delta \in (0, 1+1/(d-1))$,
    \begin{align*}
        \Var \left[\widehat{I}_{ \Pi_{\varepsilon}\mathcal{P} \cap K }(f) \right]
        -
        \Var\left[\widehat{I}_{ \mathcal{P} \cap K }(f)\right] 
        &
        \leq
        \rho^{-2} T 
        -  
        \rho^{-2} \bbE \left[ \left(\sum_{\bfx  \in \mathcal{P}} f(\bfx )\right)^2\right] 
        + 
        C^1_{\delta}|\varepsilon|^{\delta}
        \\
        & 
        \leq
        -2d \kappa_d  \varepsilon I_K\left(f^2\right) 
        + 
        (C^1_{\delta}+C^2_{\delta})|\varepsilon|^{\delta}
        \\
        & 
        =
        -2d \kappa_d  \varepsilon \rho 
        \Var\left[\widehat{I}_{ \mathcal{P} \cap K }(f)\right] 
        + 
        C_{\delta}|\varepsilon|^{\delta}.
    \end{align*}
    with $C_{\delta} = C^1_{\delta}+C^2_{\delta}$, which concludes the proof.
\end{proof}

\begin{remark}
\label{rmk:remark_proof_singularity_importance}
Substituting $F_{\calP}$ with its truncated version $F^{(0,p)}_{\calP}$, where $p \geq \text{diam}(K)$, preserves the validity of the proof of Theorem~\ref{thm:Variance_reduction}. However, the same cannot be said for $F^{(q,p)}_{\calP}$ when $q > 0$, as it leads to the breakdown of the proof of Lemma \ref{lem:variance_from_internal_term}.
This highlights the possibility that the variance reduction may be attributed to the singularity of $F_{\calP}$ at the points of $\calP$.
\end{remark}
\bmsection{Additional experiments}
\label{app:experiments}
\bmsubsection{Further analysis of Section~4.2} 
\label{app:analysis_of_the_experiment_4.2}
\begin{table}[h!]
    \centering
    \resizebox{0.95\textwidth}{!}{
        \begin{tabular}{|l|c|c|c|c|c|c|c|c|c|c|}
            \hline
            {} & SW & CI & SW & CI & SW & CI & SW & CI & SW & CI
            \\
            \hline
            $f_1$
            &stat=0.9, p=0.04 & [-0.58, -0.46]
            &stat=0.96, p=0.48 & [-0.69, -0.51]
            &stat=0.95, p=0.34 & [-0.66, -0.54]
            &stat=0.95, p=0.39 & [-0.65, -0.53]
            &stat=0.91, p=0.06 & [-0.52, -0.28]
            \\
            \hline
            $f_2$
            &stat=0.98, p=0.88 & [-0.63, -0.51]
            &stat=0.94, p=0.26 & [-0.65, -0.53]
            &stat=0.93, p=0.17 & [-0.61, -0.49]
            &stat=0.96, p=0.61 & [-0.63, -0.51]
            &stat=0.94, p=0.30 & [-0.60, -0.48]
            \\
            \hline
            $f_3$
            &stat=0.97, p=0.65 & [-0.57, -0.45]
            &stat=0.98, p=0.92 & [-0.67, -0.55]
            &stat=0.97, p=0.84 & [-0.69, -0.57]
            &stat=0.93, p=0.15 & [-0.64, -0.52]
            &stat=0.99, p=0.99 & [-0.60, -0.48]
            \\
            \hline
            & \multicolumn{2}{|c|}{d=2} & \multicolumn{2}{|c|}{d=3} & \multicolumn{2}{|c|}{d=4}
            & \multicolumn{2}{|c|}{d=5} & \multicolumn{2}{|c|}{d=7}
            \\
            \hline
    \end{tabular}}
    \caption{Shapiro Wilk (SW) test for the residual of the OLS of the estimated log-standard deviation of $\widehat{I}_{\text{MCRB}}$ over $\log(N)$ and the confidence interval (CI) with  $99.7\%$ confidence level of the slopes for $f_1, \, f_2$ and $ f_3$ when $d \in \{2,3, 4, 5, 7\}$.}
    \label{table:shap_wilk_and_conf_inf}
\end{table}
\begin{figure}[!h]
    \centering
    \begin{subfigure}{\linewidth}
        \centering
        \includegraphics[width=0.65\linewidth]{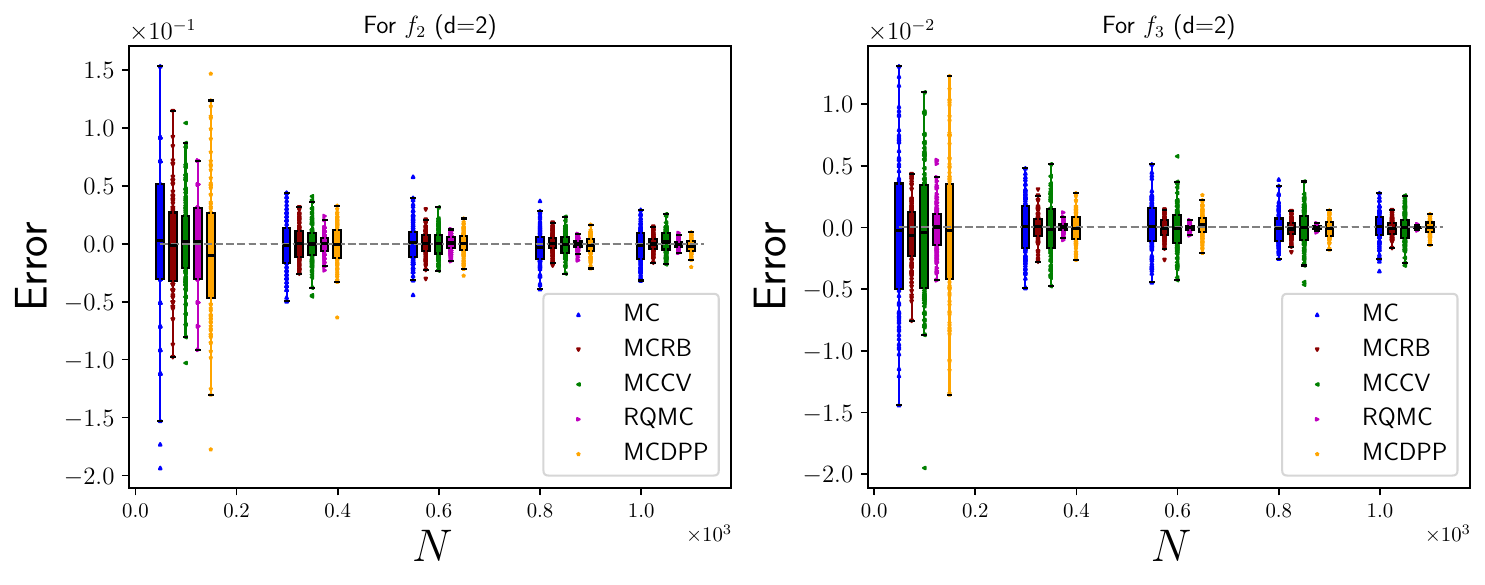}
    \end{subfigure}
    \begin{subfigure}{\linewidth}
        \centering
        \includegraphics[width=0.65\linewidth]{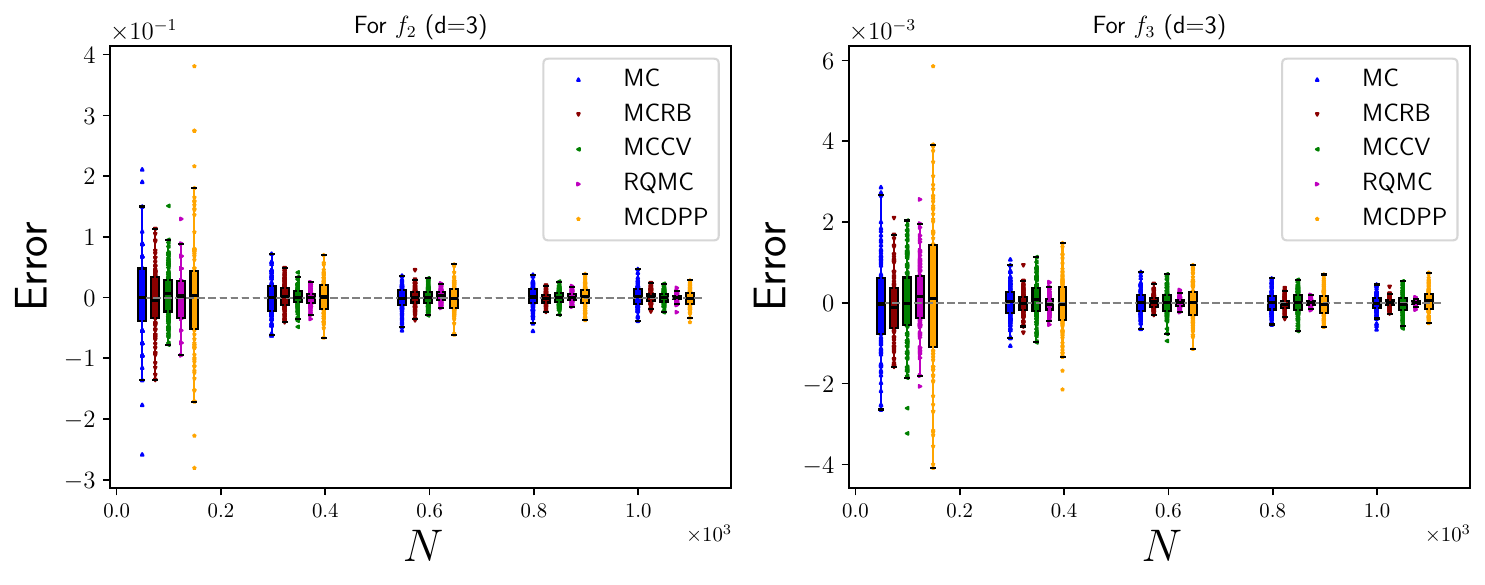}
    \end{subfigure}
    \begin{subfigure}{\linewidth}
        \centering
        \includegraphics[width=0.65\linewidth]{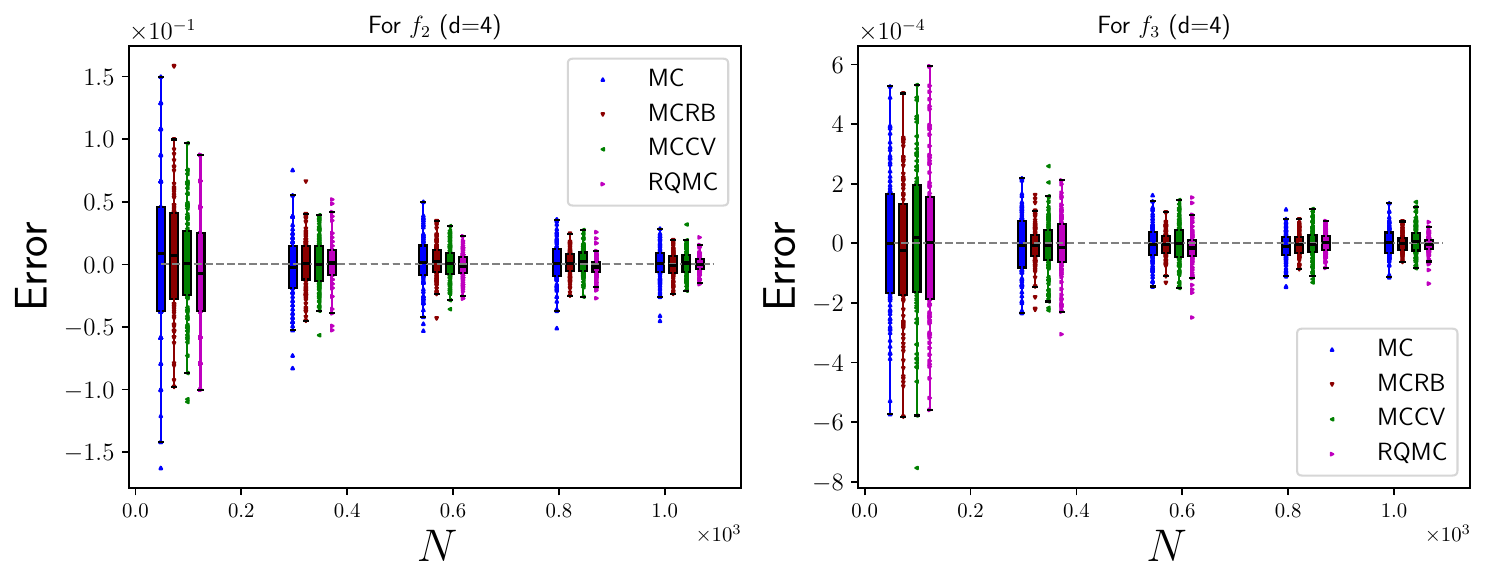}
    \end{subfigure}
    \begin{subfigure}{\linewidth}
        \centering
        \includegraphics[width=0.65\linewidth]{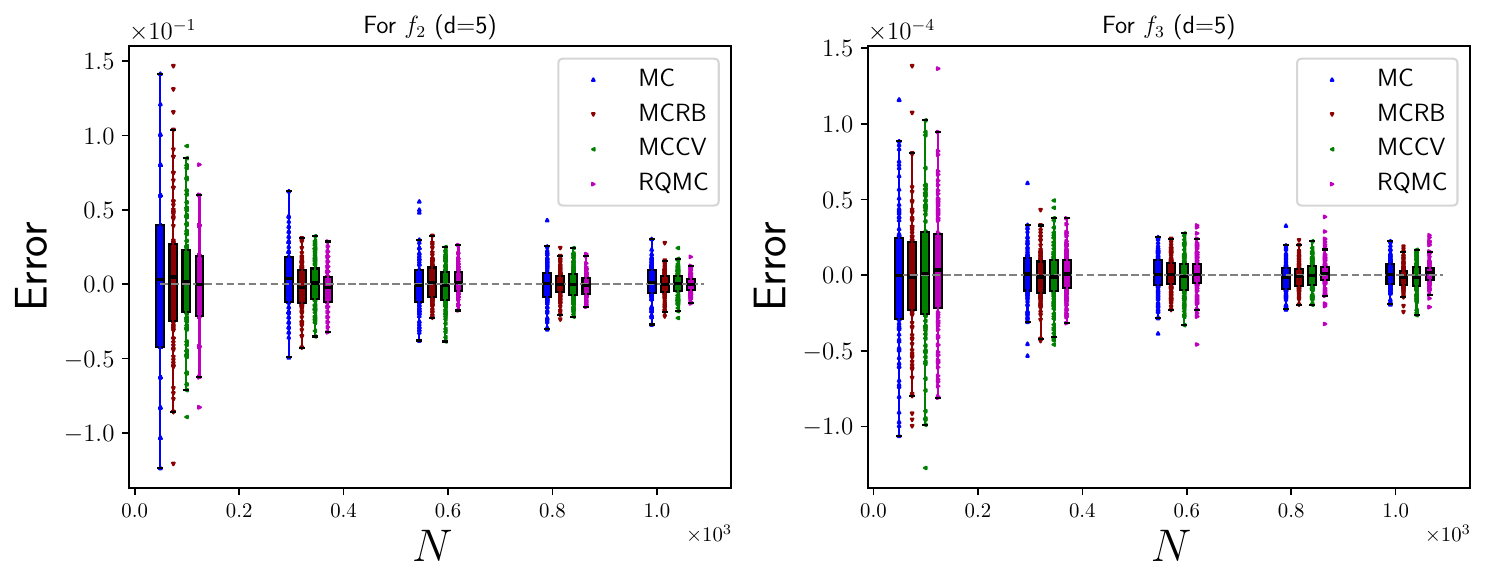}
    \end{subfigure}
    \begin{subfigure}{\linewidth}
        \centering
        \includegraphics[width=0.65\linewidth]{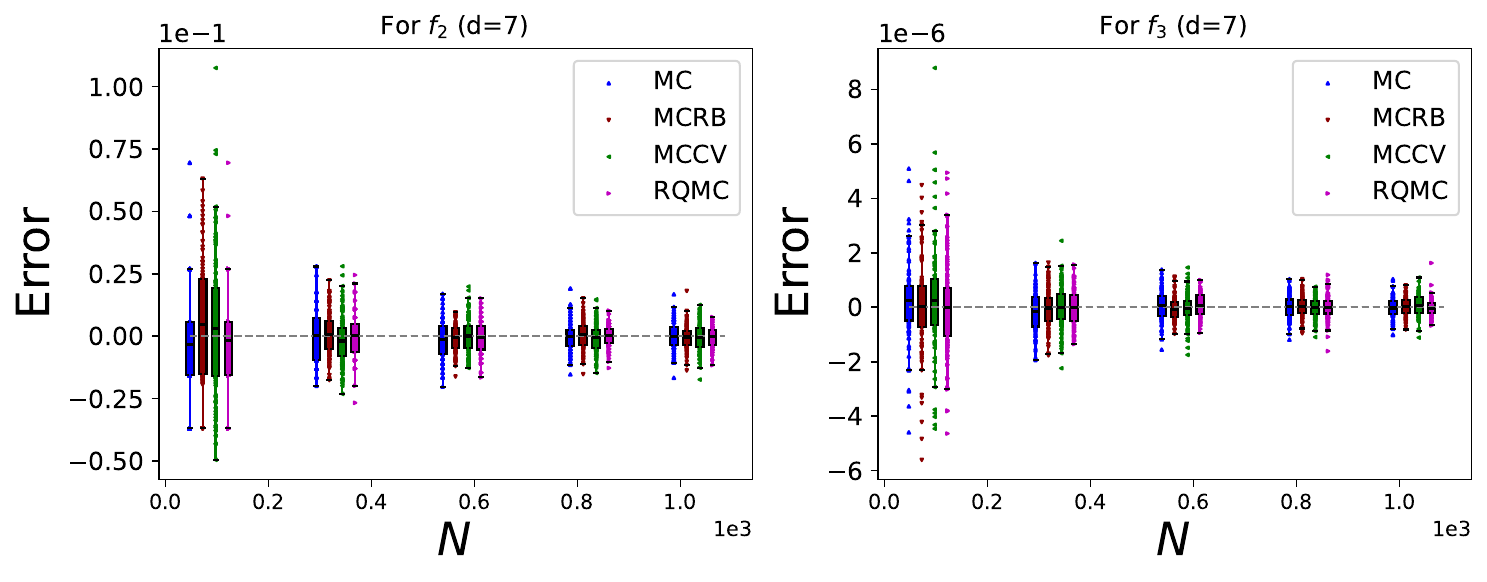}
    \end{subfigure}
    \caption{Estimated error of various Monte Carlo methods for $f_2$, and $f_3$ and $d \in \{2,3, 4, 5,7\}$.}
    \label{fig:square_error_mc}
\end{figure}
We used the Shapiro-Wilk test \citep{ShapWil1965} to assess whether the residuals of each OLS of $\log(\sigma)$ of $\widehat{I}_{\text{MCRB}}$ over $\log(N)$ shown in Figure 4 in the paper are Normally distributed, and computed confidence intervals for the slope corresponding to $\widehat{I}_{\text{MCRB}}$.
The $p$-values of the Shapiro-Wilk test corresponding to $\widehat{I}_{\text{MCRB}}$ are summarized in Table \ref{table:shap_wilk_and_conf_inf}.
The null hypothesis of the test is that the residuals are normally distributed.
As the statistics are close to 1 and the $p$-values are large (typically larger than 0.01), the distributions of the residuals are not significantly different from a normal distribution at a $99\%$ significance level.
We have also verified that the Quantile-Quantile plots of the residuals are compatible with the result of Shapiro-Wilk's test; for a comparison between Normality tests see \citep{NorBee2011}.
Hence, we can construct confidence intervals for the slopes using their estimated standard deviations.
Table \ref{table:shap_wilk_and_conf_inf} shows the $99.7\%$ confidence intervals (corresponding to three standard deviations) of the slopes.
For $d \in \{3, 4, 5\}$ the confidence intervals suggest that the variance of $\widehat{I}_{\text{MCRB}}$ may decrease slightly faster than the usual Monte Carlo rate.

Finally, to account for the slight possible bias of the estimator $\widehat{I}_{\text{MCRB}}$, we conclude this section by presenting the errors obtained in the experiment illustrated in Figure 4 in the paper.
Figure~\ref{fig:square_error_mc} displays the box plots of the error of the estimators $\widehat{I}_{\text{MC}}$, $\widehat{I}_{\text{MCRB}}$, $\widehat{I}_{\text{MCCV}}$, and $\widehat{I}_{\text{RQMC}}$, for the functions $f_2$ and $f_3$ for $d$ in $\{2, 3, 4, 5, 7\}$ and for $\widehat{I}_{\text{MCDPP}}$ when $d$ in $\{2, 3\}$.
There is no clear evidence in Figure~\ref{fig:square_error_mc} to suggest that $\widehat{I}_{\text{MCRB}}$ exhibits any notable bias.
\bmsubsection{Further analysis of Section~5.1} 
\label{app:analysis_of_extreme_repelled}
Figure \ref{fig:extreme_repelled_additional} presents an extension of the iterative repulsion experiments discussed in Section 5.1. In this extended analysis, we explore a broader range of stopping times, specifically considering $t \in \{30, 80, 130, 180, 200\}$.
To set up the experiment, we consider a sample from a PPP $\calP$ of unit intensity.
Subsequently, we perform an iterative application of the repulsion operator $\Pi_{\varepsilon}\calP$ to the original point process $\calP$, yielding the resulting point processes $\Pi_{\varepsilon, t}\calP$ for $t \in \{30, 80, 130, 180, 200\}$.
In Figure \ref{fig:extreme_repelled_additional}, we present the outcomes of this iterative repulsion process. The left panels of the figure showcase the $t$-th iterates $\Pi_{\varepsilon, t}\calP$, specifically constrained to the observation window, where we set $\varepsilon = -\varepsilon_0$ (Equation (12) in the paper). On the other hand, the right panels display the results obtained with $\varepsilon = \varepsilon_0$.
\begin{figure}[!h]
    \centering
    \begin{subfigure}{\linewidth}
        \centering
        \includegraphics[width=0.75\linewidth]{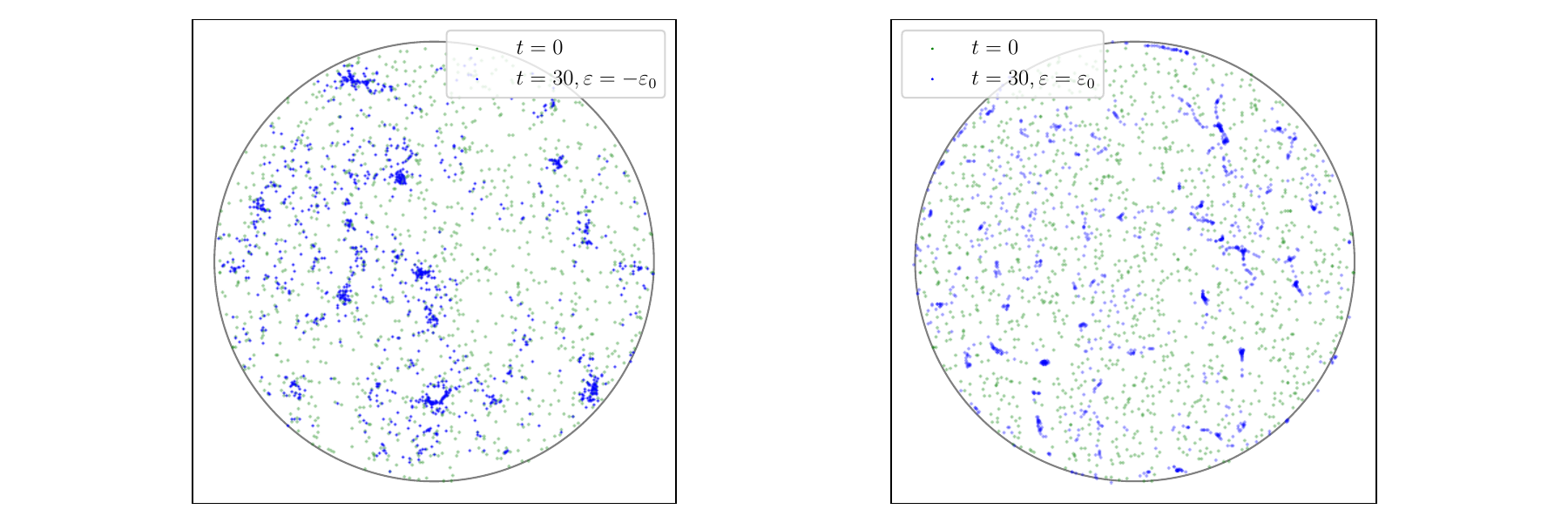}
    \end{subfigure}
    \begin{subfigure}{\linewidth}
        \centering
        \includegraphics[width=0.75\linewidth]{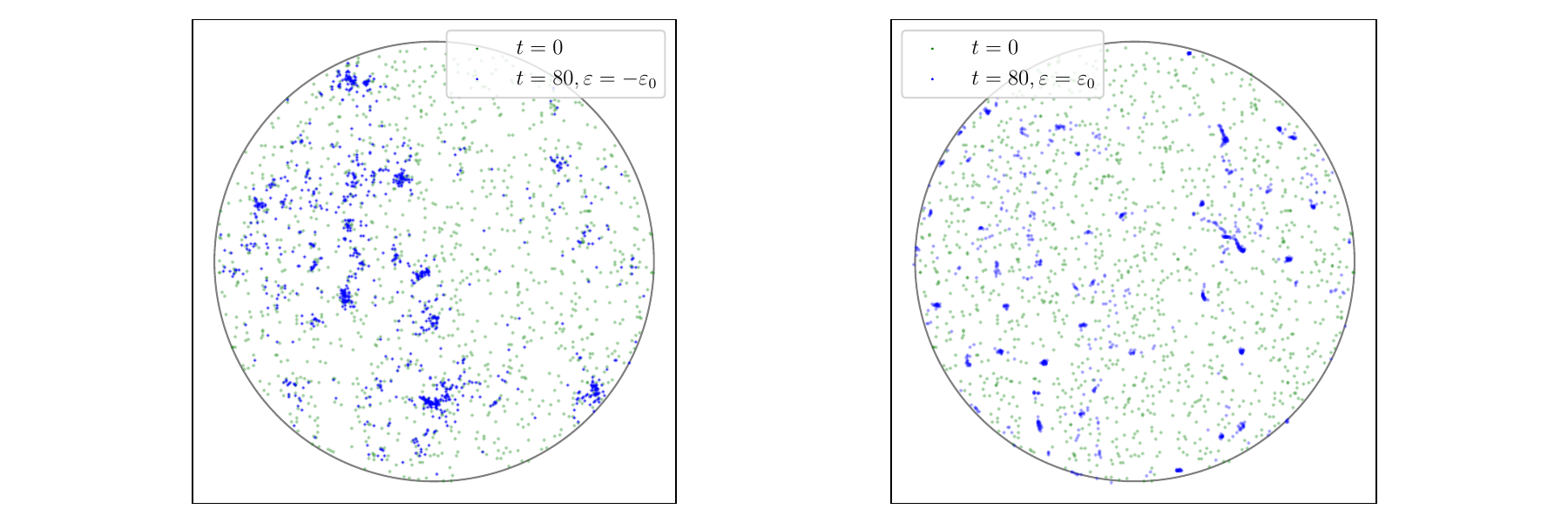}
    \end{subfigure}
    \begin{subfigure}{\linewidth}
        \centering
        \includegraphics[width=0.75\linewidth]{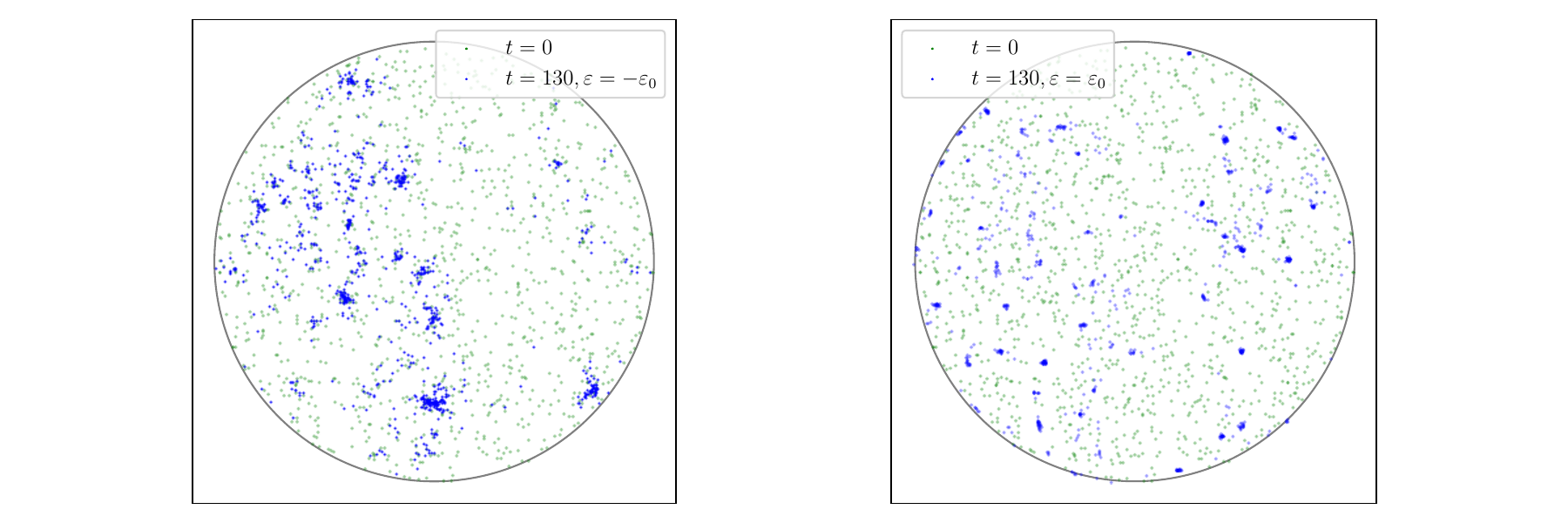}
    \end{subfigure}
    \begin{subfigure}{\linewidth}
        \centering
        \includegraphics[width=0.75\linewidth]{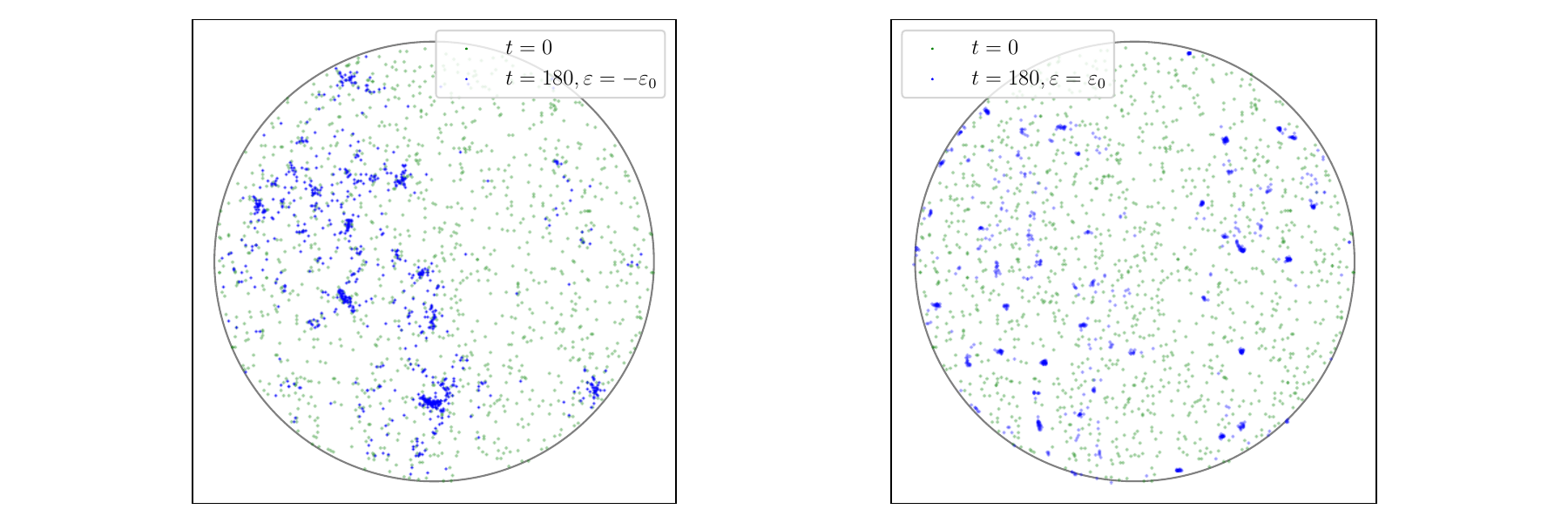}
    \end{subfigure}
    \begin{subfigure}{\linewidth}
        \centering
        \includegraphics[width=0.75\linewidth]{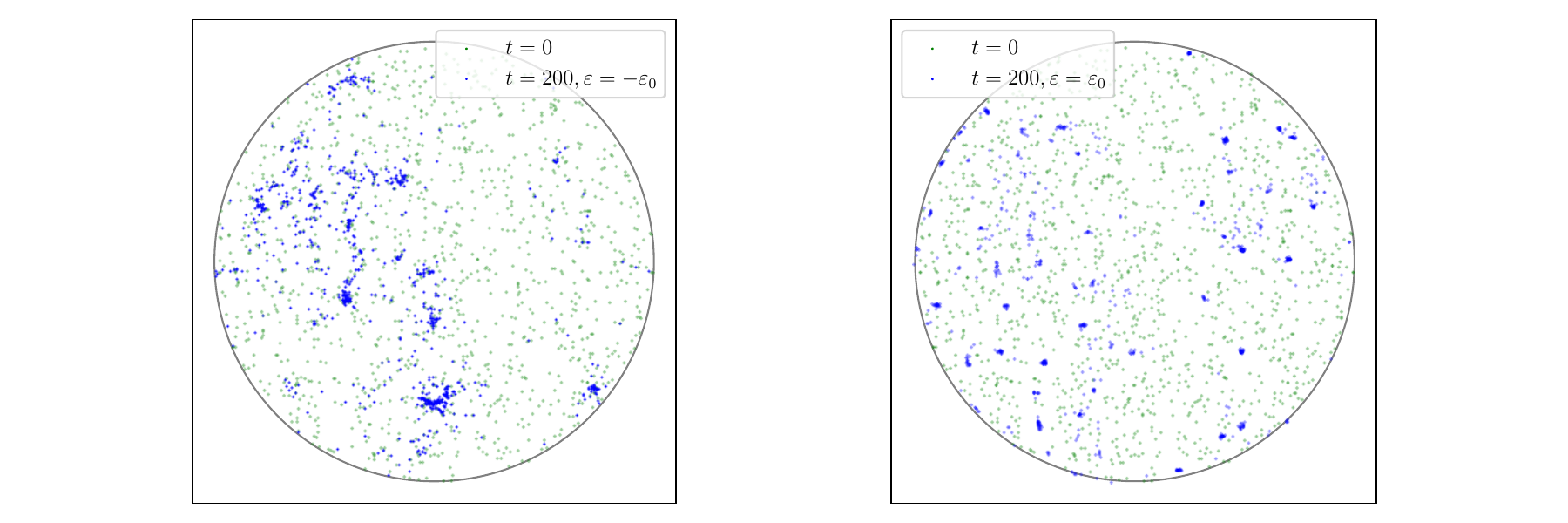}
    \end{subfigure}
    \caption{The green points represent a sample from a PPP $\calP$ of unit intensity, the blue points correspond to $\Pi_{-\varepsilon_0, t}\calP$ (left), and $\Pi_{+\varepsilon_0, t} \calP$ (right), for $t \in \{30, 80, 130, 180, 200\}$.}
    \label{fig:extreme_repelled_additional}
\end{figure}
\bmsubsection{Further details on control variates} 
\label{a:comparison_of_CV_estimators}
In this section, we discuss three variants of the CV estimator \eqref{eq:estimator_MCCV}, and compare their estimated mean squared error in a numerical experiment in Figures~\ref{fig:exp_mccv_d_3} and \ref{fig:exp_mccv_d_12}.

The CV estimator \eqref{eq:estimator_MCCV} requires fixing $c$ and $(b_\alpha)$.
Were $(b_\alpha)$ known, we would usually estimate $c$ with a simple linear regression using the same sample $\mathcal{B}_N$ used in the evaluation of $\widehat{I}_{\mathrm{MC}}$ in \eqref{eq:estimator_MCCV}; see \citep[Equation 8.29 and below]{Owen2013}. 
More precisely, we would let
\begin{equation}
    \label{e:first_lsq}
    \left(\mu^{(1)}, c^{(1)}\right) \in \arg\min_{\mu, c\in\mathbb{R}} \sum_{\bfx\in\mathcal{B}_N} \left(f(\bfx) - \mu - c \sum_{\vert\alpha\vert\leq 2} b_{\alpha} (\mathbf{x}^\alpha-m_\alpha)\right)^2,
\end{equation} 
and take $c=c^{(1)}$ in \eqref{eq:estimator_MCCV}, actually resulting in $\widehat{I}_{\mathrm{MCCV}}(f) = \mu^{(1)}$.
Note that because we subtract the known integrals $m_\alpha$, there is no constant term in the polynomial in \eqref{e:first_lsq}, so that $\mu^{(1)}$ is well-defined.
Were the $(b_\alpha)$ unknown, the common practice would be to set $c=1$, and similarly estimate $(b_\alpha)$ by multiple linear regression of $f$ onto the monomials of degree up to $2$, using again the same sample $\mathcal{B}_N$.
Formally, let
\begin{equation}
    \label{e:second_lsq}
    \left(\mu^{(2)}, (b_\alpha^{(2)})\right) \in \arg\min_{\mu, (\beta_\alpha)} \sum_{\bfx\in\mathcal{B}_N} \left(f(\bfx) - \mu - \sum_{\vert\alpha\vert\leq 2} \beta_\alpha (\bfx^\alpha-m_\alpha)\right)^2.
\end{equation}
Setting $(b_\alpha) = (b_\alpha^{(2)})$ in \eqref{eq:estimator_MCCV} again yields $\widehat{I}_{\mathrm{MCCV}}(f) = \mu^{(2)}$. 
We label this estimator as \emph{MCCV\_OLS} in Figures~\ref{fig:exp_mccv_d_3} and \ref{fig:exp_mccv_d_12}.
This is often the default control-variate estimator; see \citep[Algorithm 8.3]{Owen2013}. 
However, relying on $\mathcal{B}_N$ both in the Monte Carlo estimator in \eqref{eq:estimator_MCCV} and to estimate either $c$ or $(b_\alpha)$ makes the estimators $\mu^{(1)}$ and $\mu^{(2)}$ in \eqref{e:first_lsq} and \eqref{e:second_lsq} biased.

For a more transparent comparison with the other estimators in the benchmark of Section~\ref{sub:Experminents}, which are all unbiased, we thus follow a recommendation of \citep[Chapter 8]{Owen2013} and estimate the coefficients $(b_\alpha)$ by regressing the evaluations of $f$ on an independent copy $\mathcal{B}'_N$ of $\mathcal{B}_N$.
Formally, let
\begin{equation}
    \label{e:third_lsq}
    (b_\alpha^{(3)}) \in \arg\min_{(\beta_\alpha)} \sum_{\bfx\in\mathcal{B}'_N} \left(f(\bfx) - \sum_{\vert\alpha\vert\leq 2} \beta_\alpha \bfx^\alpha\right)^2,
\end{equation}
and set $(b_\alpha)$ to $(b_\alpha^{(3)})$ and $c$ to $1$ in \eqref{eq:estimator_MCCV}.
The resulting estimator is unbiased, and we label it as \emph{MCCV\_2N} in Figures~\ref{fig:exp_mccv_d_3} and \ref{fig:exp_mccv_d_12}.
At the cost  of yet another independent copy $\mathcal{B}''_N$ of $\mathcal{B}_N$, one can further estimate $c$ in \eqref{eq:estimator_MCCV} and preserve unbiasedness. 
More precisely, define
\begin{equation}
    \label{e:fourth_lsq}
    c^{(4)} \in \arg\min_{c\in\mathbb{R}} \sum_{\bfx\in\mathcal{B}_N''} \left(f(\bfx) - c \sum_{\vert\alpha\vert\leq 2} b_{\alpha}^{(3)} \mathbf{x}^\alpha\right)^2,
\end{equation}
and set $(b_\alpha)$ to $(b_\alpha^{(3)})$ and $c$ to $c^{(4)}$ in \eqref{eq:estimator_MCCV}.
This is the MCCV estimator we consider in our experiments of Section~\ref{sub:Experminents}; we label it as \emph{MCCV} in Figures~\ref{fig:exp_mccv_d_3} and \ref{fig:exp_mccv_d_12}.
    
The estimated MSEs of the three estimators described here, on the same integration tasks as in Section~\ref{sub:Experminents}, are shown in Figures~\ref{fig:exp_mccv_d_3} and \ref{fig:exp_mccv_d_12}. 
The least squares problems are solved using standard NumPy routines, choosing by default the minimizer with smallest $2$-norm when the argmin is not a singleton. 
Note that since some of these estimators are biased, we report estimated mean squared errors rather than estimated variances.
We observe that the three variants of the estimator behave very similarly, to the point of being statistically indistinguishable in most of the regimes we are considering. 
The only noticeable fact is that, as dimension increases above $7$ in Figure~\ref{fig:exp_mccv_d_12}, crude Monte Carlo (in blue) becomes competitive again, while both \emph{MCCV\_OLS} and \emph{MCCV\_2N} become less accurate for smaller values of $N$. 
Intuitively, the corresponding values of the sample size are not large enough for the estimated polynomial coefficients to provide a useful approximation to the integrand, actually resulting in a worse MSE than crude Monte Carlo for these two variants.
In contrast, the three-sample estimator \emph{MCCV} recovers the behavior of crude Monte Carlo in that regime, thanks to the additional estimation of the parameter $c$ in \eqref{eq:estimator_MCCV}.
Overall, these additional experiments justify our choice of the \emph{MCCV} estimator as an idealized control-variate baseline in Section~\ref{sub:Experminents}. 		

\begin{figure}[!h]
    \centering
    \begin{subfigure}{\linewidth}
        \centering
        \includegraphics[width=0.75\linewidth]{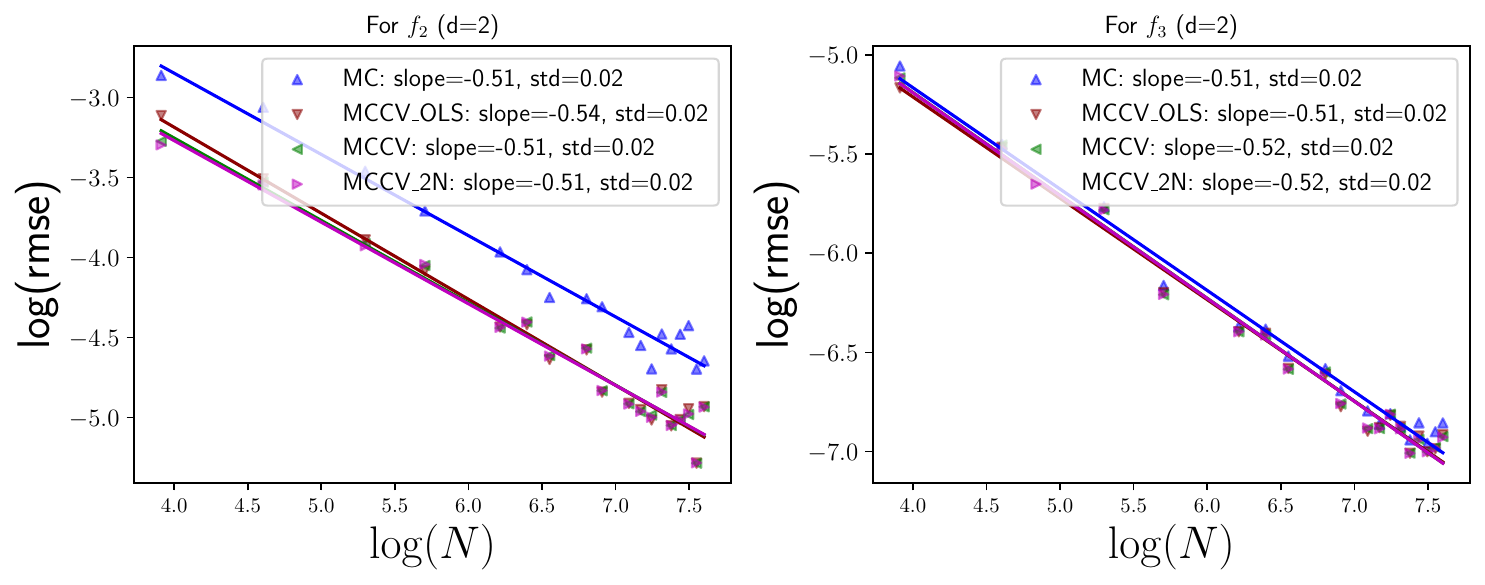}
    \end{subfigure}
    \begin{subfigure}{\linewidth}
        \centering
        \includegraphics[width=0.75\linewidth]{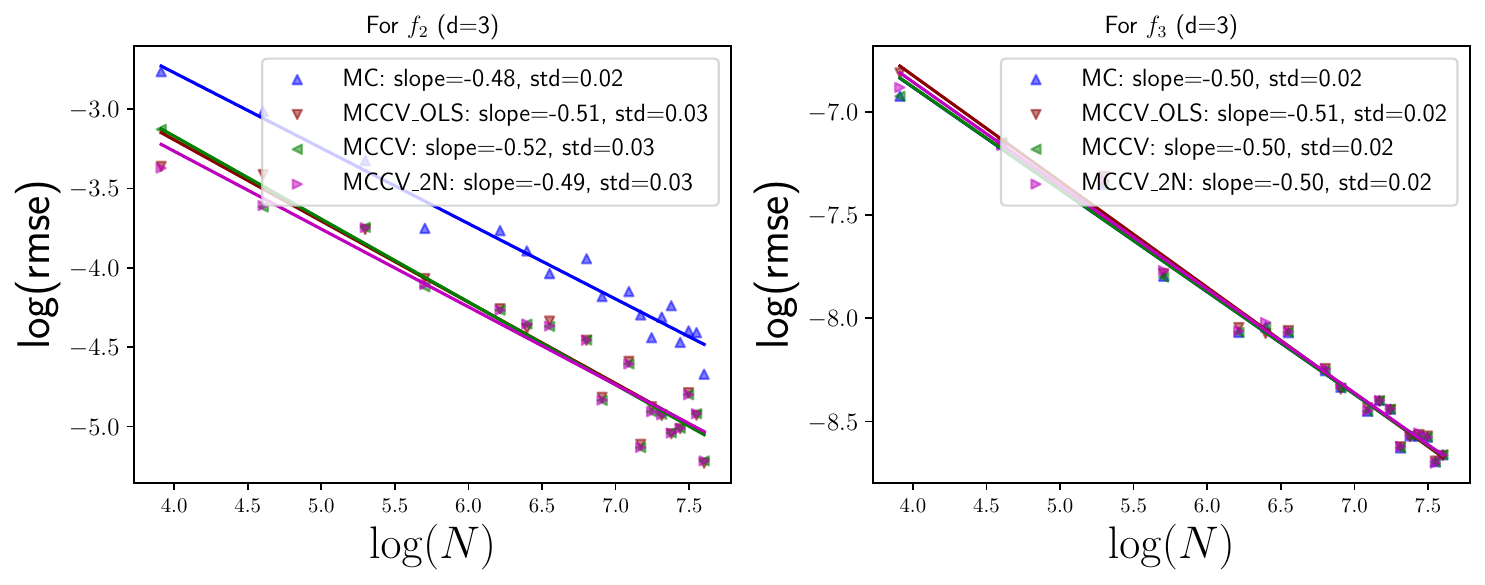}
    \end{subfigure}
    \begin{subfigure}{\linewidth}
        \centering
        \includegraphics[width=0.75\linewidth]{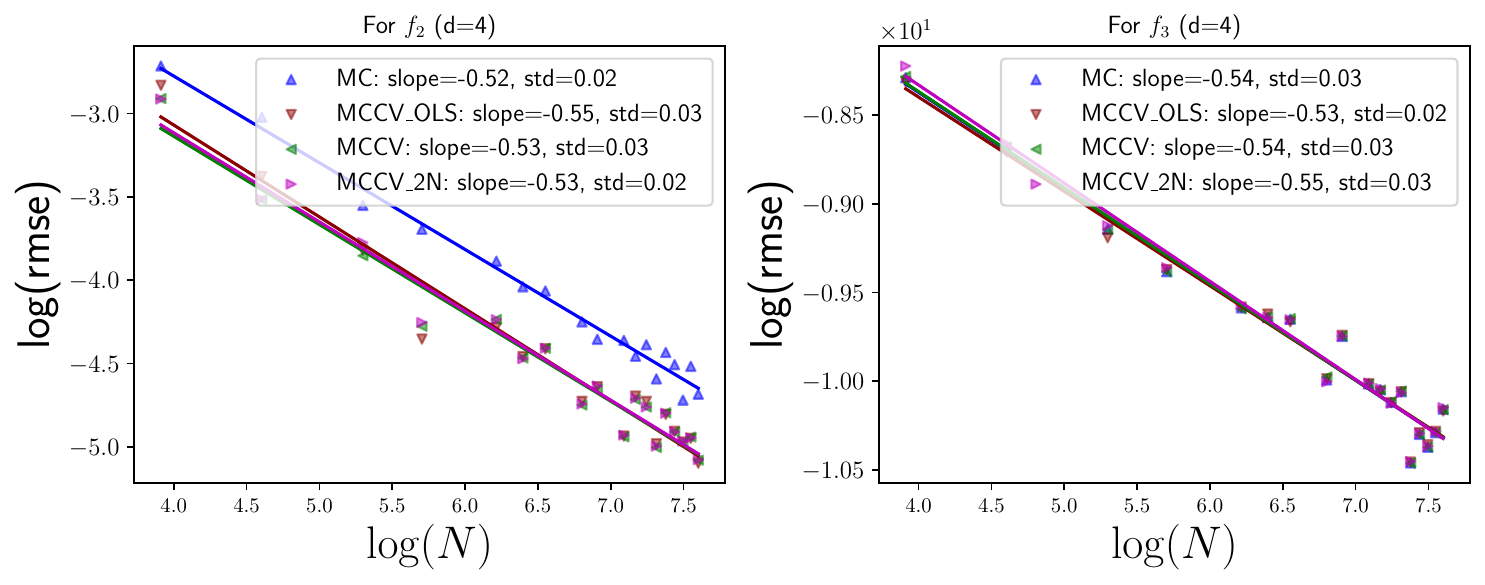}
    \end{subfigure}
    \caption{RMSE of the three differents approaches for MCCV for $d \in \{2, 3, 4\}$.}
    \label{fig:exp_mccv_d_3}
\end{figure}

\begin{figure}[!h]
    \centering
    \begin{subfigure}{\linewidth}
        \centering
        \includegraphics[width=0.75\linewidth]{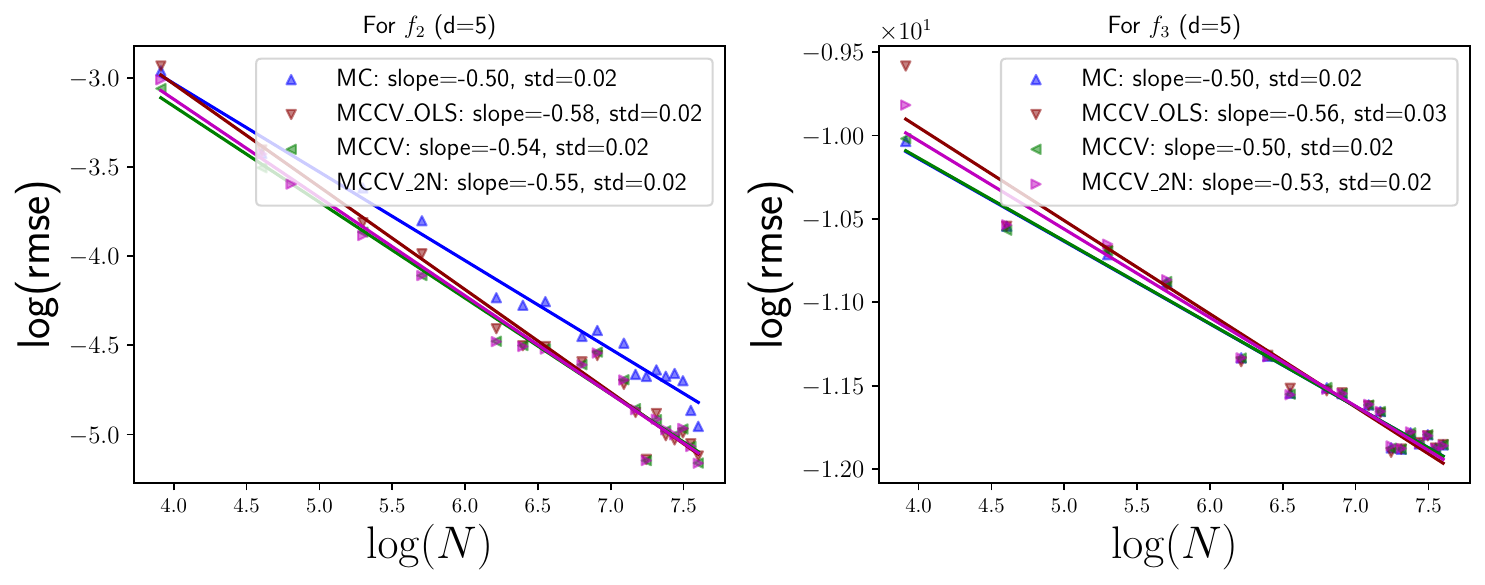}
    \end{subfigure}
    \begin{subfigure}{\linewidth}
        \centering
        \includegraphics[width=0.75\linewidth]{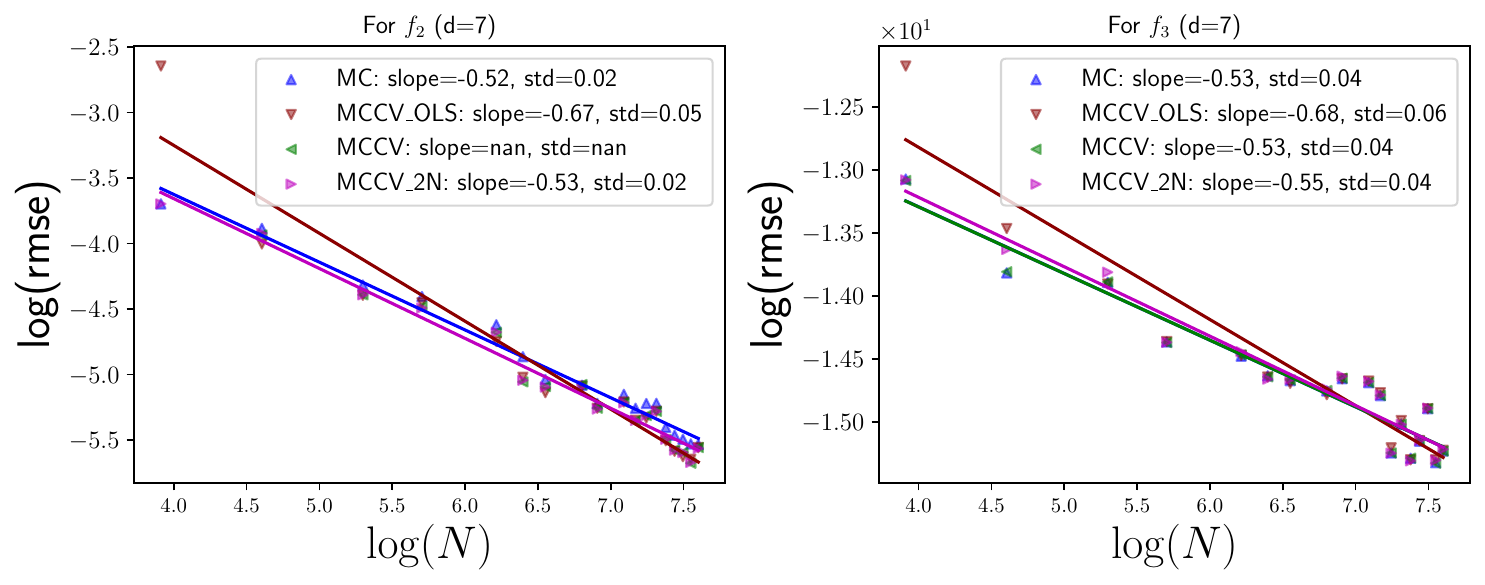}
    \end{subfigure}
    \begin{subfigure}{\linewidth}
        \centering
        \includegraphics[width=0.75\linewidth]{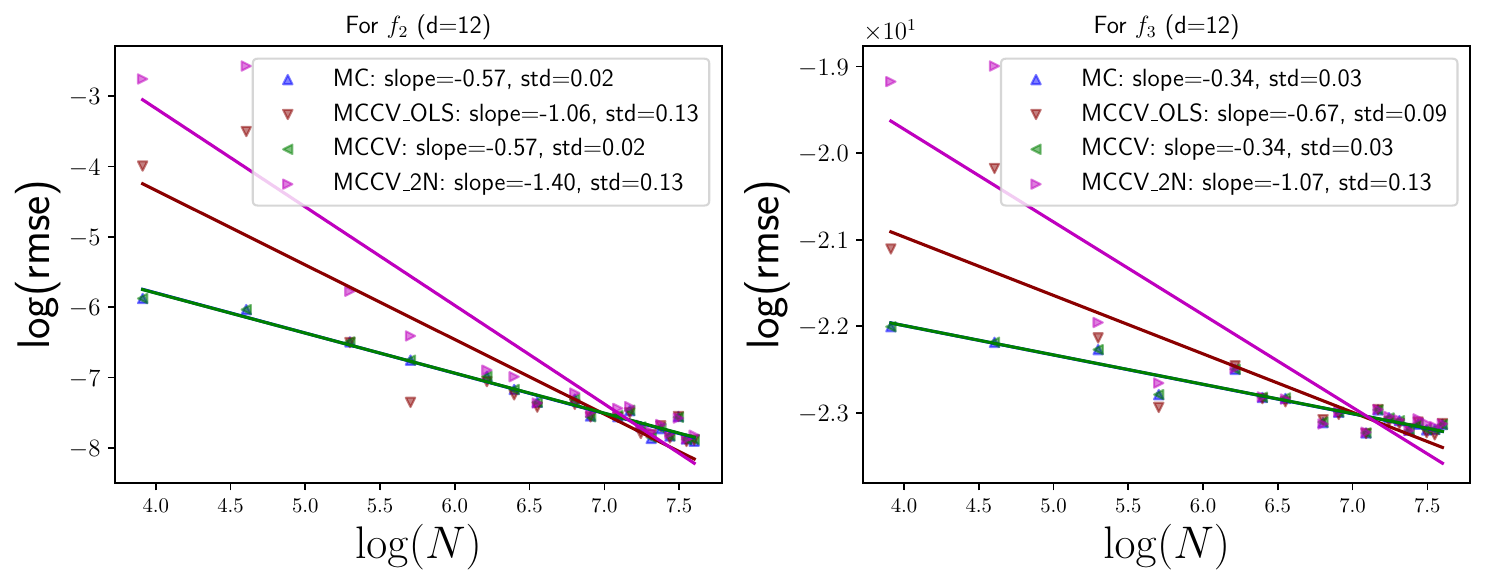}
    \end{subfigure}
    \caption{RMSE of the three differents approaches for MCCV $d \in \{5, 7, 12\}$. }
    \label{fig:exp_mccv_d_12}
\end{figure}

\end{document}